%
%
\documentclass[10pt,a4paper]{article}

\usepackage{amsmath,amsgen,latexsym}
\usepackage{amstext,amssymb,amsfonts,latexsym}
\usepackage{theorem}
\usepackage{pifont}
\usepackage[dvips]{graphics,epsfig}
\usepackage[dvips]{graphicx}

\setlength{\evensidemargin}{-0.1cm}
\setlength{\oddsidemargin}{-0.1cm}
\setlength{\topmargin}{-0.7cm}
\setlength{\textheight}{24.5cm}
\setlength{\textwidth}{16.5cm}

\setlength{\headsep}{0cm}
\setlength{\headheight}{0cm}
\setlength{\marginparwidth}{0cm}


 \newcommand{\bs}{\bigskip}
 \newcommand{\ms}{\medskip}
 \newcommand{\n}{\noindent}
 \newcommand{\s}{\smallskip}
 \newcommand{\hs}[1]{\hspace*{ #1 mm}}
 \newcommand{\vs}[1]{\vspace*{ #1 mm}}



 \newcommand{\setempty}{\mathrm{\O}}
 \newcommand{\real}{\mathbb{R}}
 \newcommand{\nat}{\mathbb{N}}
 
 \newcommand{\integer}{\mathbb{Z}}
 \newcommand{\rational}{\mathbb{Q}}


 \newcommand{\co}{\mathrm{co}\mbox{-}}


 \newcommand{\ie}{\textrm{i.e.},\hspace*{2mm}}
 \newcommand{\eg}{\textrm{e.g.},\hspace*{2mm}}
 
 \newcommand{\etalc}{\textrm{et al.}}


 \newcommand{\DD}{{\cal D}}

 \newcommand{\SSS}{{\cal S}}
 
 \newcommand{\PP}{{\cal P}}


 \newcommand{\dl}{\mathrm{L}}
 \newcommand{\nl}{\mathrm{NL}}
 \newcommand{\p}{\mathrm{P}}
 \newcommand{\np}{\mathrm{NP}}


 \newcommand{\poly}{\mathrm{poly}}


 \newcommand{\fl}{\mathrm{FL}}
 \newcommand{\fp}{\mathrm{FP}}










 \def\bbox{\vrule height6pt width6pt depth1pt}

\theoremstyle{plain}
\theoremheaderfont{\bfseries}
\setlength{\theorempreskipamount}{3mm}
\setlength{\theorempostskipamount}{3mm}

 \newtheorem{theorem}{Theorem}[section]
 \newtheorem{lemma}[theorem]{Lemma}
 \newtheorem{proposition}[theorem]{Proposition}
 \newtheorem{corollary}[theorem]{Corollary}

 {\theorembodyfont{\rmfamily}
  }
 {\theorembodyfont{\rmfamily} }
 {\theorembodyfont{\rmfamily} }

 \newenvironment{proof}{\par \noindent
            {\bf Proof. \hs{2}}}{\hfill$\Box$ \vspace*{3mm}}

 \newenvironment{proofof}[1]{\vspace*{5mm} \par \noindent
         {\bf Proof of #1.\hs{2}}}{\hfill$\Box$ \vspace*{3mm}}


 \newcommand{\ceilings}[1]{\lceil #1 \rceil}
 \newcommand{\floors}[1]{\lfloor #1 \rfloor}
 \newcommand{\pair}[1]{\langle #1 \rangle}


\newcommand{\ignore}[1]{}

\newcommand{\cent}{{|}\!\!\mathrm{c}}
\newcommand{\dollar}{\$}

\newcommand{\APreduces}{\leq_{\mathrm{AP}}}

\newcommand{\EXreduces}{\leq_{\mathrm{EX}}}

\newcommand{\sAPreduces}{\leq_{\mathrm{sAP}}}

\newcommand{\apx}{\mathrm{APX}}
\newcommand{\apxp}{\mathrm{APXP}}
\newcommand{\apxl}{\mathrm{APXL}}
\newcommand{\po}{\mathrm{PO}}
\newcommand{\npo}{\mathrm{NPO}}
\newcommand{\lo}{\mathrm{LO}}
\newcommand{\nlo}{\mathrm{NLO}}
\newcommand{\ptas}{\mathrm{PTAS}}
\newcommand{\lsas}{\mathrm{LSAS}}
\newcommand{\ncas}[1]{\mathrm{NC}^{ #1 }\mathrm{AS}}
\newcommand{\acas}[1]{\mathrm{AC}^{ #1 }\mathrm{AS}}
\newcommand{\pbo}{\mathrm{PBO}}

\newcommand{\apxnc}[1]{\mathrm{APXNC}^{ #1 }}
\newcommand{\apxac}[1]{\mathrm{APXAC}^{ #1 }}
\newcommand{\nco}[1]{\mathrm{NC}^{ #1 }\mathrm{O}}
\newcommand{\aco}[1]{\mathrm{AC}^{ #1 }\mathrm{O}}

\newcommand{\ac}[1]{\mathrm{AC}^{ #1 }}
\newcommand{\nc}[1]{\mathrm{NC}^{ #1 }}
\newcommand{\tc}[1]{\mathrm{TC}^{ #1 }}
\newcommand{\sac}[1]{\mathrm{SAC}^{ #1 }}

\newcommand{\dlogtime}{\mathrm{DLOGTIME}}

\newcommand{\auxp}{\mathrm{auxP}}
\newcommand{\auxfp}{\mathrm{auxFP}}

\newcommand{\auxl}{\mathrm{auxL}}
\newcommand{\auxfl}{\mathrm{auxFL}}
\newcommand{\auxfnc}[1]{\mathrm{auxFNC}^{ #1 }}

\newcommand{\fac}[1]{\mathrm{FAC}^{ #1 }}
\newcommand{\fnc}[1]{\mathrm{FNC}^{ #1 }}

\newcommand{\minnl}{\mathrm{MinNL}}
\newcommand{\maxnl}{\mathrm{MaxNL}}
\newcommand{\minp}{\mathrm{MinP}}
\newcommand{\maxp}{\mathrm{MaxP}}
\newcommand{\optp}{\mathrm{OptP}}
\newcommand{\optl}{\mathrm{OptL}}




%
 \begin{document}

\begin{center}
{\Large {\bf Uniform-Circuit and Logarithmic-Space Approximations of \s\\
Refined Combinatorial Optimization Problems}}\footnote{A preliminary report  appeared in the Proceedings of the 7th International Conference on Combinatorial Optimization and Applications (COCOA 2013), Chengdu, China, December 12--14, 2013, Lecture Notes in Computer Science, Springer-Verlag, vol.8287, pp.318--329, 2013.} \bs\\
{\sc Tomoyuki Yamakami}\footnote{Present Affiliation: Graduate School of Engineering, University of Fukui, 3-9-1 Bunkyo, Fukui 910-8507, Japan} \bs\\
\end{center}

\pagestyle{plain}

\begin{quote}
\n{\bf Abstract:}
A significant progress has been made in the past three decades over the study of combinatorial NP optimization problems and their associated optimization and approximate classes, such as NPO, PO, APX (or APXP), and PTAS. Unfortunately, a collection of problems that are simply placed inside the P-solvable optimization class PO never have been studiously analyzed regarding their exact computational complexity. To improve this situation, the existing framework based on polynomial-time computability  needs to be expanded and further refined for an insightful analysis of various approximation algorithms targeting optimization problems within PO.
In particular, we deal with those problems characterized in terms of logarithmic-space computations and uniform-circuit computations.
We are focused on nondeterministic logarithmic-space (NL)  optimization problems or NPO problems.
Our study covers a wide range of optimization and approximation classes, dubbed as, NLO, LO, APXL, and LSAS as well as new classes NC$^{1}$O, APXNC$^{1}$, NC$^{1}$AS, and AC$^{0}$O, which are founded on uniform families of Boolean circuits.
Although many NL decision problems can be naturally converted into NL optimization (NLO) problems, few NLO problems have been studied vigorously.
We thus provide a number of new NLO problems falling into
those low-complexity classes.
With the help of NC$^{1}$ or AC$^{0}$ approximation-preserving reductions, we also identify the most difficult problems (known as complete problems) inside those classes. Finally, we demonstrate a number of collapses and separations among those refined optimization and approximation classes with or without unproven complexity-theoretical assumptions.

\s

\n{\bf Keywords:}
optimization problem, approximation-preserving reduction, approximation algorithm, NC$^1$ circuit, AC$^{0}$ circuit, logarithmic space, complete problem
\end{quote}

\sloppy

\section{Refined Combinatorial Optimization Problems}\label{sec:intro}

\subsection{NL Optimization Problems}

Many combinatorial problems can be understood as sets of constraints (or requirements), which specify certain relations between {\em admissible instances}  and {\em feasible solutions}. Of such problems, a {\em combinatorial optimization problem}, in particular, asks to find an ``optimal'' solution that satisfies  certain constraints specified by  each given admissible instance, where the optimality usually takes a form of either ``maximization'' or ``minimization'' according to a predetermined ordering over all feasible solutions.
When finding  such optimal solutions is costly, we often resort to look for  solutions that are close enough to the desired optimal solutions.
A significant progress had been made in a field of fundamental research on these combinatorial optimization problems during 1990s and its trend has continued promoting our understandings of the approximability of the problems.  In particular, {\em NP optimization problems} (or {\em NPO problems}, in short) have been a centerfold of our interests because of their direct connection to NP (nondeterministic polynomial time) decision problems.

NPO problems are naturally derived from NP decision problems. As a typical NP problem, let us consider the {\em CNF Boolean formula satisfiability problem} (SAT) of determining whether a satisfying assignment exists for a given Boolean formula in conjunctive normal form. It is easy to convert {\sc SAT} to its corresponding  optimization problem, {\sc Maximum Weighted Satisfiability}, of finding a satisfying assignment having the maximal weight. This problem is an $\npo$ problem.
As is customary, the notation $\npo$ also denotes the collection of such optimization problems. Of those $\npo$ problems, those that can be solved exactly in polynomial time form a ``tractable'' optimization class $\po$, whereas an approximation class $\apx$ (which is hereafter denoted by $\apxp$ to emphasize its feature of ``polynomial time'' in comparison with ``logarithmic space'' and ``circuits'') consists of $\npo$ problems whose optimal solutions are {\em relatively approximated} within constant factors in polynomial time.
Another optimization problem, {\sc Maximum Cut}, of finding a partition of a given graph into two disjoint sets that maximize the number of crossing edges falls into this approximation class $\apxp$.

Up to now, a large number of $\npo$ problems have been nicely  classified into those classes of optimization problems (see, e.g., \cite[Compendium]{ACG+03}).
Among those optimization and approximation classes, $\po$ is the smallest class and has been proven to contain a number of intriguing optimization problems, including a minimization problem, {\sc Min Weight-st-Cut}, of finding a minimal $s$-$t$ cut of a given directed graph. In a study on NPO problems, the use of {\em approximation-preserving reductions} helps us identify the most difficult optimization problems in a given class of optimization problems and
many natural problems have been classified as the computationally hardest problems for $\npo$, $\po$, or $\apxp$. Those problems are known as ``complete'' problems. {\sc Maximum Weighted Satisfiability} and {\sc Maximum Cut} are respectively proven to be complete for $\nlo$ and $\apxp$.

The above classification of optimization problems is all described from a single viewpoint of ``polynomial-time''
computability and approximability and, as a result,
a systematic discussion on optimization problems inside $\po$ has been vastly neglected although $\po$ contains numerous intriguing problems of various complexities. For instance, the {\em minimum path weight problem} ({\sc Min Path-Weight}) is to find in a given directed graph $G$ a path $\SSS=(v_1,v_2,\ldots,v_k)$ with $k\geq2$ from given vertex  $v_1$ to another vertex $v_k$ having its (biased) path weight having binary representation of the form  $bin(w(v_1))bin(w(v_2))\cdots bin(w(v_k))$, where $bin(a)$ denotes the binary representation of a nonnegative integer $a$. This minimization problem {\sc Min Path-Weight} belongs to $\po$.
Another example is the {\em maximum Boolean formula value problem} ({\sc Max BFVP}) of finding a maximal subset of a given set of Boolean formulas that are satisfied by a given truth assignment. This simple problem also resides inside $\po$; however, it apparently looks much easier to solve than {\sc Min Path-Weight}.  This circumstantial evidence leads us to ponder that there might exist a finer and richer structure inside $\po$.
Consequently, we may raise a natural question of whether it is possible to find  such a finer structure within $\po$.

To achieve this goal, we first seek to develop a {\em new, finer framework}---a low-complexity world of optimization problems---and reexamine the computational complexity of such optimization problems within this new framework.
For this purpose, we need to reshape the existing framework of expressing optimization complexity classes by clarifying the scope and complexity of verification processes used for solutions using objective (or measure) functions.  While {\sc Min Weight-st-Cut} is known to be one of the most difficult problems in $\po$ under $\p$-reductions (even under $\nc{1}$-reductions, shown in Proposition \ref{min-st-cut-is-po}), the computational complexity of {\sc Min Path-Weight} seems to be significantly lower than {\sc Min Weight-st-Cut} residing in $\po$.
To study the fine structures inside $\po$, we wish to shift our interest from a paradigm of polynomial-time optimization to much lower-complexity optimization, notably logarithmic-space or uniform-circuit optimization.

In the past decades, {\em logarithmic-space} (or {\em log-space}) computation has exhibited intriguing features, which are often different from those of polynomial-time computation. A notable result is the closure property of $\nl$ (nondeterministic logarithmic space) under complementation \cite{Imm88,Sze88}.

{\`A}lvarez and Jenner \cite{AJ93,AJ95} first studied optimization problems from a viewpoint of log-space computability and discussed a class $\mathrm{OptL}$ of functions that compute optimal solutions using only a logarithmic amount of memory storage. In contrast, along the line of a study on NP optimization problems, Tantau \cite{Tan07} investigated {\em nondeterministic logarithmic-space (NL) optimization
problems} or {\em NLO problems}.
Intuitively, an NLO problem $Q$  is asked to to find its optimal solutions among all possible feasible solutions of size polynomial in input size $n$, provided that,
(i) we can check, using only $O(\log{n})$ memory space, whether any  given solution candidate $y$ is indeed a solution of the problem $Q$ and, if so, (ii) we can calculate the objective value of $y$ using $O(\log{n})$ memory space.
We simply write $\nlo$ for the collection of all $\nlo$ problems. It turns out that significant differences actually exist between two optimization classes $\npo$ and $\nlo$. One of the  crucial differences is caused by the way that an underlying Turing machine produces its output strings on its output tape. When a log-space machine writes such a string, the machine must produce it {\em obliviously} because the output string is usually longer than the machine's memory size. In short, log-space computation cannot remember polynomially-many symbols.
As a result, unlike $\npo$ problems, such machines do not seem to implement a typical approximation-preserving reduction between minimization problems and maximization problems inside $\nlo$ (see Section \ref{sec:PBP}).
When we discuss $\nlo$ problems, we need to heed the size of objective functions. An optimization problem is {\em polynomially bounded} if its objective (or measure) function outputs only polynomially-large integers.

Throughout this paper, we shall target those intriguing NLO problems. As unfolded  in later sections, NLO problems occupy a substantial portion of PO and they include numerous important and natural problems.
The aforementioned problems {\sc Min Path-Weight} and {\sc Max BFVP} are typical examples of the NLO problems. As other examples, the class NLO contains a restricted knapsack problem, called {\sc Max 2BCU-Knapsack}, and a restricted algebraic problem, called {\sc Max AGen} (see Section \ref{sec:approximation-class} for their definitions).
When we refer to PO, APXP, and PTAS in the existing framework based on NPO problems, we need to clarify their underlying framework; therefore, we intend to use new notations $\po_{\npo}$ (instead of $\po$), $\ptas_{\npo}$ (instead of $\ptas$), and $\apxp_{\npo}$ (instead of $\apxp$), when we discuss exact solvability and approximability of ``$\npo$  problems.''

\subsection{Optimization Problems Inside NLO}

By shifting the paradigm of optimization problems, we wish to look into a world of NLO problems and to unearth rich and complex structures underlying in this  world.
Of all $\nlo$ problems, those that cane be {\em $\dl$-solvable} (\ie solvable exactly by multi-tape deterministic Turing machines using logarithmic space)
form an optimization class $\lo_{\nlo}$. If we restrict input graphs of {\sc Min Path-Weight} onto undirected forests, then the resulted problem, called {\sc Min Forest-Path-Weight}, belongs to $\lo_{\nlo}$.
Using uniform families of $\nc{1}$-circuits and $\ac{0}$-circuits in place of log-space Turing machines used in the existing notion of AP reduction, respectively, we can introduce two extra optimization classes $\nco{1}_{\nlo}$ and $\ac{0}\mathrm{O}_{\nlo}$,
where
$\nc{1}$ refers to $O(\log{n})$-depth polynomial-size circuits of bounded fan-in AND and OR gates and AC$^{0}$ indicates constant-depth polynomial-size circuits of unbounded fan-in AND and OR gates.

In analogy with $\apxp_{\npo}$, another refined approximation class $\apxl_{\nlo}$ is introduced using log-space approximation algorithms for NLO problems. Between $\apxl_{\nlo}$ and $\lo_{\nlo}$ exists a special class of optimization problems that have log-space approximation schemes. We call this class $\lsas_{\nlo}$, similar to $\ptas_{\npo}$.
In a similar way, we define $\apxnc{k}_{\nlo}$, $\apxac{k}_{\nlo}$, $\ncas{k}_{\nlo}$, and $\acas{k}_{\nlo}$ for each index $k\in\{0,1\}$.

To compare the complexity of $\nlo$ problems, we consider {\em approximation-preserving (AP) reduction}, {\em exact (EX) reduction}, and {\em strong AP (sAP) reduction} using logarithmic space or by $\nc{1}$-circuits (or even $\ac{0}$-circuits).
Using those weak reductions, we shall present  in Section \ref{sec:completeness}--\ref{sec:PBP} a number of concrete optimization problems that are complete for the aforementioned refined classes of optimization problems.  As discussed in Section \ref{sec:why-NC1}, those weak reductions are necessary for low-complexity optimization problems, because strong reductions tend to obscure the essential characteristics of ``complete'' problems.
Because of their fundamental nature, approximation classes are quite sensitive to the use of weak reductions. To use such reductions, we need to guarantee the existence of certain approximation bounds that must be easy to estimate.

Unlike NPO problems, a special attention is required for ``complete'' problems among NLO problems. Because of its logarithmic space-constraint, at this moment, it is unknown that complete problems actually exist in $\nlo$. What we do know is the existence of complete problems for the class $\maxnl$ of all maximization NLO problems (or the class $\minnl$ of all minimization NLO problems) as shown in  Section \ref{sec:completeness}. More specifically, we manage to demonstrate that {\sc Min Path-Weight} is indeed complete for $\minnl$. A similar situation is observed also for $\apxl_{\nlo}$. In contrast, the class $\lo_{\nlo}$ of $\dl$-solvable NLO problems possesses complete problems. When we limit our attention to polynomially-bounded NLO problems, each of $\nlo$, $\apxl_{\nlo}$, $\lo_{\nlo}$, $\apxnc{1}_{\nlo}$, and $\nco{1}_{\nlo}$ actually owns complete problems (Section \ref{sec:PBP}).

Among the aforementioned refined classes, we shall  also prove relationships concerning collapses and separations in Section \ref{sec:complexity-OP}.
If we limit our optimization problems onto $\nlo$, then $\po_{\nlo}$,   $\ptas_{\nlo}$, $\apxp_{\nlo}$, and $\aco{1}_{\nlo}$ all coincide with $\nlo$ (Lemmas \ref{PO=APXP=NLO} and \ref{char-with-AC1}(1)).
For polynomially-bounded $\nlo$ problems, in contrast, we can characterize them in terms of $\lo$ problems if their underlying log-space Turing machines are further allowed to access $\nl$ oracles.
Following \cite{Tan07}, $\dl\neq\nl$ if and only if the polynomially-bounded subclasses of $\nlo$, $\apxl_{\nlo}$, $\lsas_{\nlo}$, and $\lo_{\nlo}$ are all distinct (Theorem \ref{refined-relation}(1)). Similarly, we can show that $\nc{1}\neq\dl$ if and only if  the polynomially-bounded subclasses of  $\lo_{\nlo}$, $\apxnc{1}_{\nlo}$, $\ncas{1}_{\nlo}$, and $\nco{1}_{\nlo}$ are all different  (Theorem \ref{refined-relation}(2)). For much lower-complexity optimization problems, we can separate $\aco{0}_{\nlo}$, $\acas{0}_{\nlo}$,  $\apxac{0}_{\nlo}$, and $\nco{1}_{\nlo}$ one from another (Theorem \ref{AC0-separation}). Those separations directly follow from  the well-known separation $\ac{0}\neq \nc{1}$ \cite{Ajt83,FSS84}.

To help the reader overview intrinsic relationships among the aforementioned optimization (complexity) classes, we include Figure~\ref{fig:inclusion-map}, which illustrates class containments and class separations obtained in Section \ref{sec:complexity-OP}. The last section provides a short list of open problems. 

\begin{figure}[t]
 \begin{center}
 \includegraphics[width=9.5cm]{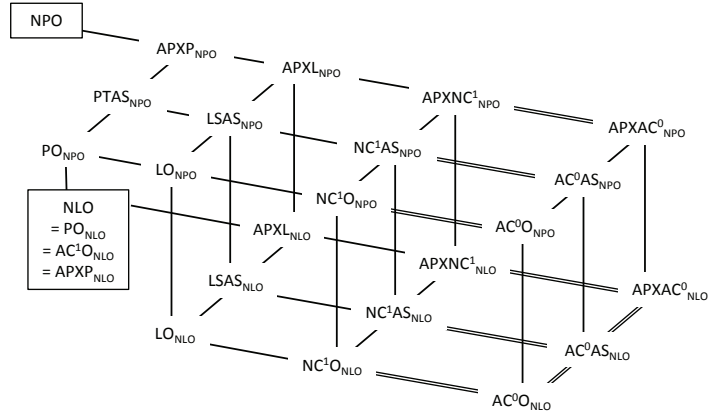}
 \end{center}
\caption{A hierarchy of optimization and approximation classes. Single lines indicate simple class inclusions, whereas double lines indicate proper inclusions.}\label{fig:inclusion-map}
\end{figure}

In the subsequent section, we shall  provide a set of basic terminology on the approximation complexity of optimization problems.

\section{Optimization and Approximation Preliminaries}\label{sec:preliminaries}

We aim at {\em refining} an existing framework for studying combinatorial optimization problems of, in particular, low computational complexity. Throughout this paper, the notation $\nat$ denotes the set of all {\em natural numbers} (i.e., nonnegative integers) and $\nat^{+}$ indicates $\nat-\{0\}$. Moreover, $\rational$ (resp., $\real$) indicates the set of all {\em rational numbers} (resp., {\em real numbers}). Two special notations $\rational^{>1}$ and $\rational^{\geq1}$ respectively express the sets $\{q\in\rational\mid q>1\}$ and $\{q\in\rational\mid q\geq 1\}$.
Given two numbers $m,n\in\nat$ with $m\leq n$, an {\em integer interval}  $[m,n]_{\integer}$ is the set $\{m,m+1,m+2,\ldots,n\}$. In the case of $[1,n]_{\integer}$ for $n\geq1$, we abbreviate it as $[n]$.
A {\em(multi-variate) polynomial} is always assumed to have nonnegative integer coefficients. We also  assume that all logarithms are {\em  to base $2$}.

For any set $A$, $\PP(A)$ denotes the {\em power set} of $A$, i.e., the set of all subsets of $A$. Given two sequences $\SSS=(s_1,s_2,\ldots,s_k)$ and $\SSS'=(t_1,t_2,\ldots,t_m)$, the notation $\SSS*\SSS'$ denotes a concatenated sequence $(s_1,s_2,\ldots,s_k,t_1,t_2,\ldots,t_m)$.

An {\em alphabet} is a finite nonempty set of ``symbols'' and a {\em string} (or a {\em word}) over alphabet $\Sigma$ is a finite series of symbols taken from $\Sigma$. In particular, the {\em empty string} is denoted $\lambda$. Let $|x|$ denote the {\em length} of string $x$. The set $\Sigma^*$ is composed of all strings over $\Sigma$ and $\Sigma^{+}$ denotes $\Sigma^*-\{\lambda\}$. A {\em language} over $\Sigma$ is a subset of $\Sigma^*$.  Given two languages $A$ and $B$, their {\em disjoint union} $A\oplus B$ is the set $\{0x\mid x\in A\}\cup \{1x\mid x\in B\}$.
Given each number $n\in\nat^{+}$ and $\Sigma=\{0,1\}$, $bin(n)$ represents a string $w$ in $1\Sigma^*$ ($=\{1x\mid x\in\Sigma^*\}$) that represents $n$ in binary. Additionally, we set $bin(0)=\lambda$. For example, we obtain $bin(1)=1$, $bin(2)=10$,  $bin(5)=101$, and $bin(7)=111$. Note that $|bin(n)|=\ceilings{\log_2(n+1)}$ for every number $n\in\nat^{+}$. By the contrary, for any string $w$ in $\{\lambda\}\cup 1\Sigma^*$, $rep(w)$ denotes a positive integer satisfying $bin(n)=w$. For a number $n\in\nat^{+}$, $bin(n)^{(-)}$ denotes the binary string obtained from $bin(n)$ by removing its first bit ``$1$.'' We also set $bin(0)^{(-)}=\lambda$.
A function $f:\Sigma_1^*\to\Sigma_2^*$ (resp., $f:\Sigma_1^*\to\real^{\geq0}$) for two alphabets $\Sigma_1$ and $\Sigma_2$ is {\em polynomially bounded} if there exists a polynomial $p$ satisfying $|f(x)|\leq p(|x|)$ (resp., $f(x)\leq p(|x|)$) for all inputs $x\in\Sigma_1^*$.

A{\em  directed graph} is a pair $(V,E)$ for which $V$ is a finite set of vertices and $E$ is a binary relation on $V$ and each element $(a,b)$ in $E$ is an edge. An {\em undirected graph} is similarly defined but $E$ is required to be a symmetric relation, and we treat $(a,b)$ and $(b,a)$ in $E$ equivalently.
Given any graph $G=(V,E)$ with vertex set $V$ and edge set $E$, a {\em path} of $G$ is a sequence $(v_1,v_2,\ldots,v_k)$ of vertices in $V$ satisfying that $(v_i,v_{i+1})$ is an edge for every index $i\in[k-1]$. Such a path is called {\em simple} exactly when there are no repeated vertices in it; that is, a simple path has no loop. The {\em length} of a path is the number of edges in it. A {\em tree} is an undirected connected graph with no cycle whereas a {\em forest} is an acyclic undirected graph. A {\em weighted graph} is a graph $G=(V,E)$ for which each edge (or vertex) has an associated weight given by a weight function $w:E\to\real$ (or $w:V\to\real$).

A {\em Boolean formula} is made up of (Boolean) variables and three logical connectives: $\wedge$ (AND), $\vee$ (OR), and $\neg$ (NOT) {\em in infix notation}.
Given a (Boolean) truth assignment $\sigma$, which maps $\{T,F\}$ to variables, a Boolean formula $\phi$ is said to be {\em satisfied} by $\sigma$ if $\phi$ is evaluated to be true after $\sigma$ assigns truth values to the variables in the formula.

\paragraph{Representation of Graphs, Matrices, Circuits, and Boolean Formulas:}
When we consider weak computations, it is often critical to choose what types of representation of input instances, such as graphs, matrices, circuits, and Boolean formulas.
For example, as noted in \cite{JLM97}, if we describe trees and forests using {\em bracketed expressions} as part of inputs, then the connectivity problem between two designated nodes in a given forest becomes solvable even on $\nc{1}$-circuits.
With respect to logarithmic-space computation, however, the representation of graphs via incidence matrices, adjacency matrices, or sets of ordered pairs are all ``equivalent'' \cite{JLL76}.  Unless otherwise specified, we assume that every graph is expressed by a {\em listing of its edge relation}, such as $\{(v_1,v_3),(v_2,v_5),(v_5,v_4)\}$; namely, all ordered pairs of vertices that define edges of a given graph. For isolated vertices, we list them as the names of those vertices, such as $\{(v_3),(v_6)\}$, instead of $\{(v_3,v_3),(v_6,v_6)\}$, which indicate self-loops in directed graphs. Boolean circuits are viewed as directed acyclic graphs. Boolean formulas are expressed in infix notation.\footnote{Boolean formulas in infix notation are defined inductively as follows: (i) $0$ and $1$ are Boolean formulas and (ii) if $\alpha$ and $\beta$ are Boolean formulas, then $(\neg\alpha)$, $(\alpha\vee\beta)$, and $(\alpha\wedge\beta)$ are Boolean formulas.}

\subsection{Basic Models of Computation}\label{sec:basic_model}

As a mechanical model of computation, we shall  use the following basic form of {\em (multi-tape) deterministic Turing machine}. For the formal definition of Turing machine, refer to, e.g., \cite{DK00,HMU01}. Our machine is equipped with a read-only input tape, multiple work tapes, and possibly an output tape.
An input $x$ of length $n$ is given on the input tape, surrounded by two endmarkers: $\cent$ (left endmarker) and $\dollar$ (right endmarker) and all input tape cells are consecutively indexed by integers between $0$ and $n+1$, where $\cent$ is at cell $0$ and $\dollar$ at cell $n+1$.
At any moment, a tape head working on the input tape either stays still on the same tape cell or moves to the left or the right.
The {\em running time} (or {\em runtime}) of a Turing machine is the total number of steps (or moves) taken by the machine starting with the input, whereas its {\em (tape) space} is the maximum number of distinct tape cells visited by a tape head during the
machine's  computation.

The behaviors of tapes and their tape heads are quite important in this paper; thus, we wish to pay our special attention to the following terminology. A tape is said to be {\em read-once} if it is a read-only tape and its tape head does not scan the same cell more than once; namely, it either stays at the same cell without reading any information (known as a {\em $\lambda$-move} or an {\em $\varepsilon$-move}) or moves instantly to the right cell.
In contrast, a {\em write-only} tape indicates that, whenever its tape head writes a nonempty symbol in a tape cell, the head should move immediately to its right cell. In this paper, ``output tapes'' are always assumed to be write-only tapes. Turing machines with write-only output tapes are considered to compute {\em (multi-valued partial) functions}, by viewing strings left on the output tapes (when the machines halt) as ``outputs.''

To describe low-complexity classes, we also use a notion of ``random access'' input tapes. In this mode, a machine is further equipped with an index tape and tries to write on this index tape a string of the form $bin(k)$ for a certain number $k\in[0,n+1]_{\integer}$. Whenever the machine enters a specific inner state (an input-query state), the input-tape head jumps in a single step to the cell indexed by $k$ and reads a symbol written in this particular cell. If $k$ is not in the range $[0,n+1]_{\integer}$, then $M$ simply reads a blank symbol as an ``out of range'' symbol. Whenever we need to clarify a use of this special model, we refer to it as {\em random-access Turing machines}.

To express ``nondeterminism'' in our framework, we introduce a special tape called a {\em read-once auxiliary input tape} and equip Turing machines with such auxiliary tapes.
An {\em auxiliary Turing machine} is the above-mentioned deterministic Turing machine equipped with an extra read-once auxiliary input tape on which a sequence of (nonblank) symbols (called an {\em auxiliary input}) is provided as an extra input (other than an ordinary input given on the input tape).
In the rest of this paper, we shall understand that ``auxiliary tapes'' means read-only auxiliary input tapes unless otherwise stated.
Such an auxiliary input given on the auxiliary input tape is surrounded by the two endmarkers. This machine can therefore read off two symbols (except for work-tape symbols) at once, one of which is from the input tape and the other from the auxiliary tape at each step in order to make a deterministic move. As our convention, when a tape head on the auxiliary tape reaches $\dollar$, the head must remain at this endmarker in the rest of a computation.

More formally, a $(k+2)$-tape auxiliary Turing machine $M$ is a tuple $(Q,\Sigma,\{\cent,\dollar\},\Gamma,\Theta,\Phi,q_0,q_{acc},q_{rej})$, where $Q$ is a finite set of inner states, $\Sigma$ is an input alphabet, $\Gamma$ is a work alphabet, $\Theta$ is an auxiliary input alphabet, $\Phi$ is an output alphabet, $q_0$ is the initial state in $Q$, $q_{acc}$ (resp., $q_{rej}$) is an accepting (resp., a rejecting) state in $Q$, and $\delta$ is a transition function from $(Q-\{q_{acc},q_{rej}\})\times (\Sigma\cup\{\cent,\dollar,\lambda\})\times \Gamma^k \times (\Theta\cup\{\dollar\})$ to $Q\times \Gamma^k \times (\Phi\cup\{\lambda\}) \times D\times D_1 \times \cdots \times D_{k}$, where $D$ and each $D_i$ ($i\in[k]$) are sets of head directions, $\{-1,0,+1\}$, of an input tape and the $i$th work tape. Notice that, since tape heads on an auxiliary tape and an output tape move in one direction, we do not need to include their head directions.

We say that an auxiliary Turing machine {\em uses log space} if there exist two constants $a,b>0$ for which, on every input $x$ and every auxiliary input $y$, $M$ uses the total of at most $a\log{|x|}+b$ cells of all work tapes (where an auxiliary input tape is not a work tape). Such a machine is succinctly called a {\em log-space auxiliary Turing machine}. Similarly, we define the notion of {\em polynomial-time auxiliary Turing machine}.

We assume that the reader is familiar with the foundation of computational complexity theory, in particular,  the definitions and properties of those   fundamental classes. See, e.g., \cite{DK00} for their fundamental properties.
The complexity class $\p$ (deterministic polynomial time) is composed of all decision problems (or languages) solved by deterministic Turing machines in polynomial time, whereas $\dl$ (deterministic logarithmic space) contains decision problems solved by log-space deterministic Turing machines.
The notation $\fp$ (resp., $\fl$) refers to a functional version of $\p$ (resp., $\dl$), provided that all functions in $\fl$ output only strings of size polynomial in the lengths of inputs. Thus, $\fl\subseteq \fp$ holds.

For later convenience, we denote by $\auxp$ (resp., $\auxl$) the collection of all sets $A \subseteq \Sigma^*\times\Sigma^*$ over alphabet $\Sigma$ for which there exist a polynomial $p$ and a polynomial-time (resp., log-space)  auxiliary Turing machine $M$ such that,  for every $x$ and $y$, (i) $(x,y)\in A$ implies $|y|\leq p(|x|)$ and (ii) whenever $|y|\leq p(|x|)$, $M$ accepts $(x,y)$ iff $(x,y)\in A$, where $y$ is given on $M$'s auxiliary tape. Their  functional versions with polynomially-bounded outputs (i.e., the size of output strings is bounded from above by a suitable polynomial in the input size) are denoted by $\auxfp$ (resp., $\auxfl$). These classes $\auxp$ and $\auxl$ are respectively associated with nondeterministic classes $\np$ and $\nl$ in the following fashion. Given a set $A\subseteq\Sigma^*\times\Sigma^*$ and any polynomial $p$, let $A_p = \{(x,y)\in A\mid |y|\leq p(|x|)\}$ and $A_p^{\exists} =\{x\in\Sigma^*\mid \exists y\;[(x,y)\in A_p]\}$. When $p$ is clear from the context, we tend to drop subscript ``$p$'' and write $A^{\exists}$ instead of $A^{\exists}_{p}$.
The nondeterministic class $\np$ (resp., $\nl$) is composed of all languages of the form $A^{\exists}_{p}$ for all $A\subseteq \Sigma^*\times\Sigma^*$ and all polynomials $p$ satisfying $A_p\in\auxp$ (resp., $\auxl$). In other words, $A_{p}\in\auxp$ (resp., $\auxl$) if and only if $A_p^{\exists}\in \np$ (resp., $\nl$).

In addition, the notation $\dlogtime$ is used to express the collection of all languages recognized by random-access Turing machines in $O(\log{n})$ time.
A function $f:\Sigma_1^*\to\Sigma_2^*$ is {\em DLOGTIME-computable} if the output size of $f$ is polynomially bounded and the language $A_f=\{(x,i,b)\mid \text{ the $i$th bit of $f(x)$ equals $b$}\}$ belongs to $\dlogtime$.

In the subsequent sections, we shall concentrate mostly on functions and languages (which can be viewed as Boolean functions) whose domains are limited to certain subsets $I$ of $\Sigma^*$ (for alphabets $\Sigma$), and thus any given input to those functions and languages are always assumed, as a ``promise,'' to be taken from  those domains $I$. Our functions and languages are therefore promise problems. To simplify our discussion in the later sections, however, we tend to teat those  promise problems $F$ as if they have no promise and we explicitly write, e.g.,  $F\in \fl$ and $F\in\auxl$ unless there is no confusion.

To describe circuit-based complexity classes, we use a standard notion of {\em Boolean circuits} (or just {\em circuits}), which is a labeled acyclic directed graph whose nodes of indegree $0$ are called inputs and the other nodes are called gates. In our setting, a circuit is made up only of two basic gates $AND$ and $OR$ with inputs, which are labeled by {\em literals} (that is, either Boolean variables or their negations). A {\em fan-in} of a gate is the number of incoming edges. A fan-in is said to be {\em bounded} (resp., {\em unbounded}) if it is smaller than or equal to $2$ (resp., it has no upper bound).
The {\em size} of a circuit is the number of its nodes and the {\em depth} is the number of the longest path from an input to an output.
A {\em family of circuits} is a set $\{C_n\mid n\in\nat\}$, where each $C_i$ is a Boolean circuit with $n$ distinct variables.

There have been a number of uniformity notions proposed in the past literature,  e.g., \cite{BIS90,DHR97,Ruz81}. The different choice of uniformity endows circuit families with (possibly) different computational power.
To explain such uniformity, we define the {\em direct connection language} of a circuit family $\{C_n\}_{n\in\nat}$ as a set of all tuples $\pair{t,a,b,y}$, where $a$ and $b$ are numbers of nodes in $C_n$, $b$ is a child of $a$, $t$ is the type (e.g., literals, $AND$,   $OR$, $NOT$, etc.) of $a$, and $y$ is any string of length $n$.
The {\em standard encoding} of $C_n$ is a string, each symbol of which is of the form $(a,t,b_{L},b_{R})$, where $a,B_{L},B_{R}$ are gate numbers, $b_{L}$ (resp., $B_{R}$) is the left (resp., right) child of $a$, and $t$ is the type of $a$.

A family $\{C_n\}_{n\in\nat}$ of Boolean circuits is {\em log-space uniform} (or {\em L-uniform}) if there exists a log-space deterministic Turing machine computing a function that maps $1^n$ to the standard encoding of $C_n$.
We say that a family $\{C_n\}_{n\in\nat}$ of Boolean circuits is {\em DLOGTIME-uniform}\footnote{As shown in \cite[Theorem 9.1]{BIS90}, this definition is equivalent to the one used in \cite{Bus87,BIS90} using formula languages.}
if the directed connection language of $\{C_n\}_{n\in\nat}$ can be recognized by a log-time random-access Turing machines. Other uniformity notions include {\em $U_{E^*}$-uniformity} and
{\em P-uniformity} \cite{Ruz81}.
For each $k\in\nat$, $\nc{k}$ (resp., $\ac{k}$) denotes the class of decision problems (or  languages) solvable by $\dlogtime$-uniform families of bounded (resp., unbounded) fan-in Boolean circuits of polynomial size and $O(\log^{k}n)$ depth. To refer to $\nc{1}$ of different uniformity, when clarification is necessary, we tend to describe it as ``$\dl$-uniform $\nc{1}$''
or ``$\p$-uniform $\nc{1}$.''
To describe their functional versions, we intentionally use the notations $\fac{k}$ and $\fnc{k}$, respectively. It is known that $\mathrm{ALOGTIME}$ (alternating logarithmic time) coincides with $\dlogtime$-uniform $\nc{1}$ \cite{Bus87}, which also equals $\nc{1}$-uniform $\nc{1}$ \cite{BIS90}.

Another characterization of $\nc{1}$ is given in \cite{BIS90} as follows. The {\em formula language} of a Boolean formula family $\{F_n\}_{n\in\nat}$ is composed of all tuples $\pair{c,i,y}$ such that $|y|=n$ and the $i$th character of the $n$th formula $F_n$ is $c$. A language $A$ is in $\nc{1}$ iff there exists a family $\{F_n\}_{n\in\nat}$ of Boolean formulas with depth $O(\log{n})$ such that (i) for every $x$, $F_{|x|}(x)$ is true exactly when $x\in A$ and (ii) there exists a log-time deterministic Turing machine recognizes the formal language of $\{F_n\}_{n\in\nat}$.

Known inclusion relationships among the aforementioned complexity classes are shown as: $\nc{0}\subsetneqq \ac{0}\subsetneqq \tc{0} \subseteq \nc{1}\subseteq \dl \subseteq \nl = \co\nl \subseteq \ac{1} \subseteq \nc{2} \subseteq\p\subseteq\np$. For more details, refer to, e.g., \cite{DK00}.

It is important to note that, on an output tape of a machine, a natural number is represented in binary, where the least significant bit is always placed at the right end of the output bits.
In the rest of paper, a generic but informal term of ``algorithm'' will be often used to refer to either a deterministic Turing machine or a uniform family of circuits.

\begin{lemma}\label{operation-complexity}
For $n$-bit numbers $x,y,x_i\in\nat$ with $i\in[n]$, the operations $x+y$ and $\max\{0,x-y\}$ are in $\ac{0}$, and $\floors{x/y}$, $\sum_{i=1}^{n}x_i$, and $\prod_{i=1}^{n}x_i$ are in $\tc{0}$ \cite{Hes01,HAB02}.
\end{lemma}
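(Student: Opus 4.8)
The plan is to treat the two complexity levels separately: the $\ac{0}$ operations admit explicit constant-depth formulas, while the $\tc{0}$ operations are reached by a chain of reductions anchored at counting, with iterated multiplication and division requiring the heaviest machinery.

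First I would dispose of $x+y$ by the carry-lookahead identity: a carry enters bit position $i$ precisely when some earlier position $j<i$ \emph{generates} a carry (both $x_j$ and $y_j$ equal $1$) and every position $k$ strictly between $j$ and $i$ \emph{propagates} it ($x_k\vee y_k$). This condition is an unbounded fan-in OR over $j$ of an unbounded fan-in AND over $k$, so each carry bit, and hence each output bit obtained as the parity of $x_i$, $y_i$, and the incoming carry, is computed by a constant-depth polynomial-size circuit; $\dlogtime$-uniformity follows from the regular index structure of these formulas. For $\max\{0,x-y\}$ I would first compute the comparison predicate $x\geq y$, again an $\ac{0}$ formula locating the most significant bit where $x$ and $y$ disagree, then form $x-y$ by two's-complement addition $x+\overline{y}+1$ using the addition circuit just built, and finally multiplex bitwise: output the difference when $x\geq y$ and all-zeros otherwise. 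Each stage is $\ac{0}$, and their constant-depth composition remains $\ac{0}$.

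Next I would push the three $\tc{0}$ operations through reductions based at counting, which is $\tc{0}$ essentially by the definition of threshold circuits. For $\sum_{i=1}^n x_i$ I would compute, for each bit position $j$, the column count $c_j=\#\{i : \text{bit $j$ of $x_i$ is $1$}\}$ by a single counting gate; this replaces the sum of $n$ numbers of $n$ bits by the smaller iterated sum $\sum_j 2^j c_j$ of numbers with only $O(\log n)$ significant bits, which a standard block-summation argument (grouping the shifted terms into $O(1)$ non-overlapping families) collapses into a further round of counting, yielding all output bits in constant depth. Ordinary multiplication $x\cdot y=\sum_{j} y_j\,(x\ll j)$ is then just iterated addition of $n$ shifted copies, hence $\tc{0}$ as well.

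The hard part will be $\prod_{i=1}^n x_i$ and $\floors{x/y}$, whose naive divide-and-conquer implementations only give $\nc{1}$ depth; their membership in $\tc{0}$ is exactly the substantive content of \cite{Hes01,HAB02}. The route I would follow is the Chinese Remainder Representation (CRR): pick polynomially many small primes, each of $O(\log n)$ bits, reduce every $x_i$ modulo each prime, and take the product in each residue channel, since modular iterated multiplication over $O(\log n)$-bit values is within reach of counting and is $\tc{0}$. The genuine obstacle is converting the resulting CRR back to binary in constant depth, and this is where I expect essentially all the difficulty to concentrate: the conversion must estimate the rank (the number of wraps of the modulus) through a sum of rational approximations together with a rounding step, and it is precisely the $\tc{0}$ computability of this CRR-to-binary conversion that is the main technical achievement of \cite{HAB02}. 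Division $\floors{x/y}$ I would then obtain by computing a sufficiently accurate reciprocal $1/y$ via a Newton-type or geometric-series expansion whose partial sums are iterated products and sums, followed by a single correction step to fix the floor; granting the CRR-to-binary conversion, every ingredient is one of the $\tc{0}$ operations already established, so the whole computation stays in $\tc{0}$.
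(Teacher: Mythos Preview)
Your proposal is a correct and well-organized sketch of the underlying arguments, but you should note that the paper itself does not prove this lemma at all: it is stated as a known fact, with the citations \cite{Hes01,HAB02} carrying the entire burden. There is no ``paper's own proof'' to compare against beyond the attribution.

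What you have written is essentially an outline of the proofs in the cited literature (carry-lookahead for $\ac{0}$ addition, counting-based iterated addition, and the CRR machinery of Hesse--Allender--Barrington for iterated multiplication and division in $\tc{0}$), which is more than the paper asks for. If your goal is to match the paper, a single sentence deferring to \cite{Hes01,HAB02} suffices; if your goal is an independent account, your sketch is sound, with the caveat you already flag yourself---namely that the $\dlogtime$-uniform CRR-to-binary conversion is the genuinely deep step and cannot be filled in without reproducing a substantial portion of \cite{HAB02}.
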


\subsection{Refined Optimization Problems}\label{sec:comb-OPs}

An {\em optimization  problem} is simply a search problem, in which we are asked to look for a best possible feasible solution of the problem for each given admissible input.
In the past literature, NP optimization problems have been a centerfold of the intensive study and low-complexity optimization problems have been mostly neglected except for \cite{Tan07}. To deal with those problems, we intend to refine the existing framework of NP optimization problems in terms of log-space and uniform-circuit computations.

In what follows, we shall formally introduce 14 different classes of refined combinatorial optimization problems, including 4 well-known classes $\npo$, $\po_{\npo}$, $\apxp_{\npo}$, and $\ptas_{\npo}$, in order to justify the correctness of our definitions.

\paragraph{NPO and NLO.}
As a starting point of our study, we formally introduce  {\em NP optimization problems} or {\em NPO problems} in the style of  \cite{ACG+03}.
Since our purpose is to investigate low-complexity optimization problems,
it is better for us to formulate a notion of $\npo$ problems using auxiliary Turing machines instead of nondeterministic Turing machines. An $\npo$ problem $P$ is formally a quadruple $(I,SOL,m,goal)$ whose entries satisfy the following properties.

\s
\renewcommand{\labelitemi}{$\circ$}
\begin{itemize}\vs{-2}
  \setlength{\topsep}{-2mm}%
  \setlength{\itemsep}{0mm}
  \setlength{\parskip}{0cm}%
\item $I$ is a finite set of {\em admissible instances}. There must be a deterministic Turing machine that recognizes $I$ in polynomial time; that is, $I$ belongs to $\p$.

\item $SOL$ is a function mapping $I$ to a collection of certain finite sets, where $SOL(x)$ is a set of {\em feasible solutions} of input instance $x$. There must be a polynomial $q$ such that (i) for every $x\in I$ and every $y\in SOL(x)$, it holds that $|y|\leq q(|x|)$ and (ii) the set $I\circ SOL = \{(x,y) \mid x\in I, y\in SOL(x)\}$ is in $\auxp$; namely, $I\circ SOL$ is recognized in time polynomial in $|x|$ by a certain auxiliary Turing machine stating with $x$ on an input tape and $y$ on an auxiliary tape. By the definition of $SOL$, the set $\{x\in I\mid SOL(x)\neq\setempty\}$ matches $(I\circ SOL)_{q}^{\exists}$, and thus it belongs to $\np$.

\item $goal$ is either {\sc max} or {\sc min}. When $goal=\text{\sc max}$, $P$ is called a {\em maximization problem}; when $goal=\text{\sc min}$, it is a {\em minimization problem}.

\item $m$ is a {\em measure function} (or an {\em objective function}) from $I\circ SOL$ to $\nat^{+}$ whose value $m(x,y)$ is computed  in time polynomial in $|x|$ by a certain auxiliary Turing machine starting with $x$ written  on an input tape and $y$ on an auxiliary tape. Technically speaking, $m$ is a promise problem; however, by abusing notations, we often express $m$ as a member of $\auxfp$ (i.e., $m\in\auxfp$). For any instance $x\in I$,  $m^*(x)$ denotes the optimal value $goal\{m(x,y)\mid y\in SOL(x)\}$. Moreover, $SOL^*(x)$ expresses the  set $\{y\in SOL(x)\mid m(x,y) = m^*(x)\}$ of optimal solutions of $x$.
\end{itemize}

Notice that, in polynomial time, an auxiliary
Turing machine can copy any string $y$ given on an auxiliary tape into its work tape and then manipulate it freely. This makes the read-once requirement of an auxiliary tape redundant. Therefore, the above definition logically matches
the existing notion of $\npo$ problems in, e.g., \cite{ACG+03}.
Let the notation $\npo$ also express the class of all $\npo$ problems.

A measure function $m$ is called {\em polynomially bounded} if there exists a polynomial $p$ such that $m(x,y)\leq p(|x|,|y|)$ holds for all pairs $(x,y)\in I\circ SOL$. An optimization problem is also said to be {\em polynomially bounded} if its measure function is polynomially bounded. For convenience,  a succinct notation $\pbo$ indicates the collection of all optimization problems that are polynomially bounded.

To analyze the behaviors of low-complexity optimization problems,  Tantau \cite{Tan07} formulated a notion of {\em NL optimization problems} (or {\em NLO problems}, in short), which are obtained simply by replacing the term ``polynomial time'' in the above definition of $\npo$ problems with ``logarithmic space.''
For those $\nlo$ problems, the use of auxiliary Turing machine is essential and it may not be replaced by any Turing machine having no read-once auxiliary input  tapes.

Here, we draw our attention to the read-once requirement posed on an auxiliary input tape. This requirement is quite severe for Turing machines. To see this fact, let us consider the following maximization problem {\sc Max Weight-2SAT}. In the {\em maximum weighted 2-satisfiability problem} ({\sc Max Weight-2SAT}), we seek a truth assignment $\sigma$ satisfying  a given 2CNF formula on a set $X$ of variables and a variable weight function $w:X\to\nat^{+}$ such that the sum $\sum_{x\in X}\sigma(x)w(x)+1$ must be maximized.
Although its associated decision problem 2SAT, in which we are asked to decide whether a given 2CNF formula is satisfiable, is NL-complete (from a result of \cite{JLL76}), it is not clear whether {\sc Max Weight-2SAT} belongs to $\nlo$.

To express the class of all $\nlo$ problems, we use the notation of $\nlo$. It follows that $\nlo\subseteq \npo$. Moreover, $\minnl$ (resp., $\maxnl$) denotes the class of all minimization (resp., maximization) problems in $\nlo$; thus, $\nlo$ equals the union $\minnl\cup \maxnl$.

\paragraph{PO, LO, NC$^{i}$O, and AC$^{i}$O.}
We say that an $\npo$ problem $P = (I,SOL,m,goal)$ is {\em P-solvable} if there exists a polynomial-time deterministic Turing machine $M$ such that, for every instance $x\in I$, if $SOL(x)\neq\setempty$, then $M$ returns an optimal solution $y$ in $SOL(x)$  and, otherwise, $M$ returns ``no solution'' (or a designated symbol $\bot$). Moreover, the values $m(x,M(x))$ ($=m^*(x)$) must be computed in polynomial time from inputs $x$.
As a result, the set $\{x\in I\mid SOL(x)\neq\setempty\}$ must be in $\p$.
Given a class $\DD$ of optimization problems, the notation $\po_{\DD}$ expresses the class of all optimization problems in $\DD$ that are $\p$-solvable. Similarly, we can define the notations of $\lo_{\DD}$, $\nco{i}_{\DD}$, and $\aco{i}_{\DD}$ by replacing the term ``$\p$-solvable'' with ``$\dl$-solvable,'' ``$\nc{i}$-solvable,'' and ``$\ac{i}$-solvable,'' respectively, for each index $i\in\nat$. Conventionally, $\po_{\npo}$ is written as $\po$ and $\lo_{\nlo}$ is noted briefly as $\lo$ in \cite{Tan07}. Notice that $\aco{0}_{\DD} \subseteq \nco{1}_{\DD} \subseteq \lo_{\DD} \subseteq \po_{\DD}$ for any reasonable class $\DD$.

It is important to note that, as in the case of $\lo_{\npo}$, for example, when a problem $P$ is $\dl$-solvable, its log-space algorithm, say, $M$ that solves $P$ does not need to check whether an input $x$ given to $M$ is actually admissible instance (i.e., $x\in I$), because such a task may be in general impossible for log-space machines. Hence, $P$ is technically a promise problem and we normally allow $M$ to behave arbitrarily on inputs outside of $I$ or $I\circ SOL$.

\paragraph{APXP, APXL, APXNC$^{i}$, and APXAC$^{i}$.}
Next, we shall  define approximation classes using a notion of $\gamma$-approximation. Given an optimization problem $P=(I,SOL,m,goal)$, the {\em performance ratio} of solution $y$ with respect to instance $x$ is defined as
\[
R(x,y) = \max\left\{ \left|\frac{m(x,y)}{m^*(x)}\right|, \left|\frac{m^*(x)}{m(x,y)}\right| \right\},
\]
provided that neither $m(x,y)$ nor $m^*(x)$ is zero. Notice that  $R(x,y)=1$ iff $y\in SOL^*(x)$. Let $\gamma>1$ be a constant indicating an upper bound of the performance ratio.
With this constant $\gamma$, we say that $P$ is
{\em polynomial-time $\gamma$-approximable} if there exists a  polynomial-time deterministic Turing machine $M$  such that, for any instance $x\in I$, if $SOL(x)\neq\setempty$, then $M(x)\in SOL(x)$ and $R(x,M(x))\leq \gamma$; otherwise, $M(x)$ outputs ``no solution'' (or a symbol $\bot$); in addition,
the values\footnote{The polynomial-time computability of the value $m(x,M(x))$ is trivial; however, the computability requirement for this value is quite important for the log-space computability and the NC$^{1}$ computability.} $m(x,M(x))$ must be computed in polynomial time from inputs $x$.
Such a machine is referred to as a {\em $\gamma$-approximate algorithm}. The $\gamma$-approximability clearly implies that the set  $\{x\in I\mid SOL(x)\neq\setempty\}$ belongs to $\p$.
The notation $\apxp_{\DD}$ denotes a class consisting of problems $P$ in   class $\DD$ of optimization problems such that, for a certain fixed constant  $\gamma>1$, $P$ is polynomial-time $\gamma$-approximable. Notice that $\apxp_{\npo}$ is conventionally expressed as $\apx$ (see, \eg \cite{ACG+03}).

Likewise, we define three extra notions of ``log-space $\gamma$-approximation'' \cite{Tan07}, ``$\nc{i}$ $\gamma$-approximation,''  and ``$\ac{i}$ $\gamma$-approximation'' by replacing ``polynomial-time Turing machine'' in the above definition with ``logarithmic-space (auxiliary) Turing machine,'' ``uniform family of $\nc{i}$-circuits,'' and ``uniform family of $\ac{i}$-circuits,'' respectively, for every index $i\in\nat$.
We then introduce the notations of $\apxl_{\DD}$, $\apxnc{i}_{\DD}$, and $\apxac{i}_{\DD}$ using ``log-space $\gamma$-approximation,'' ``$\nc{i}$ $\gamma$-approximation,'' and ``$\ac{i}$ $\gamma$-approximation,''  respectively. It follows that $\apxac{0}_{\DD} \subseteq \apxnc{1}_{\DD} \subseteq \apxl_{\DD} \subseteq \apxp_{\DD}$ for any reasonable optimization/approximation class $\DD$.

\paragraph{PTAS, LSAS, NC$^{i}$AS, and AC$^{i}$AS.}
A deterministic Turing machine $M$ is called a {\em polynomial-time approximation scheme} (or a PTAS) if, for any ``fixed constant'' $r\in\rational^{>1}$, there exists a polynomial $p_r(n)$ such that, for every admissible instance $x\in I$, if $SOL(x)\neq\setempty$, then $M$ takes $(x,r)$ as its input and outputs an $r$-approximate solution of $x$ in time at most $p_r(|x|)$; otherwise, $M(x)$ outputs ``no solution'' (or a symbol $\bot$).
Examples of such polynomial $p_r(n)$ are   $\ceilings{\frac{r}{r-1}}n^3$ and $n^{\ceilings{1/(r-1)}}$.
Any approximation scheme is also a $\gamma$-approximate algorithm for any chosen constant $\gamma>1$. The approximation class $\mathrm{PTAS}_{\npo}$ denotes a collection of all NPO problems that admit PTAS's. In a similar manner, we can define a notion of {\em logarithmic-space approximation scheme} (or LSAS) and the associated approximation class $\mathrm{LSAS}_{\nlo}$ by replacing ``polynomial time'' and ``polynomial'' with ``logarithmic space'' and ``logarithmic function,'' respectively.

The definitions of $\nc{i}\mathrm{AS}_{\nlo}$ and $\ac{i}\mathrm{AS}_{\nlo}$ are given essentially in the same way with a slight technical complication on uniformity condition. $\nc{i}\mathrm{AS}_{\nlo}$ (resp., $\ac{i}\mathrm{AS}_{\nlo}$) can be introduced using circuits of size $p_r(n)$ and depth $\ell_r(n)$ with bounded (resp., unbounded) fan-in gates, where $p_r(n)$ is a polynomial and $\ell_r(n)$ is a logarithmic function as long as $r$ is treated as a fixed constant. Here, the uniformity requires $\mathrm{DTIME}(\ell'_r(n))$ for another logarithmic function $\ell'_r$ with $r$ being treated as a constant.  
\ms

We have so far given 14 classes of optimization problems, which we shall discuss in details in the subsequent sections. Given an arbitrary nonempty class $\DD$ of optimization problems, it holds that  $\nco{1}_{\DD}\subseteq \lo_{\DD} \subseteq \po_{\DD}$ and  $\apxnc{1}_{\DD}\subseteq \apxl_{\DD}\subseteq \apxp_{\DD}$. It also follows that $\nco{1}_{\DD} \subseteq \ncas{1}_{\DD} \subseteq \apxnc{1}_{\DD}$, $\lo_{\DD} \subseteq \lsas_{\DD}  \subseteq \apxl_{\DD}$, and $\po_{\DD} \subseteq \ptas_{\DD} \subseteq \apxp_{\DD}$.
When $\DD=\nlo$, in particular, three classes $\apxp_{\nlo}$, $\ptas_{\nlo}$, and $\po_{\nlo}$ coincide with $\nlo$. Since the proof of this fact is short, we include it here.

\begin{lemma}\label{PO=APXP=NLO}
$\apxp_{\nlo} = \ptas_{\nlo} = \po_{\nlo} = \nlo$.
\end{lemma}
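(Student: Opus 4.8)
The plan is to establish the single nontrivial inclusion $\nlo \subseteq \po_{\nlo}$ and then close the loop using the generic containments already recorded in the excerpt. By definition, $\po_{\nlo}$, $\ptas_{\nlo}$, and $\apxp_{\nlo}$ are all subclasses of $\nlo$, and the chain $\po_{\nlo} \subseteq \ptas_{\nlo} \subseteq \apxp_{\nlo}$ is the case $\DD=\nlo$ of a relation noted earlier; hence it suffices to prove that every $\nlo$ problem is exactly $\p$-solvable. First I would fix an arbitrary $P = (I, SOL, m, goal) \in \nlo$ together with a log-space auxiliary machine $M_1$ witnessing $I \circ SOL \in \auxl$ and a log-space auxiliary machine $M_2$ witnessing $m \in \auxfl$, each of which scans a candidate solution $y$ on its read-once auxiliary tape while reading $x$ two-way on its input tape.

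The core step is to package feasibility testing and measure evaluation into a single polynomial-size weighted configuration graph $G_x$ for each input $x$. Since the auxiliary tape is read-once, both $M_1$ and $M_2$ consume the symbols of $y$ strictly left to right, so I can run them in lockstep: advance each machine through its internal ($\lambda$-)moves until it is poised to read the next symbol of $y$, feed the same symbol to both, and repeat. Because the work tapes are logarithmically bounded, the \emph{surface configurations} (inner state, work-tape contents and heads, and input-head position, but not the already-written output) of the product machine $M=(M_1,M_2)$ number only polynomially many in $|x|$; a pigeonhole argument on these configurations shows that the internal computation between two consecutive reads is polynomially bounded, since otherwise $M_2$ would loop and fail to halt. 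I let the vertices of $G_x$ be the surface configurations, layered by the number of symbols of $y$ read so far (giving an acyclic graph with at most $q(|x|)$ layers, where $|y|\leq q(|x|)$), and I label each edge that reads a bit $b$ with the polynomially long block of output bits that $M_2$ emits during the corresponding transition. Then each $y \in SOL(x)$ corresponds to a path from the start configuration to an accepting sink, and the binary number spelled out by concatenating the edge labels along that path is exactly $m(x,y)$.

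With $G_x$ in hand, computing $m^*(x) = goal\{m(x,y) \mid y \in SOL(x)\}$ becomes the problem of finding a start-to-accepting path whose concatenated label denotes the largest (for $goal=\textsc{max}$) or smallest (for $goal=\textsc{min}$) integer, which I would solve by a routine dynamic program over the layered graph that compares partial outputs as the integers they denote (so that paths emitting strings of differing bit-lengths are ordered correctly). An optimal solution $y^*$ is then recovered by tracing the optimizing path, and the case $SOL(x)=\setempty$ is signalled by the absence of any such path, consistently with $(I\circ SOL)^{\exists}\in\nl\subseteq\p$. All of this runs in time polynomial in $|x|$, which places $P$ in $\po_{\nlo}$ and completes the proof.

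I expect the main obstacle to be the faithful construction of $G_x$ rather than the optimization itself. The delicate points are: (i) fusing the two separate read-once log-space machines into one product without any blow-up, which works precisely because the read-once discipline forces both machines to read $y$ in the same order, so that no unbounded buffer of pending symbols is ever required; and (ii) arguing that the measure, although it may be exponentially large in value, has only polynomially many bits and is reproduced faithfully as a concatenation of edge labels, so that the numeric comparison of candidate optima in the dynamic program is carried out correctly on their binary representations.
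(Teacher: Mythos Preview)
Your argument is correct and takes a genuinely different route from the paper. The paper observes that a predicate of the form ``there exists a feasible $z$ lexicographically at least $y$ with $m(x,z)$ at least a given threshold'' lies in $\nl\subseteq\p$, and then locates an optimal solution by a standard prefix/binary search against this polynomial-time oracle; the proof is short precisely because all the work is outsourced to the inclusion $\nl\subseteq\p$. You instead build the product configuration graph of the two witnessing $\auxl$/$\auxfl$ machines explicitly and solve a longest- (or shortest-) labelled-path problem on it by dynamic programming, which is more constructive and entirely self-contained. One point worth spelling out beyond ``routine'': your DP is sound because at any fixed surface configuration the suffix emitted along any forward extension depends only on that configuration and the chosen continuation, so for two incoming paths with prefix strings $p_1,p_2$ and a common continuation producing suffix $q$ one has $\mathrm{val}(p_i q)=\mathrm{val}(p_i)\cdot 2^{|q|}+\mathrm{val}(q)$; hence storing only the best integer value $\max_i \mathrm{val}(p_i)$ at each vertex suffices even when the $p_i$ have different bit-lengths. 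With that justification made explicit, your approach yields the same conclusion and, as a bonus, produces $m^*(x)$ directly as the DP entry at the best accepting sink rather than via a separate search.
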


\begin{proof}
Note that $\po_{\nlo}\subseteq \ptas_{\nlo}\subseteq \apxp_{\nlo}$. First, we claim that $\apxp_{\nlo}\subseteq \nlo$. By the definition of $\apxp_{\nlo}$, all problems in $\apxp_{\nlo}$ must be $\nlo$ problems, and hence they are in $\nlo$.

Next, we show that $\nlo\subseteq \po_{\nlo}$.
Let $P=(I,SOL,m,goal)$ be any problem in $\nlo$.
Here, we consider only the case of $goal=\text{\sc max}$ because the case of {\sc min} is analogous. We want to show that $P$ belongs to $\po_{\nlo}$. Let $x$ be any instance in $I$. Consider the following algorithm on input $x$.
Here, we define $D=\{(x,y)\in I\circ SOL \mid \exists z\in SOL(x)\,[ z \geq y \wedge m(x,z)\geq m(x,y)]\}$, where the notation $\geq$ used for strings $x$ and $y$ is the lexicographic ordering. Note that $D\in\nl\subseteq\p$. Now, we can use a binary search technique using $D$ to find a maximal solution $y\in SOL^*(x)$ in polynomial time.
Therefore, we conclude that $\nlo \subseteq \po_{\nlo}\subseteq \ptas_{\nlo} \subseteq \apxp_{\nlo} \subseteq \nlo$. This implies the lemma.
\end{proof}


Taking a slightly different approach toward a study on $\npo$ problems, Krentel \cite{Kre88} introduced a class $\optp$ of optimization functions. Let $\maxp$ (resp., $\minp$) denote the class of all functions from $\Sigma_1^*$ to $\Sigma_2^*$, each of which satisfies the following property: there exists a polynomial-time nondeterministic Turing machine $M$ such that, for every input $x\in\Sigma_1^*$, $f(x)$ denotes the maximal (resp., minimal) string (in the lexicographic order) generated by $M$ on $x$ \cite{KST89}, where $\Sigma_1$ and $\Sigma_2^*$ are alphabets.
The class $\optp$ is simply defined as $\maxp\cup \minp$. We further define $\optl$ in a similar way but using log-space nondeterministic Turing machines.
Notice that \`{A}lvarez and Jenner \cite{AJ93} originally defined $\optl$ as the set of {\em only} maximization problems and that we need to pay a special attention to their results whenever we apply them in our setting.

\subsection{Approximation-Preserving Reductions}\label{sec:AP-reducibility}

To compare the computational complexity of two optimization problems, we wish to use three types of reductions between those two problems. We follow well-studied reductions, known as {\em approximation-preserving (AP) reductions} and {\em exact (EX) reductions}.
Given two optimization problems $P=(I_1,SOL_1,m_1,goal)$ and $Q=(I_2,SOL_2,m_2,goal)$, $P$ is {\em  polynomial-time AP-reducible} (or more conveniently, {\em APP-reducible}) to $Q$, denoted $P\APreduces^{\p} Q$, if there are two functions $f$ and $g$ and a constant $c\geq1$ such that the following {\em APP-condition} is satisfied:
\renewcommand{\labelitemi}{$\circ$}
\begin{itemize}\vs{-1}
  \setlength{\topsep}{-2mm}%
  \setlength{\itemsep}{0mm}
  \setlength{\parskip}{0cm}%
\item for any instance $x\in I_1$ and any $r\in\rational^{>1}$, it holds that $f(x,r)\in I_2$,

\item for any $x\in I_1$ and any $r\in\rational^{>1}$, if $SOL_1(x)\neq\setempty$ then $SOL_2(f(x,r))\neq\setempty$,

\item for any $x\in I_1$, any $r\in\rational^{>1}$, and any $y\in SOL_2(f(x,r))$, it holds that $g(x,y,r)\in SOL_1(x)$,

\item $f(x,r)$ is computed by a deterministic Turing machine and $g(x,y,r)$ is computed by an auxiliary Turing machine, both of which run in time polynomial in $(|x|,|y|)$ for any $(x,y)\in I\circ SOL$ and any number  $r\in\rational^{>1}$, and

\item for any $x\in I_1$, any $r\in\rational^{>1}$, and any $y\in SOL_2(f(x,r))$, $R_2(f(x,r),y) \leq r$ implies $R_1(x,g(x,y,r))\leq 1+c(r-1)$, where $R_1$ and $R_2$ respectively express the performance ratios for $P_1$ and $P_2$.
\end{itemize}
Notice that the above APP-condition makes us concentrate only on instances of $\{x\in I\mid SOL(x)\neq\setempty\}$ and that, for other instances $x$, we might possibly set the value $f(x,r)$ arbitrarily (as long as $x\in I_1$ iff $f(x,r)\in I_2$).
When this APP-condition holds, we also say that $P$ {\em APP-reduces}
to $Q$. The triplet $(f,g,c)$ is called a {\em polynomial-time AP-reduction} (or an {\em APP-reduction}) from $P$ to $Q$. For more details, refer to, e.g., \cite{ACG+03}.

To discuss optimization problems within $\po_{\npo}$,
we further need to introduce another type of reduction $(f,g)$, in which $g$ ``exactly'' transforms in polynomial time an optimal solution for $Q$ to another optimal solution for $P$ so that ``$Q\in \po_{\npo}$'' directly implies ``$P\in \po_{\npo}$.'' We write $P\EXreduces^{\p} Q$ when the following {\em EX-condition} holds:
\renewcommand{\labelitemi}{$\circ$}
\begin{itemize}\vs{-1}
  \setlength{\topsep}{-2mm}%
  \setlength{\itemsep}{0mm}
  \setlength{\parskip}{0cm}%
\item for any instance $x\in I_1$, it holds that $f(x)\in I_2$,

\item for any $x\in I_1$, if $SOL_1(x)\neq\setempty$ then $SOL_2(f(x))\neq\setempty$,

\item for any $x\in I_1$ and any $y\in SOL_2(f(x))$, it holds that $g(x,y)\in SOL_1(x)$,

\item $f(x)$ is computed by deterministic Turing machine and $g(x,y)$ is computed by an auxiliary Turing machine, both of which run in time polynomial in $(|x|,|y|)$, and

\item for any $x\in I_1$ and any $y\in SOL_2(f(x))$, $R_2(f(x),y) =1$ implies $R_1(x,g(x,y)) =1$, where $R_1$ and $R_2$ respectively express the performance ratios for $P_1$ and $P_2$.
\end{itemize}
The above pair $(f,g)$ is called a {\em polynomial-time EX-reduction} (or an {\em EXP-reduction}) from $P$ to $Q$.

It is quite useful to introduce a notion that combines both  $\APreduces^{\p}$ and $\EXreduces^{\p}$. Let us define the notion of {\em polynomial-time strong AP-reduction} (strong APP-reduction or sAPP-reduction), denoted $\sAPreduces^{\p}$, obtained from $\APreduces^{\p}$ by allowing $r$ (used in the above definition of APP-reduction) to be chosen from $\rational^{\geq1}$ (instead of $\rational^{>1}$).

Next, we weaken the behaviors of polynomial-time (strong) APP-reductions by modifying the ``polynomial-time'' requirement imposed on the aforementioned definition of (strong) APP-condition.  When we replace ``polynomial-time'' by ``logarithmic-space,''  ``uniform family of $\nc{1}$-circuits,'' and  ``uniform family of $\ac{0}$-circuits,''  we respectively obtain the corresponding notions of {\em (strong) APL-reduction} ($\APreduces^{\dl}$, $\sAPreduces^{\dl}$),  {\em (strong) APNC$^1$-reduction} ($\APreduces^{\nc{1}}$, $\sAPreduces^{\nc{1}}$), and {\em (strong) APAC$^0$-reduction} ($\APreduces^{\ac{0}}$, $\sAPreduces^{\ac{0}}$). Notice that the notion of error-preserving reduction (or E-reduction), which was used in \cite{Tan07}, essentially matches sAPL-reduction. Likewise, we define {\em EXL-reduction} ($\EXreduces^{\dl}$), {\em EXNC$^{1}$-reduction} ($\EXreduces^{\nc{1}}$), and {\em EXAC$^{0}$-reduction} ($\EXreduces^{\ac{0}}$) from EXP-reduction.

The following two lemmas are immediate from the definition of sAP-reductions and we omit their proofs.

\begin{lemma}\label{reduction-transitive}
For any two reduction type $e_1,e_2\in\{\p,\dl,\nc{1},\ac{0}\}$, if $e_1\subseteq e_2$ (seen as complexity classes), then $P_1\sAPreduces^{e_1}P_2$ implies $P_1\sAPreduces^{e_2}P_2$. The same statement holds for $\APreduces^{e}$ and $\EXreduces^{e}$.
\end{lemma}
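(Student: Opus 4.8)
The plan is to show that one and the same witness transfers unchanged from the weaker reduction to the stronger one. Fix reduction types $e_1,e_2\in\{\p,\dl,\nc{1},\ac{0}\}$ with $e_1\subseteq e_2$, and suppose $P_1\sAPreduces^{e_1}P_2$ via a triple $(f,g,c)$. The key structural observation is that the (strong) APP-condition splits into two disjoint parts. The \emph{combinatorial} part consists of the first, second, third, and fifth bullets: that $f(x,r)\in I_2$; that $SOL_1(x)\neq\setempty$ forces $SOL_2(f(x,r))\neq\setempty$; that every $y\in SOL_2(f(x,r))$ yields $g(x,y,r)\in SOL_1(x)$; and that the ratio implication $R_2(f(x,r),y)\leq r \THEN R_1(x,g(x,y,r))\leq 1+c(r-1)$ holds (with $r$ ranging over $\rational^{\geq1}$ in the strong case). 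None of these four clauses mentions \emph{how} $f$ and $g$ are computed; they are assertions about the values of $f$ and $g$ alone, and are therefore insensitive to the reduction type. The sole model-dependent clause is the fourth bullet, which demands that $f$ and $g$ be computable within the resource $e_1$.

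The crux is thus merely the functional counterpart of the class inclusion. Since $e_1\subseteq e_2$ holds as decision classes, the associated function classes satisfy $\fac{0}\subseteq\fnc{1}\subseteq\fl\subseteq\fp$ for the instance map $f$, and the corresponding auxiliary/functional inclusions (in particular $\auxfl\subseteq\auxfp$, with the circuit cases absorbing the auxiliary input $y$ as ordinary input bits) for $g$. I would verify the one genuinely non-trivial containment, $\ac{0}\subseteq\nc{1}$: each unbounded fan-in gate is replaced by a balanced binary tree of bounded fan-in gates, which turns constant depth into $O(\log n)$ depth while keeping the size polynomial, and this replacement respects DLOGTIME-uniformity. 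The remaining inclusions are standard simulations that carry over verbatim to the functional and auxiliary settings. Consequently, the very same $f$ and $g$ that are computable within $e_1$ are also computable within $e_2$.

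Combining the two observations, the triple $(f,g,c)$ witnessing $P_1\sAPreduces^{e_1}P_2$ satisfies every clause of the $e_2$-version of the (strong) APP-condition verbatim, so $P_1\sAPreduces^{e_2}P_2$. The reasoning is uniform across paradigms: suppressing the model-dependent fourth bullet and re-reading the remaining clauses shows at once that the identical argument handles $\APreduces^{e}$ (restricting $r$ to $\rational^{>1}$) and $\EXreduces^{e}$ (replacing the approximation clauses by the exactness implication $R_2(f(x),y)=1 \THEN R_1(x,g(x,y))=1$). I do not expect a real obstacle here, as the lemma is essentially definitional; the only point requiring any care is the functional transfer of $\ac{0}\subseteq\nc{1}$ under DLOGTIME-uniformity, which is where a careless argument could slip.
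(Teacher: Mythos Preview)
Your proposal is correct and matches the paper's treatment: the paper simply states that this lemma is ``immediate from the definition of sAP-reductions'' and omits the proof entirely. Your decomposition into the model-independent combinatorial clauses and the single model-dependent computability clause, together with the functional class inclusions, is exactly the unpacking one would give if forced to write it out.
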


\begin{lemma}
For any reduction type $e\in\{\p,\dl,\nc{1},\ac{0}\}$, $P_1\sAPreduces^{e} P_2$ implies both $P_1\APreduces^{e} P_2$ and $P_1\EXreduces^{c} P_2$.
\end{lemma}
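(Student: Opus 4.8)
The plan is to treat the two implications separately, exploiting the fact that, by definition, the strong APP-condition is verbatim the APP-condition except that the parameter $r$ is permitted to range over $\rational^{\geq1}$ rather than over $\rational^{>1}$. Throughout I assume a fixed strong reduction $(f,g,c)$ of type $e$ witnessing $P_1\sAPreduces^{e}P_2$, and I argue that this single object yields both of the claimed reductions.

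For $P_1\sAPreduces^{e}P_2\THEN P_1\APreduces^{e}P_2$, I would simply note the set inclusion $\rational^{>1}\subseteq\rational^{\geq1}$. Each clause of the APP-condition is required only for $r\in\rational^{>1}$, a subset of the range $\rational^{\geq1}$ over which the sAPP-condition already guarantees every clause. Hence the very same triple $(f,g,c)$ satisfies the APP-condition, and it is computed by the same machine or circuit of type $e$; no further construction is needed.

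For $P_1\sAPreduces^{e}P_2\THEN P_1\EXreduces^{e}P_2$, the idea is to specialize the strong reduction at the single value $r=1$. I would define $f'(x)=f(x,1)$ and $g'(x,y)=g(x,y,1)$, thereby discarding the approximation parameter. The first four EX-conditions---namely $f'(x)\in I_2$, preservation of $SOL_1(x)\neq\setempty$, the membership $g'(x,y)\in SOL_1(x)$, and the complexity bounds on $f'$ and $g'$---follow immediately from the corresponding sAPP-clauses instantiated at $r=1$; the complexity bounds survive because hardwiring the constant $r=1$ into an object of type $e$ produces another object of the same type $e$.

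The crux is the final EX-clause, and here I would invoke two elementary facts. First, the performance ratio $R(x,y)$ is always at least $1$, being the maximum of a positive quantity and its reciprocal, so the inequality $R\leq1$ is equivalent to the equality $R=1$. Second, the sAPP-bound $1+c(r-1)$ collapses to $1$ at $r=1$. Combining these, the sAPP-condition at $r=1$ reads ``$R_2(f'(x),y)\leq1$ implies $R_1(x,g'(x,y))\leq1$,'' which is precisely ``$R_2(f'(x),y)=1$ implies $R_1(x,g'(x,y))=1$''---exactly the EX-condition. I expect no genuine obstacle; the only point warranting care is confirming that fixing $r=1$ respects the complexity constraints in the $\nc{1}$ and $\ac{0}$ cases, which is routine.
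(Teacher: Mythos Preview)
Your proposal is correct and is precisely the intended argument: the paper declares this lemma ``immediate from the definition of sAP-reductions'' and omits the proof entirely, and your unpacking---restricting to $r\in\rational^{>1}$ for the AP-reduction and specializing to $r=1$ (using $R\geq1$ always) for the EX-reduction---is exactly what that remark encodes.
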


In the next lemma, we shall present a useful property, called a {\em downward closure property}, for $\APreduces^{\dl}$- and $\EXreduces^{\dl}$-reductions. A similar property holds also for $\APreduces^{\nc{1}}$- and $\EXreduces^{\nc{1}}$-reductions.

\begin{lemma}\label{downward-property}
[downward closure property]
Let $P$ and $Q$ be any two optimization problems in $\nlo$.
\begin{enumerate}\vs{-2}
  \setlength{\topsep}{-2mm}%
  \setlength{\itemsep}{1mm}%
  \setlength{\parskip}{0cm}%

\item Let $\DD\in\{\nlo,\apxl,\lsas,\apxnc{1},\ncas{1}\}$. If $P\APreduces^{\nc{1}}Q$ and $Q\in \DD_{\nlo}$, then $P\in\DD_{\nlo}$, where $\nlo_{\nlo}$ is understood as $\nlo$.

\item Let $\DD\in\{\lo,\nco{1}\}$. If $P\EXreduces^{\nc{1}}Q$ and $Q\in \DD_{\nlo}$, then $P\in\DD_{\nlo}$.
\end{enumerate}
\end{lemma}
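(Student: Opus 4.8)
Both parts follow a single \emph{composition} template. Given the reduction from $P$ to $Q$ together with an algorithm $M_Q$ witnessing $Q\in\DD_{\nlo}$, the plan is to build an algorithm $M_P$ for $P$ that, on input $x$, first applies the reduction map $f$, then runs $M_Q$ on the resulting $Q$-instance to obtain a solution $y$, and finally applies the reduction map $g$ to turn $y$ into a solution of $P$. Two things then need to be verified: (i) that $M_P$ stays inside the resource bound defining $\DD_{\nlo}$, and (ii) that the quality of $M_P$'s output (approximation ratio in Part 1, exact optimality in Part 2) is what $\DD_{\nlo}$ demands. All of the quality bookkeeping is handled by the single clause of the reduction relating the two performance ratios $R_1$ and $R_2$, which carries over verbatim to the $\nc{1}$ setting.

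For Part 1, suppose first $\DD\in\{\apxl,\apxnc{1}\}$, so that $M_Q$ is a $\gamma$-approximate algorithm for $Q$ with some fixed constant $\gamma>1$; fix any rational $r_0\geq\gamma$. On an input $x$ with $SOL_1(x)\neq\setempty$, $M_P$ computes $x'=f(x,r_0)$, runs $M_Q$ to obtain $y\in SOL_2(x')$ with $R_2(x',y)\leq\gamma\leq r_0$, and outputs $g(x,y,r_0)\in SOL_1(x)$; the relating clause then yields $R_1(x,g(x,y,r_0))\leq 1+c(r_0-1)$, a constant exceeding $1$, so $P$ is $(1+c(r_0-1))$-approximable. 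For $\DD\in\{\lsas,\ncas{1}\}$ I would, given a target ratio $r\in\rational^{>1}$, run $Q$'s scheme at a tightened ratio $r'\in\rational^{>1}$ chosen so that $1+c(r'-1)\leq r$ (for instance $r'=1+(r-1)/\ceilings{c}$), and then compose with $f(\cdot,r')$ and $g(\cdot,\cdot,r')$; since $r'$ is a constant whenever $r$ is, the resulting scheme for $P$ respects the log-space (resp.\ $\nc{1}$) resource bound. The case $\DD=\nlo$ is vacuous, as $P\in\nlo$ holds by hypothesis.

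For Part 2, let $M_Q$ be an exact ($\dl$- or $\nc{1}$-) algorithm returning some optimal $y\in SOL_2^*(f(x))$, so that $R_2(f(x),y)=1$. Running $f$, then $M_Q$, then $g$ produces $g(x,y)\in SOL_1(x)$, and the EX-condition forces $R_1(x,g(x,y))=1$, i.e.\ $g(x,y)\in SOL_1^*(x)$; hence $M_P$ solves $P$ exactly and $P\in\lo_{\nlo}$ (resp.\ $\nco{1}_{\nlo}$). In both parts one must additionally confirm that the reported value $m_1(x,M_P(x))$ is computable within the stated bound, but this is immediate since $m_1$ lies in the appropriate functional class and $M_P(x)$ is itself producible within the same bound.

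The one genuine obstacle is \emph{closure of the resource class under composition}. For $\nc{1}$ this reduces to stacking the three $\order{\log n}$-depth, polynomial-size circuits for $f$, $M_Q$, and $g$ into a single family of $\order{\log n}$ depth and polynomial size, and then checking that $\dlogtime$-uniformity survives the stacking, i.e.\ that the gate numbering of the composed family is still decodable in logarithmic time. For the log-space classes the delicate point is the read-once auxiliary input tape on which $g$ expects its middle argument $y$: since $y=M_Q(f(x,\cdot))$ cannot be stored in $\order{\log n}$ space, its bits must be regenerated on demand each time $g$ advances its auxiliary head, and each such bit in turn forces a re-run of $M_Q$ with the bits of $f(x,\cdot)$ themselves recomputed from $x$. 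Because $g$ reads and $M_Q$ writes strictly left-to-right, this nested recomputation can be pipelined while storing only head-position counters, keeping everything in $\order{\log n}$ space; this is precisely the standard closure of $\fl$ (and of $\nc{1}$) under composition, and it is where the argument needs the most care.
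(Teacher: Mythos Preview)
Your approach is essentially the same as the paper's: compose $f$, the algorithm for $Q$, and $g$, and verify both the resource bound and the quality guarantee via the AP (resp.\ EX) clause; the paper spells out only the case $\DD=\apxnc{1}$ and defers the rest with ``similarly treated,'' whereas you give a bit more detail on the scheme cases and on composition. One small imprecision: since the reduction in this lemma is $\APreduces^{\nc{1}}$ (resp.\ $\EXreduces^{\nc{1}}$), the function $g$ is realized by a uniform $\nc{1}$ circuit family, not by a log-space machine with a read-once auxiliary tape, so your final paragraph's discussion of ``the read-once auxiliary input tape on which $g$ expects its middle argument'' is misplaced---the relevant point for the log-space classes $\apxl$, $\lsas$, $\lo$ is simply that $\fnc{1}\subseteq\fl$ and $\fl$ is closed under composition, which is what the paper invokes.
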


An immediate consequence of Lemma \ref{downward-property} is the following corollary. In comparison, by setting  $\DD\in\{\nlo,\apxl,\lsas,\lo\}$, for any $P,Q\in \nlo$, if $P\sAPreduces^{\dl}Q$ and $Q\in \DD_{\nlo}$, then $P\in\DD_{\nlo}$ \cite{Tan07}.

\begin{corollary}
Let $\DD\in\{\nlo,\apxl,\lsas,\lo\}$. For any $P,Q\in \nlo$, if $P\sAPreduces^{\nc{1}}Q$ and $Q\in \DD_{\nlo}$, then $P\in\DD_{\nlo}$, where $\nlo_{\nlo}$ is $\nlo$.
\end{corollary}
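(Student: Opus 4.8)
The plan is to derive this corollary directly from Lemma~\ref{downward-property} together with the transitivity-type relationship recorded in Lemma~\ref{reduction-transitive} and the elementary fact that a strong $\nc{1}$-reduction entails both an $\APreduces^{\nc{1}}$-reduction and an $\EXreduces^{\nc{1}}$-reduction (the second lemma immediately following Lemma~\ref{reduction-transitive}). The idea is that $\sAPreduces^{\nc{1}}$ is deliberately defined to be the common strengthening of $\APreduces^{\nc{1}}$ and $\EXreduces^{\nc{1}}$, so whichever of the two statements of Lemma~\ref{downward-property} applies, the hypothesis $P\sAPreduces^{\nc{1}}Q$ already supplies the weaker reduction that the relevant part of Lemma~\ref{downward-property} requires. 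Thus the corollary is essentially a case split over the four choices of $\DD$.

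First I would fix $P,Q\in\nlo$ with $P\sAPreduces^{\nc{1}}Q$ and $Q\in\DD_{\nlo}$, and invoke the lemma stating $P\sAPreduces^{\nc{1}}Q \THEN P\APreduces^{\nc{1}}Q$ and $P\EXreduces^{\nc{1}}Q$. Then I would split on $\DD$. For $\DD\in\{\nlo,\apxl,\lsas\}$, these three classes appear in the list $\{\nlo,\apxl,\lsas,\apxnc{1},\ncas{1}\}$ of Lemma~\ref{downward-property}(1); since we already have $P\APreduces^{\nc{1}}Q$, part~(1) of that lemma yields $P\in\DD_{\nlo}$ directly. For the remaining case $\DD=\lo$, I would instead appeal to Lemma~\ref{downward-property}(2), whose admissible list is $\{\lo,\nco{1}\}$; here the required reduction is $\EXreduces^{\nc{1}}$ rather than $\APreduces^{\nc{1}}$, and we have exactly $P\EXreduces^{\nc{1}}Q$ on hand, so part~(2) gives $P\in\lo_{\nlo}$. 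Combining the two cases covers all $\DD\in\{\nlo,\apxl,\lsas,\lo\}$, and recalling the convention $\nlo_{\nlo}=\nlo$ completes the argument.

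The only genuine subtlety, and the step I would flag as the main obstacle, is matching each class $\DD$ to the correct part of Lemma~\ref{downward-property}: the $\lo$ case cannot be handled by part~(1) because $\lo$ is absent from the list $\{\nlo,\apxl,\lsas,\apxnc{1},\ncas{1}\}$, and it is precisely the reason an exact (rather than merely approximation-preserving) reduction is needed, reflecting that membership in $\lo_{\nlo}$ demands transporting \emph{optimal} solutions, not just $\gamma$-approximate ones. Conceptually this is why $\sAPreduces^{\nc{1}}$ is the right hypothesis: it simultaneously furnishes the approximation-preserving reduction needed for $\nlo,\apxl,\lsas$ and the exact reduction needed for $\lo$. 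Everything else is a mechanical substitution into the two lemmas with no computation required.
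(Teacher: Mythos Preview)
Your proposal is correct and matches the paper's own reasoning: the paper simply declares the corollary an ``immediate consequence of Lemma~\ref{downward-property},'' and your case split---using the implication $\sAPreduces^{\nc{1}}\Rightarrow\APreduces^{\nc{1}}$ together with part~(1) for $\DD\in\{\nlo,\apxl,\lsas\}$, and $\sAPreduces^{\nc{1}}\Rightarrow\EXreduces^{\nc{1}}$ together with part~(2) for $\DD=\lo$---is precisely the intended unpacking of that remark. (Note that Lemma~\ref{reduction-transitive} is not actually needed here; the work is done entirely by the lemma immediately following it and by Lemma~\ref{downward-property}.)
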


Here, we shall briefly give the proof of Lemma \ref{downward-property}.

\begin{proofof}{Lemma \ref{downward-property}}
Take any two optimization problems $P=(I_1,SOL_1,m_1,goal_1)$ and $Q=(I_2,SOL_2,m_2,goal_2)$ in $\nlo$. In what follows, we shall prove only the case of $\DD=\apxnc{1}_{\nlo}$, because the other cases can be similarly
treated.

(1)  Assume that $P\sAPreduces^{\nc{1}} Q$ via an APNC$^{1}$-reduction $(f,g,c)$ and that $Q$ is in $\apxnc{1}_{\nlo}$. Given any constant $r'>1$, let $C_{r'}$ be an  NC$^{1}$ $r'$-approximate algorithm solving $Q$.
To show that $P\in\apxnc{1}_{\nlo}$, it suffices to construct, for each constant $r>1$, an appropriate NC$^{1}$ circuit, say, $D_r$ that finds $r$-approximate solutions for $P$.

Given a constant $r>1$, let us define $r'=1+(r-1)/c >1$ and consider $C_{r'}$.  Since $C_{r'}$ is an NC$^{1}$ $r'$-approximate algorithm, it follows that the performance ratio $R_2$ for $C_{r'}$ satisfies $R_2(z,C_{r'}(z))\leq r'$ for any $z\in I_2$.
Next, we define the desired algorithm $N_{r}$ as follows: on input $x\in I_1$, compute simultaneously $z= f(x,r)$ and $y = C_{r'}(z)$ and then output $g(x,y,r')$. Since $R_2(z,C_{r'}(z))\leq r'$, it follows by the definition of $\APreduces^{\nc{1}}$ that $R_1(x,N_{r}(x)) = R_1(x,g(x,C_{r'}(z),r'))\leq 1+c(r'-1)=r$. Hence, $N_r$ is an $r$-approximate algorithm for $P$.

We still need to show that $N_{r}$ can be realized by an NC$^{1}$-circuit.
For this purpose, we prepare an NC$^{1}$ circuit $C_f$ that, on input $(x,e)\in \Sigma^*\times 1\{0,1\}^*$,  outputs the $rep(e)$-th bit of $f(x,r)$. Notice that  $|f(x,r)|$ is polynomially bounded.
Moreover, let $C_g$ denote an NC$^{1}$ circuit computing $g$.
We construct an NC$^{1}$-circuit $M'$ that, on input $(x,e)\in \Sigma^*\times 1\{0,1\}^*$, computes the $rep(e)$-th bit of $C_{r'}(f(x,r'))$. During this procedure, whenever $C_{r'}$ tries to access the $j$th bit of $f(x,r')$, we run $C_f$ on $(x,bin(j))$.
The desired algorithm $N_{r}$ is executed as follows. We first run $C_g$ using the first and third input tapes for $x$ and $r'$ and leaving the second tape blank. Whenever $C_g$ tries to access the $i$th bit $y_i$ of $y=C_{r'}(f(x,r'))$, we run $M'$ on $(x,bin(i))$. It is not difficult to show that this procedure can be implemented on an appropriate NC$^{1}$-circuit.

(2) Assume that $P\EXreduces^{\nc{1}} Q$ via an $\EXreduces^{\nc{1}}$-reduction $(f,g)$ with  $Q\in\nco{1}_{\nlo}$.  Since $Q\in\nco{1}_{\nlo}$,
there exists an NC$^{1}$ circuit $M$ for which  $M(x)\in SOL_{2}(x)$ and $R_2(x,M(x))=1$ for any $x\in I_2$. To show that $P$ is in $\nco{1}_{\nlo}$, let us consider the following algorithm $N$. On input $x$, compute $y=M(f(x))$ and output $w=g(x,y)$. For a similar reason to (1), $N$ can be implemented by a certain NC$^{1}$ circuit. Since $R_1(x,N(x))=R_1(x,g(x,y))$, by the definition of an $\EXreduces^{\nc{1}}$-reduction, we obtain $R_2(f(x),M(f(x))) = 1$. Therefore, $N$ exactly solves $P$.
\end{proofof}

Our AP-, EX-, and sAP-reductions can help us identify the most difficult problems in a given optimization/approximation class.
Such problems are generally called ``complete problems,'' which have played a crucial role in understanding the structural features of
optimization and approximation classes.

Formally, let $\leq$ be any reduction discussed in this section, and let  $\DD$ be any class of optimization problems. An optimization problem $P$ is called {\em $\leq$-hard} for $\DD$ if, for every problem $Q$ in $\DD$,  $Q\leq P$ holds. Moreover, $P$ is said to be {\em $\leq$-complete} for $\DD$ if $P$ is in $\DD$ and it is $\leq$-hard for $\DD$.
This completeness will be a central subject in Sections \ref{sec:completeness}--\ref{sec:PBP}.

\section{General Complete Problems}\label{sec:completeness}

Complete problems represent a certain structure of a given optimization or approximation class and they provide useful insights into specific features of the class. To develop a coherent theory of NLO problems, it is essential to study such complete problems. In the subsequent subsections, we shall present numerous complete problems for various optimization and approximation classes.

\subsection{Why APNC$^{1}$- and EXNC$^{1}$-Reductions?}\label{sec:why-NC1}

To discuss complete problems for refined optimization and approximation classes under certain reductions, it is crucial to choose reasonable types of reductions. By Lemma \ref{reduction-transitive}, for example, any $\APreduces^{\nc{1}}$-complete problem for an optimization/approximation class $\DD$ is also $\APreduces^{\dl}$-complete, but the converse may not be true in general. In what follows, we briefly argue that the  $\APreduces^{\dl}$- and $\EXreduces^{\dl}$-reductions are so powerful that all problems in  $\apxl_{\nlo}$ and $\lo_{\nlo}$ respectively become reducible to similar problems residing even in $\apxac{0}_{\nlo}\cap\pbo$ and $\aco{0}_{\nlo}\cap\pbo$.

\begin{proposition}\label{characterization}
\begin{enumerate}
  \setlength{\topsep}{-2mm}%
  \setlength{\itemsep}{0mm}
  \setlength{\parskip}{0cm}%

\item $\apxl_{\nlo} = \{P\in\nlo\mid \exists Q\in \apxac{0}_{\nlo} \cap\pbo \,[ P\sAPreduces^{\dl} Q]\}$.

\item $\lo_{\nlo} = \{P\in\nlo\mid \exists Q\in \aco{0}_{\nlo} \cap\pbo \,[ P\EXreduces^{\dl} Q]\}$.
\end{enumerate}
\end{proposition}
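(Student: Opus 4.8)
The plan is to prove each equality by establishing two inclusions. For part (1), the interesting direction is to show that any $P\in\apxl_{\nlo}$ admits an sAPL-reduction to some problem $Q\in\apxac{0}_{\nlo}\cap\pbo$; the reverse inclusion follows almost immediately from the downward closure property (the sAPL-version of Lemma~\ref{downward-property}, quoted in the corollary), since $\apxac{0}_{\nlo}\cap\pbo\subseteq\apxl_{\nlo}$ and $\apxl_{\nlo}$ is closed downward under $\sAPreduces^{\dl}$. So the substance is the forward direction: given a log-space $\gamma$-approximate algorithm for $P$, I must manufacture a target problem $Q$ that is both polynomially bounded and $\ac{0}$-approximable, together with reduction functions $f,g$ witnessing $P\sAPreduces^{\dl}Q$.

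The key idea I would use is that the log-space approximation algorithm $M$ for $P$ does almost all the work, so $Q$ can be made trivial. Concretely, I would design $f$ so that, on instance $x$ (and parameter $r$), it runs the log-space machinery to precompute the relevant approximate-solution data and encodes it directly into the instance $f(x,r)$ of $Q$; the problem $Q$ is then defined so that reading off an (almost) optimal solution from such a pre-digested instance is an $\ac{0}$ task. Because $P\in\nlo$ is not assumed polynomially bounded, the measure values may be superpolynomial, so to land inside $\pbo$ I would have $Q$ measure something bounded --- for instance, a unary or bit-indexed encoding of the solution, or a measure that is constantly $1$ so that every feasible solution is optimal, with the performance-ratio bookkeeping absorbed into how $f$ lays out the instance. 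The function $g$ then decodes the $\ac{0}$-produced solution of $Q$ back into a solution of $P$; since $g$ is allowed logarithmic space (it is an APL-reduction), it can afford to re-expand any compressed encoding. I would verify the five sAPP-conditions: instance/solution mapping, the log-space computability of $f$ and $g$, and crucially the performance-ratio inequality $R_2(f(x,r),y)\le r\Rightarrow R_1(x,g(x,y,r))\le 1+c(r-1)$, where taking $c=1$ should suffice because the approximation quality is baked into $f$ rather than amplified by the reduction.

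Part (2) is the exact-solution analogue: I would show $\lo_{\nlo}\subseteq\{P\mid\exists Q\in\aco{0}_{\nlo}\cap\pbo\,[P\EXreduces^{\dl}Q]\}$ by the same template, now starting from a log-space machine that solves $P$ exactly. Again $f$ precomputes the optimal solution (or its encoding) in log space and plants it in $Q$, while $Q$ is an $\ac{0}$-solvable, polynomially-bounded problem whose unique optimal solution is that planted string; $g$ log-space-decodes it. The EX-condition only requires $R_2=1\Rightarrow R_1=1$, which is immediate once $Q$'s optimum faithfully encodes $P$'s optimum. The reverse inclusion of (2) follows from the EX-version of downward closure (Lemma~\ref{downward-property}(2)) together with $\aco{0}_{\nlo}\cap\pbo\subseteq\lo_{\nlo}$.

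The main obstacle I anticipate is ensuring that the target problem $Q$ genuinely satisfies the structural requirements of an $\nlo$ problem --- that $I_2\in\p$, that $I_2\circ SOL_2\in\auxl$, and that its measure $m_2\in\auxfl$ --- \emph{while simultaneously} being polynomially bounded and $\ac{0}$-approximable (resp. $\ac{0}$-solvable). The tension is that $f$ is only log-space, so it cannot simply write out an exponentially large optimal value; the encoding scheme for solutions in $Q$ must therefore be chosen so that (a) $f$ can produce the instance obliviously in log space, (b) the decoding $g$ stays within log space, and (c) the verification and measure computations for $Q$ drop all the way down to $\ac{0}$. Reconciling these competing constraints --- especially keeping $Q$ inside $\pbo$ without losing the performance-ratio guarantee --- is the delicate point, and I expect the bulk of the proof to be the careful definition of $Q$'s four components $(I_2,SOL_2,m_2,goal)$ rather than any hard estimate.
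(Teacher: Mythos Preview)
Your plan is essentially the paper's approach: for the nontrivial inclusion, precompute the approximate (resp.\ optimal) solution $M(x)$ using the log-space reduction map $f$, embed it into the instance of $Q$, and let the $\ac{0}$ algorithm for $Q$ simply project it back out. The reverse inclusion via downward closure is also what the paper does.

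The paper's construction is more direct than you anticipate. For (1) it sets $I_2=\{(x,M(x))\mid x\in I_1\}$, $SOL_2(x,y)=SOL_1(x)$, $m_2((x,y),z)=m_1(x,z)$, with $f(x,r)=(x,M(x))$ and $g(x,z,r)=z$; the $\ac{0}$ $e$-approximate algorithm is the projection $(x,y)\mapsto y$. For (2) it takes $I_2=I_1\circ SOL_1$, $SOL_2(x,y)=\{y\}$, and the constant measure $m_2((x,y),z)=2$ if $y=z$ and $1$ otherwise, with $f(x)=(x,M(x))$ and $g(x,z)=M(x)$. So your instinct ``a measure that is constantly $1$ so that every feasible solution is optimal'' is exactly what the paper does in (2), and the ``main obstacle'' you describe is resolved not by clever encoding but by keeping $Q$'s structure as close to $P$'s as possible.

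Your worry about $\pbo$ in part (1) is well-founded: the paper's construction for (1) sets $m_2=m_1$ verbatim, so $Q$ is polynomially bounded only if $P$ already is. The paper does not address this, so either the $\cap\,\pbo$ in the statement of (1) is an overclaim or a further argument is intended but omitted. For (2) the constant measure $\{1,2\}$ lands $Q$ in $\pbo$ trivially, so no issue arises there.
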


\begin{proof}
(1) This claim is split into two opposite containments.

($\supseteq$) Let $P\in \nlo$ and $Q\in \apxac{0}_{\nlo}$, and assume that $P\sAPreduces^{\dl}Q$. Notice that $Q$ also belongs to $\apxl_{\nlo}$.
Lemma \ref{downward-property}(1) therefore implies that $P\in\apxl_{\nlo}$.

($\subseteq$) Since  $\apxl_{\nlo} = \apxl_{\maxnl}\cup \apxl_{\minnl}$, we first consider the case of $\apxl_{\maxnl}$. Take any maximization problem $P=(I_1,SOL_2,m_1,\text{\sc max})$ in  $\apxl_{\maxnl}$.  We want to define a new maximization problem $Q = (I_2,SOL_2,m_2,\text{\sc max})$ and show that $Q\in\apxac{0}_{\nlo}$ and $P\sAPreduces^{\dl}Q$.

Since $P\in\apxl_{\nlo}$, there exists a constant $e\in\rational^{>1}$ and a log-space deterministic Turing machine $M$ that produces $e$-approximate solutions of $P$; namely, the performance ratio $R_1$ of $M$'s outcome for $P$ satisfies $R_1(x,M(x))\leq e$ for every $x\in (I_1\circ SOL_1)^{\exists}$.
First, we set $I_2$ to be composed of all instances of the form $(x,M(x))$ for  $x\in I_1$. Notice that, whenever $SOL_1(x)=\setempty$, $M$ outputs the designated symbol $\bot$. Since $M$ uses only log space, $I_2\in\dl$ follows. Next, we define $SOL_2(x,y) = SOL_1(x)$ and $m_2((x,y),z)= m_1(x,z)$ for any $x\in I_1$ and $y,z\in SOL_1(x)$. By those definitions, $Q$ is a problem in $\nlo$.

Let us consider an $\ac{0}$-circuit that outputs $y$ on admissible instance   $(x,y)$ in $(I_2\circ SOL_2)^{\exists}$. For any $(x,y)\in (I_2\circ SOL_2)^{\exists}$, it follows that $R_2((x,y),C(x,y)) = \frac{m_2^*(x,y)}{m_1((x,y),C(x,y))} =  \frac{m_1^*(x)}{m_1(x,y)} = R_1(x,M(x))\leq e$, where $R_2$ means the performance ratio for $Q$.
Hence, $C(x,y)$ is an $e$-approximate solution of $Q$. Thus, $Q$ belongs to $\apxac{0}_{\nlo}$.

Next, we want to show that $P\sAPreduces^{\dl} Q$ via $(f,g,1)$. Take any number $r\in\rational^{\geq1}$ and define $f(x,r)=(x,M(x))$ and $g((x,y),z,r)=z$ for $x\in I_1$ and $y,z\in SOL_1(x)$. It follows that $R_2(f(x,r),z) = \frac{m_1^*(x)}{m_1(x,z)} = \frac{m_1^*(x)}{m_1(x,g((x,y),z,r))} =R_1(x,g((x,y),z,r))$. Since  $f$ is in $\fl$ and $g$ is in $\fac{0}$, $P$ indeed  sAPL-reduces to $Q$. Because $P$ is arbitrary, we conclude that every maximization problem in $\apxl_{\nlo}$ is $\sAPreduces^{\dl}$-reducible to $Q$.

In a similar fashion, we can show that every minimization problem $P'$ in $\apxl_{\minnl}$ can be reduced to a certain minimization problem $Q'$ in $\apxac{0}_{\nlo}$.

(2) This claim can be proven in a similar way to (1).

($\supseteq$) Take two optimization problems  $P\in\nlo$ and $Q\in\aco{0}_{\nlo}$. Assume that $P\EXreduces^{\dl}Q$ via $(f,g)$. Lemma \ref{downward-property}(2) then ensures that $P$ belongs to $\lo_{\nlo}$.

($\subseteq$) We begin with the case of $\lo_{\maxnl}$.
Let $P=(I_1,SOL_1,m_1,\text{\sc max})$ be an arbitrary problem in $\lo_{\maxnl}$ and take a deterministic Turing machine $M$ that solves $P$ using log space.
We intend to construct another problem $Q = (I_2,SOL_2,m_2,\text{\sc max})$ so that  $P$ is $\EXreduces^{\dl}$-reducible to $Q$. For this desired problem $Q$, we set $I_2= I_1\circ SOL_1$ and $SOL_2(x,y)=\{y\}$ for any $(x,y)\in I_2$. The measure function $m_2$ is defined as $m_2((x,y),z)=2$ if $y=z$, and $1$ otherwise. Obviously, $m_2$ is polynomially bounded and is in $\fac{0}$.
For a particular input $(x,M(x))$, since $m_2((x,M(x)),M(x)) = m_2^*(x,M(x))$, we obtain $M(x)\in SOL_2^*(x,M(x))$. Thus, it follows that $Q\in\aco{0}_{\nlo}$. Finally, we define a reduction $(f,g)$ as
$f(x)=(x,M(x))$ and $g(x,z)=M(x)$ for any $z\in SOL_2(f(x))$. Note that $R_2(f(x),z) =1$ implies $z=M(x)$, and thus the performance ratio $R_1(x,g(x,z))$ for $P$ satisfies $R_1(x,g(x,M(x))) = R_1(x,M(x)) =1$. Therefore, $(f,g)$ $\EXreduces^{\dl}$-reduces $P$ to $Q$.
\end{proof}

Proposition \ref{characterization} suggests that  the notions of $\sAPreduces^{\dl}$- and $\EXreduces^{\dl}$-completeness do not capture the essential difficulty of the optimization complexity class $\apxl_{\nlo}$ and $\lo_{\nlo}$. Therefore, in what follows, we intend to use weaker types of reductions. In particular, we limit our interest within  $\sAPreduces^{\nc{1}}$-reductions and $\EXreduces^{\nc{1}}$-reductions.


As a quick example of $\EXreduces^{\nc{1}}$-complete problems, let us consider the {\em minimum weighted $s$-$t$ cut problem} ({\sc Min Weight-st-Cut}), which is to find an $s$-$t$ cut of a given weighted directed graph so  that the {\em (weighted) capacity}  of the cut (i.e., the total weight of edges from $S_0$ to $S_1$)  is minimized, where an {\em $s$-$t$ cut} for two distinct vertices $s,t\in V$ is a partition $(S_0,S_1)$ of the vertices for which $s\in S_0$ and $t\in S_1$.
We represent this cut $(S_0,S_1)$ by an assignment $\sigma$ from $V$ to $\{0,1\}$ satisfying the following condition: for every $v\in V$ and every $i\in\{0,1\}$, $\sigma(v)=i$ iff $v\in S_i$.

\ms
{\sc Minimum Weighted s-t Cut Problem} ({\sc Min Weight-st-Cut}):
\renewcommand{\labelitemi}{$\circ$}
\begin{itemize}\vs{-2}
  \setlength{\topsep}{-2mm}%
  \setlength{\itemsep}{1mm}%
  \setlength{\parskip}{0cm}%
\item {\sc instance:} a directed graph $G=(V,E)$, two distinguished vertices $s,t\in V$, where $s$ is a {\em source} and $t$ is a {\em sink} (or a {\em target}), and an edge weight function $c:E\to\nat^{+}$.

\item {\sc Solution:} an $s$-$t$ cut $(S_0,S_1)$, specified by an assignment $\sigma:V\to\{0,1\}$ as described above.

\item {\sc Measure:} the {\em (weighted) capacity} of the $s$-$t$ cut (i.e., $\sum_{(v,w)\in E\wedge v\in S_0\wedge w\in S_1}c(v,w)$).
\end{itemize}

Note that the capacity of any $s$-$t$ cut is at most $\max_{e\in E}\{c(e)\}|E|$. It is possible to prove that {\sc Min Weight-st-Cut} is $\EXreduces^{\nc{1}}$-complete for $\po_{\npo}$.

\begin{proposition}\label{min-st-cut-is-po}
{\sc Min Weight-st-Cut} is $\EXreduces^{\nc{1}}$-complete for $\po_{\npo}$.
\end{proposition}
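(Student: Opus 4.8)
The plan is to establish the two directions of completeness separately: membership of {\sc Min Weight-st-Cut} in $\po_{\npo}$, and its $\EXreduces^{\nc{1}}$-hardness for $\po_{\npo}$.

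Membership is routine. Recognizing a well-formed instance (a weighted directed graph with distinguished $s,t$ and $c:E\to\nat^{+}$) is in $\p$; given an $s$-$t$ cut encoded by $\sigma:V\to\{0,1\}$ one checks feasibility and computes its capacity in polynomial time; and, by the max-flow min-cut theorem, a minimum-capacity cut is computable in polynomial time. Hence {\sc Min Weight-st-Cut} is $\p$-solvable, so it lies in $\po_{\npo}$. I note that, as capacities are written in binary, the measure (a sum of capacities) can be exponentially large; this non-polynomial-boundedness is exactly what lets a single cut problem dominate all of $\po_{\npo}$, whose optima $m^{*}(x)$ may themselves be exponential.

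For hardness, fix an arbitrary $P=(I_1,SOL_1,m_1,goal)\in\po_{\npo}$ and let $M$ be a polynomial-time deterministic machine that, whenever $SOL_1(x)\neq\setempty$, outputs an optimal solution $y^{*}=M(x)\in SOL_1^{*}(x)$. The decisive constraint is that the maps $f,g$ must be $\nc{1}$, so they cannot run $M$; the polynomial-time work has to be off-loaded onto the cut instance itself, which is feasible precisely because the minimum $s$-$t$ cut problem is $\p$-hard under very weak reductions and can therefore simulate arbitrary polynomial-time computations (via the Circuit Value Problem). Concretely, for inputs of length $n=|x|$ the computation of $M$ is laid out as a uniform Boolean circuit $C_n$ whose input gates receive the bits of $x$ and whose designated output gates hold the bits of $M(x)$; producing $C_n$ and substituting $x$ is an $\ac{0}$/$\nc{1}$ operation. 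I would take $f(x)$ to be the weighted digraph obtained from the standard gate-by-gate translation of circuit evaluation into a flow network, arranged so that its minimum cut is forced to place each gate-node on the side matching that gate's value on $x$; the output-gate nodes then spell out $y^{*}$. Since each gate yields a constant-size gadget whose wiring and (binary) weights depend only on gate indices---never on the unknown $y^{*}$---the family computing $f$ is an $\nc{1}$ circuit family. The decoder $g(x,y)$ merely reads the bits of the output region of the cut $y$ and returns the corresponding string, again an $\nc{1}$ task. For the EX-condition, observe that the goals of $P$ and {\sc Min Weight-st-Cut} need not agree, since $R(x,y)=1$ iff $y\in SOL^{*}(x)$; thus it suffices that every minimum cut of $f(x)$ decode to an optimal solution of $x$. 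This holds because the forced minimum cut encodes the gate values of $C_n(x)$, whose output region is exactly $y^{*}=M(x)\in SOL_1^{*}(x)$, so $R_2(f(x),y)=1$ gives $g(x,y)=y^{*}$ and hence $R_1(x,g(x,y))=1$.

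The main obstacle is the careful gadget design needed to make the minimum cut essentially unique and to force the gate-value assignment faithfully: the circuit-to-flow construction must be tuned (for instance by scaling all capacities and adding index-dependent tie-breaking increments, all $\nc{1}$-computable) so that no spurious cut of equal capacity exists and the output region is read unambiguously. A second delicate point is the third clause of the EX-condition, which requires $g(x,y)\in SOL_1(x)$ for \emph{every} cut $y$, not only the optimal ones; as $g$ has only $\nc{1}$ power it cannot decide membership in $SOL_1(x)$, so the output gadgets must be engineered so that the string read from an arbitrary $s$-$t$ cut still lies in $SOL_1(x)$ (alternatively, one appeals to the paper's convention of treating these objects as promise problems and permitting arbitrary behaviour away from the optimal region). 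Checking this feasibility clause, and confirming that the gadget weights and the decoding stay within $\nc{1}$ uniformity, is where the genuine effort lies; once the simulation is faithful, the optimality half of the EX-condition is immediate.
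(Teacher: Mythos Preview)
Your approach is correct and is essentially the same as the paper's: encode the polynomial-time solver of an arbitrary $P\in\po_{\npo}$ as a uniform circuit family (an $\nc{1}$ task), then use the Goldschlager--Shaw--Staples circuit-to-flow gadgets so that the minimum $s$-$t$ cut of the resulting network encodes the circuit's output, from which $g$ reads off the optimal solution.

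The only structural difference is that the paper factors the argument through an explicit intermediate optimization problem, {\sc MinCVP} (and its monotone fan-out-$2$ variant {\sc MinMCV2}): an instance is a circuit $C_x$ with hardwired input, a solution is any $k$-bit string $y$ with $rep(y)\ge rep(C_x(x))$, and the measure is $rep(y)$. The paper first proves {\sc MinCVP} is $\EXreduces^{\nc{1}}$-complete for $\po_{\npo}$ (this isolates the ``solver-as-circuit'' step), and then reduces {\sc MinMCV2} to {\sc Min Weight-st-Cut} via the flow gadgets (one copy of the network per output bit, with capacities scaled by $2^{in}$ so the bits can be read off from the parity of each sub-cut's capacity). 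This modular decomposition makes the feasibility-for-arbitrary-$y$ issue you flag easier to dismiss at each stage, and reduces the gadget construction to the single-output case; your one-shot version is logically equivalent but has to manage both concerns simultaneously.
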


The proof of Proposition \ref{min-st-cut-is-po} can be obtained by an appropriate modification of the $\p$-completeness proof of Goldschlager \etalc~\cite{GSS82} for the ``decision version'' of the {\em maximum $s$-$t$ flow problem}.
The proof of Proposition \ref{min-st-cut-is-po} is placed in Appendix for readability. The proposition will be used in Section \ref{sec:complexity-OP}.

\subsection{Complete Problems Concerning Path Weight}\label{sec:general-complete}

We have seen in Section \ref{sec:why-NC1} the importance of $\sAPreduces^{\nc{1}}$- and $\EXreduces^{\nc{1}}$-reductions for discussing the computational complexity of our refined optimization problems.
In this and the next subsections under those special reductions, we shall present a few complete problems for various optimization and approximation classes.

There are two categories of NLO problems to distinguish in our course of studying  the complexity of NLO problems. The first category contains NLO problems $(I,SOL,m,goal)$ for which the set $(I\circ SOL)^{\exists}$ ($=\{x\in I\mid SOL(x)\neq\setempty\}$) belongs to $\nl$ but may not fall into $\dl$ unless $\dl = \nl$. The second category, in contrast, requires the set $(I\circ SOL)^{\exists}$ to be in $\dl$. Many of the optimization problems of the first category are unlikely to fall into $\apxl_{\nlo}$ or $\lo_{\nlo}$.

First, we shall look into an optimization analogue of
the well-known {\em directed $s$-$t$ connectivity problem} (also known as the {\em graph accessibility problem} and the {\em graph reachability problem} in the past literature), denoted by $\mathrm{DSTCON}$, in which, for any  directed graph $G=(V,E)$ and two vertices $s,t\in V$, we are asked to determine whether there is a path from $s$ to $t$ in $G$.
Earlier, Jones \cite{Jon75} showed that $\mathrm{DSTCON}$ is $\nl$-complete under $\leq_{m}^{\dl}$ (log-space many-one) reductions. These reductions can be replaced by appropriate $\leq_{m}^{\nc{1}}$-reductions, and thus  $\mathrm{DSTCON}$ becomes $\leq_{m}^{\nc{1}}$-complete for $\nl$.
Let us consider a series of problems associated with minimum path weights of  graphs. First, recall the minimum path weight problem ({\sc Min Path-Weight}) introduced in Section \ref{sec:intro}.

\ms
{\sc Minimum Path Weight Problem} ({\sc Min Path-Weight}):
\renewcommand{\labelitemi}{$\circ$}
\begin{itemize}\vs{-2}
  \setlength{\topsep}{-2mm}%
  \setlength{\itemsep}{1mm}%
  \setlength{\parskip}{0cm}%

\item {\sc instance:} a directed graph $G=(V,E)$, two distinguished vertices $s,t\in V$, and a (vertex) weight function $w:V\to\nat$.

\item {\sc Solution:} a path $\SSS=(v_1,v_2,\ldots,v_k)$ from $s$ to $t$ (i.e., $s=v_1$ and $t=v_k$).

\item {\sc Measure:} ``biased'' path weight $w(S) = \max\{1,rep(bin(w(v_1))bin(w(v_2))\cdots bin(w(v_k)))\}$.
\end{itemize}

In the above definition, we generally do not demand that $s$ is a {\em source} (i.e., a node of indegree $0$) and $t$ is a {\em sink} (i.e., a node of outdegree $0$)  although such a restriction does not change the completeness of the problem.

Here, we need to remark that the choice of our measure function for {\sc Min Path-Weight} is quite artificial.
As a quick example, if $\SSS=(v_1,v_2,v_3,v_4)$ with $w(v_1)=3$, $w(v_2)=0$,  $w(v_3)=2$, and $w(v_4)=4$, then $w(\SSS) = rep(1110100)$ since $bin(0)=\lambda$ (the empty string).
It is important to note that we use the {\em biased path weight} instead of a {\em standard path weight} defined as $\sum_{i\in[k]}w(v_i)$. This comes from the fact that,  because log-space computation cannot store super-logarithmically many bits, it cannot sum up all super-logarithmically large weights of vertices. However, if we set all vertices of a given input graph have weights of exactly  $1$, then {\sc Min Path-Weight} is essentially identical to a problem of finding the ``shortest'' $s$-$t$ path in the graph.

In comparison, we also define a polynomially-bounded form of {\sc Min Path-Weight} simply by demanding that $1\leq w(v)\leq|V|$ for all $v\in V$ and by changing $w(\SSS)$ to the {\em total path weight} $w'(\SSS)=\sum_{i=1}^{k}w(v_i)$. Notice that $w'(\SSS)\leq k|V|$. For our later reference in Section \ref{sec:PBP}, we call this modified problem the {\em minimum bounded path weight problem} ({\sc Min BPath-Weight}) to emphasize the polynomially-boundedness of the problem.

Hereafter, we shall prove that {\sc Min Path-Weight} is $\sAPreduces^{\ac{0}}$-complete for $\minnl$.

\begin{theorem}\label{Min-Path-complete}
$\text{\sc Min Path-Weight}$ is $\sAPreduces^{\ac{0}}$-complete for $\minnl$.
\end{theorem}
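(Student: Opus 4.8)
The plan is to prove completeness in two halves: membership $\text{\sc Min Path-Weight}\in\minnl$ and $\sAPreduces^{\ac{0}}$-hardness for $\minnl$. For membership I would simply verify the four defining components of an NLO minimization problem. The instance set is trivially in $\p$. To see $I\circ SOL\in\auxl$, a log-space auxiliary machine reads the candidate path $\SSS=(v_1,\ldots,v_k)$ off the read-once auxiliary tape, keeping only the current and previous vertex in memory, and checks $v_1=s$, $v_k=t$, and that each consecutive pair is an edge by scanning the edge listing on the input tape. The measure lies in $\auxfl$: reading $\SSS$ once, for each $v_i$ we output $bin(w(v_i))$ after looking up the weight on the input tape. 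This output is generated obliviously, left to right, and is polynomially long although exponentially large in value, which is precisely the reason the \emph{biased} path weight, rather than a sum of weights, is used.

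For hardness, fix $Q=(I,SOL,m,\textsc{min})\in\minnl$, supplied with a log-space verifier for $I\circ SOL$ and a log-space transducer for $m$, each reading the guessed solution read-once. First I would combine them into a single log-space auxiliary machine $M$ that reads a guess $y$ once, feeds each scanned symbol to both sub-simulations (buffering the current symbol and advancing the shared read-once head only when both sub-simulations have consumed it), rejects if the feasibility check fails, and otherwise emits $bin(m(x,y))$ most-significant-bit first (using the convention that $\auxfl$-outputs carry the MSB at the left). Since any log-space machine halts in polynomial time, I attach a polynomially bounded step counter to every configuration, turning the configuration graph $G_x$ into a layered DAG; I set $s$ to the initial configuration and funnel all accepting configurations into a single sink $t$ through weight-$0$ edges. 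Because of the layering, paths from $s$ to $t$ correspond exactly to accepting computations, hence to feasible $y\in SOL(x)$, with no spurious looping paths.

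The heart of the argument is the weight assignment forcing the biased weight $w(\SSS)$ to equal $m(x,y)$ exactly. Here I would have $M$ re-encode its output in run-length form: whenever it completes a maximal block $1\,0^{a}$ of $bin(m(x,y))$, the path is routed through one distinguished configuration of weight $2^{a}$ (so that $bin(2^{a})=1\,0^{a}$), while every other configuration gets weight $0$. Since $bin(m(x,y))$ begins with $1$ (as $m\geq 1$) and factors uniquely into such blocks, the nonzero weights met along the path concatenate to exactly $bin(m(x,y))$, giving $w(\SSS)=m(x,y)$. Setting $f(x)=G_x$ (ignoring the parameter $r$) and letting $g$ decode any $y$ consistent with a given $s$-$t$ path, I obtain $m_{\text{Path}}(f(x),\SSS)=m(x,g(x,\SSS,r))$ and $m^*_{\text{Path}}(f(x))=m^*(x)$, so the two performance ratios coincide and $(f,g,1)$ is a strong AP-reduction. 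Its $\ac{0}$-computability is then routine: the layered configuration graph has a purely local edge predicate, and each per-vertex weight string $1\,0^{a}$ is produced by comparing a bit index against the counter $a$ stored in the configuration, an $\ac{0}$ operation, so $f\in\fac{0}$; decoding a consistent auxiliary symbol at each step puts $g\in\fac{0}$ via an auxiliary circuit.

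The main obstacle I expect is exactly this weight-encoding step, namely reconciling the fact that weights are fixed once $f(x)$ is built with the fact that the target value $m(x,y)$ depends on the path actually chosen; the run-length block device is what converts the dynamic output stream into statically weighted vertices while preserving $m(x,y)$ on the nose. A secondary technical point is synchronizing the feasibility and measure machines within a single read-once pass over $y$, which the symbol-buffering scheme in the construction of $M$ handles.
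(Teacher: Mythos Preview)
Your proposal is correct and follows the paper's configuration-graph strategy: from a log-space auxiliary machine $M$ that checks feasibility and computes $m$, build $G^{M}_{x}$ with vertex weights arranged so that the biased path weight along any accepting computation equals $m(x,y)$, then let $g$ read off the auxiliary input from the per-configuration symbols recorded along the path. The paper's weight assignment is stated tersely as ``the weight of $v$ is $\xi$'' (the output symbol written at that step), which taken at face value would drop the $0$-bits of $bin(m(x,y))$ because $bin(0)=\lambda$; your run-length device---emitting weight $2^{a}$ at the close of each maximal block $1\,0^{a}$ and weight $0$ elsewhere---is precisely the ingredient that makes the concatenation equal $bin(m(x,y))$ and hence the two performance ratios coincide with $c=1$, so this is a sharpening rather than a different route. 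Your explicit lockstep synchronization of the feasibility and measure sub-machines on a single read-once auxiliary pass is likewise a detail the paper leaves implicit. One small slip: for membership in $\nlo$ you need $I_0\in\dl$, not merely $\p$; the argument you give in fact yields $\dl$.
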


For the $\sAPreduces^{\ac{0}}$-hardness part of Theorem \ref{Min-Path-complete}, we want to introduce a useful notion of {\em configuration graph} of a log-space auxiliary Turing machine $M$ on a given input $x$ together with any possible auxiliary input $y$,  which describes an entire computation tree of $M$ working on $x$ and $y$.
This is a weighted directed graph, which will be used in later proofs, establishing the hardness of target optimization problems; however, in those proofs, we may need to modify  the original configuration graph given below.
Since each vertex of a configuration graph is labeled by a ``partial configuration, '' we first define such partial configurations of $M$ on $x$. To simplify the following description, we consider the case where $M$ has only one work tape.

Recall from Section \ref{sec:basic_model} an auxiliary Turing machine $M = (Q,\Sigma,\{\cent,\dollar\},\Gamma,\Theta,\Phi,q_0,q_{acc},q_{rej})$ with its transition function $\delta$ mapping $(Q-\{q_{acc},q_{rej}\}) \times (\Sigma\cup\{\cent,\dollar,\lambda\}) \times \Gamma \times (\Theta\cup\{\dollar\})$ to $Q\times \Gamma\times (\Phi\cup\{\lambda\}) \times D\times D_1$.
The current tape situation is encoded into $uhw$, which indicates that  the tape content is $uv$ and the tape head is scanning the leftmost symbol of $w$, where $h$ is a special symbol representing the tape heard. A {\em partial configuration}  of $M$ on input $x$ is a tuple $v = \pair{q,w,u,\tau,\xi,k}$, which intuitively indicates a snap shot of $M$'s computation at time $k$ ($k\in\nat$) when $q$ is an inner state, $w$ is an encoding of $M$'s input tape, $u$ is an encoding of $M$'s work tape, $\tau$ is a scanning auxiliary input symbol, and $\xi$ is an output symbol or $\lambda$ to write.

We connect each partial configuration $v=(q,w,u,\tau,\xi,k)$ to  others $(p,w',u',\tau',\xi',k+1)$ for all $\tau'\in\in\Gamma\cup\{\dollar\}$ by applying a transition ``$\delta(q,\sigma'_1,\sigma'_2,\tau) = (p,\sigma'_2,\xi',d_1,d_2)$,'' where $w'$ (resp., $u'$) is an encoding of the input (resp., work) tape obtained from $w$ (resp., $u$) by this transition.
When a machine makes a $\lambda$-move on the output tape, we use the same symbol ``$\lambda$'' in place of  $\xi$ and $\xi'$. Here, we encode such partial configurations into binary strings of the same length by padding extra garbage bits (if necessary).

The {\em weight} of this vertex $v$  is defined as $\xi$ (expressed in binary).
We can view a {\em computation path} $y$ of $M$ on $x$ together with a series of nondeterministic choices of $M$,  as a sequence of partial configurations.
For two vertices $u$ and $v$, $(u,v)$ is a direct edge if, seen as partial configurations, $v$ is obtained from $u$ by a single application of $\delta$ and a choice of auxiliary input symbol.
Since  each vertex is represented by $O(n)$ symbols, the total number of  vertices is at most a polynomial in $n$. We denote by $G^{M}_{x}$ the obtained configuration graph of $M$ on $x$ since $M$ halts in polynomial time.
Note that the size of $G^{M}_{x}$ is bounded from above by a polynomial in the size of input instance $x$ of $M$.

It is important to note that, from a given encoding of a computation path $y$, we can easily extract an associated auxiliary input, because each partial configuration in $y$ contains a piece of information on the auxiliary input and $M$'s head on the auxiliary tape moves in only one direction.

\begin{proofof}{Theorem \ref{Min-Path-complete}}
For notational convenience, in the following argument, $\text{\sc Min Path-Weight}$ is expressed as $(I_0,SOL_0,m_0,\text{\sc min})$. Firstly, we want to claim that $I_0\in\dl$. This follows from the facts that  $\mathrm{DSTCON}\in\nl$ and that $(I_0\circ SOL_0)^{\exists}$ (more accurately, $(I_0\circ SOL_0)^{\exists}_{q}$ for a suitable polynomial $q$) is essentially ``equivalent'' to $\mathrm{DSTCON}$, except for the presence of a weight function $w$.
Next, we claim that {\sc Min Path-Weight} belongs to $\minnl$.
This claim comes from the following facts. On input $x=(G,s,t,w)$, let $\SSS = (v_1,v_2,\ldots,v_k)$ denote an arbitrary path from $s$ to $t$ in $G$. Since $m_0(x,\SSS)$ equals $w(\SSS)$ by definition, the value $m_0(x,\SSS)$ can be computed by an appropriate auxiliary Turing machine that writes down $bin(w(v_i))$ sequentially on a write-only output tape using $O(\log{n})$ space-bounded work tapes.
Similarly, given $x=(G,s,t,w)$ and an arbitrary sequence $\SSS$ of vertices, we can decide whether $\SSS\in SOL_0(x)$ by checking whether $\SSS$ is a path from $s$ to $t$ using a certain log-space auxiliary Turing machine.

Secondly, we shall claim that {\sc Min Path-weight} is $\sAPreduces^{\ac{0}}$-hard for $\minnl$; namely, every minimization problem in $\nlo$ is $\sAPreduces^{\ac{0}}$-reducible to {\sc Min Path-Weight}.
To prove this claim, let $P=(I,SOL,m,\text{\sc min})$ be any minimization problem in $\nlo$. Note that $I\in\dl$, $I\circ SOL\in\auxl$, and $m\in\auxfl$.
Since $m\in\auxfl$, we take an appropriate log-space auxiliary Turing machine $M$
(with three tapes) computing $m$, where any solution candidate to $P$ is provided on an auxiliary read-once tape.
Notice that there is a unique initial partial configuration. To ensure that $M$ has a {\em unique} accepting partial configuration, it suffices to force $M$ to clear out all tapes just before entering a unique accepting state.

Let us define an $\sAPreduces^{\ac{0}}$-reduction $(f,g,1)$ from $P$ to {\sc Min Path-Weight} as follows. Let $r\geq1$ and define  $f(x,r)$ to be a configuration graph $G^{M}_{x}$ of $M$ on input $x$.
If $x\in I$, then $f(x,r)\in I_0$. Let $s$ denote the initial partial configuration of $M$ on $x$ and let $t$ be the unique accepting partial configuration of $M_2$ on $x$. As a solution to {\sc Min Path-Weight}, let $y$ be any path in the graph $f(x,r)$ starting with $s$.
Each vertex in $y$ contains the information on content $\tau_i$ of the tape cell at which the auxiliary-tape head scans at time $i$. Hence, from $y$, we can recover the content of the auxiliary tape as follows. Given $y=(y_0,y_1,\ldots,y_m)$ with $y_i=(q_i,w_i,u_i,\tau_i,\xi_i,k_i)$, we retrieve $\tau_i$ for all indices $i\in[0,m]_{\integer}$ and output $\tau_0\tau_1\tau_2\cdots \tau_m$. This procedure requires only an AC$^{0}$ circuit.
Let $g(x,y,r)$ denote the entire content of the auxiliary  tape that is reconstructed from $y$ as described above. Clearly, $g$ is in $\fac{0}$ and, for any $y\in SOL(x)$, we obtain $g(x,y,r)\in SOL_0(f(x,r))$.  It is not difficult to show that $m(f(x,r),y) = m_0(x,g(x,y,r))$. Hence, $R_2(f(x,r),y)$ equals $R_1(x,g(x,y,r))$.

To complete the proof, we still need to verify that $f$ belongs to  $\fac{0}$.
For this, consider the following procedure. Recall that a graph is represented by a list of edges (i.e., vertex pairs). Starting with any input $x=(G,s,t,w)$ and $r\geq1$, generate all pairs $(u,v)$ of partial configurations and mark $(u,v)$ whenever it is an edge of $G^{M}_{x}$. This procedure needs to wire only a finite number of bits between $u$ and $v$. Hence, $f$ can be computed by an $\ac{0}$ circuit.

Therefore, {\sc Min Path-Weight} is $\sAPreduces^{\nc{1}}$-complete for $\minnl$.
\end{proofof}


In contrast to {\sc Min Path-Weight}, it is possible to define a maximization problem, {\sc Max Path-Weight}, simply by taking the maximally-weighted $s$-$t$ path for the minimally-weighted one in the definition of {\sc Min Path-Weight}. A similar argument in the proof of Theorem \ref{Min-Path-complete} establishes the $\sAPreduces^{\nc{1}}$-completeness of {\sc max Path-Weight} for $\maxnl$.

\begin{corollary}\label{Max-Path-Weight-complete}
{\sc Max Path-Weight} is $\sAPreduces^{\nc{1}}$-complete for $\maxnl$.
\end{corollary}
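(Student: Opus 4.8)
The plan is to follow the proof of Theorem \ref{Min-Path-complete} essentially verbatim, since the passage from {\sc Min Path-Weight} to {\sc Max Path-Weight} changes only $goal=\text{\sc min}$ into $goal=\text{\sc max}$, while the admissible instances, the feasibility predicate, the measure function, and the configuration-graph construction are all left untouched. Accordingly I would split the argument into a membership part, showing that {\sc Max Path-Weight} belongs to $\maxnl$, and a hardness part, showing that every problem in $\maxnl$ is $\sAPreduces^{\nc{1}}$-reducible to {\sc Max Path-Weight}; at the end I would invoke Lemma \ref{reduction-transitive} to promote the $\ac{0}$-level reduction that the construction actually produces to the claimed $\nc{1}$ level.

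For membership, write {\sc Max Path-Weight} as $(I_0,SOL_0,m_0,\text{\sc max})$, where $I_0$, $SOL_0$, and $m_0$ are exactly those of {\sc Min Path-Weight}. The three observations of the Min proof carry over unchanged: $I_0\in\dl$; given an instance $(G,s,t,w)$ and a candidate vertex sequence $\SSS$, a log-space auxiliary Turing machine can check that $\SSS$ is an $s$-$t$ path, so $I_0\circ SOL_0\in\auxl$; and the biased path weight $m_0(x,\SSS)$, being the concatenation $bin(w(v_1))\cdots bin(w(v_k))$ read off one vertex at a time, can be emitted obliviously on a write-only output tape, so $m_0\in\auxfl$. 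Hence {\sc Max Path-Weight} is an $\nlo$ problem, and as its goal is $\text{\sc max}$ it lies in $\maxnl$.

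For hardness, take an arbitrary maximization problem $P=(I,SOL,m,\text{\sc max})$ in $\maxnl$ and fix a log-space auxiliary Turing machine $M$ computing $m$, normalized so that it has a unique initial partial configuration and a unique accepting one (by forcing $M$ to erase all tapes before accepting). I would reuse the very reduction $(f,g,1)$ of Theorem \ref{Min-Path-complete}: set $f(x,r)=G^{M}_{x}$, taking as source the initial and as sink the unique accepting partial configuration, and let $g(x,y,r)$ recover the solution candidate from a computation path $y=(y_0,\dots,y_m)$ by reading off the scanned auxiliary symbol $\tau_i$ stored in each vertex $y_i$ and outputting $\tau_0\tau_1\cdots\tau_m$. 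By the same computation of vertex weights as in the Min case, the biased path weight of any $s$-$t$ path $y$ equals the measure $m(x,g(x,y,r))$; consequently $R_2(f(x,r),y)=R_1(x,g(x,y,r))$. Both $f$ and $g$ require only a bounded amount of local wiring per pair of partial configurations (respectively per output position), so they lie in $\fac{0}$, and by Lemma \ref{reduction-transitive} the reduction is in particular an $\sAPreduces^{\nc{1}}$-reduction; this yields the required hardness.

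The step I would take most care over, rather than treat as automatic, is the identity $R_2(f(x,r),y)=R_1(x,g(x,y,r))$, because it is the only place where the direction of optimization could in principle intervene. It holds precisely because the map $y\mapsto g(x,y,r)$ is a measure-preserving correspondence between $s$-$t$ paths of $G^{M}_{x}$ and feasible solutions of $x$, so the pointwise equality of measures forces $m_0^{*}(f(x,r))=m^{*}(x)$; and since the performance ratio $R(x,y)=\max\{|m(x,y)/m^{*}(x)|,|m^{*}(x)/m(x,y)|\}$ is defined by the same symmetric formula for maximization and minimization problems, nothing in this chain depends on whether $m^{*}$ is a maximum or a minimum. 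I therefore expect no genuine new obstacle: the only real work is bookkeeping to confirm that the unique-accepting-configuration normalization and the read-once recovery of the auxiliary input, both inherited directly from the proof of Theorem \ref{Min-Path-complete}, remain valid when the objective is maximized.
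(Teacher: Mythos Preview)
Your proposal is correct and matches the paper's approach exactly: the paper simply states that ``a similar argument in the proof of Theorem \ref{Min-Path-complete}'' establishes the corollary, and you have spelled out that similar argument faithfully, including the observation that the measure-preserving correspondence between $s$-$t$ paths and feasible solutions makes the performance-ratio identity insensitive to the direction of optimization. Your explicit invocation of Lemma \ref{reduction-transitive} to pass from the $\ac{0}$-level construction to the stated $\nc{1}$-level completeness is also the right bookkeeping.
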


Is {\sc Min Path-weight} also $\sAPreduces^{\nc{1}}$-complete for $\maxnl$ and thus for $\nlo$ ($=\maxnl\cup\minnl$)?
Unlike NPO problems, the log-space limitation of work tapes of Turing machines complicates the circumstances around NLO problems.
At present, we do not know that {\sc Min Path-Weight} is  $\sAPreduces^{\nc{1}}$-complete for $\nlo$. This issue will be discussed later in Section \ref{sec:PBP}.
Under a certain assumption on $\auxfl$, nevertheless, it is possible to achieve the $\sAPreduces^{\nc{1}}$-completeness of {\sc Min Path-Weight} for $\nlo$.

We say that $\auxfl$ is {\em closed under division} if, for any two functions $f,g\in\auxfl$ outputting natural numbers in binary, the function $h$ defined by $h(x,y)=\ceilings{f(x,y)/g(x,y)}$ for all inputs $x$ and all auxiliary inputs $y$ is in $\auxfl$, provided that $g(x,y)>0$ for all inputs $(x,y)$.

\begin{proposition}\label{Min-Path-in-NLO}
Assume that $\auxfl$ is closed under division. {\sc Min Path-Weight} is $\sAPreduces^{\nc{1}}$-complete for $\nlo$.
\end{proposition}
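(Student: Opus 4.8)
The plan is to reduce the problem to the already-established Theorem \ref{Min-Path-complete}, which asserts that {\sc Min Path-Weight} is $\sAPreduces^{\ac{0}}$-complete (hence $\sAPreduces^{\nc{1}}$-complete, by Lemma \ref{reduction-transitive}) for $\minnl$. Since {\sc Min Path-Weight} $\in\minnl\subseteq\nlo$ and $\nlo=\minnl\cup\maxnl$, membership is immediate and the minimization half of the hardness is already done; the only missing ingredient is to show that every maximization problem $Q=(I,SOL,m,\text{\sc max})\in\maxnl$ satisfies $Q\sAPreduces^{\nc{1}}\text{\sc Min Path-Weight}$. First I would convert $Q$ into an auxiliary \emph{minimization} problem $Q'\in\minnl$ by a performance-ratio-preserving reduction, and then compose this with the $\minnl$-completeness reduction of Theorem \ref{Min-Path-complete}.

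Concretely, since $m\in\auxfl$ has polynomially bounded output, I would fix a polynomial $p$ with $p(|x|)\ge 1$ such that $m(x,y)\le W$ for all feasible pairs, where $W=W(x)=2^{p(|x|)}$, and set $C=C(x)=W^4=2^{4p(|x|)}$ (which is computable in $\fl$). I then define the reciprocal measure $m'(x,y)=\ceilings{C/m(x,y)}$ and put $Q'=(I,SOL,m',\text{\sc min})$. This is precisely the step that uses the hypothesis: because $\auxfl$ is closed under division and $m(x,y)\in\nat^{+}$, the function $m'$ lies in $\auxfl$, so $Q'$ is a genuine $\minnl$ problem (it inherits $I\in\dl$ and $I\circ SOL\in\auxl$ from $Q$). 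Minimizing $C/m$ corresponds to maximizing $m$, which is why the reciprocal is the right transformation.

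The core of the argument is to certify that the identity pair gives a strong reduction $Q\sAPreduces^{\ac{0}}Q'$ with an absolute constant $c$. Since $a\mapsto\ceilings{C/a}$ is monotone decreasing, $m'^*(x)=\ceilings{C/m^*(x)}$ and the minimizers of $m'$ are the maximizers of $m$; moreover, for $C\ge W^2$ the map is \emph{strictly} decreasing on $[1,W]_{\integer}$, hence injective there. Injectivity disposes of the exact case $r=1$: $R_{Q'}(x,y)=1$ forces $m(x,y)=m^*(x)$, so $R_{Q}(x,y)=1$. For the approximate case I would write $\ceilings{C/m(x,y)}=C/m(x,y)+\theta$ with $\theta\in[0,1)$; a short computation then gives $R_{Q'}-1\ge\frac{(R_Q-1)-m^*(x)/C}{2}$, and combining the minimum-gap bound $R_Q-1\ge 1/W$ (valid whenever $m(x,y)<m^*(x)$) with $m^*(x)/C\le 1/W^3\le\tfrac12(R_Q-1)$ yields $R_{Q'}-1\ge(R_Q-1)/4$, i.e. $R_Q\le 1+4(R_{Q'}-1)$. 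Thus $(\mathrm{id},\mathrm{id},4)$ is an sAP-reduction, trivially realized in $\fac{0}$.

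The main obstacle is exactly this control of the rounding incurred by the ceiling, so that the reduction survives simultaneously at the optimum and in the regime $r\to 1^{+}$: the exact regime requires strict monotonicity (whence $C\ge W^2$), while the near-optimal regime requires the relative distortion, of order $W/C$, to be dominated by the smallest positive value of $R_Q-1$, which is of order $1/W$; padding $C$ with four blocks of $p(|x|)$ bits ($C=W^4$) comfortably secures both. Finally I would compose $Q\sAPreduces^{\ac{0}}Q'$ with $Q'\sAPreduces^{\ac{0}}\text{\sc Min Path-Weight}$ obtained from Theorem \ref{Min-Path-complete}; since $\ac{0}$-reductions compose (the composite being an $\ac{0}$, hence $\nc{1}$, reduction with product constant), this gives $Q\sAPreduces^{\nc{1}}\text{\sc Min Path-Weight}$ and completes the hardness proof for $\nlo$. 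Alternatively, to sidestep composition, one may build the configuration graph of a log-space machine computing $m'$ directly, exactly as in the proof of Theorem \ref{Min-Path-complete}, which inlines the second reduction into the first.
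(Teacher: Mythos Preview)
Your proposal is correct and follows essentially the same route as the paper: membership and the $\minnl$ half come from Theorem~\ref{Min-Path-complete}, and the $\maxnl$ half is handled by converting a maximization problem to a minimization problem via the reciprocal measure $m'(x,y)=\ceilings{C(x)/m(x,y)}$, which lands in $\auxfl$ precisely by the division-closure hypothesis (this is the content of the paper's Lemma~\ref{reduction-MaxNL-MinNL}). The only cosmetic differences are that the paper takes $C(x)=2^{2p(|x|)}$ and asserts the reduction with constant $c=1$, whereas you pad to $C(x)=2^{4p(|x|)}$ and carry out a more explicit rounding analysis yielding $c=4$; both choices work, and your treatment of the $r=1$ case via strict monotonicity of $a\mapsto\ceilings{C/a}$ on $[1,W]_{\integer}$ is a clean way to secure the ``strong'' part of the sAP-reduction.
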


We have already proven that {\sc Min Path-Weight} is $\sAPreduces^{\nc{1}}$-complete for $\minnl$ in Theorem \ref{Min-Path-complete}. In Lemma \ref{reduction-MaxNL-MinNL}, we shall demonstrate that every problem $P_1$ in $\maxnl$ is sAP$\ac{0}$-reducible to an appropriately chosen  problem $P_2$ in $\minnl$ if $\auxfl$ is closed under division. Since $\nlo= \maxnl\cup \minnl$, this implies that {\sc Min Path-Weight} is also $\sAPreduces^{\nc{1}}$-hard for $\maxnl$, completing the proof of Proposition \ref{Min-Path-in-NLO}.

We shall prove the remaining lemma, Lemma \ref{reduction-MaxNL-MinNL}.

\begin{lemma}\label{reduction-MaxNL-MinNL}
Assume that $\auxfl$ is closed under division. Every problem $P_1$ in $\maxnl$ is $\sAPreduces^{\ac{0}}$-reducible to an appropriate  problem $P_2$ in $\minnl$.
\end{lemma}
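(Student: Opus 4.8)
The plan is to reduce a maximization problem $P_1=(I_1,SOL_1,m_1,\text{\sc max})$ in $\maxnl$ to a minimization problem $P_2$ living on the \emph{same} instances and solutions, differing only in its measure function. The classical NPO trick of setting $m_2(x,y)=B(x)-m_1(x,y)+1$ fails here, because subtraction preserves only additive gaps and destroys \emph{multiplicative} performance ratios; this is exactly the phenomenon that the paper repeatedly cites as keeping $\maxnl$ and $\minnl$ apart. Instead I would use a \emph{reciprocal} measure, which preserves ratios: informally, if $m_2(x,y)\approx C/m_1(x,y)$ then $R_2(x,y)=m_2(x,y)/m_2^*(x)\approx m_1^*(x)/m_1(x,y)=R_1(x,y)$. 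Since $m_1\in\auxfl$ has polynomially-bounded \emph{output length}, there is a polynomial $p$ (which I may assume satisfies $p(|x|)\ge 1$) with $m_1(x,y)<2^{p(|x|)}$ for all $(x,y)\in I_1\circ SOL_1$; set $B(x)=2^{p(|x|)}\ge 2$ and $C(x)=B(x)^4$. I would then define $P_2=(I_1,SOL_1,m_2,\text{\sc min})$ with $m_2(x,y)=\ceilings{C(x)/m_1(x,y)}$.

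First I would verify $P_2\in\minnl$. We keep $I_2=I_1\in\dl$ and $SOL_2=SOL_1$, so $I_2\circ SOL_2=I_1\circ SOL_1\in\auxl$ is inherited from $P_1$. The function $C(x)$ is a power of two of polynomial bit length, hence in $\fl\subseteq\auxfl$, while $m_1\in\auxfl$ is strictly positive; \emph{this is precisely where the hypothesis that $\auxfl$ is closed under division is invoked}, yielding $m_2=\ceilings{C/m_1}\in\auxfl$. Note $m_2(x,y)\ge 1$ (as $C\ge m_1$) and $m_2(x,y)\le C+1$ has polynomial bit length, so $m_2$ is a legitimate $\nlo$ measure. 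The reduction itself is the triple $(f,g,c)$ with $f(x,r)=x$ and $g(x,y,r)=y$, both trivially in $\fac{0}$, and a constant $c$ to be pinned down below; the three structural clauses of the sAP-condition ($f(x,r)\in I_2$, nonemptiness of $SOL_2(f(x,r))$, and $g(x,y,r)\in SOL_1(x)$) hold at once because $f$ and $g$ are identities.

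The remaining work, and the main obstacle, is controlling the rounding introduced by the ceiling, which is why $C$ is inflated to $B^4$ rather than left near $\max_y m_1$. Since $\ceilings{C/t}$ is non-increasing in $t$, the optimum $m_2^*(x)=\ceilings{C/m_1^*(x)}$ is attained exactly at the $m_1$-optimal solutions; moreover, because $C/(k-1)-C/k=C/(k(k-1))\ge B^4/B^2=B^2>1$ for every integer $k\le m_1^*(x)\le B$, distinct $m_1$-values force distinct $m_2$-values, so $R_2(x,y)=1$ implies $m_1(x,y)=m_1^*(x)$, i.e. $R_1(x,y)=1$. For the approximate regime I would reduce the sAP-condition ``$R_2\le r\Rightarrow R_1\le 1+c(r-1)$ for all $r\ge 1$'' to the single inequality $R_1(x,y)-1\le c\,(R_2(x,y)-1)$ for every solution $y$ (the binding case $r=R_2(x,y)$). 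Writing $a=m_1(x,y)$, $M=m_1^*(x)$, and $D=M-a\ge 1$, the bounds $C/a\le\ceilings{C/a}<C/a+1$ give, after a short computation, $\frac{R_1-1}{R_2-1}\le\frac{D(C+M)}{CD-aM}\le\frac{1+1/B^{3}}{1-1/B^{2}}\le\tfrac{3}{2}$, using $a,M\le B$, $C=B^4$, and $B\ge 2$. Hence $c=2$ works, establishing $P_1\sAPreduces^{\ac{0}}P_2$. I expect the only delicate points to be the uniform output bound $B(x)$ and checking $CD-aM>0$ (which holds since $CD\ge C=B^4>B^2\ge aM$); both are routine given the $\auxfl$ size bound.
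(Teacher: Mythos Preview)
Your proof is correct and follows essentially the same approach as the paper: define $P_2$ on the same instances and solutions with the reciprocal measure $m_2(x,y)=\ceilings{C(x)/m_1(x,y)}$ for a power-of-two $C(x)$ of polynomial bit length, take the identity maps $f(x,r)=x$, $g(x,y,r)=y$, and invoke the division closure of $\auxfl$ to place $m_2$ in $\auxfl$. The paper uses $C=b(x)^2$ and asserts $c=1$ with a rather brisk treatment of the ceiling; your choice of $C=B^4$ and $c=2$ together with the explicit bound $\frac{R_1-1}{R_2-1}\le\frac{D(C+M)}{CD-aM}$ is a more careful version of the same computation, not a different argument.
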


\begin{proof}
Let $P_1=(I_1,SOL_1,m_1,\text{\sc max})$ be any optimization problem in $\maxnl$.   Take an appropriate polynomial $p$ satisfying $2^{p(|x|)} \geq m_1^*(x)$ for every instance $x\in I_1$. For brevity, we set $b(x) = 2^{p(|x|)}$ for all $x\in I_1$. We shall construct the desired minimization problem $P_2=(I_2,SOL_2,m_2,\text{\sc min})$ in $\minnl$.
Let $I_2=I_1$ and $SOL_2=SOL_1$. Moreover, for every pair  $(x,y)\in I_2\circ SOL_2$,  define $m_2(x,y) = \ceilings{\frac{b(x)^2}{m_1(x,y)}}$.
It is important to note that, by our definition of measure function, $m_1$ always returns {\em positive values}. From this definition, it follows that  $\frac{b(x)^2}{m_2(x,y)}\leq m_1(x,y) \leq \frac{b(x)^2}{m_2(x,y)+1}$ for any $(x,y)\in I_2\circ SOL_2$.

Clearly, $I_2\in\dl$ and $I_2\circ SOL_2\in\auxfl$. From our assumption on the closure property of $\auxfl$ under division, $m_2$ falls into $\auxfl$.
Therefore, $P_2$ belongs to $\nlo$.

Let us define an sAPAC$^{0}$-reduction $(f,g,c)$ from $P_1$ to $P_2$ as follows.  Let  $f(x,r)=x$ and $g(x,y,r)=y$ for $r\in\rational^{\geq1}$, $x\in I_1$, and $y\in SOL_2(f(x,r))$. Obviously, $f,g\in\fac{0}$ follows.
If $R_2(f(x,r),y)\leq r$ for $r\geq1$; namely, $m_2^*(x)/r \leq m_2(x,y)\leq m_2^*(x)$, then
the performance ratio $R_1(x,g(x,y,r))$ for $P_1$ is upper-bounded as
\[
R_1(x,g(x,y,r)) = \frac{m_1^*(x)}{m_1(x,y)} \leq \frac{b(x)^2}{m_2^*(x)+1} \div  \frac{b(x)^2}{m_2(x,y)} = \frac{m_2(x,y)}{m_2^*(x)}.
\]
The last term is further upper-bounded by $\frac{r m_2^*(x)}{m_2^*(x)+1}\leq 1+c(r-1)$, where $c=1$, since $m_2(x,y)\leq r m_2^*(x)$. Overall, we obtain $R_1(x,g(x,y,r))\leq 1+c(r-1)$. We then conclude that $(f,g,c)$ is indeed an $\sAPreduces^{\ac{0}}$-reduction from $P_1$ to $P_2$.
\end{proof}

Henceforth, we shall discuss several variants of {\sc Min Path-Weight}. A simple variant is an {\em undirected-graph version} of {\sc Min Path-Weight}, denoted by  {\sc Min UPath-Weight}. It is possible to demonstrate that {\sc Min UPath-Weight} is log-space ${n^{O(1)}}$-approximable because, by the result of Reingold  \cite{Rei08}, using only log space, we not only determine the existence of a certain feasible solution for {\sc Min UPath-Weight} but also find at least one feasible solution if any. The special case where the weights of all vertices are exactly $1$ is the problem of finding the {\em shortest $s$-$t$ path}. This problem was discussed in \cite{Tan07}; nonetheless, it is unknown that {\sc Min UPath-Weight} belongs to $\apxl_{\nlo}$.

As another variant of {\sc Min Path-Weight}, we consider {\em forests}.  Cook and McKenzie \cite{CM87} showed that the $s$-$t$ connectivity problem for  forests is complete for $\dl$ under L-uniform NC$^{1}$ many-one reductions. Similarly, when all admissible input graphs of {\sc Min UPath-Weight} are restricted to be forests, we call the corresponding problem {\sc Min Forest-Path-Weight}. As shown in the following proposition, {\sc Min Forest-Path-Weight} turns out to be one of the most difficult problems in $\lo_{\nlo}$.
It is of importance that, unlike $\nlo$, the class $\lo_{\nlo}$ does possess complete problems.

\begin{proposition}\label{forest-path-weight}
$\text{\sc Min Forest-Path-Weight}$ is $\sAPreduces^{\ac{0}}$-complete for $\lo_{\nlo}$.
\end{proposition}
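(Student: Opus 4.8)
The plan is to prove both halves of the completeness claim, membership {\sc Min Forest-Path-Weight} $\in\lo_{\nlo}$ and $\sAPreduces^{\ac{0}}$-hardness for $\lo_{\nlo}$, by exploiting the single structural fact that in a forest the path joining two vertices is \emph{unique} whenever it exists. This is the forest analogue of the nondeterministic construction of Theorem \ref{Min-Path-complete}, with ``deterministic'' playing the role that ``nondeterministic'' played there, exactly mirroring the Cook--McKenzie/Jones dichotomy between forest- and general-reachability.

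For membership I would first note that {\sc Min Forest-Path-Weight} lies in $\nlo$ for the same reasons as {\sc Min Path-Weight} in Theorem \ref{Min-Path-complete}, since it is merely that problem with admissible instances restricted to forests and forests are recognizable in $\dl$ (e.g. via Reingold \cite{Rei08}). For $\dl$-solvability the crucial point is that a forest contains at most one $s$-$t$ path, so there is at most one feasible solution, which is automatically optimal. A log-space machine can decide whether $s$ and $t$ are connected and, if so, trace the unique path one vertex at a time: at the current vertex $v$ it picks the single neighbor $v'$ from which $t$ is still reachable in $G$ with $v$ deleted, which is a forest-connectivity query and hence in $\dl$ by Cook--McKenzie \cite{CM87} (or \cite{Rei08}). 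Emitting this path \emph{obliviously} on the output tape and evaluating its biased weight stays within $\dl$ (the output may be super-logarithmically long, hence the obliviousness), so the problem is $\dl$-solvable and thus lies in $\lo_{\nlo}$.

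For hardness I would adapt the configuration-graph reduction of Theorem \ref{Min-Path-complete}, but feed it a \emph{deterministic} machine. Let $P=(I,SOL,m,goal)$ be any problem in $\lo_{\nlo}$ and let $M$ be a deterministic log-space machine solving it exactly, so $M(x)\in SOL^*(x)$ whenever $SOL(x)\neq\setempty$. Set $f(x,r)=G^{M}_{x}$, the time-stamped configuration graph of $M$ on $x$, with $s$ the initial partial configuration and $t$ the unique accepting one (forcing $M$ to clear its tapes before accepting, as in Theorem \ref{Min-Path-complete}), and let $g(x,y,r)$ read off, along the vertices of the $s$-$t$ path $y$, the output fields $\xi_i$ stored in each partial configuration, thereby reconstructing $M(x)$. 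Both $f,g\in\fac{0}$ by the same local-wiring arguments as in Theorem \ref{Min-Path-complete}: the edge relation of $G^{M}_{x}$ is decided by a purely local transition check on pairs of configurations, and extracting the $\xi_i$ is a projection. The heart of the matter is the claim that $G^{M}_{x}$ is a \emph{forest}: since $M$ is deterministic every configuration has at most one successor, and since each configuration carries its time stamp $k$ every edge increases time by exactly one; in any undirected cycle the vertex of minimum time would have both cycle-neighbors one step later, i.e. two distinct successors, contradicting determinism, so $G^{M}_{x}$ is acyclic (unreachable configurations merely form harmless extra components). Hence $f(x,r)\in I_{0}$, the unique $s$-$t$ path is precisely $M$'s computation, and $g(x,y,r)=M(x)\in SOL^{*}(x)$. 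Because a forest offers only this one $s$-$t$ path we always have $R_{2}(f(x,r),y)=1$, while $R_{1}(x,g(x,y,r))=1$ since $g$ returns an exact optimum; thus the strong AP-condition holds with $c=1$ for every $r\in\rational^{\geq1}$, uniformly whether $goal$ is {\sc max} or {\sc min} (we recover an exact optimum either way, and the $SOL(x)=\setempty$ case is handled by $t$ being unreachable so $SOL_{0}(f(x,r))=\setempty$).

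I expect the main obstacle to be precisely the forest property of the deterministic configuration graph: this is the structural feature that distinguishes {\sc Min Forest-Path-Weight} (the $\dl$, deterministic world) from {\sc Min Path-Weight} (the $\nl$, nondeterministic world), and the minimum-time-vertex argument above is what makes the time-stamped deterministic graph genuinely acyclic rather than merely a DAG. A secondary point requiring care is the membership direction, where one must verify that \emph{tracing} the unique path, not just testing connectivity, is log-space; the per-step forest-connectivity queries described above supply this, and the biased measure must still be written obliviously since it can exceed the available work space.
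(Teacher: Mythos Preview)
Your proposal is correct and follows essentially the same two-part strategy as the paper: membership via the uniqueness of $s$-$t$ paths in forests together with log-space path-tracing by repeated forest-connectivity queries, and hardness via the observation that the time-stamped configuration graph of a \emph{deterministic} log-space machine is a forest. Your minimum-time-vertex argument for acyclicity is exactly the right one.

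The one noteworthy difference is tactical rather than strategic. The paper first invokes Lemma~\ref{LO-simple-form} to normalize $P$ to a problem admitting a unique solution, and then builds a composite machine $M_2$ that outputs the optimal \emph{measure} $b(x)=m(x,M_P(x))$, setting vertex weights so that the biased path weight along the unique $s$-$t$ path literally equals $m(x,g(x,y,r))$; this yields $R_1=R_2$. You instead work directly with the optimal-solution machine $M$ and sidestep any measure-matching entirely: since the forest has exactly one $s$-$t$ path you get $R_2=1$ automatically, and since $g$ returns $M(x)\in SOL^{*}(x)$ you get $R_1=1$, so the sAP condition holds vacuously for every $r\geq 1$ with $c=1$. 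This is a legitimate simplification that dispenses with Lemma~\ref{LO-simple-form}; the paper's route has the minor advantage that the measure of the reduced instance is meaningful (it equals the original measure), but for establishing $\sAPreduces^{\ac{0}}$-hardness your argument suffices.
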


To simplify the proof of Proposition \ref{forest-path-weight}, we first give a useful lemma that helps us pay central attention to optimization problems of particular form.
Here, we say that an optimization problem $P=(I,SOL,m,goal)$ {\em admits unique solutions} if $|SOL(x)|\leq1$ holds for all $x\in I$.

\begin{lemma}\label{LO-simple-form}
For any maximization problem $Q\in \lo_{\nlo}$ (resp., $\lo_{\nlo}\cap\pbo$), there are another maximization problem $P=(I,SOL,m,\text{\sc max})$ in $\lo_{\nlo}$ (resp., $\lo_{\nlo}\cap\pbo$) and a log-space deterministic Turing machine $M_P$ such that, for any $x\in I$, (i) $Q\sAPreduces^{\ac{0}}P$, (ii) $m(x,z)=m(x,M_P(x))$ for all $z\in SOL(x)$, and (iii) $P$ admits unique solutions. The same statement holds for minimization problems.
\end{lemma}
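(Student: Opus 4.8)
The plan is to exploit the fact that membership of $Q$ in $\lo_{\nlo}$ already supplies a log-space deterministic solver, and simply to \emph{hardwire its output} as the unique feasible solution of the target problem $P$. Concretely, write $Q=(I_1,SOL_1,m_1,\text{\sc max})$ and fix a log-space deterministic Turing machine $M_Q$ that solves $Q$, so that $M_Q(x)\in SOL_1^*(x)$ whenever $SOL_1(x)\neq\setempty$. I would then define $P=(I,SOL,m,\text{\sc max})$ by setting $I=I_1$, $SOL(x)=\{M_Q(x)\}$ when $SOL_1(x)\neq\setempty$ and $SOL(x)=\setempty$ otherwise, $m(x,z)=m_1(x,z)$, and $M_P=M_Q$. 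Since $SOL(x)$ is a singleton whenever it is nonempty, property (iii) (that $P$ admits unique solutions) holds by construction, and property (ii) is then immediate because the only possible $z\in SOL(x)$ is $M_P(x)$ itself.

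Next I would verify that $P\in\lo_{\nlo}$, and that $P\in\pbo$ whenever $Q\in\pbo$, the latter because $m=m_1$ inherits polynomial boundedness. Here $I=I_1\in\dl$ is inherited from $Q$, and $m\in\auxfl$ because it coincides with $m_1$. The one genuinely technical point is showing $I\circ SOL=\{(x,y)\mid x\in I_1,\ y=M_Q(x)\}\in\auxl$: given $x$ on the input tape and a candidate $y$ on the read-once auxiliary tape, I would simulate $M_Q$ on $x$ and, each time $M_Q$ emits an output bit, advance the auxiliary head once and compare, accepting exactly when the two strings agree bit-for-bit and terminate together. This works in log space because $M_Q$ writes its output obliviously from left to right, which matches a single left-to-right pass over $y$. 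The same regeneration argument shows $P$ is $\dl$-solvable by $M_P=M_Q$: the unique solution $M_Q(x)$ is trivially optimal, and $m(x,M_P(x))=m_1(x,M_Q(x))$ is computed in log space by feeding the regenerated output of $M_Q$, bit by bit, into the read-once auxiliary computation of $m_1$.

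It remains to establish (i), namely $Q\sAPreduces^{\ac{0}}P$. I would take the reduction $(f,g,1)$ given by $f(x,r)=x$ and $g(x,y,r)=y$; both are identities and hence lie in $\fac{0}$. If $SOL_1(x)\neq\setempty$ then $SOL(f(x,r))=\{M_Q(x)\}\neq\setempty$, and the only solution $y=M_Q(x)$ returned by $P$ satisfies $g(x,y,r)=M_Q(x)\in SOL_1^*(x)$, so $R_1(x,g(x,y,r))=1\le 1+(r-1)$ for every $r\in\rational^{\geq1}$, fulfilling the condition for $\sAPreduces^{\ac{0}}$ with constant $c=1$. The minimization case is handled by the identical construction with $\text{\sc max}$ replaced by $\text{\sc min}$.

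The main obstacle I anticipate is the $\auxl$-membership of $I\circ SOL$ together with the log-space computability of $m(x,M_P(x))$: both hinge on composing the solver $M_Q$ with a read-once consumer (the equality test against $y$, respectively the measure computation $m_1$) without exceeding logarithmic space, and this is exactly where the obliviousness of log-space output is used. It is also the reason the solution is kept \emph{outside} the instance: placing it inside the instance (as in Proposition \ref{characterization}) would force $f$ to compute $M_Q(x)$ and thereby break the $\ac{0}$-bound required of the reduction.
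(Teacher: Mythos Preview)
Your proposal is correct and takes essentially the same approach as the paper: both set $I=I_1$, $SOL(x)=\{M_Q(x)\}$, $M_P=M_Q$, and use the identity reduction $(f,g,c)=(\mathrm{id},\mathrm{id},1)$. The only cosmetic difference is that the paper defines $m_2(x,z)=m_1(x,M_Q(x))$ whereas you set $m(x,z)=m_1(x,z)$; since the only $z\in SOL(x)$ is $M_Q(x)$, these agree on the relevant domain, and your more detailed verification of $I\circ SOL\in\auxl$ via bit-by-bit comparison against a regenerated $M_Q(x)$ is a point the paper leaves implicit.
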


\begin{proof}
Let $Q=(I_1,SOL_1,m_1,\text{\sc max})$ be any maximization problem in $\lo_{\nlo}$. Let $M_Q$ be a log-space deterministic Turing machine producing optimal solutions of $Q$. We then define $P=(I_2,SOL_2,m_2,\text{\sc max})$ as follows. First, we set $I_2=I_1$ and $SOL_2(x) =\{M_P(x)\}$ if $M_P(x))\neq\bot$, and $SOL_2(x)=\setempty$ otherwise.
From this definition follows $|SOL_2(x)|\leq1$ for all $x\in I_2$. Moreover, we define $m_2$ by setting $m_2(x,z)=m_1(x,M_Q(x))$ for any $(x,z)x\in I_2\circ SOL_2$. Here, we set $M_P$ to be the same as $M_Q$. Obviously, $m_2(x,M_P(x))=m_2^*(x)$ holds if $SOL_1(x)\neq\setempty$ since $M_Q(x)\in SOL_1^*(x)$.

For the desired $\sAPreduces^{\ac{0}}$-reduction $(f,g,c)$, we define $c=1$, $f(x,r)=x$, and $g(x,y,r)=y$. Clearly, $f,g\in\fac{0}$. Consider the performance ratio $R_1$ and $R_2$ for $Q$ and $P$, respectively, and assume that $R_2(f(x,r),y)\leq r$ for any $y\in SOL_2(f(x,r))$ and $r\in\rational^{\geq1}$. This assumption yields $y=M_P(x)$. Hence, $R_1(x,g(x,y,r)) = R_1(x,y) = R_1(x,M_Q(x)) =1\leq r$. Therefore, $(f,g,c)$ reduces $Q$ to $P$.
\end{proof}

Let us begin the proof of Proposition \ref{forest-path-weight}.

\begin{proofof}{Proposition \ref{forest-path-weight}}
{\sc Min Forest-Path-Weight} is assumed to have the form $(I_0,SOL_0,m_0,\text{\sc min})$. The membership relation $\text{\sc Min Forest-Path-Weight}\in \lo_{\nlo}$ essentially comes from a simple fact that, by the forest property of a given graph $G$, two nodes
$s$ and $t$ are connected in $G$ if and only if a {\em unique} path exists  between them. We can search such a unique path by starting from $s$ and following recursively adjacent edges to next nodes until either no more edges remain unsearched or $t$ is found. At the same time, we progressively write down the weight, in binary, of each node along this found path.
The recursive part of this procedure works as follows. Let $u$ be the currently visiting node. We then pick each neighbor, say, $v$ and check if there is a path between $v$ and $t$ in a graph obtained from $G$ by deleting the edge $(u,v)$. This procedure needs no more than log space.

Let $P=(I,SOL,m,\text{\sc min})$ be any minimization problem in $\lo_{\nlo}$. We assume that $P$ satisfies Conditions (ii)--(iii) of Lemma \ref{LO-simple-form}.  Our goal is to show that $P$ is $\sAPreduces^{\ac{0}}$-reducible to {\sc Min Forest-Path-Weight} via a suitably constructed $\sAPreduces^{\ac{0}}$-reduction $(f,g,c)$.
Choose a log-space deterministic Turing machine $M_1$ that produces optimal solutions of $P$. For convenience, we set $b(x)$ to be $m(x,M_1(x))$ for any instance $x\in (I\circ SOL)^{\exists}$.
Since $m\in\auxfl$, there is a log-space auxiliary Turing machine $M_m$ computing $m$. By combining $M_1$ and $M_m$ properly, we can design another log-space deterministic Turing machine, say, $M_2$ that computes $b$ with no auxiliary tape.

To make all final partial configurations unique, we want to force $M_2$ to erase all symbols on all tapes just before entering a halting state. To avoid the same partial configurations to be reached along a single computation, we additionally equip an {\em internal clock} to $M_2$.

Let us consider partial configurations of $M_2$. Note that $M_2$ is deterministic and the internal clock marks all partial configurations of $M_2$ on $x$ with different time stamps.
Note also that each symbol of the string $M_1(x)$ appears as a symbol read from the auxiliary input tape encoded into certain partial configurations of $M_1$. Hence, if we have a valid series $y$ of partial configurations associated with an accepting computation path of $M_2$ on $x$, then we can recover the string $M_1(x)$ correctly. Notationally,  $\eta(y)$ denotes this unique string obtained from a valid series $y$ of partial configurations.

Take a configuration graph $G^{M_2}_{x} = (V,E)$ from $M_2$. Note that there is at most one correct computation path of $M_2$. We set $s$ to be the initial partial configuration of $M_2$ on $x$ and set $t$ be a unique accepting partial configuration of $M_2$ on $x$.
The resulted graph forms an {\em acyclic undirected graph}, namely a forest,  because, otherwise, there are two accepting computation paths on the same input $x$. Given any partial configuration $v\in V$, we define $w(v)$ to be one bit written down newly on the output tape in this partial configuration $v$.
For the desired reduction, we define $c=1$, $f(x,r)=\pair{G,s,t,w}$, and $g(x,y,r)=\eta(y)$ for any $y\in SOL_0(f(x,r))$.
We obtain $f,g\in\fnc{1}$. It follows that $m_0(f(x,r),y)=m(x,\eta(y))$ for any $x\in I$ and $y\in SOL_0(f(x,r))$. In particular, $g(x,y,r)$ is a minimal solution of $x$ if and only if $y$ is a minimal solution of $f(x,r)$. Thus, $(f,g,c)$ $\sAPreduces^{\ac{0}}$-reduces $P$ to {\sc Min Forest-Path-Weight}.

Next, we consider any maximization problem $P$ in $\lo_{\nlo}$. Since $P$ also satisfies Condition (ii)--(iii) of Lemma \ref{LO-simple-form}, the above argument also works for this $P$ and thus establishes the $\sAPreduces^{\ac{0}}$-reducibility of $P$ to {\sc Min Forest-Path-Weight}.
\end{proofof}

As other variants of {\sc Min Path-Weight}, Nickelsen and Tantau \cite{NT05}  studied {\em series-parallel graphs} and {\em tournaments}.

\subsection{Complete Problems Concerning Finite Automata}\label{sec:approximation-class}

We shall leave graph problems behind and look into problems associated with   finite automata. \`{A}lvarez and Jenner \cite{AJ93} and later Tantau \cite{Tan07} discussed an intimate relationship between accepting computations of nondeterministic finite automata and log-space search procedures for optimal solutions.
Those problems are also closely related to {\em maximal word problems (or functions)} for fixed underlying machines.
Allender, Bruschi, and Pighizzini \cite{ABP93}, for instance, discussed the maximal word problems of various types of auxiliary pushdown automata. Within our framework of NLO problems, Tantau \cite{Tan07} presented a maximization problem finding the maximal input strings accepted by nondeterministic finite automata and demonstrated that this problem is $\sAPreduces^{\dl}$-complete for $\maxnl$.
Here, we shall show that a restricted version of this problem is $\sAPreduces^{\nc{1}}$-complete for $\apxl_{\maxnl}$.

A {\em one-way one-head nondeterministic finite automaton with $\lambda$-moves} (or a {\em $\lambda$-1nfa}, in short) $M$ is a tuple $(Q,\{0,1\},\{\cent,\dollar\},\delta,q_0,F)$ working with the input alphabet $\{0,1\}$ and a transition function $\delta: (Q-F)\times\{0,1,\lambda\}\to\PP(Q)$, where $q_0\in Q$, and $F\subseteq Q$. Initially, an input $x\in\{0,1\}^*$ is written on an input tape, surrounded by two endmarkers $\cent$ (left) and $\dollar$ (tight).  If $M$ makes a $\lambda$-move simply by applying $p\in\delta(q,\lambda)$, then $M$'s read-only tape head stays still; otherwise, the tape head moves to the next right cell.
A {\em configuration} of $M$ is a pair $(q,\sigma)$ of current inner state $q$ and scanning symbol $\sigma$. An {\em accepting computation path} $p_{M,x}$ of $M$ on input $x$ is a series of configurations starting with an initial configuration $(q_0,\cent)$ and ending with a final configuration $(q_f,\dollar)$ with $q_f\in F$ and, for any consecutive two elements $(q_i,\sigma_i)$ and $(q_{i+1},\sigma_{i+1})$ in $p_{M,x}$, $q_{i+1}$ is obtained in a single step from $(q_i,\sigma_i)$ by applying a transition of the form $q_{i+1}\in \delta(q_i,\sigma_i)$ with $\sigma_i\in\{0,1,\lambda\}$, where $\sigma_1\sigma_2\cdots\sigma_k$ is a partition of the input string $\cent x\dollar$.
If $M$ enters a certain final state in $F$ along a certain accepting computation path, then $M$ is said to {\em accept} $x$; otherwise, $M$ {\em rejects} $x$.
Associated with such $\lambda$-1nfa's, we consider the following optimization problem.
For succinctness, we hereafter express a transition ``$p\in \delta(q,\sigma)$'' as a triplet $(q,\sigma,p)$.

\ms
{\sc Maximum Fixed-Length $\lambda$-Nondeterministic Finite Automata Problem} ({\sc Max FL-$\lambda$-NFA}):
\renewcommand{\labelitemi}{$\circ$}
\begin{itemize}\vs{-2}
  \setlength{\topsep}{-2mm}%
  \setlength{\itemsep}{1mm}%
  \setlength{\parskip}{0cm}%

\item {\sc instance:} a $\lambda$-1nfa $M=(Q,\{0,1\},\{\cent,\dollar\},\delta,q_0,F)$ and a string $0^n$ for a length parameter $n$, provided that $0^n\in L(M)$.

\item {\sc Solution:} an accepting computation path of $M$ of length at most $|Q|$ on a certain input $y$ of length exactly $n$.

\item {\sc Measure:} an integer $rep(1y)$.
\end{itemize}

In the above definition, if we remove the requirement ``$0^n\in L(M)$'' and we allow $y$ to have any length up to $n$, then we obtain {\sc Max $\lambda$-NFA}, which is $\sAPreduces^{\dl}$-complete for $\maxnl$  \cite{Tan07}.

\begin{proposition}\label{Mix-2Path-Weight}
$\text{\sc Max FL-$\lambda$-NFA}$ is $\sAPreduces^{\nc{1}}$-complete for  $\apxl_{\maxnl}$.
\end{proposition}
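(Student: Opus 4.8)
The plan is to prove both halves of the completeness claim: membership $\text{\sc Max FL-}\lambda\text{-NFA}\in\apxl_{\maxnl}$, and $\sAPreduces^{\nc1}$-hardness of $\text{\sc Max FL-}\lambda\text{-NFA}$ for $\apxl_{\maxnl}$. For membership I would first check that it is a genuine $\maxnl$ problem: a candidate accepting path is verifiable transition-by-transition in log space, the length-$n$ input $y$ read along the path is extractable, and the measure $rep(1y)$ (an $(n{+}1)$-bit string) is producible obliviously, so $m\in\auxfl$. The decisive point is that $|y|=n$ is \emph{fixed}, so every feasible solution has $rep(1y)\in[2^{n},2^{n+1})$; hence the performance ratio of \emph{any} feasible solution is strictly below $2$. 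It therefore suffices to output one feasible solution in log space, which is possible because the promise $0^{n}\in L(M)$ guarantees an accepting path on $0^{n}$, and producing one such path is exactly the log-space solution-search already built into the definition of $\maxnl$ solutions. This gives a log-space $2$-approximation, so $\text{\sc Max FL-}\lambda\text{-NFA}\in\apxl_{\maxnl}$.

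For hardness, fix $P=(I,SOL,m,\text{\sc max})\in\apxl_{\maxnl}$ with log-space measure machine $M_m$ and a log-space $\gamma$-approximation $A$ (so $m(x,A(x))\in[m^{*}(x)/\gamma,m^{*}(x)]$; WLOG $\gamma\ge2$). The idea is to adapt the reduction that makes $\text{\sc Max }\lambda\text{-NFA}$ complete for $\maxnl$ (Tantau) to a \emph{fixed} length together with a normalization that survives the factor-$2$ window. Using the configuration-graph construction from the proof of Theorem~\ref{Min-Path-complete}, I would build by an $\fac{0}/\fnc{1}$ map $f$ a $\lambda$-1nfa $M'_{x}$ with input length $N=q(|x|)$ ($q$ bounding the bit length of $m$) that, reading $y$ once from left to right, (i) deterministically simulates $A$ and $M_m$ on the hard-wired $x$ to get $M_{\min}:=m(x,A(x))$ and $M_{\max}:=\lceil\gamma\rceil\,m(x,A(x))\ge m^{*}(x)$; (ii) guesses a solution $z$, simulates $M_m$ to obtain $m(x,z)$, and rejects unless $m(x,z)\ge M_{\min}$; and (iii) checks that $y$ is the $N$-bit encoding of $I(z)=\floors{(m(x,z)-M_{\min})\,2^{N}/(M_{\max}-M_{\min})}$, so that $rep(1y)=2^{N}+I(z)$. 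Then every accepted $z$ lies in $[M_{\min},m^{*}(x)]$ and hence has $P$-ratio at most $\gamma$; taking $N\ge\log M_{\max}$ makes $I(\cdot)$ strictly monotone in $m(x,z)$, so the maximal accepted string corresponds to $m^{*}(x)$, and the promise $0^{N}\in L(M'_{x})$ holds because $z=A(x)$ gives $I=0$. The decoder $g(x,y,r)$ reads $z$ back from the guesses recorded along the accepting path (as in Theorem~\ref{Min-Path-complete}) and is in $\fac{0}$; the reduction is uniform in $r$.

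It then remains to verify the strong $\mathrm{APNC}^{1}$-condition. Injectivity of the quantization gives $R_{2}(f(x,r),y)=1\Rightarrow m(x,z)=m^{*}(x)\Rightarrow R_{1}=1$. For $R_{2}(f(x,r),y)\le r$ one gets $I^{*}-I(z)\le 2^{N+1}(r-1)$, hence $m^{*}(x)-m(x,z)\le 2(r-1)(M_{\max}-M_{\min})\le 2\lceil\gamma\rceil(r-1)\,m^{*}(x)$, so $R_{1}(x,g(x,y,r))\le 1/\!\left(1-2\lceil\gamma\rceil(r-1)\right)$ in the small-$(r-1)$ regime, while the a priori bound $R_{1}\le\gamma$ covers large $r$; together these yield $R_{1}\le 1+c(r-1)$ with a constant $c=O(\gamma^{2})$. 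I expect the main obstacle to be step (iii): realizing the normalized-measure test—simulating the log-space approximation $A$ and performing the scaling division—inside a single $\lambda$-1nfa that reads $y$ only once, left to right, while keeping $f$ within $\nc{1}$. This is the familiar oblivious-output difficulty that elsewhere forces closure-under-division hypotheses (cf.\ Proposition~\ref{Min-Path-in-NLO}); handling it carefully, rather than the ratio bookkeeping, is where the real work lies.
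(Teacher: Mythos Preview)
Your membership argument is correct and matches the paper's: the fixed input length $n$ forces $rep(1y)\in[2^n,2^{n+1})$, so any feasible solution is already a $2$-approximation, and the promise $0^n\in L(M)$ lets you produce one accepting path in log space.

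For hardness, you have correctly located the obstacle but not resolved it, and the obstacle is genuine. Your step~(iii) asks the $\lambda$-1nfa to compute and verify $I(z)=\lfloor(m(x,z)-M_{\min})\,2^N/(M_{\max}-M_{\min})\rfloor$ bit by bit, left to right, with $z$ on a read-once tape. Since $m(x,z)$ may have polynomially many bits (we are not in $\pbo$) and cannot be stored, this division is precisely the closure-under-division issue you cite from Proposition~\ref{Min-Path-in-NLO}; saying that ``handling it carefully \ldots\ is where the real work lies'' leaves the proof incomplete.

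The paper sidesteps division entirely by \emph{pre-normalizing} the problem via a separate $\sAPreduces^{\ac{0}}$-reduction (Lemma~\ref{NLO-to-APXL}). Using the log-space approximation $M_Q$, one sets $b_0(x)=2^{\lfloor\log m(x,M_Q(x))\rfloor+1}$ and shifts the measure \emph{additively} by $\Delta_x=(\alpha-2)b_0(x)$ for a constant $\alpha$, bit-shifting up any values below $b_0(x)$. This uses only addition, comparison, and multiplication by powers of $2$---no division---and yields a new problem in $\apxl_{\maxnl}$ with $\max_y m(x,y)\le 2\min_y m(x,y)$ and $\min_y m(x,y)=2^{\lfloor\log b(x)\rfloor}$ exactly. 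Once that holds, all values $m(x,y)$ have the \emph{same} bit length, so the NFA can take $bin(m(x,y))^{(-)}$ itself as its length-$n$ input: its states are partial configurations of $M_m$, and it reads each output bit of $M_m$ as the next input symbol (making a $\lambda$-move when $M_m$ writes nothing). The back map $g$ recovers the auxiliary input from the accepting path exactly as in Theorem~\ref{Min-Path-complete}, and one gets $R_P(x,g(x,e,r))=R_0(f(x,r),e)$ identically, so the sAP condition holds with $c=1$. The idea you are missing is that the factor-$2$ window can be manufactured by an additive shift rather than a multiplicative rescaling.
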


Before proving this proposition, we show a
useful supporting lemma. The lemma helps us concentrate only on optimization problems in $\apxl_{\nlo}$ that have a certain simple structure.

\begin{lemma}\label{NLO-to-APXL}
For any maximization problem $Q$ in $\apxl_{\nlo}$ (resp., $\apxl_{\nlo}\cap\pbo$), there exist another maximization problem $P=(I,SOL,m,\text{\sc max})$ in $\apxl_{\nlo}$ (resp., $\apxl_{\nlo}\cap\pbo$)  and a log-space deterministic Turing machine $M_P$ such that,  for all $x\in I$, (i) $Q\sAPreduces^{\ac{0}} P$, (ii) $\max_{y\in SOL(x)}\{m(x,y)\}\leq 2 \min_{y\in SOL(x)}\{m(x,y)\}$ holds for all $x\in I$, (iii) there exists a function $b\in\fl$ such that  $m(x,z)\geq 2^{\floors{\log{b(x)}}}$  for all $z\in SOL(x)$, and (iv) for any $x\in I$ with $SOL(x)\neq\setempty$,  $m(x,M_P(x))=2^{\floors{\log{b(x)}}}$.
A similar statement holds for minimization problems; however, we need to replace (iii) by (iii') $m(x,z)\leq 2^{\floors{\log{b(x)}}}$.
\end{lemma}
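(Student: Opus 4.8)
The plan is to normalize $Q=(I_1,SOL_1,m_1,\text{\sc max})$ by first extracting a cheap log-space estimate of the optimum and then reshaping the measure so that all surviving feasible values fall into a single dyadic band. Since $Q\in\apxl_{\nlo}$, I would fix a constant $\gamma>1$ and a log-space $\gamma$-approximate algorithm $M_Q$; its output value $a(x):=m_1(x,M_Q(x))$ lies in $\fl$ and satisfies $a(x)\le m_1^*(x)\le \gamma a(x)$. I would take $b:=a\in\fl$ and set $k(x):=\floors{\log b(x)}$, so that $2^{k}\le a(x)<2^{k+1}$. The target problem $P=(I,SOL,m,\text{\sc max})$ would be built on the same instances ($I=I_1$); its feasible solutions are the $Q$-solutions together with one artificial baseline solution $\bot_0$, and $m$ is the original $m_1$ reshaped into the band $[2^{k},2^{k+1}]$ with $m(x,\bot_0)=2^{k}$. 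The machine $M_P$ is the log-space machine that simply prints $\bot_0$, which immediately yields (iv); (iii) holds because every feasible value is forced to be at least $2^{k}$, and (ii) holds because every feasible value is at most $2^{k+1}=2\cdot 2^{k}$.

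Next I would verify membership and the reduction. Membership $P\in\apxl_{\nlo}$ is automatic once the band property (ii) is in place, since then any feasible solution — in particular the log-space-printable $\bot_0$ — is a $2$-approximation; polynomial-boundedness is inherited from $Q$, so the $\pbo$ variant is handled simultaneously. For (i) I would exhibit an sAP$^{\ac{0}}$-reduction $(f,g,c)$ with $f(x,r)=x$ (up to the trivial tagging needed to recognize $P$-instances) and with $g(x,z,r)$ decoding the underlying $Q$-solution carried by $z$ (falling back to $M_Q(x)$ when $z=\bot_0$), both of which are clearly in $\fac{0}$. The constant $c$ would be determined by $\gamma$ (the natural candidate being $c=\gamma-1$), and the performance-ratio implication $R_2(x,z)\le r\Rightarrow R_1(x,g(x,z,r))\le 1+c(r-1)$ would be checked by relating the reshaped measure back to $m_1$.

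The hard part is reconciling the factor-$2$ band (ii) with the exactness that sAP demands. Because sAP permits $r=1$, the condition $R_2(x,z)\le 1\Rightarrow R_1(x,g(x,z,r))\le 1$ requires every $P$-optimal solution to decode to an \emph{exactly} $Q$-optimal one; yet merely truncating $m_1$ into $[2^{k},2^{k+1}]$ spoils this, since it makes all solutions of value $\ge 2^{k+1}$ optimal in $P$ even though the true optimum may lie a factor $\gamma$ above $2^{k+1}$. The genuine difficulty is therefore to place a single dyadic band that provably contains $m_1^*(x)$ using only the log-space estimate $a(x)$, whose accuracy is a factor $\gamma$ rather than a factor $2$. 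My intended device is to exploit that $m_1^*(x)$ can occupy only the $O(\log\gamma)=O(1)$ dyadic scales between $a(x)$ and $\gamma a(x)$, and to let $f\in\fac{0}$ emit one $P$-instance that bundles these constantly-many scale-guesses so that $P$'s optimum selects the correct scale and decodes back to a $Q$-optimal solution. Keeping the measure honest — so that the argmax is preserved rather than collapsed by rescaling — while still squeezing every feasible value into a factor-$2$ band is exactly where the bookkeeping is most delicate, and I expect this to be the technical heart of the argument.

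Finally, the minimization case is symmetric: I would mirror the construction, forcing every feasible value to be at most $2^{\floors{\log b(x)}}$ as required by (iii$'$), with the baseline attaining this upper end $2^{\floors{\log b(x)}}$ for (iv), so that (ii) then pins all feasible values into $[2^{\floors{\log b(x)}-1},2^{\floors{\log b(x)}}]$. The same reduction $(f,g,c)$ transports $\sAPreduces^{\ac{0}}$-reducibility, with the roles of the two performance-ratio bounds interchanged.
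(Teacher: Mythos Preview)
You correctly diagnose the central tension --- squeezing all feasible values into a dyadic band while still preserving the argmax so that the $r=1$ case of sAP goes through --- but your proposed resolution (truncation plus ``bundling constantly-many scale-guesses'') is not the right device, and the sketch does not explain how bundling would select the correct scale; indeed, a naive union over scales lets the maximum drift to the top scale regardless of where $m_1^*(x)$ actually sits.  The paper's idea is much simpler: do not truncate or rescale multiplicatively at all, but apply an \emph{additive shift}.  With $b_0(x)$ a log-space-computable power of two satisfying $b_0(x)\le m_1^*(x)\le \alpha\, b_0(x)$ for a constant $\alpha$ determined by $\gamma$, set $\Delta_x=(\alpha-2)b_0(x)$ and let $m_2(\tilde{x},\tilde{y})=m_1(x,y)+\Delta_x$ for solutions $y$ with $m_1(x,y)\ge b_0(x)$ (solutions below $b_0(x)$ are scaled up into the band but routed back to $M_Q(x)$ by $g$).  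An additive shift preserves the argmax exactly, so $P$-optimal solutions decode to $Q$-optimal ones and the $r=1$ requirement of sAP is automatic; at the same time the ratio calculation
\[
\frac{\max m_2}{\min m_2}\;\le\;\frac{\alpha b_0(x)+(\alpha-2)b_0(x)}{b_0(x)+(\alpha-2)b_0(x)}\;=\;\frac{2(\alpha-1)}{\alpha-1}\;=\;2
\]
gives the factor-$2$ band for free.  The constant in the reduction is then $c=\alpha-1$, and the performance-ratio transfer follows from $R_1-1=(m_1^*-m_1)/m_1\le (\alpha-1)\,(m_2^*-m_2)/m_2=c(R_2-1)$.

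There is also a smaller gap in your setup: you let $g$ ``fall back to $M_Q(x)$'' when it sees the baseline solution, but $M_Q$ is a log-space machine, so $g$ as described is not in $\fac{0}$.  The paper avoids this by embedding $M_Q(x)$ into every solution string of $P$ (and the configuration graph $G^{M_Q}_x$ into the instance, so that $I_2\circ SOL_2\in\auxl$ can verify the embedded string really is $M_Q(x)$); then $g$ merely extracts a substring, which is $\ac{0}$.  Your choice $I=I_1$, $f(x,r)=x$ does not leave room for this packaging.
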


\begin{proof}
Let $Q=(I_1,SOL_1,m_1,\text{\sc max})$ be any maximization problem in $\apxl_{\nlo}$. In what follows, we shall modify $Q$ to obtain the desired problem  $P=(I_2,SOL_2,m_2,\text{\sc max})$.

Take a polynomial $p$ such that, for any $(x,y)\in I_1\circ SOL_1$, $|y|\leq p(|x|)$ holds.
Since $Q\in\apxl_{\nlo}$, take a log-space deterministic Turing machine $M_P$ producing $\beta$-approximate solutions of $P$ for a certain constant $\beta>1$. We obtain $m_1(x,M_Q(x))\leq m_1^*(x)\leq \beta m_1(x,M_Q(x))$ for all $x\in I_1$ with $SOL_1(x)\neq\setempty$. For such an $x$, we further set $b_0(x) = 2^{\floors{\log{m_1(x,M_Q(x))}}+1}$. Note that $b_0\in\fl$
since $m_1\in\auxfl$.
Since $b_0(x)\leq b(x)\leq 2b_0(x)$, it follows that $b_0(x)\leq m_1^*(x)\leq 2\beta b_0(x)$.
Let us consider a configuration graph $G^{M_Q}_{x}$ of $M_Q$ on $x$.  Note that we can compute a string $M_Q(x)$ from $(x,G^{M_Q}_{x})$ using only log space.
For later use, we set $\alpha= 2^{\floors{\log{2\beta}}+1}+1$, which implies  $b_0(x)\leq m_1^*(x)\leq \alpha b_0(x)$.

Here, we define the desired problem $P = (I_2,SOL_2,m_2,\text{\sc max})$. For convenience, set $\Delta_{x} = (\alpha-2)b_0(x)$. Let $I_2=\{x\natural G^{M_Q}_{x}\mid x\in I_1\}$. Given $\tilde{x}=x\natural G^{M_Q}_{x} \in I_2$, $SOL_2(\tilde{x})$ contains the following strings: (i) $y\natural M_Q(x)$ for all $y\in SOL_1(x)$ satisfying $m_1(x,y)\geq b_0(x)$ and (ii) $By\natural M_Q(x)$ for every $y\in SOL_1(x)$ satisfying $m_1(x,y)<b_0(x)$, where $B$ is a special symbol. Obviously, $I_2\circ SOL_2$ is a member of $\auxl$.

For any $\tilde{y}\in SOL_2(\tilde{x})$, if $\tilde{y}= y\natural M_Q(x)$, then we set $m_2(\tilde{x},\tilde{y})= m_1(x,y) +\Delta_{x}$; if $\tilde{y}=By\natural M_P(x)$, then we set $m_2(\tilde{x},\tilde{y}) = m_1(x,y)\cdot 2^{t} + \Delta_{x}$, where $t=|bin(b_0(x))|-|bin(m_1(x,y))|$. It follows that $m_2(\tilde{x},\tilde{y})\geq b_0(x)+\Delta_{x}$ for all $(\tilde{x},\tilde{y})\in I_2\circ SOL_2$. Note that $m_2(\tilde{x},M_Q(x)\natural M_Q(x))$ equals $b_0(x)+\Delta_{x}$, which is $2^{\floors{\log2\beta}+\floors{\log{b_0(x)}}+2}$. Choose a function $b$ so that $\floors{\log{b(x)}} = \floors{\log{2\beta}}+\floors{\log{b_0(x)}}+2$
for all $x\in (I_1\circ SOL_1)^{\exists}$. Clearly, $m_2\in\auxfl$ holds.

The desired $M_P(\tilde{x})$ outputs $M_Q(x)\natural M_Q(x)$ for any $\tilde{x} \in I_2$ if $SOL_2(\tilde{x})\neq\setempty$, and it outputs $\bot$ otherwise. Let $\tilde{x}\in (I_2\circ SOL_2)^{\exists}$.
It follows that $m_2^*(\tilde{x}) = \max_{\tilde{y}\in SOL_2(x)}\{m_2(\tilde{x},\tilde{y})\} = \max_{y\in SOL_1(x)}\{m_1(x,y)+\Delta_{x}\}=m_1^*(x)+\Delta_{x}$ since $m_1^*(x)\geq b_0(x)$. Moreover, we obtain  $m_1(x,g(x,\tilde{y},r)) \geq b_0(x)$. It follows that
\begin{equation}\label{eqn:max-min-ratio}
\frac{\max_{y\in SOL_2(\tilde{x})}\{m_2(\tilde{x},y)\}}{\min_{y\in SOL_2(\tilde{x})}\{m_2(\tilde{x},y)\}} =
\frac{m_1^*(x) + \Delta_{x}}{b_0(x) + \Delta_{x}} \leq
\frac{\alpha b_0(x) + (\alpha-2)b_0(x)}{b_0(x) + (\alpha-2)b_0(x)} = 2.
\end{equation}

We remark that, by the construction of $P$ from $Q$, if $Q$ is polynomially bounded, then so is $P$. We wish to define an $\sAPreduces^{\nc{1}}$-reduction $(f,g,c)$ from $Q$ to $P$. First, we set $c= \alpha-1$. We then define $f(x,r)=x\natural G^{M_Q}_{x}$ ($=\tilde{x}$),  and $g(x,\tilde{y},r) = y$ if $\tilde{y}$ is of the form $y\natural z$, $g(x,\tilde{y},r)= z$ if $\tilde{y}$ is of the form $By\natural z$, and $g(x,\tilde{y},r)=x$ otherwise.
If $x\in I_1$, then $f(x,r)\in I_2$ since $f(x,r)=x\natural G^{M_Q}_{x}$. If $y\in SOL_1(x)$, then $y\natural M_P(x)\in SOL_2(f(x,r))$ because $M_Q(x)$ can be constructed from $(x,G^{M_Q}_{x})$ using log space.
Next, we assume that the performance ratio $R_2$ for $P$ satisfies $R_2(f(x,r),\tilde{y})\leq r$ for $x\in I_1$ and $\tilde{y}\in SOL_2(f(x,r))$. Note that $m_1(x,g(x,\tilde{y},r)) = m_1(x,y)$ if $\tilde{y}=y\natural M_Q(x)$. If $\tilde{y}=By\natural M_Q(x)$, then $m_1(x,g(x,\tilde{y},r)) = m_1(x,M_Q(x))$. We obtain
$m_1(x,g(x,\tilde{y},r))\geq m_1(x,M_Q(x))$. Write $u= g(x,\tilde{y},r)$ for simplicity. As for the performance ratio $R_1$ for $Q$, it follows that
\begin{equation}\label{eqn:R_1-alpha}
R_1(x,g(x,\tilde{y},r)) -1 = \frac{m_1^*(x)-m_1(x,u)}{m_1(x,u)} \leq
 (\alpha-1)\cdot \frac{m_2^*(\tilde{x})-m_2(\tilde{x},\tilde{y})}{(\alpha-1)m_1(x,u)}.
\end{equation}
The last term is further calculated as
\begin{equation}\label{eqn:ration-with-c}
c\cdot \frac{m_2^*(\tilde{x})-m_2(\tilde{x},\tilde{y})}{m_1(x,u)+(\alpha-2)m_1(x,u)}
\leq c\cdot \frac{m_2^*(\tilde{x})-m_2(\tilde{x},\tilde{y})}{m_1(x,u)+\Delta_{x}}
= c\cdot \frac{m_2^*(\tilde{x})-m_2(\tilde{x},\tilde{y})}{m_2(\tilde{x},\tilde{y})}.
\end{equation}
The last term equals $c(R_2(f(x,r),\tilde{y})-1)$. Since $R_2(f(x,r),\tilde{y})\leq r$, it follows that $R_1(x,g(x,\tilde{y},r)) \leq 1+ c(r-1)$. Therefore, $(f,g,c)$ reduces $Q$ to $P$.
\end{proof}

Let us begin the proof of Proposition \ref{Mix-2Path-Weight}.

\s
\begin{proofof}{Proposition \ref{Mix-2Path-Weight}}
First, we shall  argue that {\sc Max FL-$\lambda$-NFA} belongs to   $\apxl_{\maxnl}$. For simplicity, we set $\text{\sc Max FL-$\lambda$-NFA}$ as  $(I_0,SOL_0,m_0,\text{\sc min})$. Let $x=(M,0^n)$ be any instance in $I_0$ with $M=(Q,\{0,1\},\{\cent,\dollar\},\delta,q_0,F)$, where $M$ is demanded to accept $0^n$. Since we feed only inputs $y$ of length $n$, the value $rep(1y)$ varies from $2^n$ to $2^{n+1}-1$. It thus follows that, for any $u\in SOL_0(x)$,
$m_0^*(x)/2 \leq m_0(x,u) \leq m_0^*(x)$. Consider the following algorithm $N$: take $x$ as input and simulate $M$ on input $0^n$ (also by checking the size of $0^n$). This algorithm requires only log space.
We then obtain $m_0^*(x)/2\leq m_0(x,N(x))\leq m_0^*(x)$ for any $x\in I_0$. These bounds imply that $\text{\sc Max FL-$\lambda$-NFA}$ is a member of  $\apxl_{\maxnl}$.

Next, we shall show the $\sAPreduces^{\nc{1}}$-hardness of {\sc Max FL-$\lambda$-NFA}. Let $P=(I,SOL,m,\text{\sc max})$ be any problem in $\apxl_{\maxnl}$. Without loss of generality, we assume that $P$ satisfies Conditions (ii)--(iv) of Lemma \ref{NLO-to-APXL}, and thus $P$ admits a $2$-approximate algorithm.
Our goal is to show that $P$ is $\sAPreduces^{\nc{1}}$-reducible to {\sc Max FL-$\lambda$-NFA} via a suitable reduction $(f,g,c)$. Let $M_m$ be a log-space auxiliary Turing machine computing $m$ and let $M_P$ denote a log-space $2$-approximate algorithm for $P$. There is a function $b\in\fl$ such that $b(x)\leq m^*(x)\leq 2b(x)$  and $m(x,y)\geq b(x)$  for all $x\in (I\circ SOL)^{\exists}$, where $b(x)$ is of the form $2^{\floors{\log{C(x)}}}$ for appropriate function $C\in\fl$.
Moreover, we assume that, for each $x\in (I\circ SOL)^{\exists}$, there is a solution $y\in SOL(x)$ such that $m(x,y)=b(x)$.

As in the proof of Proposition \ref{Min-Path-complete}, we consider partial configurations of $M_m$.
We define $Q$ to be the set of all possible partial configurations of $M_m$.
Fix $x\in (I\circ SOL)^{\exists}$ arbitrarily and let $n$ be the size of binary string $bin(b(x))$.

We want to define a $\lambda$-1nfa $N$, which ``mimics'' a computation of $M_m$. $N$'s inner states are partial configurations of $M_m$. An input to $N$ is $bin(m(x,y))^{(-)}$ for a certain auxiliary input $y\in SOL(x)$. A move of $N$ is described as follows. Given a string $u$ and a number $k\in\nat^{+}$, $u_k$ denotes the $k$th symbol of $u$.
On such an input, $N$ nondeterministically guesses a string $y\in\{0,1\}^n$.
By reading an input symbol $\tau'\in\{0,1\}$ from $bin(m(x,y))$ one by one from left to right, $N$ changes an inner state $v=(q,x,j,\xi,u,k,\tau)$ to another inner state  $v'=(p,x,j+d_1,\xi',w,k+d_2,\tau')$ in a single step by applying $M_m$'s transition ``$(p,\tau',w_{k},d_1,d_2)\in \delta(q,x_{j},\xi,u_{k})$,'' where $\xi'$ is the next bit of $\xi$ in $y$. More precisely, we define $N$'s transition as  $\pair{p,x,j+d_1,\xi',w,k+d_2,\tau'}\in \delta_{N}(\pair{q,x,j,\xi,u,k,\tau},\tau')$ iff $(p,\tau',w_k,d_1,d_2)\in\delta(q,x_j,\xi,u_k)$, where $w$ is obtained from $u$ by changing $u_k$ to $w_k$.

We modify $N$ so that it simultaneously checks whether its input is of the form $0^n$. If so, $N$ enters a designated accepting state. Hence, $0^n\in L(N)$.

Here, let us define the desired reduction $(f,g,c)$. First, we set $f(x,r)=\pair{N}$. Given an accepting computation path $e$ in $SOL_0(f(x,r))$, we define $g(x,e,r)$ to be an auxiliary input $y_{(e)}$ fed into $M_m$, which can be obtained from $e$. It follows that $y_{(e)}\in SOL(x)$ and that $m(x,y_{(e)})=m_0(f(x,r),e)$.
Let $r\in\rational^{\geq1}$. The performance ratio $R_P$ for $P$ satisfies that $R_P(x,g(x,e,r)) = \frac{m^*(x)}{m(x,g(x,e,r))} = \frac{m_0^*(f(x,r))}{m_0(f(x,r),e)} = R_0(f(x,r),e)$.

Therefore, the reduction $(f,g,c)$ ensures $P\sAPreduces^{\nc{1}}\text{\sc Max FL-$\lambda$-NFA}$.
\end{proofof}


As a simple variant of {\sc Max $\lambda$-NFA}, we shall consider  {\em one-way one-head deterministic finite automata with $\lambda$-moves} (or $\lambda$-1dfa's, in short). A $\lambda$-1dfa $M$ is a tuple $(Q,\{0,1\},\{\cent,\dollar\},\delta,q_0,F)$ working with input alphabet $\{0,1\}$ and $\delta:(Q_{\lambda}\times\{\lambda\})\cup(Q_{+}\times \{0,1,\natural\})\to Q$, where $Q=Q_{\lambda}\cup Q_{+}$, $Q_{\lambda}\cap Q_{+}=\setempty$, $q_0\in Q$, and $F\subseteq Q$. This $M$ must satisfy the following condition: if $M$ is in state $q\in Q_{\lambda}$, then $M$'s read-only tape head stays still; otherwise, the tape head moves to the next right cell.
Here, each transition ``$\delta(q,\sigma)=p$'' is succinctly expressed as  $(q,\sigma,p)$.

\ms
{\sc Maximum Input-Restricted $\lambda$-Deterministic Finite Automata Problem} ({\sc Max IR-$\lambda$-DFA}):
\renewcommand{\labelitemi}{$\circ$}
\begin{itemize}\vs{-1}
  \setlength{\topsep}{-2mm}%
  \setlength{\itemsep}{1mm}%
  \setlength{\parskip}{0cm}%

\item {\sc instance:} a $\lambda$-1dfa $M =(Q,\{0,1\},\{\cent,\dollar\},\delta,q_0,F)$ and a list $Y=(y_1,y_2,\ldots,y_k)$ of strings over $\{0,1\}$, where $\delta$ is given as a list of (partial) transitions of the form $(q,\sigma,p)$.

\item {\sc Solution:} an accepting computation path  of $M$ of length at most $|Q|$ on a certain input $y\in Y$, which is surrounded by $\cent$ and $\dollar$.

\item {\sc Measure:} an integer $rep(1y)$.
\end{itemize}

\begin{proposition}\label{MAX-IR-lambda-DFA}
{\sc Max IR-$\lambda$-DFA} is $\sAPreduces^{\nc{1}}$-complete for $\lo_{\nlo}$.
\end{proposition}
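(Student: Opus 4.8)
The plan is to prove the membership and the hardness separately, modeling the whole argument on the proof of Proposition~\ref{forest-path-weight}; throughout, write {\sc Max IR-$\lambda$-DFA} as $(I_0,SOL_0,m_0,\text{\sc max})$. For the membership {\sc Max IR-$\lambda$-DFA}$\in\lo_{\nlo}$, the crucial point is that the automaton $M$ is \emph{deterministic}, so on each listed string $y_i\in Y$ its computation is unique. I would first observe $I_0\circ SOL_0\in\auxl$ and $m_0\in\auxfl$: from an accepting path given on the auxiliary tape one reads off its underlying input $y$ and writes $rep(1y)$ obliviously in log space. For $\dl$-solvability I would simulate $M$ on each $\cent y_i\dollar$ in turn, using an $O(\log n)$ step counter to cap the path at $|Q|$ steps (so $\lambda$-cycles are rejected), and keep the index of the best accepted string seen so far; comparing $rep(1y_i)$ against the current champion reduces to comparing lengths first and then the strings lexicographically, both log-space tasks, and a final re-simulation on the winner prints its accepting computation path obliviously. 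This also yields $(I_0\circ SOL_0)^{\exists}\in\dl$, placing the problem in the ``second category'' of NLO problems.

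For the $\sAPreduces^{\nc{1}}$-hardness I would, invoking Lemma~\ref{LO-simple-form}, assume the given $P=(I,SOL,m,goal)$ already admits unique solutions, with a log-space deterministic machine $M_P$ producing the unique $y^{*}=M_P(x)$ (when $SOL(x)\neq\setempty$) and a log-space auxiliary machine $M_m$ computing $m$. Composing them, i.e.\ feeding the output of $M_P$ into $M_m$, yields a log-space deterministic machine $M_2$ with no auxiliary tape; exactly as in Proposition~\ref{forest-path-weight} I would pad $M_2$ with an internal clock running to a fixed polynomial bound $T=p(|x|)$ and force it to blank all tapes before a single halting configuration, so that its sequence of partial configurations is deterministic, repeats no configuration, and carries the bits of $y^{*}$ among the symbols its $M_m$-part reads. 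I would then define a $\lambda$-1dfa $N$ whose states are the partial configurations of $M_2$ and whose transitions mimic one step of $M_2$ while advancing the input head by one cell, and set $Y=\{0^{T}\}$. On $0^{T}$ the deterministic $N$ has exactly one computation path, which is accepting iff $SOL(x)\neq\setempty$ and which, visiting distinct configurations, has length at most $|Q|$.

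The reduction is then $(f,g,1)$ with $f(x,r)=\pair{N,Y}$ and $g(x,e,r)=\eta(e)$, where $\eta(e)$ recovers $y^{*}$ from the accepting path $e$ precisely as in Proposition~\ref{forest-path-weight}; both maps lie in $\fac{0}\subseteq\fnc{1}$, since wiring the transition table of $N$ only compares a bounded number of bits between pairs of configurations and $\eta$ merely reshuffles bits of $e$. Correctness of the sAP-condition is immediate from the unique-solution property: the only feasible solution of $f(x,r)$ gives $g(x,e,r)=y^{*}\in SOL(x)$, which is optimal, so $R_1(x,g(x,e,r))=1\le 1+c(r-1)$ for every $r\ge1$, regardless of whether $P$ is a maximization or a minimization problem. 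Composing with the $\sAPreduces^{\ac{0}}$-reduction supplied by Lemma~\ref{LO-simple-form} then gives the claim for all of $\lo_{\nlo}$.

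I expect the main obstacle to be the faithful realization of $N$ rather than the accounting of performance ratios. Concretely, I must render the head motions and work-tape rewrites of the log-space machine $M_2$ as the transition function of a genuine deterministic $\lambda$-1dfa scanning $0^{T}$, keep the entire accepting path within the length bound $|Q|$, and ensure the bits of $y^{*}$ stay recoverable from the configurations so that $\eta$ remains an $\fac{0}$ map. The clock-and-blank device is exactly what makes the configuration sequence unique and acyclic, and checking simultaneously that $f$ never leaves $\fnc{1}$ is the delicate part of the construction.
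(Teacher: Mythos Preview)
Your proposal is correct. The membership argument matches the paper's: both iterate over the listed strings $y_i$, simulate the deterministic $M$ on each with a step counter capped at $|Q|$, retain the best accepted $y$ (the paper scans in lexicographic order; you compare $rep(1y_i)$ by length-then-lex, which is equivalent), and re-simulate on the winner to emit the accepting path.

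For hardness you take a genuinely different route from the paper. The paper's $\lambda$-1dfa $N_x$ treats the \emph{output tape} of the combined log-space machine as its own input tape: whenever the machine writes a bit $\sigma$ of $bin(m(x,y))^{(-)}$, $N_x$ reads $\sigma$, and otherwise $N_x$ makes a $\lambda$-move. Thus the string $N_x$ accepts is $bin(m^*(x))^{(-)}$ itself, and $m_0(f(x,r),p)=rep(1\,bin(m(x,y_p))^{(-)})=m(x,y_p)$, so the performance ratios of source and target coincide exactly ($R_P=R_0$). You instead feed the DFA a dummy input $0^{T}$ and let the target measure be the fixed constant $2^{T}$; correctness then rests entirely on the unique-solution guarantee from Lemma~\ref{LO-simple-form}, which forces $R_1(x,g(x,e,r))=1$ for every $r\ge1$. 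Your approach is more elementary and hews closely to the Proposition~\ref{forest-path-weight} template you invoke, and it sidesteps the question of how to populate the list $Y$ with the output string in $\fnc{1}$; the paper's approach buys a measure-preserving reduction in which the accepted input already encodes the optimal objective value, at the cost of a slightly more involved automaton construction.
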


If we  take $(M,1^n)$ as an instance and demand $y$ to have length at most $n$, then we obtain another problem, {\sc Max $\lambda$-DFA}. It is not clear that {\sc Max $\lambda$-DFA} is $\sAPreduces^{\nc{1}}$-complete for either $\lo_{\nlo}$ or $\maxnl$.

\begin{proofof}{Proposition \ref{MAX-IR-lambda-DFA}}
For convenience, let {\sc Max IR-$\lambda$-DFA} have the form $(I_0,SOL_0,m_0,\text{\sc max})$.
Let us claim that  {\sc Max IR-$\lambda$-DFA} is in $\nlo$.
Note that it is easy to check using log space whether a given instance $M =(Q,\{0,1\},\{\cent,\dollar\},\delta,q_0,F)$ is indeed a $\lambda$-1dfa; thus, $I_0\in\dl$ follows.
To see $I_0\circ SOL_0\in\auxl$, on input $(M,Y)$ together with a sequence $p = (p_1,p_2,\ldots,p_m)$ of configurations of $M$ as an auxiliary input, we can check using only log space whether $p$ is indeed an accepting computation path of $M$ (by checking that $p_1=(q_0,\cent)$, $p_m=(q_f,\dollar)$, $(p_i,\sigma,p_{i+1})$ is a transition for each $i\in[m]$ with $m\leq |Q|$ for a certain $\sigma\in\{0,1,\lambda\}$, and a series of such symbols matches one of $y$'s in $Y$. As for $m_0\in\auxfl$, it is possible to retrieve an input $y$ from $p$ and output $1y$ using log space if $y$ is in $Y$.

Next, we shall show that {\sc Max IR-$\lambda$-DFA} belongs to $\lo_{\nlo}$.
Recursively, we pick each $y$ in $Y$ in the lexicographic order and simulate a given $\lambda$-1dfa $M$ on this input $y$ to check if $M$ accepts $y$ within $|Q|$ steps. This process determines the maximal accepted input $y$ in $Y$.  Finally, we generate an accepting computation path of $M$ on this $y$. This whole procedure requires log space.
Hence, {\sc Max IR-$\lambda$-DFA} can be solved using only log space.

Hereafter, we shall  show that {\sc Max IR-$\lambda$-DFA} is $\sAPreduces^{\nc{1}}$-hard for $\lo_{\nlo}$.
Let us consider  any maximization problem $P=(I,SOL,m,\text{\sc max})$ in $\lo_{\nlo}$. We assume that $P$ satisfies Conditions (ii)--(iii) of Lemma \ref{LO-simple-form}. Since optimal solutions and their objective values are both computed using log space, for the purpose of defining $g$, we can build a log-space deterministic Turing machine $M$ that, on input $x$, records each symbol $y_i$ of a solution $y=y_1y_2\cdots y_k\in SOL^*(x)$ in one cell of one work tape (after erasing the previous symbol $y_{i-1}$ if any) and produces $bin(m(x,y))^{(-)}$ on an output tape (by removing the first bit ``$1$'' from $bin(m(x,y))$).

We shall  construct a pair $(f,g)$ of functions that $\sAPreduces^{\nc{1}}$-reduces $P$ to {\sc Max IR-$\lambda$-DFA}.
Let $x$ be any instance in $I$. We construct a $\lambda$-1dfa $N_x = (Q,\{0,1\},\{\cent,\dollar\},\delta,q_0,F)$ as follows.
We view each configuration of $M$ (including the content of its output symbol  and an input symbol) as an inner state of $N_x$. Let $Q$ be a set of all such configurations. Note that $|Q|\geq |x|$. We roughly treat an output tape of $M$ as an input tape of $N_x$. More precisely, when $M$ writes a symbol $\sigma\in\{0,1\}$ on its output tape, $N_x$ reads $\sigma$ on the input tape. When $M$ does not write any non-blank output symbol, $N_x$ makes its associated  $\lambda$-move.
Finally, we set $f(x,r) =\pair{N_x}$.  For $N_x$, let $p = (p_0,p_1,\ldots,p_m)$ be an accepting computation path of $N_x$ of length $\leq|Q|$.
We also define $g(x,p,r)$ to be an input $y_p$ to $N_x$ that is recovered from $p$ as stated above.
Note that $m_0(f(x,r),p) = rep(1bin(m(x,y_p)^{(-)})) = m(x,y_p)$.
Concerning the performance ratio $R_P$ and $R_0$ for $P$ and {\sc Max IP-$\lambda$-DFA}, respectively, it follows that $R_{P}(x,g(x,p,r)) = \frac{m^*(x)}{m(x,y_p)} = \frac{m_0^*(f(x,r))}{m_0(f(x,r),p)} = R_0(f(x),p)$. Therefore, $(f,g,1)$ is an $\sAPreduces^{\nc{1}}$-reduction from $P$ to {\sc Max IR-$\lambda$-DFA}, as requested. The case where $P$ is a minimization problem can be similarly treated.
\end{proofof}

\section{Polynomially-Bounded Complete Problems}\label{sec:PBP}

Let us recall from Section \ref{sec:comb-OPs} that an optimization problem is said to be  {\em polynomially bounded} exactly when its measure function is polynomially bounded. Recall also the notation $\pbo$, which expresses the set of all polynomially-bounded optimization problems.
For many low-complexity optimization/approximation classes below $\po_{\npo}$, polynomially-bounded optimization problems play a quite special role. With respect to log-space computation, it appears more natural to deal with polynomially-bounded optimization problems than polynomially-unbounded ones because, through Section \ref{sec:completeness}, we have been unable to present any complete problem in $\nlo$ and $\apxl_{\nlo}$ but, as we shall see shortly, we can exhibit complete problems in $\nlo\cap\pbo$ and
$\apxl_{\nlo}\cap\pbo$.

In the subsequent subsections, we shall present polynomially-bounded optimization problems, which turn out to be complete for various NL optimization and approximation classes.

\subsection{Maximization Versus Minimization}\label{sec:key-lemma}

Assume that we wish to show the completeness of a certain optimization problem $P$ for a target optimization/approximation class $\DD$. Since $\DD$ may be composed of maximization problems as well as minimization problems, it is necessary to construct desirable reductions to $P$ from all maximization problems in $\DD$ and also from all minimization problems in $\DD$.
Regarding $\npo$ problems, it is well-known that every minimization problem $Q$ in $\apxp_{\npo}$ has its maximization counterpart $Q'$ in $\apxp_{\npo}$ whose complexity is at least as hard as $Q$ (see, e.g., \cite[Theorem 8.7]{ACG+03} for the proof).

A similar statement holds for polynomially-bounded problems in $\apxl_{\nlo}$. This is because a log-space auxiliary Turing machine that computes a polynomially-bounded measure function can freely manipulate the outcome of the function using its space-bounded work tapes before writing it down onto an output tape.

\begin{lemma}\label{min-reduces-max}
\renewcommand{\labelitemi}{$\circ$}
\begin{enumerate}\vs{-1}
  \setlength{\topsep}{-2mm}%
  \setlength{\itemsep}{0mm}
  \setlength{\parskip}{0cm}%

\item For any minimization (resp., maximization) problem $P$ in $\nlo\cap\pbo$, there  exists a maximization (resp., minimization) problem $Q$ in $\nlo\cap\pbo$ such that $P$ is $\sAPreduces^{\ac{0}}$-reducible to $Q$.

\item Let $\DD\in\{\apxl,\apxnc{1}\}$. For every minimization (resp., maximization) problem $P$ in $\DD_{\nlo}\cap\pbo$, there exists a maximization (resp., minimization) problem $Q$ in $\DD_{\nlo}\cap\pbo$ such that $P$ is $\sAPreduces^{\ac{0}}$-reducible to $Q$.

\item Let $\DD\in\{\lo,\nco{1}\}$. For any minimization (resp., maximization) problem $P$ in $\DD_{\nlo}\cap\pbo$, there  exists a maximization (resp., minimization) problem $Q$ in $\DD_{\nlo}\cap\pbo$ such that $P$ is $\EXreduces^{\ac{0}}$-reducible to $Q$.
\end{enumerate}
\end{lemma}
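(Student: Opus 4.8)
The plan is to handle all three parts with a single construction that keeps the instance set and the feasible‑solution set of $P$ intact and merely \emph{reflects} the measure function, so that the two reduction maps can be taken to be the identity (trivially in $\fac{0}$). The obvious reflection $m'(x,y)=B(x)+1-m(x,y)$ (a complement, with $B(x)$ a polynomial bound on $m$) does preserve optimizers and so would already settle the exact reductions of part~3, but it severely distorts performance ratios when the optimum is small, and hence cannot serve the approximation‑preserving reductions of parts~1 and~2. Instead I would use the \emph{reciprocal} reflection, mirroring the device of Lemma~\ref{reduction-MaxNL-MinNL}: for a minimization $P=(I,SOL,m,\text{\sc min})$ fix a polynomial $B$ with $1\le m(x,y)\le B(x)$ for all feasible $(x,y)$, put $N=N(x):=2B(x)^2$, and define the maximization problem $Q=(I,SOL,m',\text{\sc max})$ by $m'(x,y)=\floors{N/m(x,y)}$ (and symmetrically for the max‑to‑min direction). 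The reduction is $f(x)=x$, $g(x,y)=y$, which lies in $\fac{0}$.

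First I would verify that $Q$ lands in the same class as $P$. Because $P$ is polynomially bounded, each value $m(x,y)$ has only $O(\log|x|)$ bits, so $N$ and the integer quotient $\floors{N/m(x,y)}$ can be formed by a log‑space machine that freely manipulates these short strings; thus $m'\in\auxfl$, $m'$ is polynomially bounded ($1\le m'\le N$), and $Q\in\nlo\cap\pbo$, giving part~1. For part~2, the \emph{same} algorithm that $r_0$‑approximates $P$ also approximates $Q$ within a constant: an $r_0$‑approximate $y$ satisfies $m^*(x)\le m(x,y)\le r_0\,m^*(x)$, and applying $\floors{N/\cdot}$ converts this into a ratio for $Q$ of the form $r_0/(1-r_0 m^*(x)/N)$, which is at most $2r_0$ once $N$ dominates $B(x)^2$; hence $Q\in\apxl_{\nlo}$ (resp.\ $\apxnc{1}_{\nlo}$), the computation of $m'$ being a log‑bit division that stays in $\fl$ (resp.\ $\nc{1}$). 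For part~3, the map $m\mapsto\floors{N/m}$ is strictly decreasing on $[1,B(x)]_{\integer}$ as soon as $N>B(x)^2$ (consecutive integer values of $m$ force the quotients to differ by at least $1$), so the maximizers of $m'$ are exactly the minimizers of $m$; a $\dl$‑ (resp.\ $\nc{1}$‑) solver for $P$ therefore solves $Q$, placing $Q$ in $\lo_{\nlo}$ (resp.\ $\nco{1}_{\nlo}$), and since $Q$‑optimal solutions are precisely $P$‑optimal solutions, the identity pair is a valid $\EXreduces^{\ac{0}}$‑reduction.

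The crux, and the step I expect to be the main obstacle, is showing that the identity reduction is approximation‑preserving for parts~1 and~2, i.e.\ exhibiting a constant $c$ with $R_2(x,y)\le r\Rightarrow R_1(x,y)\le 1+c(r-1)$ for every $r\ge1$. The key observation is that, \emph{ignoring} the floor, the reciprocal preserves the ratio exactly: $(N/m^*)/(N/m(x,y))=m(x,y)/m^*$, so only the rounding must be controlled. Writing $a=m^*(x)$ and $b=m(x,y)$ with $a\le b\le B(x)$, the estimates $\floors{N/a}\ge N/a-1$ and $\floors{N/b}\le N/b$ yield $R_2(x,y)\ge R_1(x,y)-b/N$, whence $R_1(x,y)\le r+B(x)/N=r+\tfrac{1}{2B(x)}$. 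I would then use the discreteness of integer ratios: for a minimization problem $R_1(x,y)$ is either $1$ or at least $1+1/a\ge 1+1/B(x)$. In the first case $1+c(r-1)$ holds trivially; in the second, $1+1/B(x)\le r+\tfrac{1}{2B(x)}$ forces $r-1\ge\tfrac{1}{2B(x)}=B(x)/N$, so $R_1(x,y)\le r+(r-1)=1+2(r-1)$. Thus $c=2$ works, and setting $r=1$ recovers exact optimality, since the slack $\tfrac{1}{2B(x)}$ sits strictly below the ratio gap $\tfrac{1}{B(x)}$; hence the reduction is in fact strong. The max‑to‑min direction is entirely symmetric, with the two performance ratios interchanged, and the same $N$ and $c$ apply. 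The only delicate point to watch is the handling of the finitely many small instances where $B(x)$ is not yet large compared with $r_0$, which is absorbed by choosing $N$ to be a sufficiently large polynomial multiple of $B(x)^2$.
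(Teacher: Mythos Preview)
Your proposal is correct, and for part~(1) it follows the same reciprocal construction as the paper (the paper uses $\lceil p(|x|)^2/m_1(x,y)\rceil$ and asserts $c=1$ without tracking the rounding; your floor version with the discreteness argument yielding $c=2$ is actually the cleaner justification). For parts~(2) and~(3), however, your route differs from the paper's. The paper abandons the reciprocal there and instead uses an \emph{affine} reflection: given a $\gamma$-approximate algorithm $C$ for $P$, it sets $b(x)=m_1(x,C(x))$, restricts the solution set to $SOL_2(x)=\{y\in SOL_1(x):m_1(x,y)\le b(x)\}$, and defines $m_2(x,y)=\Delta\,b(x)-\gamma\,m_1(x,y)$ for a constant $\Delta>\gamma$. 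This keeps $m_2$ positive by construction and turns the ratio analysis into a short linear calculation, at the cost of altering $SOL$ and tying the construction to the specific approximation algorithm. Your uniform reciprocal construction has the advantage of leaving $I$ and $SOL$ untouched and of handling all three parts with a single device; the price is that you must invoke $\tc{0}$ division to place $m'$ in $\auxfl$ and to compute $m'(x,C(x))$ in $\nc{1}$, and you need the discreteness trick to absorb the rounding error. Both arguments are valid; yours is more economical in structure, while the paper's affine variant avoids division entirely and makes the constant in the APX membership of $Q$ more explicit.
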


\begin{proof}
(1) This proof is similar in essence to that of Lemma \ref{reduction-MaxNL-MinNL}.
Given an arbitrary problem $P_1=(I_1,SOL_1,m_1,\text{\sc min})$ in $\nlo\cap\pbo$, we aim at constructing a maximization problem $P_2=(I_2,SOL_2,m_2,\text{\sc max})$ in $\nlo\cap\pbo$ to which $P_1$ is $\sAPreduces^{\ac{0}}$-reducible.
Since $P_1\in\pbo$, there is a polynomial $p$ satisfying $m_1(x,y)\leq p(|x|)$ for every $(x,y)\in I_1\circ SOL_1$. Let $I_2=I_1$, $SOL_2=SOL_1$, and $m_2(x,y) = \ceilings{\frac{p(|x|)^2}{m_1(x,y)}}$ for every $(x,y)\in I_2\circ SOL_2$. Since $m_1$ is polynomially-bounded, we can generate the entire value $m_1(x,y)$ on one of log-space work tapes and manipulate it freely as if a normal input. (If $m_2$ is not polynomially bounded, then there is no guarantee that a log-space auxiliary Turing machines  can compute $m_2$.) Since ``division'' can be implemented on TC$^{0}$ circuit \cite{Hes01}
(and thus, by a log-space machine),  $m_2(x,y)$ can be generated on a log-space work tape. Here, we define $f(x,r)=x$ and $g(x,y,r)=y$ so that $f$ and $g$ belong to $\fac{0}$. For the performance ratio $R_2$ for $P_2$, assume that $R_2(f(x,r),y)\leq r$ for any $r\geq1$; that is, $m_2^*(x)\leq m_2(x,y)\leq r m_2^*(x)$. The ratio $R_1(x,g(x,y,r)) = \frac{m_1(x,y)}{m_1^*(x)}$ is thus at most $1+c(r-1)$, where $c=1$. Hence, $(f,g,c)$ reduces $P_1$ to $P_2$.

(2) We shall show only the case of $\DD=\apxnc{1}$.
Let $P_1=(I_1,SOL_1,m_1,\text{\sc min})$ be any polynomially-bounded minimization problem in $\apxnc{1}_{\nlo}$. We want to construct a polynomially-bounded maximization problem $P_2=(I_2,SOL_2,m_2,\text{\sc max})$, which is in $\apxnc{1}_{\nlo}$ and is $\sAPreduces^{\ac{0}}$-reduced from  $P_1$. Take a constant $\gamma> 1$ and an $\nc{1}$ $\gamma$-approximate algorithm $C$ for $P_1$.
For convenience, we set $b(x) = m_1(x,C(x))$ for each instance $x\in (I_1\circ SOL_1)^{\exists}$. By the definition of $\apxnc{1}_{\nlo}$, $b$ must belong to $\fnc{1}$. Note that $m_1^*(x)\leq b(x)\leq \gamma m_1^*(x)$ for all $x\in (I_1\circ SOL_1)^{\exists}$.

We choose a constant $\Delta$ so that $\Delta >\gamma$ holds. Note that $\Delta-1>0$ since $\gamma\geq1$. Next, we define $I_2=I_1$, $SOL_2(x) = \{y\in SOL_1(x)\mid m_1(x,y)\leq b(x)\}$, and $m_2(x,y) = \Delta b(x) - \gamma m_1(x,y)$ for any $(x,y)\in I_2\circ SOL_2$.
Notice that  $m_2$ is computed by an appropriate log-space auxiliary Turing machine, and thus $I_2\circ SOL_2$ is also computed by a certain log-space auxiliary Turing machine. This implies that $P_2$ is an $\nlo$ problem.

We shall  claim that $P_2\in\apxnc{1}_{\nlo}$. Note that $m^*_2(x) = \Delta b(x) -\gamma m^*_1(x)$ for every $x\in (I_2\circ SOL_2)^{\exists}$. Moreover, $m_2(x,C(x)) = (\Delta-\gamma)b(x)$ holds. Since $m^*_1(x)\leq b(x)$, we obtain $(\Delta-\gamma)b(x)\leq m^*_2(x)$, which implies $m_2(x,C(x))\leq m^*_2(x)$. Since $b\in\fnc{1}$, the value $m_2(x,C(x))$ can be computed from $x$ by a certain $\nc{1}$-circuit.  Since $b(x)\leq \gamma m^*_1(x)$, it follows that $m^*_2(x)\leq (\Delta-1)b(x)$; thus, we obtain $m^*_2(x) \leq \frac{\Delta-1}{\Delta-\gamma}m_2(x,C(x))$. In summary, it holds that $\frac{\Delta-\gamma}{\Delta-1}m^*_2(x)\leq m_2(x,C(x))\leq m^*_2(x)$. Hence, $P_2$ belongs to $\apxnc{1}_{\nlo}$.

As for the desired $\sAPreduces^{\ac{0}}$-reduction, we define $f(x,r)=x$ and $g(x,y,r)=y$, where $r$ is any number in $\rational^{\geq1}$. It is obvious that $f$ and $g$ are in $\fac{0}$. For any fixed solution $y\in SOL_2(x)$, we assume that  $R_2(f(x,r),y)\leq r$; that is, $m_2(x,y)\leq m_2^*(x)\leq r m_2(x,y)$. We want to claim that $m_1(x,y)\leq [1+(\Delta-1)(r-1)]m_1^*(x)$.
First, we note that
\begin{equation*}
m_1(x,y) =(1/\gamma)[\Delta b(x) - m_2(x,y)] \leq (1/r\gamma)[(r-1)\Delta b(x) + \gamma m_1^*(x)]
\end{equation*}
because $m_2^*(x)\leq r m_2(x,y)$ and $m_2^*(x) = \Delta b(x) - \gamma m_1^*(x)$. Since $b(x)\leq \gamma m_1^*(x)$, we obtain $m_1(x,y)\leq (1/r)[(r-1)\Delta+1]m_1^*(x)$. Hence, it follows that $m_1(x,y)\leq [1+\theta(r-1)]m_1^*(x)$, where $\theta= (\Delta-1)/r$, which is at most $\Delta-1$ since $r\geq1$.  This shows that $P_1$ is $\sAPreduces^{\ac{0}}$-reducible to $P_2$.

(3) Consider the case of $\DD=\lo$. The construction of the desired reduction is similar to (2). Assuming that $P_1$ is in $\lo_{\nlo}\cap\pbo$, we want to show that $P_2$ is also in $\lo_{\nlo}\cap\pbo$. This is obtained simply by mapping minimal solutions for $P_1$ to maximal solutions for $P_2$.
\end{proof}

\subsection{Completeness of Graph Problems}\label{sec:graph-problems}

As our starting point, we recall from Section \ref{sec:general-complete} a bounded variant of {\sc Min Path-Weight}, called {\sc Min BPath-Weight}, which uses the {\em total path weight} (i.e., $w(\SSS)= \sum_{i=1}^{k}w(v_i)$ for $\SSS=(v_1,v_2,\ldots.v_k)$ with $k\leq|V|$) as a measure function with an extra condition that $1\leq w(v)\leq |V|$ for all $v\in V$. Earlier, Tantau \cite[Theorem 5.1]{Tan07} discussed the case when all vertices of a given graph have weight exactly $1$. To verify that {\sc Min BPath-Weight} is $\sAPreduces^{\nc{1}}$-complete for $\minnl\cap\pbo$, we need to modify the proof of Proposition \ref{Min-Path-complete} in the following way.
For this purpose, we first modify a given measure function $m$ so that its log-space auxiliary Turing machine $M_m$ produces $m(x,y)$ on one of its work tapes and then copies each bit (including ``$0$'') from the lower bit to the higher bit at each step. We further modify $M_m$ so that its internal clock helps it halt in exactly $p(n)$ steps, for a suitable polynomial $p$. We then
define $w(v)$ to be $b_1 2^{e-1} +1$ if $v$ contains a string $b_1b_2\cdots b_e$ written  on an output tape of $M_m$ in a target partial final configuration.
Otherwise, define $w(v)=1$. It follows that $\sum_{i=1}^{q(n)}w(v_i) = q(n)+ m(x,y)$.
By reducing minimization problems to maximization problems by Lemma \ref{min-reduces-max}(1), we can prove that every minimization problem in $\nlo\cap\pbo$ is also $\sAPreduces^{\nc{1}}$-reducible to {\sc Min BPath-Weight}. Therefore, we obtain the following completeness result.

\begin{lemma}\label{Bpath-complete}
{\sc Min BPath-Weight} is $\sAPreduces^{\nc{1}}$-complete for $\nlo\cap\pbo$.
\end{lemma}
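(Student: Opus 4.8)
The plan is to prove completeness in two stages, membership and hardness, adapting the configuration-graph construction from the proof of Theorem~\ref{Min-Path-complete} to the \emph{total} path weight and to the boundedness constraint $1\le w(v)\le|V|$. Membership, $\text{\sc Min BPath-Weight}\in\nlo\cap\pbo$, is the easy half: polynomial boundedness is immediate since any feasible path $\SSS=(v_1,\dots,v_k)$ has $k\le|V|$ and $1\le w(v_i)\le|V|$, so $w(\SSS)=\sum_i w(v_i)\le|V|^2$; and membership in $\nlo$ follows exactly as in Theorem~\ref{Min-Path-complete}, the admissible-instance set lying in $\dl$ because $s$-$t$ reachability reduces to $\mathrm{DSTCON}\in\nl$, with feasibility a log-space check and the now-small sum $\sum_i w(v_i)$ accumulated on a log-space work tape.

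For hardness I would first reduce an arbitrary minimization problem $P=(I,SOL,m,\text{\sc min})\in\nlo\cap\pbo$, say with $m\le p(|x|)$ and $M_m\in\auxfl$ computing $m$. As in Theorem~\ref{Min-Path-complete} I normalize $M_m$ to have unique initial and accepting partial configurations and, via an internal clock, to run for exactly $q(n)$ steps while emitting the binary digits of its output one bit per step (lowest order first), each partial configuration recording the bit $b$ just written together with its place value $e-1$. The reduction map $f$ sends $x$ to the configuration graph $G^{M_m}_x$ (with $s,t$ the initial and accepting configurations), assigning weight $w(v)=b\,2^{e-1}+1$ to a vertex recording bit $b$ at place $e-1$ and $w(v)=1$ otherwise, while $g$ recovers the auxiliary input from a path exactly as before. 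Computing place values and wiring the graph lies in $\fnc{1}$ (using Lemma~\ref{operation-complexity}), so $f,g\in\fnc{1}$; after padding $V$ so that $|V|$ exceeds every weight used, the constraint $1\le w(v)\le|V|$ holds, and along the computation path for auxiliary input $y$ one obtains $w(\SSS)=q(n)+m(x,y)$ with the offset $q(n)$ independent of $y$.

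I expect the delicate point to be exactly this additive offset $q(n)$, which is forced by $w(v)\ge1$ and has no analogue in the \emph{biased} weight of Theorem~\ref{Min-Path-complete} (where $w(\SSS)=m(x,y)$ holds on the nose). With $m_0(f(x),\SSS)=q(n)+m(x,y)$ and $m_0^*=q(n)+m^*(x)$, an $r$-approximate path yields only $m(x,y)\le r\,m^*(x)+(r-1)q(n)$, so because $m^*(x)$ may be as small as $1$, transferring to a multiplicative guarantee for $P$ would demand a factor $c$ of order $q(n)$ rather than a constant. My fix is to have $M_m$ emit the \emph{scaled} output $q(n)\cdot m(x,y)$ in place of $m(x,y)$---a product in $\fnc{1}$ by Lemma~\ref{operation-complexity}, and still polynomially bounded---so that $w(\SSS)=q(n)+q(n)\,m(x,y)$. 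The approximation bound then sharpens to $m(x,y)\le r\,m^*(x)+(r-1)$, and $m^*(x)\ge1$ gives $R_1(x,g(x,\SSS,r))\le 1+2(r-1)$; thus $(f,g,2)$ is a genuine $\sAPreduces^{\nc{1}}$-reduction, establishing $\sAPreduces^{\nc{1}}$-completeness for $\minnl\cap\pbo$.

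It remains to cover maximization problems. For $P'\in\maxnl\cap\pbo$ I would apply Lemma~\ref{min-reduces-max}(1) to obtain a minimization problem $Q\in\minnl\cap\pbo$ with $P'\sAPreduces^{\ac{0}}Q$, then compose with the reduction of $Q$ to $\text{\sc Min BPath-Weight}$ just built; upgrading the $\ac{0}$-reduction to an $\nc{1}$-reduction (Lemma~\ref{reduction-transitive}, as $\ac{0}\subseteq\nc{1}$) and using that $\sAPreduces^{\nc{1}}$-reductions compose ($\nc{1}$ being closed under composition) yields $P'\sAPreduces^{\nc{1}}\text{\sc Min BPath-Weight}$. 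Since $\nlo\cap\pbo=(\minnl\cup\maxnl)\cap\pbo$, every problem in $\nlo\cap\pbo$ is then $\sAPreduces^{\nc{1}}$-reducible to $\text{\sc Min BPath-Weight}$, which completes the proof.
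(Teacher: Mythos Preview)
Your proof is correct and follows essentially the same construction as the paper: configuration graph of $M_m$ with weights $b\cdot 2^{e-1}+1$, then Lemma~\ref{min-reduces-max}(1) to handle $\maxnl\cap\pbo$. The one substantive addition is your scaling of the output to $q(n)\cdot m(x,y)$, which is needed precisely because the additive offset $q(n)$ in $w(\SSS)=q(n)+m(x,y)$ would otherwise block a constant-$c$ sAP bound when $m^*(x)$ is small; the paper's sketch records the identity $\sum w(v_i)=q(n)+m(x,y)$ but does not explicitly treat this ratio issue, so your version is in fact more complete on this point. One minor wording issue: the multiplication $q(n)\cdot m(x,y)$ must be carried out by the \emph{log-space auxiliary machine} $M_m$ (not by an $\nc{1}$ circuit in the reduction), which is fine since both factors have $O(\log n)$ bits and $\tc{0}\subseteq\dl$, but Lemma~\ref{operation-complexity} is being invoked for the log-space computability of $M_m$'s internal arithmetic rather than for $f$ itself.
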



In Section \ref{sec:general-complete}, we have mostly dealt with optimization problems that are associated with the path weights of graphs. Another natural type of optimization problems is
a problem of searching a path of a directed graph starting at a given source   toward an appropriately chosen vertex whose weight is well-defined and must be maximal. Tantau  \cite[Theorem 3.2]{Tan07} earlier demonstrated that this maximization problem is $\sAPreduces^{\dl}$-complete for $\nlo\cap\pbo$.
Moreover, its slightly modified version was shown to be $\sAPreduces^{\dl}$-complete for $\apxl_{\nlo}\cap\pbo$ \cite[Theorem 3.7]{Tan07}.
In what follows, we shall discuss a similar optimization problem using ``undirected'' graphs with ``total'' weight functions, particularly, in the case where vertex weights are all bounded.

Let us define formally this problem as follows.

\ms
{\sc Maximum Undirected Bounded Vertex Weight Problem} ({\sc Max UB-Vertex}):
\renewcommand{\labelitemi}{$\circ$}
\begin{itemize}\vs{-2}
  \setlength{\topsep}{-2mm}%
  \setlength{\itemsep}{1mm}%
  \setlength{\parskip}{0cm}%

\item {\sc instance:} an undirected graph $G=(V,E)$, a source $s\in V$, and a (vertex) weight function  $w:V\rightarrow \nat^{+}$ satisfying $w(v)\leq |V|$ for every $v\in V$.

\item {\sc Solution:} a path of $G$ starting at $s$ and ending at a certain vertex $t$ in $V$.

\item {\sc Measure:} the weight $w(t)$ of $t$.
\end{itemize}

A {\em directed-graph version} of {\sc Max UB-Vertex}, called {\sc Max B-Vertex},
has a log-space $n^{O(1)}$-approximate algorithm \cite{Tan07} but is not known to fall into $\apxl_{\nlo}$.

\begin{proposition}\label{max-bvertex-complete}
$\text{\sc Max UB-Vertex}$ is $\sAPreduces^{\nc{1}}$-complete for $\lo_{\nlo}\cap\pbo$.
\end{proposition}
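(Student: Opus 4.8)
The plan is to prove the two directions separately: membership of \textsc{Max UB-Vertex} in $\lo_{\nlo}\cap\pbo$, and its $\sAPreduces^{\nc{1}}$-hardness for that class.

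For membership, first observe that the measure $w(t)$ of any solution satisfies $w(t)\le|V|$, so the problem is polynomially bounded. Writing $\textsc{Max UB-Vertex}=(I_0,SOL_0,m_0,\text{\sc max})$, the checks $I_0\in\dl$, $I_0\circ SOL_0\in\auxl$ and $m_0\in\auxfl$ are routine log-space verifications (a candidate is a walk from $s$, and its last vertex supplies the objective value). The essential point is $\dl$-solvability, and this is exactly where undirectedness is used: by Reingold's theorem \cite{Rei08} one decides in log space which vertices are connected to $s$, so iterating over all $v$ one computes $W^{*}=\max\{w(v)\mid v\text{ connected to }s\}$, fixes a canonical $t^{*}$ attaining it, and—again using log-space solvability of undirected connectivity—outputs an actual $s$--$t^{*}$ path (e.g.\ a shortest one, reconstructed layer by layer). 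Hence $\textsc{Max UB-Vertex}\in\lo_{\nlo}\cap\pbo$.

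For hardness I would reduce an arbitrary problem of the class to \textsc{Max UB-Vertex}. By Lemma~\ref{min-reduces-max}(3) it would suffice to treat maximization problems, but it is cleaner to handle both goals directly, setting the weight of an accepting configuration to $m(x,y)$ in the maximization case and to a shifted value $\Delta(x)-m(x,y)$ (with $\Delta(x)$ a polynomial bound on $m$) in the minimization case. So let $P=(I,SOL,m,\text{\sc max})\in\lo_{\nlo}\cap\pbo$ and fix a log-space auxiliary Turing machine $M$ that, on $(x,y)$, first verifies $y\in SOL(x)$ (possible since $I\circ SOL\in\auxl$) and then outputs $m(x,y)$, accepting only in that case. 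As in the proof of Proposition~\ref{forest-path-weight} I would equip $M$ with an internal clock and make its local transitions reverse-deterministic, so that the undirected configuration graph $G^{M}_{x}$ is a \emph{tree} rooted at the initial partial configuration $s$, whose branches are precisely the accepting computations of $M$ on the various auxiliary inputs $y\in SOL(x)$. Since $m$ is polynomially bounded, $m(x,y)$ occupies only $O(\log n)$ output cells and is thus recorded inside the accepting partial configuration; I set the weight of each accepting configuration to this recorded value, all other vertices to weight $1$, and pad $G^{M}_{x}$ with isolated vertices so that $|V|$ dominates the polynomial bound on $m$. Then the maximum vertex weight reachable from $s$ equals $m^{*}(x)$. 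Define $f(x,r)=\langle G^{M}_{x},s,w\rangle$, which lies in $\fac{0}$ (hence in $\fnc{1}$) by the local edge-marking argument of Theorem~\ref{Min-Path-complete}.

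The recovery map $g(x,y,r)$ takes a solution walk, follows the unique simple $s$-to-endpoint path inside the tree (locating its configurations by their time stamps), and reads off the auxiliary-input symbols stored along it; if the endpoint is an accepting configuration this yields a string in $SOL(x)$, which $g$ outputs, while for every other walk $g$ outputs a fixed minimal solution $y_{0}$ (a harmless preprocessing step guarantees one exists, with $m(x,y_{0})$ minimal). Taking $c=1$, the performance ratios match: a walk of ratio $R_{0}\le r$ ending at an accepting configuration for $y$ gives $R_{P}(x,y)=m^{*}(x)/m(x,y)=R_{0}\le 1+(r-1)$, whereas a walk ending at a weight-$1$ vertex has ratio $m^{*}(x)$, hence satisfies $R_{0}\le r$ only when $r\ge m^{*}(x)$, and then $R_{P}(x,y_{0})=m^{*}(x)\le 1+(r-1)$, as required; since $\fac{0}$- and $\fnc{1}$-reductions are ordered by Lemma~\ref{reduction-transitive}, this is an $\sAPreduces^{\nc{1}}$-reduction. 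The step I expect to be the main obstacle is making $g$ simultaneously an $\nc{1}$ function \emph{and} a map returning a feasible solution of $P$ for \emph{every} feasible walk, not only for the near-optimal ones. This difficulty is entirely due to the undirected setting: an undirected walk may wander backwards through predecessor configurations or stop at a non-accepting one, so the clean ``$s$-to-$t$ path'' recovery available for \textsc{Min Path-Weight} is lost. The clock-plus-reverse-determinism device (forcing $G^{M}_{x}$ to be a tree), the default solution $y_{0}$, and the exact value $c=1$ are precisely what tame this behaviour, and checking that they do is where the real work lies.
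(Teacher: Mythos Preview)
Your membership argument is correct and matches the paper's.

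For hardness, the paper takes a different and shorter route that bypasses the obstacle you single out. It first invokes Lemma~\ref{LO-simple-form}, replacing the given $P$ by one that admits \emph{unique} solutions: $SOL(x)=\{M_P(x)\}$ whenever nonempty, and every feasible $z$ already satisfies $m(x,z)=m^{*}(x)$. It then builds the configuration graph not of the auxiliary measure machine but of a single \emph{deterministic} log-space machine $M'$ obtained by dovetailing the optimal solver $M_P$ with $M_m$, recording the successive symbols of $M_P(x)$ on a one-cell ``solution tape'' as they are produced. Because $M'$ has no auxiliary input there is no branching over candidate $y$'s and no need for reverse-determinism; the accepting configuration reachable from $s$ carries weight $b(x)=m^{*}(x)$, and $g(x,y,r)$ is simply $M_P(x)$ read off from the solution-tape symbols encoded in the configurations along $y$, so $R_P(x,g(x,y,r))=1$ automatically.

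Your construction, by contrast, uses the auxiliary machine for $m$ directly, producing a genuine tree with one branch per candidate $y$; this is exactly what forces the fallback $y_{0}$ for walks ending at weight-$1$ vertices. The gap is there: your ``harmless preprocessing step'' does not supply a concrete $y_{0}\in SOL(x)$ that $g$ can output within $\fnc{1}$. The only device available for manufacturing \emph{any} feasible solution from $x$ is the log-space solver $M_P$, and baking $M_P(x)$ into the instance via $f$ would push $f$ out of $\fnc{1}$; adjoining a dummy solution of measure $1$ to $P$ is itself a reduction whose backward map faces the identical problem. The paper's manoeuvre---Lemma~\ref{LO-simple-form} followed by running the \emph{solver} (not the verifier) inside the configuration graph---is precisely what dissolves the difficulty: once $SOL(x)$ is a singleton, the unique feasible solution is already encoded along the sole computation path of $M'$. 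You correctly located the obstacle; the missing ingredient is to fold $M_P$ into the machine whose graph you build, rather than to hunt for an external default solution.
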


Even if the vertex $t$ in the above definition of {\sc Max UB-Vertex} is restricted to a vertex of degree exactly $1$ as the following proof shows, the obtained problem is still $\sAPreduces^{\nc{1}}$-complete for $\lo_{\nlo}\cap\pbo$.
\ms

\begin{proof}
We begin with setting {\sc max UB-Vertex} as $(I_0,SOL_0,m_0,\text{\sc max})$. Let $G=(V,E)$ be any graph given as an instance in $I_0$. notice that $SOL_0(x)\neq\setempty$ for all $x\in I_1$.
Since the weight of every vertex in $G$ is at most the input size,  $\text{\sc Max UB-Vertex}$ is polynomially bounded. Thus, it is not difficult to show that optimal solutions for $\text{\sc Max UB-Vertex}$ can be found using log space by making a series of nonadaptive queries to oracle $A=\{\pair{G,s,w,1^k}\mid \exists t\in V\,[w(t)\geq k \wedge \,\text{$s$ and $t$ are connected}\,]\}$ by incrementing $k$ from $1$ to $|V|$.
To see that $A$ is in $\dl$, we sequentially pick a different $t\in V$, check if $s$ and $t$ are connected using Reingold's log-space algorithm for DSTCON, and finally  check if $w(t)\geq k$. Therefore, {\sc Max UB-Vertex} is $\dl$-solvable; that is, $\text{\sc Max UB-Vertex}$ belongs to $\lo_{\nlo}$.

Concerning the $\EXreduces^{\nc{1}}$-hardness of {\sc Max UB-Vertex}, let us consider a polynomially-bounded maximization problem $P=(I,SOL,m,\text{\sc max})$ in $\lo_{\nlo}$. The case of minimization problems follows from Lemma \ref{min-reduces-max}(3).
By Lemma \ref{LO-simple-form}, it is possible to  assume that $P$ satisfies Conditions (ii)--(iii) of the lemma. Take a log-space auxiliary Turing machine $M_m$ that computes $m$. Moreover, we take a log-space deterministic Turing machine $M_P$ that produces maximal solutions for $P$. Define $b(x) = m(x,M_P(x))$ for each instance $x\in (I\circ SOL)^{\exists}$.
Take a polynomial $p$ such that $b(x)\leq p(|x|)$ for all $x\in (I\circ SOL)^{\exists}$. For technicality, we demand that $p(n)>1$ for all $n$.
Notice that $b$ is computed using log space by the following simple machine $M'$   equipped with
a special work tape, called a {\em solution tape}, in which we use only one tape cell. Starting with input $x$, $M'$ simulates $M_P$ on $x$. Whenever $M_P$ writes a symbol $\sigma$ on its output tape, $M'$ writes $\sigma$ on the solution tape,  simulates $M_m$ on $x$ while an auxiliary-tape head is scanning $\sigma$, and erases $\sigma$ from the solution tape. This deletion of the symbol $\sigma$ is necessary because the output $M_P(x)$ may be super-logarithmically long.
Finally, $M'$ outputs $b(x)$.

Here, we construct a configuration graph $G$ in a way similar
to the proof of Proposition \ref{forest-path-weight} using $M'$; however, we use a quite different weight function.
Since all weights are polynomially bounded,  we can embed an entire content of an output tape of $M'$ into a partial configuration. To be more precise, we can force $M'$ to use one of its work tapes to compute $b(x)$ and, in the end, copy  the content of this tape into its write-only output tape.

For a vertex $v$ representing a certain partial halting configuration, its weight $w(v)$ is set to be the value written on the output tape in this partial configuration unless the value is not zero. For any other vertex associated with partial non-halting configurations, we simply set $w(v)=1$.

We further define $f(x,r)$ to be the above-mentioned configuration graph $G$ together with the weight function $w$. Given a computation path $y$ of $M'$ on $x$, since each partial configuration in $y$ contains the information on an output symbol produced by $M_P$, it is possible to recover from $y$ an entire output string $M_P(x)$. We thus set $g(x,y,r)$ to be $M_P(x)$ reconstructed from $y$. It is easy to check that  $f,g\in\fnc{1}$.
Note that $m_0(f(x,r),y) = b(x)$ if $y$ is in $SOL_0(f(x,r))$. This implies that the performance ratio $R$ of $g(x,y,r)$ always satisfies $R(x,g(x,y,r))=1$ for any $y\in SOL_0(f(x,r))$. As a result, $(f,g,1)$ reduces $P$ to {\sc max UB-Vertex}, as requested.
\end{proof}


To obtain an $\sAPreduces^{\nc{1}}$-complete problem for $\apxnc{1}_{\nlo}\cap\pbo$, we place a restriction on the behavior of a (vertex) weight function of $\text{\sc Max UB-Vertex}$.
The {\em maximum undirected 2 vertex weight problem} ({\sc Max U2-Vertex}) is a variant of {\sc Max UB-Vertex} with an additional requirement: for any instance $\pair{G,s,w}$ with $G=(V,E)$, it holds that
$\max_{v\in V}\{w(v)\}\leq 2\min_{v\in V}\{w(v)\}$.

\begin{proposition}
{\sc Max U$2$-Vertex} is $\sAPreduces^{\nc{1}}$-complete for $\apxnc{1}_{\nlo}\cap \pbo$.
\end{proposition}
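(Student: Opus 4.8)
Completeness consists of two parts: membership {\sc Max U$2$-Vertex}~$\in\apxnc{1}_{\nlo}\cap\pbo$, and $\sAPreduces^{\nc{1}}$-hardness for that class. By Lemma~\ref{min-reduces-max}(2) every minimization problem in $\apxnc{1}_{\nlo}\cap\pbo$ is $\sAPreduces^{\ac{0}}$-reducible (hence, by Lemma~\ref{reduction-transitive}, $\sAPreduces^{\nc{1}}$-reducible) to a maximization problem in the same class, so it suffices to treat maximization problems in the hardness argument. Membership is where the defining constraint $\max_{v}w(v)\le 2\min_{v}w(v)$ pays off: for any instance $x=\pair{G,s,w}$ and any feasible path $y$ ending at a vertex $t$, the reachable optimum obeys $m_0^*(x)\le\max_v w(v)\le 2\min_v w(v)\le 2w(t)$, so \emph{every} feasible solution is already a $2$-approximation. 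In particular the one-vertex path $(s)$ is feasible with measure $w(s)$, and both $(s)$ and $w(s)$ are produced by a trivial $\fnc{1}$ computation; that {\sc Max U$2$-Vertex} lies in $\nlo\cap\pbo$ is routine (verifying a vertex sequence is a path and reading off the endpoint weight are log-space tasks, and $w(t)\le|V|$). Hence {\sc Max U$2$-Vertex}~$\in\apxnc{1}_{\nlo}\cap\pbo$.

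For hardness, fix a maximization problem $P=(I,SOL,m,\text{\sc max})$ in $\apxnc{1}_{\nlo}\cap\pbo$. The first move is to normalize $P$ using the $\nc{1}$-analogue of Lemma~\ref{NLO-to-APXL} (its proof carries over with $\fl$ replaced by $\fnc{1}$, the approximate algorithm $M_P$ and the bound $b$ now being $\nc{1}$-computable), so that $P$ satisfies conditions (ii)--(iv): there is $b\in\fnc{1}$ with $\mu:=2^{\floors{\log b(x)}}=\min_{y\in SOL(x)}m(x,y)$, every feasible value lies in the band $[\mu,2\mu]$, and $M_P$ attains $m(x,M_P(x))=\mu$. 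This confinement to a factor-$2$ band is exactly what will let the target respect its U$2$ constraint. Next, as in the configuration-graph constructions of Propositions~\ref{forest-path-weight} and \ref{max-bvertex-complete}, I would take a log-space auxiliary machine $M_m$ computing $m$, time-stamp it, and build its configuration graph, whose source is the initial partial configuration and in which a guessed auxiliary input $y\in SOL(x)$ drives a computation ending at a final partial configuration that, by polynomial-boundedness, embeds the value $m(x,y)$.

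The weight assignment is the first of the two new ingredients. I would give each final partial configuration the weight $m(x,y)$ extracted from its embedded output and give \emph{every other} vertex the common minimum $\mu$; to enforce $w(v)\le|V|$ I would pad the graph with isolated vertices (unreachable from $s$, weight $\mu$) until $|V|\ge 2\mu$, which is polynomial since $\mu\le m^*(x)\le p(|x|)$. Because all feasible values lie in $[\mu,2\mu]$ and $\mu$ is the minimum attained, the resulting $w$ satisfies $\max_v w(v)\le 2\mu=2\min_v w(v)$, so $f(x,r)$ (the padded weighted graph with source $s$) is a legal instance of {\sc Max U$2$-Vertex}, computable in $\fnc{1}$ (computing $\mu$ from $b(x)\in\fnc{1}$ amounts to extracting a power of two). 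I would set $g(x,p,r)$ to be the solution recovered from the presented path $p$ when $p$ terminates at a final configuration, and $M_P(x)$ otherwise; the intended ratio bookkeeping then runs with $c=1$: if $p$ ends at a final configuration for $y$, then $m_0(f(x,r),p)=m(x,y)$ while $m_0^*(f(x,r))=m^*(x)$, so $R_0(f(x,r),p)=R_P(x,g(x,p,r))$; if $p$ ends elsewhere, its endpoint weight is $\mu=m(x,M_P(x))$, so again $R_0(f(x,r),p)=m^*(x)/\mu=R_P(x,M_P(x))$, whence $R_0\le r$ forces $R_P\le r\le 1+c(r-1)$.

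The main obstacle is the second ingredient: making the recovery map $g$ correct over an \emph{undirected} graph. At the tightest regime $r=1$ an optimal path must yield an optimal $P$-solution, so $g$ must reconstruct the very computation reaching a maximum-weight final configuration. Two undirected phenomena fight this: non-monotone walks, together with the merging of distinct computations at a shared configuration, can splice symbols from different auxiliary inputs, and vertices connected to $s$ without being forward-reachable can inflate the reachable maximum. Neither the directed {\sc Min Path-Weight} construction nor the deterministic, forest-shaped {\sc Min Forest-Path-Weight} and {\sc Max UB-Vertex} constructions had to face these, because there the source search was either directed or, being exactly solvable, representable by a single computation; here the source is genuinely nondeterministic yet the target is undirected and factor-$2$ bounded. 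Reconciling these while keeping the graph polynomial-size and $g\in\fnc{1}$ is the crux, and I would attack it with a layered time-stamped encoding engineered so that the connected component of $s$ meets the final layer only at configurations of genuine accepting computations and so that any presented $s$-to-$t$ path exposes the guessed $y$ symbol-by-symbol; once this faithful recovery is secured, the ratio analysis of the preceding paragraph applies and yields $P\sAPreduces^{\nc{1}}\text{\sc Max U$2$-Vertex}$.
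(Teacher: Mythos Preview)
Your membership argument and the overall hardness skeleton—build the configuration graph of a log-space auxiliary machine computing $m$, weight accepting configurations by the recorded objective value and all others by a common floor so that the factor-$2$ constraint holds, then recover the auxiliary input from a presented path—is exactly the paper's approach. Where you pre-normalize $P$ via an $\nc{1}$ analogue of Lemma~\ref{NLO-to-APXL} so that all feasible values already sit in $[\mu,2\mu]$, the paper instead works with the raw $P$ and shifts every weight by $\Delta_x=(\alpha-2)b(x)$ (with $b(x)=m(x,C_P(x))$ for the $\nc{1}$ $\alpha$-approximator $C_P$), obtaining the factor-$2$ band via Eq.~(\ref{eqn:max-min-ratio}); the ratio bookkeeping then follows Eqs.~(\ref{eqn:R_1-alpha})--(\ref{eqn:ration-with-c}) with $c=\alpha-1$. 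Both mechanisms are sound and interchangeable.

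On the point you flag as ``the crux'': the paper does not engage with it. It sets $g(x,u,r)=y_u$, ``a unique auxiliary input used for $M_m$ constructed from $u$,'' and asserts $m_0(f(x,r),u)=m(x,y_u)+\Delta_x$ for every $u\in SOL_0(f(x,r))$, as if every undirected walk from $s$ were a forward computation. Your concern is legitimate: in a layered configuration graph that enumerates \emph{all} tuples (as the paper's $\fac{0}$ construction of $f$ does), configurations not forward-reachable from $s$ can sit in the undirected component of $s$—a time-$1$ tuple with no predecessor but sharing a time-$2$ successor with a reachable tuple already witnesses this—and a walk can backtrack and re-enter through a different branch, so the naively extracted symbol sequence need not be a feasible $y$; an ``accepting'' tuple carrying a spurious work-tape value can even upset the $\max\le 2\min$ bound you need for admissibility. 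Your proposed layered time-stamped encoding is the right direction, but you have not shown how to realize it so that undirected reachability from $s$ coincides with directed reachability (equivalently, so that every non-initial vertex has a unique predecessor) while keeping the vertex set polynomial and $f\in\fnc{1}$. The paper simply elides this, so your proposal is at the same level of rigor as the paper's proof—just more candid about where the difficulty lies.
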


\begin{proof}
To simplify the following proof, we write $\text{\sc Max U2-Vertex}$ as $(I_0,SOL_0,m_0,\text{\sc max})$. Similar to {\sc Max UB-Vertex}, {\sc Max U2-Vertex} can be shown to be in $\nlo\cap\pbo$. In particular, to verify the extra condition that (*) $\max_{v\in V}\{w(v)\}\leq 2\min_{v\in V}\{w(v)\}$ for an instance $(G,s,w)$, we pick each pair $(v_1,v_2)$ of $G$'s vertices and check that either $w(v_1)\leq 2w(v_2)$ or  $w(v_2)\leq 2w(v_1)$ holds. This procedure requires only log space.

The following algorithm $C$ confirms that {\sc Max U$2$-Vertex} belongs to  $\apxnc{1}_{\nlo}$. On input $x =(G,s,w) \in I_0$, output a length-$1$ path of $G$ from vertex $s$ to the vertex that appears first in an instance (where $G$ is given in binary as a list of edge relations).
This process can be implemented by $\nc{1}$-circuits. It thus follows from Condition (*) that $m_0(x,C(x)) \leq m_0^*(x) \leq 2 m_0(x,C(x))$. Therefore, $C$ is an 2-approximate algorithm for {\sc Max U2-Veterx}. This shows that {\sc Max U2-Vertex} falls into $\apxnc{1}_{\nlo}$.

Next, we want to show that $\text{\sc Max U2-Vertex}$ is $\sAPreduces^{\nc{1}}$-hard for $\apxnc{1}_{\nlo}\cap\pbo$. Let $P=(I,SOL,m,\text{\sc max})$ be any maximization problem in $\apxnc{1}_{\nlo}\cap \pbo$.
The case of minimization is handled by Lemma \ref{min-reduces-max}(2).
Let $M_m$ be an auxiliary Turing machine computing $m$ using log space.
Since $m$ is polynomially bounded, $M_m$ first produces $m(x,y)$ on one of its work tapes and then copies it onto an output tape just before halting. Let $C_P$ be an $\nc{1}$-circuit producing $\alpha$-approximate solutions of $P$ for a certain constant $\alpha>1$. For convenience, let $b(x) = m(x,C_P(x))$ for each $x\in (I\circ SOL)^{\exists}$ and $b(x)=\bot$ for all the others $x$. It follows that $b\in\fnc{1}$ and that $b(x)\leq m^*(x)\leq \alpha b(x)$ for all $x\in (I\circ SOL)^{\exists}$.
Because of the definition of partial configurations, from any accepting computation path of $M_m(x,y)$, we can easily recover an auxiliary input $y$.

Let us prove $P\sAPreduces^{\nc{1}}\text{\sc Max U2-Vertex}$ via a certain $\sAPreduces^{\ac{0}}$-reduction $(f,g,c)$. Let $r\geq1$ and $x\in (I\circ SOL)^{\exists}$. Firstly, we define $f(x,r)$ to be $(G,s,w)$, where $G$ is a configuration graph
of $M_m$ using normal input $x$, $s$ is the initial partial configuration of $M_m$, and $w$ is defined later.  Given a computation path $u=(v_1,v_2,\ldots,v_k)$ of $M_m$ using $x$, let $y_u$ denote a unique auxiliary input used for $M_m$ constructed from $u$. Note that $y_u$ is in $SOL(x)$.
Here, we further define $g(x,u,r)=y_u$ and set $c=\alpha-1$. In addition, we set $\Delta_x=(\alpha-2)b(x)$.
For an accepting configuration $v$, let $w(v) = w' + \Delta_x$, where $w'$ is the number written on $M_m$'s output tape.  For other configurations $v$, let $w(v)= b(x)+\Delta_x$. Note that $b(x)+\Delta_x \leq w(v)\leq m^*(x)+\Delta_x$.
Similarly to Eqn.(\ref{eqn:max-min-ratio}) in the proof of Lemma \ref{NLO-to-APXL}, it holds that $\max_{v\in V}\{w(v)\}/\min_{v\in V}\{w(v)\}\leq 2$. Since $b\in\fnc{1}$, $f(x,r)$ is computed by an $\nc{1}$-circuit. For any $u=(v_1,v_2,\ldots,v_k)\in SOL_0(f(x,r))$, it follows that $m_0(f(x,r),u) = w(v_k) = m(x,y_u)+\Delta_x = m(x,g(x,u,r)) +\Delta_x$. Moreover, we obtain $m_0^*(f(x,r)) = m^*(x)+\Delta_x$.
Given a number $r\in\rational^{\geq1}$, consider the performance ratio $R$ of $g(x,u,r)$ with respect to $x$. Since $m(x,g(x,u,r))=b(x)$ for any $u\in SOL(x)$, calculations similar to  Eqns.(\ref{eqn:R_1-alpha})--(\ref{eqn:ration-with-c})  lead to $R(x,g(x,u,r)) -1 \leq c (R_0(f(x,r),u)-1)$.
We therefore conclude that $(f,g,c)$ reduces $P$ to {\sc Max U2-Vertex}.
\end{proof}

\subsection{Completeness of Algebraic and Combinatorial Problems}\label{sec:algebraic-problem}

Apart from graph problems, we shall study algebraic and combinatorial problems. We begin with an algebraic problem, which turns out to be complete for $\nlo\cap\pbo$. As {\sc Max UB-Vertex} and {\sc Max U2-Vertex} have orderly structures induced by ``edge relations,'' the algebraic problem that we shall consider below has a similar structure induced by ``operations.''

Given a finite set $X$, we consider a binary operation $\circ:X\times X\to X$. A binary operation $\circ$ is {\em associative} if $(x\circ y)\circ z = x\circ (y\circ z)$ holds for all $x,y,z\in X$. For a subset $S$ of $X$, we say that a set $G(S)$ is {\em generated  by $\circ$ from $S$} if $G(S)$ is the smallest set that contains $X$ and is closed under $\circ$.
The decision problem, called {\sc AGen}, of determining whether $t$ is in $G(S)$  for a given instance $(X,S,\circ,t)$ with $t\in X$ is $\leq_{m}^{\dl}$-complete for $\nl$ \cite{JLL76}.
Let us consider its optimization counterpart, which we call {\sc Min AGen}.

\ms
{\sc Minimum Associative Generation Problem} ({\sc Min AGen}):
\renewcommand{\labelitemi}{$\circ$}
\begin{itemize}\vs{-2}
  \setlength{\topsep}{-2mm}%
  \setlength{\itemsep}{1mm}%
  \setlength{\parskip}{0cm}%

\item {\sc instance:} a finite set $X$, an associative binary operation $\circ:X\times X\to X$, a set $S\subseteq X$, an element $t\in X$, and a weight function $w:X\to\nat^{+}$ satisfying $w(x)\leq|X|$ for all $x\in X$.

\item {\sc Solution:} a sequence $(x_1,x_2,\ldots,x_m)$ of elements in $X$ with $1\leq m\leq|X|$ so that the element $x_1\circ x_2\circ \cdots \circ x_m$ in $G(S)$ equals $t$.

\item {\sc Measure:} the value $\sum_{i=1}^{m}w(x_i)$.
\end{itemize}

Note that $s$ belongs to $G(S)$ iff there is a sequence $(x_1,x_2,\ldots,x_m)$ of elements in $S$ with $1\leq m\leq|X|$ for which $s=x_1\circ x_2\circ \cdots \circ x_m$ holds \cite{JLL76}. In what follows, we demonstrate the $\sAPreduces^{\nc{1}}$-completeness of {\sc Min AGen} for $\nlo\cap\pbo$.

\begin{proposition}
{\sc Min AGen} is $\sAPreduces^{\nc{1}}$-complete for $\nlo\cap\pbo$.
\end{proposition}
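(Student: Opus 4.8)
The plan is to prove the two halves of completeness separately: membership in $\nlo\cap\pbo$, and $\sAPreduces^{\nc{1}}$-hardness for $\nlo\cap\pbo$. For membership I would first note that an admissible instance $(X,\circ,S,t,w)$ is recognized in $\dl$: associativity is checked by running through all triples $(a,b,c)\in X^3$ with three $O(\log|X|)$-bit counters and comparing the table look-ups $(a\circ b)\circ c$ and $a\circ(b\circ c)$, and the bound $w(a)\le|X|$ is verified the same way. Given a candidate sequence $(x_1,\dots,x_m)$ on a read-once auxiliary tape, a log-space machine checks $x_i\in S$, $m\le|X|$, and $x_1\circ\cdots\circ x_m=t$ while carrying only the running product (a single element of $X$, hence $O(\log|X|)$ bits); the weight $\sum_i w(x_i)\le|X|^2$ is accumulated with a running total. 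Thus $I\circ SOL\in\auxl$, $m\in\auxfl$ and $m$ is polynomially bounded, so {\sc Min AGen}$\in\minnl\cap\pbo\subseteq\nlo\cap\pbo$.

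For hardness I would reduce directly from an arbitrary minimization problem $P=(I,SOL,m,\text{\sc min})$ in $\nlo\cap\pbo$, following the configuration-graph technique of Theorem~\ref{Min-Path-complete}; maximization problems are absorbed by first applying Lemma~\ref{min-reduces-max}(1) and composing. Taking a log-space auxiliary machine $M_m$ for $m$, I would, as in the proof of Lemma~\ref{Bpath-complete}, equip it with an internal clock so that it halts after exactly $T=p(n)$ steps at a unique accepting configuration and outputs $m(x,y)$ in binary. The time-stamped configuration graph $G^{M_m}_x$ is then layered by time, so every path from the initial configuration $s$ to the accepting configuration $t$ has length exactly $T$ and repeats no vertex. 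The algebra is the ``matrix-unit'' algebra: $X=(\mathrm{Conf}\times\mathrm{Conf})\cup\{\bot\}$ with $(a,b)\circ(c,d)=(a,d)$ if $b=c$ and $(a,b)\circ(c,d)=\bot$ otherwise, $\bot$ absorbing; associativity is immediate since these are exactly the products $E_{ab}E_{cd}=\delta_{bc}E_{ad}$ of matrix units (with $\bot$ the zero). Setting $S=\{(a,b):(a,b)\in E(G^{M_m}_x)\}$ and target $(s,t)$, the feasible solutions $x_1\circ\cdots\circ x_T=(s,t)$ are precisely the length-$T$ accepting computation paths, i.e. the valid auxiliary inputs $y\in SOL(x)$. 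This is where the layering pays off: it forces $m=T\le|X|$ and makes the back-map $g$ (which reads the recorded auxiliary-input symbols off the partial configurations, exactly as in Theorem~\ref{Min-Path-complete}) a projection computable in $\fac{0}$.

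The delicate point, which I expect to be the main obstacle, is the measure. Since {\sc Min AGen} sums edge weights and every weight is at least $1$, the natural bit-encoding ``weight of the step-$k$ edge $=\beta_k2^{k-1}+1$'' (with $\beta_k$ the output bit recorded in the destination configuration) gives total weight $m(x,y)+T$, an \emph{additive} offset in the wrong direction for a performance-ratio reduction. I would kill the offset by scaling: use weight $M\beta_k2^{k-1}+1$ with $M=T$, so that the solution representing $y$ has measure $m_2=M\cdot m(x,y)+T$. Because $m$ is polynomially bounded, $m(x,y)<2^{L}$ with $L=O(\log n)$, so each individual weight stays below $T\cdot2^{L-1}+1=O(p(n)^2)$, which can be made $\le|X|$ after padding $\mathrm{Conf}$ (the clock already forces $|\mathrm{Conf}|\ge p(n)$, hence $|X|\ge p(n)^2$); the weights are selected among the $O(\log n)$ hardwired constants $M2^{k-1}$ by the time stamp $k$, so $f(x,r)$ is still produced in $\fac{0}$ (or at worst $\fnc{1}$).

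Finally I would verify the strong-AP inequality with $c=2$. Writing $u=g(x,y,r)\in SOL(x)$, the minimizer is preserved, so $m_2^{*}=M\,m^{*}(x)+T$, and the hypothesis $R_2(f(x,r),y)\le r$ gives $M\,m(x,u)+T\le r\,(M\,m^{*}(x)+T)$, whence
\[
R_1(x,u)=\frac{m(x,u)}{m^{*}(x)}\le r+\frac{(r-1)T}{M\,m^{*}(x)}\le r+(r-1)=1+2(r-1),
\]
using $m^{*}(x)\ge1$ and $M=T$. Thus $(f,g,2)$ is an $\sAPreduces^{\ac{0}}$-reduction from $P$ to {\sc Min AGen}, valid for every $r\in\rational^{\geq1}$; together with Lemma~\ref{min-reduces-max}(1) for the maximization case and the membership above, this yields the claimed $\sAPreduces^{\nc{1}}$-completeness for $\nlo\cap\pbo$. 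An alternative route reduces from {\sc Min BPath-Weight} (Lemma~\ref{Bpath-complete}) by putting $w((u,v))=w_G(v)$, which produces a \emph{subtractive} offset and allows $c=1$; but there $g$ must prune cycles from an arbitrarily long feasible walk, which appears to need log space rather than $\nc{1}$, so the time-layered construction above is what keeps the reduction inside $\fac{0}$.
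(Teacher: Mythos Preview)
Your membership argument is essentially the paper's. Your hardness argument is correct but takes a genuinely different route.

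The paper does not reduce directly from an arbitrary $P\in\nlo\cap\pbo$ via configuration graphs. Instead it reduces from {\sc Min BPath-Weight} (already complete by Lemma~\ref{Bpath-complete}) using the Jones--Lien--Laaser algebra: $X=V\cup(V\times V)\cup\{\natural\}$, $S=E\cup\{s\}$, target the single vertex $t$, with rules $x\circ(x,z)=z$ and $(x,y)\circ(y,v)=(x,v)$, everything else collapsing to the absorbing $\natural$. With weights $\tilde w(x)=w(x)$ and $\tilde w(x,y)=w(x)+w(y)$ one obtains $m_0(f(h,r),u)=2m_1(h,g(h,u,r))-w(t)$, and since adding the positive constant $w(t)$ to numerator and denominator of a ratio $\ge 1$ only decreases it, $R_1=(m_0+w(t))/(m_0^*+w(t))\le m_0/m_0^*=R_0\le r$, giving the sharper constant $c=1$. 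Your cycle-pruning worry about this route is unnecessary: the operation rules force every feasible {\sc Min AGen} sequence to have exactly the shape $(s,(s,v_2),(v_2,v_3),\ldots,(v_{k-1},t))$, so $g$ is the bare projection $u\mapsto(v_1,\ldots,v_k)$ in $\fac{0}$, and the (possibly non-simple) walk it returns is already a valid {\sc Min BPath-Weight} solution---no shortening is required.

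What your time-layered matrix-unit construction buys is self-containment (no detour through Lemma~\ref{Bpath-complete}) and the structural guarantee that every feasible {\sc Min AGen} solution has length exactly $T$, at the cost of the additive offset $+T$ and hence $c=2$. The paper's route buys a shorter argument and the optimal $c=1$ by exploiting that its offset $-w(t)$ is subtractive. Both are legitimate $\sAPreduces^{\nc{1}}$-reductions.
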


\begin{proof}
For convenience, we set $\text{\sc Min AGen} = (I_0,SOL_0,m_0,\text{\sc min})$.
First, we show that {\sc Min AGen} is in $\nlo\cap\pbo$. Let $(X,S,\circ,w,t)$ be any instance to {\sc Min AGen}. We can build an auxiliary Turing machine $M$ as follows. Let $u= (x_1,x_2,\ldots,x_m)$ be a sequence of $X$ and is given to an auxiliary tape of $M$.
To see that $I_0\in\dl$, it suffices to check, using log space, whether (i) $S\subseteq X$, (ii) $\circ$ is associative, and (iii) $w(x)\leq|X|$ for all $x\in X$.
It is also easy to see that $I_0\circ SOL_0\in\auxfl$. To obtain  $m_0((X,S,\circ,w,t),u)$, we need to compute the value $x_1\circ x_2\circ \cdots \circ x_m$ and then output the value $\sum_{i=1}^{m}w(x_i)$. Since $m_0$ is polynomially bounded, this value can be obtained using log space. This $m_0$ therefore belongs to $\auxfl$.
This show that {\sc Min AGen} is in $\nlo$.

Next, we want to show that {\sc Min AGen} is $\sAPreduces^{\nc{1}}$-hard for   $\nlo\cap\pbo$. For this purpose, we shall show that $\text{\sc Min BPath-Weight}\sAPreduces^{\nc{1}}\text{\sc Min AGen}$ since, by Lemma \ref{Bpath-complete}, {\sc Min BPath-Weight} is $\sAPreduces^{\nc{1}}$-complete for $\nlo\cap\pbo$. Let $\text{\sc Min BPath-Weight} = (I_1,SOL_1,m_1,\text{\sc min})$.
Let $h= (G,s,t,w)$ be any instance in $I_1$ with $G=(V,E)$.
Note that $w(v)\leq|V|$ for all $v\in V$. For a sequence $\SSS=(v_1,v_2,\ldots,v_k)$ with $k\leq |V|$, since $m_1(h,\SSS) = \sum_{i=1}^{k}w(v_i)$, we obtain $m_1(h,\SSS)\leq|V|^2$.

We want to define an $\sAPreduces^{\nc{1}}$-reduction $(f,g,c)$ from {\sc Min BPath-Weight} to {\sc Min AGen} as follows.
Our construction essentially follows from the proof of \cite[Theorem 5]{JLL76}. Let $c=1$. Letting $f(h,r) = (X,S,\circ,\tilde{w},t)$, we define $X$, $S$, $\circ$, $\tilde{w}$ as follows. Let $X=V\cup(V\times V)\cup\{\natural\}$ and $S=E\cup\{s\}$. A binary operation $\circ$ admits the following rules: for all $u\in X$ and $x,y,z\in V$, (i) $u\circ \natural = \natural\circ u=\natural$, (ii) $x\circ y=\natural$ for all $x,y\in V$, (iii) $(x,y)
\circ z = \natural$, $x\circ (x,z) =z$,  and $x\circ (y,z)=\natural$ if $y\neq z$, and (iv) $(x,y)\circ (y,v) = (x,v)$ and $(x,y)\circ (u,v)=\natural$ if $y\neq u$. The desired weight function $\tilde{w}$ is defined as  $\tilde{w}(\natural)=|X|$, $\tilde{w}(x,y) = w(x)+w(y)$,  and $\tilde{w}(x)=w(x)$  for all $x,y\in V$. Note that $\tilde{w}(x)\leq \max\{|X|,2|V|\}\leq |X|$  for all $x\in X$ and $\tilde{w}(x_1)+\sum_{i=1}^{k}\tilde{w}(x_i,x_{i+1}) + \tilde{w}(x_k) = 2\sum_{i=1}^{k}w(x_i)$ for $x_1,x_2\,\ldots,x_k\in V$.

Note that, given an $s$-$t$ path $(x_1,x_2,\ldots,x_k)$ with $s=x_1$ and $t=x_k$, since $s\in S$, it is possible for us to prove recursively the membership $x_i\in G(S)$ for every  $i\in[2,k]_{\integer}$; hence, $t\in G(S)$ follows.
For any sequence $u=(x_1,(x_1,x_2),(x_2,x_3),\ldots,(x_{k-1},x_k))$ in $SOL_{0}(f(h,r))$ with $s=x_1$ and $t=x_k$,  we define $g(h,u,r) = (x_1,x_2,\ldots,x_k)$.
It follows that $m_0(f(h,r),u) = \tilde{w}(x_1) + \sum_{i=1}^{k}\tilde{w}(x_i,x_{i+1}) = 2\sum_{i=1}^{k}w(x_i) -w(t)$ and $m_1(h,g(h,u,r)) = \sum_{i=1}^{k}w(x_i)$. From those equalities, we obtain $2m_1(h,g(h,u,r)) = m_0(f(h,r),u) + w(t)$.

Here, we intend to verify that $(f,g,c)$ correctly reduces {\sc Min BPath-Weight} to {\sc Min AGen}. Take any $r\in\rational^{\geq1}$ and any $u\in SOL_0(f(h,r))$.  Assume that the performance ratio $R_0$ for {\sc Min AGen} satisfies $R_0(f(h,r),u)\leq r$. It then follows that $R_1(h,g(h,u,r)) = \frac{m_1(h,g(h,u,r))}{m_1^*(h)} = \frac{m_0(f(h,r),u)+w(t)}{m_0^*(f(h,r))+w(t)} \leq \frac{m_0(f(h,r),u)}{m_0^*(f(h,r))} \leq r$.
Therefore, $(f,g,c)$ is a correct $\sAPreduces^{\nc{1}}$-reduction.
\end{proof}


Jenner \cite{Jen95} studied a few variants of the well-known {\em knapsack problem}. In particular, she introduced a decision problem, called CUK, of determining whether, given unary string pairs  $(0^w,0^p), (0^{w_0},0^{p_0}),(0^{w_1},0^{p_1}),\ldots,(0^{w_n},0^{p_n})$, where  $w,w_i\in\nat^{+}$ and $p,p_i\in\nat$ for $i\in[n]$, there is a $\{0,1\}$-sequence $(z_0,z_1,z_2,\ldots,z_n)$ satisfying $w2^{p} = \sum_{i=0}^{n}z_i\cdot w_i2^{p_i}$.
She showed that this problem is ($\dl$-uniform) $\leq^{\nc{1}}_{m}$-complete for $\nl$. Here, we turn this decision problem into an optimization problem, which will be proven to be $\sAPreduces^{\nc{1}}$-complete for $\apxl_{\nlo}\cap\pbo$.

\ms
{\sc Maximization 2-Bounded Close-to-Unary Knapsack Problem} ({\sc Max 2BCU-Knapsack}):
\renewcommand{\labelitemi}{$\circ$}
\begin{itemize}\vs{-2}
  \setlength{\topsep}{-2mm}%
  \setlength{\itemsep}{1mm}%
  \setlength{\parskip}{0cm}%

\item {\sc instance:} a pair $(0^w,0^p)$, a sequence $(0^{w_0},0^{p_0}),(0^{w_1},0^{p_1}),\ldots,(0^{w_n},0^{p_n})$ of pairs, and a weight sequence $(c_0,c_1,\ldots,c_n)$ of positive integers, where $w,w_i\in\nat^{+}$, $p,p_i\in\nat$, $1\leq c_i\leq nwp$ for all $i\in[0,n]_{\integer}$, and $\max_{0\leq i\leq n}\{c_i\}\leq 2\min_{0\leq i\leq n}\{c_i\}$, provided that $w_0=w$ and $p_0=p$. Here, the notation $0^0$ expresses the empty string $\lambda$.

\item {\sc Solution:} a sequence $z=(z_0,z_1,\ldots,z_n)$ of Boolean values satisfying $w2^p = \sum_{i=1}^{n}z_i\cdot w_i2^{p_i}$.

\item {\sc Measure:} $\max_{0\leq i\leq n}\{ c_i z_i \}$.
\end{itemize}

A trivial solution $z=(1,0,\ldots,0)$, which indicates the choice of $(0^{w_0},0^{p_0})$, is needed to ensure that $\text{\sc Max 2BCU-Knapsack}$ is indeed in $\apxl_{\nlo}$.
Regarding each value $w_i2^{p_i}$, the following simple inequalities hold:  $|bin(w_i2^{p_i})| \leq \log|0^{w_i}|+|0^{p_i}|+1 = \log{w_i}+p_i+1$.

In what follows, we show the completeness of {\sc Max 2BCU-Knapsack} for $\apxl_{\nlo}\cap\pbo$.

\begin{lemma}
{\sc Max 2BCU-Knapsack} is $\sAPreduces^{\nc{1}}$-complete
for $\apxl_{\nlo}\cap\pbo$.
\end{lemma}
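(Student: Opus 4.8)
The plan is to establish membership in $\apxl_{\nlo}\cap\pbo$ first and then the $\sAPreduces^{\nc{1}}$-hardness. For membership I would argue three things. Polynomial boundedness is immediate, since any solution $z$ has measure $\max_{0\le i\le n}\{c_iz_i\}\le \max_i c_i\le nwp$, which is polynomial in the input length because $w$, $p$, and the $w_i,p_i$ are supplied in unary. For $\nlo$ membership, checking well-formedness of an instance (that $w,w_i\ge1$, that $1\le c_i\le nwp$, that $w_0=w$ and $p_0=p$, and that $\max_i c_i\le2\min_i c_i$) needs only log space, so $I\in\dl$; the central point, that verifying the knapsack identity $w2^p=\sum_{i=0}^{n}z_iw_i2^{p_i}$ for a given selection $z$ lies in $\auxl$, is exactly the close-to-unary phenomenon isolated by Jenner \cite{Jen95} when she placed CUK in $\nl$, so $I\circ SOL\in\auxl$ and $m\in\auxfl$. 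Finally, for approximability I would use the trivial selection $z=(1,0,\dots,0)$: since $w_0=w$ and $p_0=p$ it is always feasible and log-space computable, and because every feasible $z$ selects at least one index its measure lies between $\min_i c_i$ and $\max_i c_i$, whence the factor-two instance condition forces $m^*(x)\le 2\,m(x,z)$ for every solution $z$. Thus outputting the trivial selection is a log-space $2$-approximation, and {\sc Max 2BCU-Knapsack}$\in\apxl_{\nlo}\cap\pbo$.

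For hardness, by Lemma \ref{min-reduces-max}(2) it suffices to reduce an arbitrary \emph{maximization} problem $P=(I,SOL,m,\text{\sc max})$ in $\apxl_{\nlo}\cap\pbo$, and by Lemma \ref{NLO-to-APXL} I may assume $P$ has the normalized shape: there is $b\in\fl$ with $L(x):=2^{\floors{\log b(x)}}\le m(x,y)\le 2L(x)$ for all $y\in SOL(x)$, and $m(x,M_P(x))=L(x)$, with $L(x)$ polynomial by polynomial boundedness. I would then take a log-space auxiliary machine $M_m$ computing $m$ and form its configuration graph $G^{M_m}_{x}$ in the style of the proofs of Proposition \ref{Mix-2Path-Weight} and Proposition \ref{forest-path-weight}, so that accepting $s$-$t$ paths correspond to solutions $y\in SOL(x)$ and the symbols written along such a path spell out $bin(m(x,y))$.

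The reduction $f(x,r)$ would assemble a knapsack instance from three groups of items. The first group has one item per edge of $G^{M_m}_{x}$, with values built by the usual telescoping-in-a-large-base encoding (as in Jenner's $\nl$-hardness of CUK) so that, inside a dedicated ``connectivity'' block of bit positions, a subset of edge-items meets the target exactly when it traces an accepting path; reusing the output bits produced along that path, the path then contributes the number $m(x,y)$ inside a separate ``measure'' block. The second group has, for each value $v\in[L(x),2L(x)]_{\integer}$, a single item contributing the offset $K-v$ in the measure block (with $K\ge 2L(x)$ chosen to keep everything nonnegative), and a small selection gadget forces exactly one such item, so that ``path-contribution $m(x,y)$'' plus ``offset $K-v$'' equals $K$ pins $v=m(x,y)$. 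The mandatory trivial pair $(0^{w},0^{p})$ is placed as item $0$. All exponents are polynomial, so the unary format costs only polynomial size, and the whole description is pattern-generated and hence producible by an $\nc{1}$-circuit, giving $f\in\fnc{1}$. For the weights I would set $c_i=L(x)$ on every edge- and gadget-item and $c_i=v$ on the measure-item for value $v$; then $\min_i c_i=L(x)$ and $\max_i c_i=2L(x)$, so the factor-two constraint holds, while for a selection coding $y$ the measure is $\max_i c_iz_i=\max\{L(x),m(x,y)\}=m(x,y)$. The recovery $g(x,z,r)$ extracts the path from the chosen edge-items and reconstructs $y$ by the partial-configuration retrieval $\eta$ used in Proposition \ref{forest-path-weight}, defaulting to $M_P(x)$ on the trivial or spurious selection (whose measure is exactly $L(x)$); this is again in $\fnc{1}$. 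Since the measure is preserved exactly, $R_0(f(x,r),z)=R_P(x,g(x,z,r))$ and $(f,g,1)$ is an $\sAPreduces^{\nc{1}}$-reduction.

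The hard part will be the combinatorial engineering of the knapsack identity itself: I must lay out the connectivity, measure, and selection blocks so that the single equation $w2^p=\sum_i z_iw_i2^{p_i}$ simultaneously admits a solution precisely when the selected edge-items form an accepting computation path of $M_m$ and forces the unique measure-item equal to $m(x,y)$, all without carries leaking between blocks, and while respecting both the close-to-unary input format and the rigid constraint $\max_i c_i\le2\min_i c_i$. The factor-two band on the $c_i$'s is what makes the weight assignment delicate, but confining every auxiliary item to the value $L(x)$ and every measure-item to $[L(x),2L(x)]_{\integer}$ resolves it; verifying that the carry structure keeps the three blocks independent, and that $f$ is genuinely $\nc{1}$-computable, is the remaining careful but routine part.
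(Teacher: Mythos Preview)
Your membership argument and the high-level hardness plan (invoke Lemma~\ref{min-reduces-max}(2) to pass to maximization, normalize via Lemma~\ref{NLO-to-APXL}, build the configuration graph of $M_m$, and encode accepting paths as knapsack solutions \`a la Jenner) match the paper exactly. The divergence is in how the knapsack instance is assembled.

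The paper's construction is considerably simpler than your three-block scheme. Because $P\in\pbo$, one may assume $M_m$ first computes $m(x,y)$ entirely on a work tape and only copies it to the output tape just before halting; hence the full value $m(x,y)$ is already present in the accepting configuration. The paper then applies Jenner's path-to-knapsack encoding verbatim to the configuration graph (a single connectivity block, no extras) and sets the weight $c_{ij}$ of an edge-item to the output value recorded in $v_j$ when $v_j$ is an accepting configuration, and to $b(x)$ otherwise. Since $m(x,y)\ge b(x)$ for every $y\in SOL(x)$, the measure $\max_i c_iz_i$ equals $m(x,y)$ directly---no measure block, no measure-items, no selection gadget---and the performance ratio is preserved exactly.

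Your measure-block idea, by contrast, has a real obstacle that you should not dismiss as routine. A close-to-unary item contributes the single value $w_i2^{p_i}$ with $w_i$ given in unary, so its nonzero bits occupy a window of width $O(\log w_i)=O(\log n)$ starting at position $p_i$. For a single edge-item to contribute simultaneously to the connectivity block (whose position depends on the time step $k$, as $p_{ij,k}=2kt$ in Jenner's encoding) and to a fixed measure block (whose position depends on the output-bit index), $w_i$ would have to straddle two far-apart windows, forcing $w_i$ to be exponentially large and destroying the unary format. Splitting each edge into separate connectivity- and measure-items loses the tie that they must be selected together, and enforcing that consistency with a single linear equation over $\{0,1\}$ variables without subtraction is not straightforward. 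Adopting the paper's buffering trick removes the need for the measure block entirely and closes the gap.
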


\begin{proof}
Let $\text{\sc Max 2BCU-Knapsack} = (I_0,SOL_0,m_0,\text{\sc max})$. First, we argue that {\sc Max 2BCU-Knapsack} is in $\apxl_{\nlo}\cap\pbo$. It is obvious that $m_0$ is polynomially bounded. Earlier, Jenner \cite{Jen95} demonstrated  that $\mathrm{CUK}\in\nl$. A similar argument shows that $I_0\circ SOL_0\in\auxl$ and $m_0\in\auxfl$; therefore, {\sc Max 2BCU-Knapsack} falls into $\nlo\cap\pbo$.
Let $x=(K,C)$ be any instance to {\sc Max 2BCU-Knapsack}, where $C=(c_0,c_1,\ldots,c_n)$ and $K$ is composed of $(0^{w},0^{p}),(0^{w_0},0^{p_0}),(0^{w_1},0^{p_1}),\ldots,(0^{w_n},0^{p_n})$.
Choose $z_0=(1,0,\ldots,0)\in\{0,1\}^{n+1}$.  It follows that $z_0\in SOL_0(x)$ and that
$R_0(x,z_0) = \frac{m_0^*(x)}{m_0(x,z_0)} \leq 2$ since $\min_{0\leq i\leq n}\{c_i\}\leq m_0^*(x)\leq 2\min_{0\leq i\leq n}\{c_i\}$.
This implies that {\sc Max 2BCU-Knapsack} is in $\apxl_{\nlo}$.

For the $\sAPreduces^{\nc{1}}$-hardness of {\sc Max 2BCU-Knapsack}, let us consider an arbitrary maximization problem $P=(I,SOL,m,\text{\sc max})$ in $\apxl_{\nlo}\cap\pbo$ that satisfies Conditions (ii)--(iv) of  Lemma \ref{NLO-to-APXL}.
To construct an $\sAPreduces^{\nc{1}}$-reduction $(f,g,2)$ from $P$ to {\sc Max 2BCU-Knapsack}, we first consider a log-space deterministic Turing machine $M_P$ that produces  $2$-approximate solutions of $P$.
In addition, we take a function $b\in\fl$ for which
$b(x)\leq m^*(x)\leq 2b(x)$ and $m(x,y)\geq b(x)$ for all $(x,y)\in I\circ SOL$ by Lemma \ref{NLO-to-APXL}. Let $M_m$ be a log-space auxiliary Turing machine computing $m$.

For the purpose of defining an appropriate reduction, we assume that $M_m$ on input $(x,y)$ writes each symbol of $y$ on a designated cell of a particular work tape so that,

The definition of partial configurations makes it possible to retrieve the entire string $y$ from any halting computation path $e$ of $M_m$ on input $(x,y)$. We write $h(e)$ for the auxiliary input $y$ fed into $M_m$.
Since $m$ is polynomially bounded, we force $M_m$ to calculate $m(x,y)$ on one of its work tapes and copies it onto an output tape just before halting so that, for any  $y\in SOL(x)$, $m(x,y)$ equals the sum of the numbers written on the output tape of configuration along a computation path of $M_{m}$ on input $(x,y)$.

Let us recall the notion of configuration graph from the proof of Proposition \ref{Min-Path-complete} and consider a configuration graph $G_{x}^{M_m}$ of $M_m$ on input $x\in I$. In what follows, we consider only the case where $SOL(x)\neq\setempty$.
For convenience, we set $G^{M_m}_{x} = (V,E)$ with $V=\{v_1,v_2,\ldots,v_m\}$ for a  certain integer $m\geq1$, where $v_1$ expresses the initial partial configuration of $M_m$. Note that the first move of $M_m$ is fixed and does not depend on the choice of inputs.
We further make $M_m$ terminate in exactly $q(|x|)$ steps for a suitable polynomial $q$, independent of the choice of $y$ satisfying $|y|\leq p(|x|)$.

Fix $x\in(I\circ SOL)^{\exists}$ and write $q$ for $q(|x|)$.
Take an arbitrary number $r\in\rational^{\geq1}$.
Following \cite[Theorem 1]{Jen95}, let $w=2^{t}$ and $p=2(q+1)t$, where $t = 2\ceilings{\log{q}}$. If $(v_i,v_j)\in E$, then let $w_{ij} = (2^{t}-j)2^{2t}-(2^{t}-i)$ and $p_{ij,k} = 2kt$ for  $0\leq k\leq q-1$.
Let $w_0=w$, $p_0=p$, $w_1=2^{t}$, $p_1=0$, $w_{q}=q2^{2t}$, and $p_{q} = 2(q-1)t$.
Let $K$ be composed of the following pairs: $(0^w,0^p)$, $(0^{w_0},0^{p_0})$,  $(0^{w_1},0^{p_1})$, $(0^{w_{q}},0^{p_{q}})$, and $(0^{w_{ij}},0^{p_{ij}})$ for all $(v_i,v_j)\in E$. A series $C=(c_0,c_1,c_{q},c_{ij})_{(v_i,v_j)\in E}$ is defined as follows.
For any partial configuration pair $(v_i,v_j)$, if $v_j$ is an accepting partial configuration, then let $c_{ij}$ be the number written on an output tape of $M_m$; otherwise, let $c_{ij}=b(x)$. Note that $\min_{(v_i,v_j)\in e}\{c_0,c_1,c_q,c_{ij}\}\geq b(x)$.
For any $x\in I$ and $r\in\rational^{\geq1}$, we define $f(x,r)$ to be the pair $(K,C)$ given above.

Given any accepting computation path $e=(v_1,v_2,\ldots,v_q)$ of $M_m$ on input $x$ ($|x|=n$), we define a series $z=(z_0,z_1,z_q,z_{ij})_{(v_i,v_j)\in e}$ as $z_0=0$, $z_1=z_q=1$, and $z_{ij}=1$ if $(v_i,v_j)\in e$, and $z_{ij}=0$ otherwise.
To emphasize $e$, we write $z_{(e)}$ for this series $z$.
As was shown in \cite{Jen95}, we obtain $z\in SOL_0(f(x,r))$ with $z\neq(1,0,\ldots,0)$ iff there exists an accepting computation path $e$ satisfying $z=z_{(e)}$, namely, $w2^{p} = w_12^{p_1}+w_q2^{p_q}+ \sum_{(v_i,v_j)\in e}\sum_{0\leq k\leq q-1}w_{ij}2^{p_{ij,k}}$ and thus $m_0(f(x,r),z_{(e)}) = \max_{(v_i,v_j)\in e}\{c_0,c_1,c_q,c_{ij}\} = m(x,h(e))$ since $m(x,y)\geq b(x)$ for all $y\in SOL(x)$. From this fact, we define $g(x,z_{(e)},r) = h(e)$ for every  solution $z_{(e)}\in SOL_0(f(x,r))$.
Since $z\in SOL_0(f(x,r))$, there exists a suitable accepting computation path $e$ in $G^{M_m}_{x}$ satisfying $z=z_{(e)}$.
Thus,  we obtain
\[
R(x,g(x,z_{(e)},r)) = \frac{m^*(x)}{m(x,g(x,z_{(e)},r))} = \frac{m_0^*(f(x,r))}{m_0(f(x,r),z_{(e)})} = R_0(f(x,r),z_{(e)}).
\]
Therefore, $(f,g,2)$ reduces $P$ to {\sc Max 2BCU-Knapsack}.
\end{proof}

Next, we shall present a complete problem in $\nco{1}_{\nlo}\cap \pbo$. Our problem is a simple extension of the Boolean formula value problem.
The {\em Boolean formula value problem} (BFVP) is to determine whether a given Boolean formula $\phi$ is satisfied by a given truth assignment $\sigma$. This problem can be compared with CVP (circuit value problem), which is known to be $\p$-complete \cite{Lad75}.
Buss \cite{Bus87} showed the membership of BFVP to $\mathrm{ALOGTIME}$ and the $\leq_{m}^{\mathrm{DLOGTIME}}$-hardness of BFVP for $\mathrm{ALOGTIME}$ is relatively easy to verify. Therefore, since $\mathrm{ALOGTIME}$ equals $\nc{1}$,
$\text{\sc BFVP}$ is $\leq^{\dlogtime}_{m}$-complete for $\nc{1}$.

Let us consider its optimization counterpart.

\ms
{\sc Maximum Boolean Formula Value Problem} ({\sc Max BFVP}):
\renewcommand{\labelitemi}{$\circ$}
\begin{itemize}\vs{-2}
  \setlength{\topsep}{-2mm}%
  \setlength{\itemsep}{1mm}%
  \setlength{\parskip}{0cm}%

\item {\sc instance:}  a set $\Phi = \{\phi_1,\phi_2,\ldots,\phi_n\}$ of Boolean formulas and a Boolean assignment $\sigma$ for all variables in the formulas in $\Phi$ with $n\geq1$.

\item {\sc Solution:} a {\em nonempty} subset $S\subseteq \Phi$ of formulas satisfied by $\sigma$.

\item {\sc Measure:} the cardinality $|S|$ of $S$.
\end{itemize}

Recall that the formula language of a Boolean formula family $\{\phi_n\}_{n\in\nat}$ is composed of all tuples $\pair{c,i,y}$ such that $|y|=n$ and the $i$th character of the $n$th formula $\phi_n$ is $c$. Following \cite{BIS90}, a language $L$ is in $\nc{1}$ iff there exists a family of Boolean formulas representing $L$ whose formula language is in $\dlogtime$.

\begin{lemma}
{\sc Max BFVP} is $\sAPreduces^{\ac{0}}$-complete for $\nco{1}_{\nlo}\cap\pbo$.
\end{lemma}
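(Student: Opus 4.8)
The plan is to prove the two halves—membership and hardness—separately, using the $\nc{1}$-completeness of {\sc BFVP} as the common engine. For membership, I first note that {\sc Max BFVP} lies in $\pbo$ because $|S|\le n$. To place it in $\nlo$, observe that checking well-formedness of an instance $(\Phi,\sigma)$ gives $I_0\in\dl$; that verifying a candidate $S$ amounts to confirming $S\neq\setempty$ and evaluating each selected $\phi_i$ under $\sigma$, and since {\sc BFVP}$\,\in\nc{1}\subseteq\dl$ this yields $I_0\circ SOL_0\in\auxl$; and that $m_0(x,S)=|S|$ is computable in $\auxfl$. For $\nco{1}_{\nlo}$-membership I would exhibit an $\nc{1}$-circuit solving {\sc Max BFVP} exactly: the optimum is simply $S^*=\{\phi_i\in\Phi:\sigma\text{ satisfies }\phi_i\}$, the set of \emph{all} satisfied formulas, since adding a satisfied formula never decreases cardinality. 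By Buss's theorem ({\sc BFVP}$\,=\mathrm{ALOGTIME}=\nc{1}$) each test $\phi_i\in S^*$ is an $\nc{1}$-predicate, while $|S^*|$ is obtained by iterated addition, which lies in $\tc{0}\subseteq\nc{1}$ by Lemma~\ref{operation-complexity}.

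For hardness it suffices to $\sAPreduces^{\ac{0}}$-reduce an arbitrary maximization problem $P=(I,SOL,m,\text{\sc max})\in\nco{1}_{\nlo}\cap\pbo$ to {\sc Max BFVP}, the minimization case being symmetric (running the same construction on the minimization normal form, as in the minimization clause of Lemma~\ref{min-reduces-max}). Fix an $\nc{1}$-circuit $C_P$ producing an optimal witness $y^*(x)=C_P(x)$ of polynomial length $N=N(|x|)$. For each position $j\in[N]$ the map $x\mapsto(\text{$j$-th bit of }C_P(x))$ is an $\nc{1}$-predicate, so by the formula-language characterization of $\nc{1}$ from \cite{BIS90} there is a logarithmic-depth Boolean formula $\phi_j$ in the bits of $x$ computing this bit. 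The reduction $f$ outputs $(\Phi,\sigma)$ with $\Phi=\{\phi_1,\ldots,\phi_N\}$ and $\sigma$ sending the input variables to the bits of $x$, so that $\phi_j$ is satisfied by $\sigma$ exactly when the $j$-th bit of $y^*(x)$ is $1$. Since the formula family is $\dlogtime$-uniform, its formula language is recognized in $\dlogtime\subseteq\ac{0}$, whence each output symbol of $f$ is $\ac{0}$-computable and $f\in\fac{0}$; the decoder $g$ reads a witness back from the selected subset (bit $j$ set iff $\phi_j$ is chosen), which is plainly in $\fac{0}$.

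The step I expect to be the main obstacle is guaranteeing that $g$ lands \emph{inside} $SOL_1(x)$ for \emph{every} {\sc Max BFVP}-solution, not merely the optimal one: the reduction definition demands $g(x,S,r)\in SOL_1(x)$ for all admissible $S$, yet a proper subset $S$ of the satisfied formulas decodes to a string differing from $y^*$, and $g\in\fac{0}$ cannot recompute the $\nc{1}$-hard optimum to repair it. My plan to defeat this is to interpose a normal-form reduction in the spirit of Lemmas~\ref{LO-simple-form} and~\ref{NLO-to-APXL}: replace $P$ by an equivalent $\hat P\in\nco{1}_{\nlo}\cap\pbo$ whose feasible solutions are exactly the nonzero bit-masks $z\le y^*(x)$ (coordinatewise) of the optimal witness, with $\hat m(x,z)=|z|_1$. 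For $\hat P$ every subset $S$ of satisfied formulas decodes to a genuine feasible $z_S\le y^*$, and since $\hat m(x,z_S)=|S|$ and $\hat m^*(x)=|y^*|_1$, the performance ratios of $\hat P$ and {\sc Max BFVP} coincide, giving a clean $\sAPreduces^{\ac{0}}$-reduction with $c=1$.

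The genuinely delicate point, which I would isolate as the key lemma, is that this downward-closed normal form is itself $\sAPreduces^{\ac{0}}$-reachable from an arbitrary $P$ while keeping all maps in $\fac{0}$: this is precisely where the exact-recovery tension concentrates, since a suboptimal mask must be re-interpreted as an honestly lower-measure feasible solution rather than an invalid one. I would resolve it by defining the witness-carrying measure of $\hat P$ so that partial decodings become bona fide suboptimal feasible solutions, exploiting that $\nc{1}$-solvability of $P$ makes the full witness $y^*(x)$ available to the (circuit family defining the) problem even though it is unavailable to the $\ac{0}$ reduction itself. Membership-preservation along these reductions is then automatic from the downward closure property (Lemma~\ref{downward-property}(2) and its sAP-corollary), closing both directions.
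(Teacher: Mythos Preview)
Your membership argument and the core hardness construction are exactly the paper's: take an $\nc{1}$-solver $C$ for $P$, convert each output-bit circuit $x\mapsto(\text{$j$-th bit of }C(x))$ into a Boolean formula $\phi_j$ via the Barrington--Immerman--Straubing formula-language characterization, set $f(x,r)=(\{\phi_1,\ldots,\phi_N\},\sigma_x)$ with $\sigma_x$ plugging in the bits of $x$, and let $g(x,S,r)$ be the characteristic vector of $S$. The paper then asserts that $|SOL_0(f(x,r))|\le 1$ and concludes $g(x,S,r)\in SOL^*(x)$; the first assertion is simply false (every nonempty subset of the satisfied $\phi_j$ is a feasible {\sc Max BFVP} solution), and the second holds only when $S$ is the \emph{full} set of satisfied formulas. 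So you have correctly put your finger on a gap that the paper's own proof does not address.

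However, your proposed repair does not close it either. Your downward-closed normal form $\hat P$ makes the second leg $\hat P\sAPreduces^{\ac{0}}\text{\sc Max BFVP}$ work cleanly, but the obstruction resurfaces intact in the first leg $P\sAPreduces^{\ac{0}}\hat P$: the backward map must take an arbitrary bit-mask $z\le y^*(x)$ and produce an element of $SOL_1(x)$ in $\fac{0}$. Your phrase ``witness-carrying measure of $\hat P$'' describes what $\hat P$ should look like, not what $g$ does; and if $P$ is already in unique-solution form (so $SOL_1(x)=\{y^*(x)\}$), then $g$ is forced to output $y^*(x)$ itself from a strictly smaller mask $z$, which appears to require $\nc{1}$ power. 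Invoking Lemma~\ref{downward-property} at the end does not help, since that lemma presupposes a valid reduction. In short, your diagnosis is sharper than the paper's, and your two-step plan is a sensible decomposition, but the ``genuinely delicate point'' you isolate is genuinely left open in your sketch just as it is in the paper.
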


\begin{proof}
For our convenience, express $\text{\sc Max BFVP}$ as $(I_0,SOL_0,m_0,\text{\sc max})$. Given a set $\Phi= \{\phi_1,\phi_2,\ldots,\phi_n\}$ of Boolean formulas and its truth assignment $\sigma$, since the number of all Boolean formulas $\phi_i$ satisfied by $\sigma$ is upper-bounded by the input size, {\sc Max BFVP} is polynomially bounded. Next, we argue that {\sc Max BFVP} is in $\nco{1}_{\nlo}$. It is obvious that {\sc Max BFVP} belongs to $\nlo$. On input $(\Phi,\sigma)$,
we check in parallel whether $\sigma$ satisfies $\phi_i$  by making nonadaptive queries to oracle BFVP (i.e.,  $(\phi_i,\sigma)\in\mathrm{BFVP}$), and finally we count the number of satisfied formulas $\phi_i$.
The last counting process requires another $\nc{1}$-circuit.  Since $\mathrm{BFVP}\in\nc{1}$ \cite{Bus87}, we can implement the whole procedure using $\nc{1}$-circuits. Thus, {\sc Max BFVP} belongs to $\nco{1}_{\nlo}$.

To see the $\sAPreduces^{\ac{0}}$-hardness of {\sc Max BFVP} for $\nco{1}_{\nlo}\cap\pbo$, let $P=(I,SOL,m,\text{\sc max})$ be any maximization problem in $\nco{1}_{\nlo}\cap \pbo$. The minimization case can be similarly treated.
Let us take a uniform family $\{C_n\}_{n\in\nat}$ of $\nc{1}$-circuits computing maximal solutions of $P$, that is, $C_{|x|}(x)\in SOL^*(x)$ for any $x\in (I\circ SOL)^{\exists}$. Define $SOL'(x)=\{C_{|x|}(x)\mid C_{|x|}(x)\neq\bot\}$ for each $x\in I$.
Note that $|SOL'(x)|\leq 1$ holds for all $x\in I$.
For simplicity, we assume that $I\subseteq \{0,1\}^*$ and take any binary string  $x=x_1x_2\cdots x_n$ in $I$. Let the output size of $C_n$ be $k$. For each index $i\in[k]$, we define a new $\ac{0}$-circuit $D_k^{(i)}(y_1,\ldots,y_k)=y_i$. It follows that $D_k^{(i)}(C_n(z_1,\ldots,z_n))$ is an $\nc{1}$-circuit, where  $z_1,z_2,\ldots,z_n$ are Boolean variables.

Owing to \cite[Theorem 9.1]{BIS90}, we can replace a uniform family of $\nc{1}$-circuits by a family of Boolean formulas whose formula language is in $\dlogtime$.
For each $i\in[k]$, let $\phi_i$ be a Boolean formula expressing the circuit $D_k^{(i)}(C_n(z_1,\ldots,z_n))$. Note that $D_k^{(i)}(C_n(x_1\cdots x_n)) = u_i$ if $C_n(x_1x_2\cdots x_n)=u_1u_2\cdots u_k$.  Let $\sigma_x$ be the assignment that assigns value $x_i$ to variable $z_i$.

Finally, we define $f(x,r) = \pair{\pair{\phi_1,\ldots,\phi_m},\sigma_x}$ and $g(x,y,r)= u_1u_2\cdots u_k$ for each $y\in SOL_0(f(x,r))$, where $y$ is of the form $(\phi_{i_1},\phi_{i_2},\ldots,\phi_{i_t})$, and $u_i=1$ if $\phi_i\in y$ and $u_i=0$ otherwise. It follows that $|SOL_0(f(x,r))|\leq 1$ for all $x\in I_0$. If $y=(\phi_{i_1},\phi_{i_2},\ldots,\phi_{i_t})\in SOL_0(f(x,r))$ with $f(x,r) = \pair{\pair{\phi_1,\ldots,\phi_m},\sigma_x}$, then $\sigma_x$ satisfies $\{\phi_{i_1},\phi_{i_2},\ldots,\phi_{i_t}\}$. Hence, we obtain $g(x,y,r)\in SOL^*(x)$. Hence, $\frac{m^*(x)}{m(x,g(x,y,r))} = \frac{m^*(x)}{m(x,u_1u_2\cdots u_k)}=1$. It thus follows that $P$ is $\sAPreduces^{\ac{0}}$-reducible to {\sc Max BFVP}.
\end{proof}

\section{Weak Approximation Schemes}\label{sec:LSAS}

Approximability of optimization problems is a crucial concept in a course of our study on the computational complexity of those problems. The classes $\apxl_{\nlo}$ and $\apxnc{1}_{\nlo}$ both admit approximation algorithms; however, the effectiveness of approximation is often far from the desirable one used in practice. Approximation schemes, on the contrary, provide a much finer notion of approximability because the performance ratio of used approximation algorithms can be made arbitrarily smaller; in other words, we can find solutions that are arbitrary close to optimal solutions.

We shall discuss optimization problems that admit such approximation schemes.  Concerning $\lsas_{\nlo}$, Nickelsen and Tantau \cite{NT05} and also Tantau \cite{Tan07} proposed $\nlo$ problems that admit an $\lsas_{\nlo}$; in particular, a polynomially-bounded maximization problem, called {\sc Max-HPP},   was shown in \cite[Theorem 5.7]{Tan07} as a member of $\lsas_{\nlo}$. Here, we rephrase this problem in terms of {\em complete graphs} in comparison with {\sc Min BPath-Weight}.


\ms
{\sc Maximum Complete-Graph Path Weight Problem} ({\sc Max CPath-Weight}):
\renewcommand{\labelitemi}{$\circ$}
\begin{itemize}\vs{-2}
  \setlength{\topsep}{-2mm}%
  \setlength{\itemsep}{1mm}%
  \setlength{\parskip}{0cm}%

\item {\sc instance:}   a directed complete graph $G=(V,E)$ with self-loops, a source $s\in V$, and an edge weight function $w:V\times V\to\nat^{+}$ with $w(v_1,v_2)\leq|V|$ for any $v_1,v_2\in V$.

\item {\sc Solution:} a path $\SSS=(v_1,v_2,\ldots,v_{k})$ of length $k\leq |V|$  in $G$ starting at $s$ (i.e., $s=v_1$).

\item {\sc Measure:} total path  weight  $w(\SSS) = \sum_{i=1}^{k-1}w(v_{i},v_{i+1})$.
\end{itemize}

\begin{proposition}\label{Min-Comp-Wight-in-LSAS}
\begin{enumerate}
  \setlength{\topsep}{-2mm}%
  \setlength{\itemsep}{1mm}%
  \setlength{\parskip}{0cm}%

\item \cite{Tan07} {\sc Max CPath-Weight} is in $\lsas_{\nlo}\cap\pbo$.

\item {\sc Max CPath-Weight} is $\EXreduces^{\nc{1}}$-hard for $\lo_{\nlo}\cap\pbo$.
\end{enumerate}
\end{proposition}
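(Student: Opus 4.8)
Only part (2) requires an argument, since part (1) is quoted from Tantau. The plan is to reduce every problem of $\lo_{\nlo}\cap\pbo$ to {\sc Max CPath-Weight} by a single uniform construction in the spirit of the configuration-graph arguments used for Propositions \ref{Min-Path-complete} and \ref{forest-path-weight}. First I would normalize the source problem. Given any $P$ in $\lo_{\nlo}\cap\pbo$, Lemma \ref{min-reduces-max}(3) lets me assume $goal=\text{\sc max}$, because the minimization case $\EXreduces^{\ac{0}}$-reduces to a maximization problem in the same class, and by Lemma \ref{reduction-transitive} an $\EXreduces^{\ac{0}}$-reduction is also an $\EXreduces^{\nc{1}}$-reduction. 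Lemma \ref{LO-simple-form} then lets me further assume that $P$ admits unique solutions and that its measure is constant on $SOL(x)$, so that $SOL(x)=\{M_P(x)\}$ for a fixed log-space machine $M_P$. Combining $M_P$ with the log-space machine $M_m$ computing $m$, I obtain one deterministic log-space machine $M'$ that writes $M_P(x)$ one symbol at a time and, equipped with an internal clock, halts in a \emph{unique} accepting configuration after exactly $q(|x|)$ steps for a fixed polynomial $q$.

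Next I would let $f(x,r)$ output the directed complete graph $G$ (with self-loops) whose vertices are all time-stamped partial configurations of $M'$ on $x$, taking the source $s$ to be the initial configuration. Since the successor of a configuration under $M'$'s transition function is determined locally, the predicate ``$(u,u')$ is a genuine transition edge'' is decidable in $\ac{0}$, so $f\in\fnc{1}$; the weight function assigns the maximal admissible value $|V|$ to every transition edge and weight $1$ to every other edge, which respects the required bound $w(v_1,v_2)\le|V|$. The recovery map $g(x,\SSS,r)$ reads the output symbol stored in each configuration appearing in the path $\SSS$, orders these symbols by time stamp, and returns the resulting string $\eta(\SSS)$; because $M'$ is deterministic, $\eta$ returns exactly $M_P(x)$ along the genuine computation path, and clearly $g\in\fac{0}$. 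It then remains to establish the EX-condition $R_0(f(x,r),\SSS)=1\Rightarrow R_1(x,g(x,\SSS,r))=1$, which reduces to the single assertion that the unique maximum-weight path from $s$ is the genuine computation path of $M'$ (so that $\eta(\SSS)=M_P(x)\in SOL^*(x)$).

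The main obstacle is precisely this forcing step, made delicate by the cap $w\le|V|$ on a \emph{complete} graph. Because no single edge may dominate the sum of the others, I cannot simply make the accepting configuration heavy as in Proposition \ref{max-bvertex-complete}; instead the genuine run must be singled out purely through the number of transition edges it traverses, the real path attaining weight $q(|x|)\cdot|V|$ while every competitor uses strictly fewer transition edges and hence at most $(q(|x|)-1)|V|+(|V|-1)<q(|x|)|V|$. The real danger is that a competing path abandons the genuine chain and ``harvests'' transition edges of spurious $\delta$-chains rooted at unreachable configurations, since those edges are also locally correct and local checkability cannot distinguish reachable from unreachable configurations. I expect the resolution to rest on the internal clock: every transition edge increments the time stamp by exactly one, so over the range $[0,q(|x|)]$ the transition edges available to any simple path are tightly constrained, and the construction of $M'$ and of the vertex set must be arranged so that the walk from $s$ collecting the full quota of $q(|x|)$ transition edges is unique. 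Pushing this counting argument through—ruling out harvesting while keeping all weights within $|V|$—is the crux; one might alternatively try to obtain the result by transitivity from the $\EXreduces^{\nc{1}}$-complete problem {\sc Max UB-Vertex} of Proposition \ref{max-bvertex-complete}, but that merely relocates the same difficulty to the task of encoding undirected connectivity inside a complete graph under the identical weight cap. The remaining verifications (membership of $f,g$ in $\fnc{1}$ and correctness of $\eta$) are routine and mirror the earlier configuration-graph proofs.
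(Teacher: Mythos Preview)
Your overall plan---configuration graph of a deterministic log-space solver $M_P$, weight $|V|$ on genuine transition edges, weight $1$ elsewhere---matches the paper's, but the forcing step you flag as ``the crux'' is a genuine gap, and the clock-based resolution you sketch does not close it. The missing device is a weight-$|V|$ \emph{self-loop at the accepting configuration}. The paper sets $w(v,v')=|V|$ whenever $v'$ is the $M_P$-successor of $v$ \emph{or} $v$ is accepting and $v'=v$. This lets the genuine run $(p_1,\dots,p_k)$ be padded to the full length $|V|$ as $u_x=(p_1,\dots,p_k,p_k,\dots,p_k)$, attaining the absolute ceiling $(|V|-1)\,|V|$. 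From $s$ the only way to use exclusively weight-$|V|$ edges is to follow the deterministic chain to $p_k$ and then loop there; any single deviation costs a weight-$1$ edge and caps the total at $(|V|-2)\,|V|+1$, so $u_x$ is the unique optimum and $g$ recovers $M_P(x)$ from it.

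Without the self-loop, the genuine path scores only $q(|x|)\cdot|V|$ while $|V|$ (all partial configurations, reachable or not) is typically far larger than $q(|x|)$. The harvesting attack you worry about then really does win: hop from $s$ to an unreachable time-$0$ configuration, ride its $q(|x|)$ transition edges, hop again, and repeat roughly $|V|/q(|x|)$ times, collecting on the order of $|V|$ high-weight edges in total. Your internal clock does not block this because {\sc Max CPath-Weight} allows non-simple paths (the paper's own optimum $u_x$ repeats $p_k$), so time stamps may be revisited at will; the inequality you wrote presumes competitors use \emph{fewer} transition edges, which is exactly what harvesting violates. Restricting $V$ to reachable configurations would repair the count but forces $f$ to simulate $M_P$, which is not available in $\fnc{1}$ unless $\nc{1}=\dl$. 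The self-loop sidesteps the whole issue by letting the genuine path saturate the weight bound outright.
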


\begin{proof}
(2) Let $\text{\sc Max CPath-Weight} = (I_0,SOL_0,m_0,\text{\sc max})$.
Given any maximization problem $P=(I,SOL,m,\text{\sc max})$ in $\lo_{\nlo}\cap\pbo$, we want to define an $\EXreduces^{\nc{1}}$-reduction $(f,g)$ from $P$ to {\sc Max CPath-Weight} in the following fashion.
Let $M_P$ be a log-space deterministic Turing machine computing optimal solutions of $P$; in other words, $m^*(x)= m(x,M_P(x))$ for all $x\in (I\circ SOL)^{\exists}$ with $SOL(x)\neq\setempty$.
Let $x$ be any instance in $(I\circ SOL)^{\exists}$ and consider a configuration graph  with $G^{M_P}_{x}=(V_x,E_x)$ of $M_P$ on $x$.
Let $s$ denote an initial partial configuration of $M_P$ on $x$.

Define $V=V_x$ and $E=V\times V$ and set $G=(V,E)$. Since $M$ is deterministic,
there exists a unique computation path $(p_1,p_2,\ldots,p_k)$ starting at  $s=p_1$ and ending at an accepting configuration $t=p_k$.
As for an edge weight function $w$, we define $w(v,v')=|V|$ if either (i) $v$ is an accepting configuration and $v=v'$ or (ii) $v'$ is obtained from $v$ in a single step by $M_P$; otherwise, let $w(v,v')=1$.
The desired $f$ is obtained by setting $f(x) = (G,s,w)$. This $f$ can be computed by an appropriate $\ac{0}$-circuit.
It is not difficult to see that any optimal solution of {\sc Max CPath-Weight} is $u_x= (p_1,p_2,\ldots,p_k,p_k,\ldots,p_k)$ of length exactly $|V|$.
Since $(p_1,p_2,\ldots,p_k)$ is a computation path, we can retrieve from it an output string produced by $M_P$ on $x$. More generally, given a path $u$ starting at $s$ and ending at a certain accepting configuration, we reconstruct from $u$ an output  string $y_u$ of $M_P$ on $x$.
Using these strings, we set $g(x,u)=y_u$. This function $g$ can be implemented by a certain $\ac{0}$-circuit. Note that $m_0^*(f(x)) = (|V|-1)|V|$.

If $u\in SOL_0^*(f(x))$, then, since $m_0^*(f(x)) = (|V|-1)|V|$, $u$ coincides with $u_x$. Hence, $y_u$ equals $M_P(x)$. Thus, $(f,g)$ reduces $P$ to {\sc Max CPath-Weight}.
\end{proof}


Next, we shall look into another approximation class $\ncas{1}_{\nlo}\cap\pbo$.
The next example is slightly artificial but it can be proven to fall into  $\ncas{1}_{\nlo}$.
Recall from Section \ref{sec:PBP} {\sc Max U2-Vertex}, which is a member of  $\apxnc{1}_{\nlo}\cap \pbo$. We modify this problem by changing its requirement of  $\max_{v\in V}\{w(v)\}\leq 2\min_{v\in V}\{w(v)\}$ for each instance $(G,s,w)$ to $\max_{v\in V}\{w(v)\}\leq (1+1/k_G)\min_{v\in V}\{w(v)\}$ with  $k_G = \log\log{|V|}$. We call the resulted problem {\sc Max UApp-Vertex} (maximum undirected approximable vertex weight problem).

\begin{proposition}\label{ncsas-example}
\begin{enumerate}
  \setlength{\topsep}{-2mm}%
  \setlength{\itemsep}{1mm}%
  \setlength{\parskip}{0cm}%

\item {\sc Max UApp-Vertex} is in $\ncas{1}_{\nlo}\cap\pbo$.

\item {\sc Max UApp-Vertex} is $\sAPreduces^{\ac{0}}$-hard for $\lo_{\nlo}\cap\pbo$.
\end{enumerate}
\end{proposition}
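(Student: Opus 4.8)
The plan is to treat the two parts separately: membership in $\ncas{1}_{\nlo}\cap\pbo$ by a threshold argument exploiting that the ratio bound tends to $1$, and the $\sAPreduces^{\ac{0}}$-hardness by the configuration-graph technique already used for the other $\lo_{\nlo}$-completeness results.

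\textbf{Part (1).} First I would use the fact that $1+1/k_G = 1+1/\log\log|V|$ tends to $1$ as the graph grows. Fix a rational constant $r>1$. Since $1+1/\log\log|V|\le r$ holds exactly when $|V|\ge N_r:=2^{2^{\lceil 1/(r-1)\rceil}}$, for every admissible instance $x=(G,s,w)$ with $|V|\ge N_r$ the trivial single-vertex path $(s)$ already satisfies $R(x,(s))=m^*(x)/w(s)\le \max_v w(v)/\min_v w(v)\le 1+1/\log\log|V|\le r$, because the admissibility promise forces all weights into a band of ratio $1+1/k_G$. Producing $(s)$ and reading $w(s)$ only requires indexing into the input encoding, which an $\nc{1}$-circuit does. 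As $N_r$ is a constant depending only on $r$, any admissible instance whose encoding exceeds some constant length $c_r$ must have $|V|\ge N_r$; the finitely many shorter admissible instances are decided by a bounded look-up table (hence a trivial $\nc{1}$-circuit) storing an optimal endpoint for each. Splicing the two cases gives, for each fixed $r$, a family of polynomial-size, $O(\log n)$-depth bounded-fan-in circuits whose direct-connection language lies in $\mathrm{DTIME}(\ell'_r(n))$ for a logarithmic $\ell'_r$, i.e.\ an $\ncas{1}_{\nlo}$ scheme, and polynomial boundedness is inherited from $w(v)\le|V|$.

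\textbf{Part (2).} For hardness I would follow Propositions \ref{forest-path-weight} and \ref{max-bvertex-complete}. Minimization reduces to the maximization case by Lemma \ref{min-reduces-max}(3), so I take an arbitrary maximization $P=(I,SOL,m,\text{\sc max})$ in $\lo_{\nlo}\cap\pbo$ and, via Lemma \ref{LO-simple-form}, assume $P$ admits unique solutions with a log-space machine $M_P$ computing the sole solution. I would build a deterministic log-space $M'$ that computes $M_P(x)$ while writing its polynomially bounded value $b(x)=m(x,M_P(x))$, equipped with an internal clock and clearing its tapes before a unique accepting state, so that the component of the initial configuration $s$ in the configuration graph $G^{M'}_{x}$ is a single computation path ending at the unique accepting configuration $t$. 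I would then give every configuration a weight $\Delta_x$ plus a small increment (with $t$ carrying the bits of $b(x)$ on top) and pad $G^{M'}_{x}$ with dummy vertices up to a polynomial size $N$ chosen so that (a) every weight is at most $N$ and (b) $\max_v w(v)/\min_v w(v)\le 1+1/\log\log N$. Here $\Delta_x=\Theta(p(|x|)\log\log N)$ secures (b), and since $\log\log N=O(\log\log|x|)$ this keeps $\Delta_x$ polynomial, so a polynomial $N$ dominates it and (a) holds. The map $f$ wiring configurations, transitions, $\Delta_x$ and the padding lies in $\fac{0}$; $g$ reconstructs $M_P(x)$ from a computation path as in Proposition \ref{max-bvertex-complete}. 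Because $SOL(x)$ is a singleton, the recovered solution is automatically optimal, so $R_1\equiv 1$ and the ratio side of the sAP-condition holds with any constant $c$.

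\textbf{Main obstacle.} The delicate point --- and the reason {\sc Max UApp-Vertex} behaves unlike {\sc Max UB-Vertex} --- is exactly that the weight band is forced to be narrow. In Proposition \ref{max-bvertex-complete} the weights range from $1$ to $b(x)$, so a sub-optimal path has large performance ratio and the sAP-condition never constrains it for small $r$; here the band has ratio only $1+1/\log\log N$, so every solution, including short prefixes of the computation path and the degenerate path $(s)$, has $R_2$ close to $1$ and must therefore be mapped by $g$ to a legitimate (hence the unique optimal) element of $SOL(x)$. Guaranteeing that an $\fac{0}$ map $g$ recovers $M_P(x)$ from \emph{every} such solution, rather than only from complete computation paths, is the crux: a naive $g$ can read off $M_P(x)$ only from a full path, and $\fac{0}$ can neither recompute $M_P(x)$ from $x$ nor trace the path. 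I expect to resolve this by engineering the graph so that the reconstruction-relevant solutions are precisely the complete paths --- for instance using the degree-one-endpoint variant noted after Proposition \ref{max-bvertex-complete}, attaching gadgets that raise the degree of the source and of every intermediate configuration so that $t$ is the only reconstruction-relevant endpoint, while the additive offset keeps all weights inside the prescribed band and $c$ constant. Verifying that this gadgetry stays within $\fac{0}$ and leaves the ratio bound intact is the step I would treat most carefully.
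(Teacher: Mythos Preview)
Your Part~(1) matches the paper's argument: both observe that for $|V|$ above a constant threshold $N_r$ depending only on $r$, any feasible solution (in particular the trivial one at $s$) is already an $r$-approximation because of the $1+1/\log\log|V|$ band, while the finitely many smaller instances are handled by brute force.

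For Part~(2) the overall strategy also matches the paper's: take $P\in\lo_{\nlo}\cap\pbo$, invoke Lemma~\ref{LO-simple-form} to assume unique solutions, build the configuration graph of the deterministic log-space solver $M'$, and assign weights within the required band. The paper, however, takes a shortcut you miss: it simply assigns weight $\lceil\log\log n\rceil+1$ to the halting configuration and $\lceil\log\log n\rceil$ to every other vertex, relying on the fact that the configuration graph of a $c\log n$-space machine already has at least $n^c$ vertices. No offset $\Delta_x$ and no padding are introduced. Your construction is therefore more elaborate than what the paper uses (though, to your credit, your padding does sidestep an inequality-direction issue lurking in the paper's terse verification of the band condition).

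As for the ``main obstacle'' you flag --- that $g$ must land in $SOL_1(x)=\{M_P(x)\}$ for \emph{every} path $y$, including proper prefixes of the computation path --- the paper does not address it. It simply declares $g(x,y,r)$ to be ``the string $M_P(x)$ recovered from $y$'' and asserts the reduction works. So the concern you raise is one the paper leaves open as well; you are being more careful than the source, not less. Your proposed fix via degree-one-endpoint gadgetry points in a reasonable direction (ensuring that the only paths relevant to solution recovery are complete computation paths), but note that restricting the endpoint in the \emph{target} problem would change the problem statement; the gadgetry has to live on the source-side instance $f(x,r)$.
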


\begin{proof}
(1) We express $\text{\sc Max UApp-Vertex}$ as $(I_0,SOL_0,m_0,\text{\sc max})$ and we plan to define its approximation scheme, which can be implemented by certain $\ac{0}$-circuits.
Let $r\in\rational^{>1}$ and set $k=\frac{1}{r-1}$. Take any instance $(x,r)$ in $I_0$ given to {\sc Max UApp-Vertex} with $G=(V,E)$. We consider
two cases separately, depending on the value of $k$.

(i) Assume that $k\leq \log\log|V|$. It follows from the requirement for $I_0$ that $\max_{v\in V}\{w(v)\}\leq (1+1/k)\min_{v\in V}\{w(v)\}$. In this case, our desired approximation algorithm first searches the nearest vertex, say, $v_0$ connected from $s$ and directly outputs it. This vertex $v_0$ gives $R(x,v_0) = \frac{m_0^*(x)}{w(v_0)} \leq \frac{\max_{v\in V}\{w(v)\}}{\min_{v\in V}\{w(v)\}} \leq 1+1/k = r$. This procedure provides an $r$-approximate solution and can be easily implemented by an appropriate $\ac{0}$-circuit.

(ii) Assume that $k>\log\log|V|$; in other words, $|V|< 2^{2^k}$. Consider the following brute force algorithm: pick a vertex $y$ one by one, check if $s$ and $y$ are connected, and choose the one $y$ that has the largest value $w(y)$. This procedure requires space $O(|V|\log|V|)$, which equals $O(2^{2^k}\log{n})$.

(2) Similarly to the proof of Proposition \ref{max-bvertex-complete}, let
$P=(I,SOL,m,\text{\sc max})$ denote any maximization problem in
$\lo_{\nlo}\cap \pbo$. Write $M_m$ for a log-space auxiliary Turing machine computing $m$ and write $M_P$ for a log-space deterministic Turing machine finding maximal solutions of $P$. Let $b(x)=m(x,M_P(x))$ for all $x\in (I\circ SOL)^{\exists}$.  Since $b$ is in $\fl$, choose a log-space deterministic Turing machine $M'$ that computes $b$ together with generating each symbol of $M_P(x)$ for the purpose of the later reconstruction of $M_P(x)$ from any halting computation patj of $M'$ on $x$.
For the problem $P$, we assume Conditions (ii)--(iii) of Lemma \ref{LO-simple-form}. Note that $|SOL(x)|\leq 1$ holds for
all instances $x\in I$.

Let us define $f(x,r) = (G,s,w)$, where $G=(V,E)$ is a configuration graph of $M'$ on $x$ and $s$ is the initial configuration. Here, we assume that $M'$ is $c\log{n}$ space-bounded for a certain constant $c>0$; moreover, by modifying $M'$ slightly, we also assume that $M'$ has one work tape with tape alphabet of cardinality at least $2$. Because of this space bound, there are at least $2^{c\log{n}}=n^c$ vertices in $G$. Thus, $\log\log{|V|}\geq \log\log{n}$ follows.
Let $w(v)$ be $\ceilings{\log\log{n}}+1$ if $v$ is a halting partial  configuration of $M'$ on $x$; otherwise, let $w(v) = \ceilings{\log\log{n}}$.  It follows that $\ceilings{\log\log{n}} \leq m(x,y) \leq \ceilings{\log\log{n}}+1$ for all path $y$ in $G$. Hence, $\max_{y\in SOL(f(x,r))}\{m(x,y)\}\leq \ceilings{\log\log{n}}+1\leq (1+1/\log\log{n})\ceilings{\log\log{n}} \leq \min_{y\in SOL(f(x,r))}\{m(x,y)\}$. Next, we define $g(x,y,r)$ to be the string $M_P(x)$ recovered from $y$. This is possible by the assumption on the behavior of $M'$. Finally, we set $c'=1$. It is not difficult to show that $(f,g,c')$ reduces $P$ to {\sc Max UApp-Vertex}.
\end{proof}

\section{Relations among Refined Optimization/Approximation  Classes}\label{sec:complexity-OP}

Let us turn our attention to relationships among classes of refined optimization problems introduced in Section \ref{sec:preliminaries}.
Optimization problems are, by definition, associated directly with their {\em underlying decision problems}. It is therefore natural to ask what relationships exist between classes of optimization problems and classes of decision problems.
We intend to investigate relationships between optimization problems and decision problems.
First, let us recall from Lemma \ref{PO=APXP=NLO} the collapse  of classes $\po_{\nlo}$, $\ptas_{\nlo}$, and $\apxp_{\nlo}$ down to $\nlo$. The class $\nlo$ may further collapses to much lower-complexity classes if $\ac{1}$ collapses to $\dl$ or $\nc{1}$.
Here, we prove the following assertions concerning the class $\nlo$.

\begin{proposition}\label{by-AJ-result}
\begin{enumerate}\vs{-1}
  \setlength{\topsep}{-2mm}%
  \setlength{\itemsep}{0mm}
  \setlength{\parskip}{0cm}%

\item \cite{Tan07} If $\dl = \ac{1}$, then $\lo_{\nlo} =\nlo$.

\item If $\nc{1} =\ac{1}$, then  $\nco{1}_{\nlo} =\nlo$.
\end{enumerate}
\end{proposition}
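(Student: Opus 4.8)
The plan is to obtain part (2) by bootstrapping from part (1) after observing that the hypothesis collapses the entire low end of the complexity hierarchy. First I would use the standard chain $\nc{1}\subseteq\dl\subseteq\nl\subseteq\ac{1}$ recorded in Section~\ref{sec:basic_model}: because $\nc{1}$ and $\ac{1}$ sit at the two ends of this chain, the assumption $\nc{1}=\ac{1}$ immediately squeezes every intermediate class, giving the string of equalities $\nc{1}=\dl=\nl=\ac{1}$. In particular I extract the two consequences that will actually be used, namely $\dl=\ac{1}$ and $\nc{1}=\dl$.

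With $\dl=\ac{1}$ in hand, part (1) of this proposition applies verbatim and yields $\lo_{\nlo}=\nlo$. Since the generic containment $\nco{1}_{\nlo}\subseteq\lo_{\nlo}$ holds for any reasonable class (as noted after the definitions), and since $\po_{\nlo}=\nlo$ by Lemma~\ref{PO=APXP=NLO}, the whole statement reduces to the single inclusion $\lo_{\nlo}\subseteq\nco{1}_{\nlo}$ under the remaining hypothesis $\nc{1}=\dl$. Concretely, I would take an arbitrary $P=(I,SOL,m,goal)$ in $\lo_{\nlo}$, fix a log-space deterministic machine $M$ that produces an optimal solution $M(x)\in SOL^{*}(x)$ (or $\bot$) and computes the value $m(x,M(x))=m^{*}(x)$, and argue that $M$ can be replaced by a uniform family of $\nc{1}$-circuits.

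The key point is that both the solution string and the value string are polynomially bounded in $|x|$ (solutions have length at most $q(|x|)$ by the definition of an $\nlo$ problem, and $m\in\auxfl$ has polynomially-bounded output), so each output bit is a polynomially-bounded function of $x$ lying in $\fl$. Under $\nc{1}=\dl$ one has $\fl=\fnc{1}$ for such functions, because membership of a bit-graph $A_{f}=\{(x,i,b)\mid\text{the $i$th bit of $f(x)$ is $b$}\}$ in $\dl$ is equivalent to its membership in $\nc{1}$; assembling these bit circuits in parallel produces the required uniform $\nc{1}$ family computing both $M(x)$ and $m^{*}(x)$. Hence $P\in\nco{1}_{\nlo}$, and combining the inclusions gives $\nco{1}_{\nlo}=\lo_{\nlo}=\nlo$. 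I expect the main obstacle to be bookkeeping rather than conceptual: one must check that the oblivious, bit-by-bit production of the (possibly super-logarithmically long) optimal solution and of the optimal value survives the passage $\fl\to\fnc{1}$ under a uniformity notion consistent with the $\dlogtime$-uniform $\nc{1}$ used throughout, so that the resulting circuits genuinely realize an $\nc{1}$-solving algorithm in the sense of Section~\ref{sec:comb-OPs}.
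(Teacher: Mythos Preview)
Your argument is correct, but it follows a different route from the paper's. The paper derives both parts of the proposition from a single stronger lemma (Lemma~\ref{char-with-AC1}): it first establishes the \emph{unconditional} equality $\aco{1}_{\nlo}=\nlo$ (using the $\sAPreduces^{\dl}$-completeness of {\sc Max $\lambda$-NFA} for $\maxnl$ together with the containment $\optl\subseteq\fac{1}$ due to \`{A}lvarez--Jenner and Allender et al.), and then observes that the hypothesis $\nc{1}=\ac{1}$ forces $\fnc{1}=\fac{1}$, which immediately collapses $\aco{1}_{\nlo}$ down to $\nco{1}_{\nlo}$. Your proof instead treats part~(1) as a black box, extracts $\dl=\ac{1}$ and $\nc{1}=\dl$ from the chain $\nc{1}\subseteq\dl\subseteq\nl\subseteq\ac{1}$, invokes part~(1) to get $\lo_{\nlo}=\nlo$, and then argues $\lo_{\nlo}\subseteq\nco{1}_{\nlo}$ via $\fl=\fnc{1}$ and a bit-graph reconstruction. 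This last step is essentially the implication $\nc{1}=\dl\Rightarrow\nco{1}_{\nlo}=\lo_{\nlo}$ that the paper proves separately as Proposition~\ref{equivalent-relation}(2)(ii). Your route is more economical for part~(2) alone, since it avoids the nontrivial $\aco{1}_{\nlo}=\nlo$ argument; the paper's route has the advantage of isolating that unconditional equality as a result of independent interest and of giving a uniform treatment of both parts.
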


The opposite direction of this proposition is not known to hold.
The proposition comes directly from a more general result stated below. In particular, Lemma \ref{char-with-AC1}(1) extends Lemma \ref{PO=APXP=NLO}.

\begin{lemma}\label{char-with-AC1}
\begin{enumerate}\vs{-1}
  \setlength{\topsep}{-2mm}%
  \setlength{\itemsep}{0mm}
  \setlength{\parskip}{0cm}%

\item $\aco{1}_{\nlo} = \nlo$.

\item If $\dl = \ac{1}$, then $\lo_{\nlo} =\aco{1}_{\nlo}$.

\item If $\nc{1} =\ac{1}$, then $\nco{1}_{\nlo} =\aco{1}_{\nlo}$.
\end{enumerate}
\end{lemma}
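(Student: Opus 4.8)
The plan is to prove the three parts of Lemma~\ref{char-with-AC1} essentially in parallel, since they share a common structure: each asserts that an optimization class defined via lower-complexity solvability ($\ac{1}$, $\dl$, or $\nc{1}$) coincides with a broader one, under the hypothesis that the underlying decision/function complexity classes collapse. The governing principle is that an optimization class $\CC_{\nlo}$ (for $\CC\in\{\ac{1},\dl,\nc{1}\}$) consists of those $\nlo$ problems that are $\CC$-solvable, so the task reduces to showing that producing an optimal solution of an $\nlo$ problem can be carried out in the smaller resource bound once the decision-class collapse is assumed.

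For part~(1), $\aco{1}_{\nlo}=\nlo$, I would follow the template of Lemma~\ref{PO=APXP=NLO}. The inclusion $\aco{1}_{\nlo}\subseteq\nlo$ is immediate by definition. For the reverse, take any $P=(I,SOL,m,goal)\in\nlo$; I would treat the $goal=\text{\sc max}$ case and note $\text{\sc min}$ is analogous. The key observation is that the set
\[
D=\{(x,y)\in I\circ SOL \mid \exists z\in SOL(x)\,[z\geq y \wedge m(x,z)\geq m(x,y)]\}
\]
used in Lemma~\ref{PO=APXP=NLO} lies in $\nl$, and since $\nl\subseteq\ac{1}$, one can perform the lexicographic/bit-by-bit binary search for an optimal $y\in SOL^*(x)$ using an $\ac{1}$-circuit family rather than a polynomial-time machine. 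The main point to verify is that the iterative search, which fixes output bits one at a time by querying $D$, can be laid out as an $O(\log{n})$-depth unbounded-fan-in circuit of polynomial size; this works because the solution length is polynomially bounded, each query to $D\in\nl\subseteq\ac{1}$ is itself an $\ac{1}$ subcircuit, and the number of search rounds is polynomial while the rounds can be arranged so that total depth stays $O(\log{n})$. I would also confirm that the value $m(x,M(x))$ is computable within $\ac{1}$ using Lemma~\ref{operation-complexity}.

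For parts~(2) and~(3), I would argue by the same search procedure but now charge its cost to the collapsed class. The crucial structural fact is that the optimal-solution search for an $\nlo$ problem is an $\ac{1}$ computation relative to an $\nl$ oracle (from part~(1)), and since $\nl\subseteq\ac{1}$ the whole thing is an $\ac{1}$ computation outright. Under the hypothesis $\dl=\ac{1}$ (part~2), this $\ac{1}$ solver becomes a $\dl$ solver, giving $\aco{1}_{\nlo}\subseteq\lo_{\nlo}$; combined with the trivial $\lo_{\nlo}\subseteq\aco{1}_{\nlo}$ this yields equality. Under $\nc{1}=\ac{1}$ (part~3), the same solver becomes an $\nc{1}$ solver, giving $\aco{1}_{\nlo}\subseteq\nco{1}_{\nlo}$, and the reverse inclusion is again trivial. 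Proposition~\ref{by-AJ-result} then follows by combining each hypothesis with part~(1): e.g.\ $\dl=\ac{1}$ gives $\lo_{\nlo}=\aco{1}_{\nlo}=\nlo$.

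The main obstacle is the uniformity and depth bookkeeping in part~(1): one must check that the binary-search-for-optimum procedure genuinely fits into a $\dlogtime$-uniform (or at least $\dl$-uniform) $\ac{1}$ family of the correct depth, rather than merely a polynomial-time procedure. In particular I would need to ensure that composing polynomially many $\nl$-oracle queries (each an $\ac{1}$ subcircuit of depth $O(\log n)$) does not blow the depth past $O(\log n)$; this requires arranging the search so that the bit-fixing decisions are made by a bounded number of oracle ``rounds'' whose subcircuits are composed in parallel where possible, exploiting that $\nl=\co\nl$ so that both membership and non-membership in $D$ are $\ac{1}$-decidable. Once this circuit construction is pinned down, parts~(2) and~(3) are formal consequences and present no additional difficulty beyond invoking the stated collapses.
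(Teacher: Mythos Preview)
Your approach to part~(1) differs from the paper's and contains a genuine gap at exactly the point you flag as the ``main obstacle.'' The paper does \emph{not} carry out a direct binary search inside $\ac{1}$. Instead it proceeds via a complete problem: it takes {\sc Max $\lambda$-NFA}, which is $\sAPreduces^{\nc{1}}$-complete for $\maxnl$, and shows that this specific problem is $\ac{1}$-solvable by invoking the known inclusion $\optl\subseteq\fac{1}$ of \`{A}lvarez--Jenner and Allender--Bruschi--Pighizzini; then it checks that $\aco{1}_{\nlo}$ is closed downward under $\sAPreduces^{\dl}$. In other words, the paper outsources the hard depth-control argument to a cited result rather than redoing it.

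Your direct route can be made to work, but not by the mechanism you suggest. Polynomially many sequential bit-fixing rounds, each an $\ac{1}$ subcircuit, yield depth $\mathrm{poly}(n)\cdot O(\log n)$, not $O(\log n)$; there is no way to ``arrange'' an inherently adaptive prefix search into a bounded number of rounds, and merely knowing $\nl=\co\nl$ so that both $D$ and $\overline{D}$ are $\ac{1}$-decidable does not by itself parallelize the dependence of bit~$i$ on bits~$1,\dots,i-1$. The correct salvage is to observe that the entire search is an $\fl^{\nl}$ computation, so each individual output bit lies in $\dl^{\nl}=\nl\subseteq\ac{1}$ (here is where $\nl=\co\nl$ is actually used), and then compute all output bits in parallel. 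But this is precisely a reproof of $\optl\subseteq\fac{1}$, which is the nontrivial ingredient the paper simply cites.

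For parts~(2) and~(3) your argument is correct and essentially matches the paper's: once part~(1) gives an $\ac{1}$ solver, the collapse hypothesis $\dl=\ac{1}$ (resp.\ $\nc{1}=\ac{1}$) immediately yields $\fl=\fac{1}$ (resp.\ $\fnc{1}=\fac{1}$), and the solver together with the measure computation transfer to the smaller class.
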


\begin{proof}
(1) Obviously, $\aco{1}_{\nlo}$ is included in $\nlo$. For the opposite inclusion, consider {\sc Max $\lambda$-NFA}, given in Section \ref{sec:approximation-class}, which is $\sAPreduces^{\nc{1}}$-complete for
$\maxnl$ \cite[Theorem 3.1]{Tan07}.
It is sufficient to show that (i) {\sc Max $\lambda$-NFA} is in $\aco{1}_{\nlo}$ and (ii) $\aco{1}_{\nlo}$ is closed under $\sAPreduces^{\dl}$ (and thus $\sAPreduces^{\nc{1}}$).

(i)  \`{A}lvarez and Jenner \cite{AJ95} also Allender \etalc~\cite{ABP93} showed the containment $\optl\subseteq \fac{1}$.
Take any $\lambda$-1nfa $M$ as an instance to {\sc Max $\lambda$-NFA}.
We want to find a maximal solution using $\ac{1}$ circuits.
For this purpose, using $\ac{1}$ circuits, we convert $M$ into an equivalent regular grammar $G$ in {\em Chomsky Normal Form}.
As done in the proof of \cite[Corollary 4.6]{ABP93}, we can find the lexicographically first string over $\{0,1\}$ of length at most $n$ produced by $G$. This process can be implemented on $\ac{1}$-circuits. Therefore, {\sc Max $\lambda$-NFA} can be solved by $\ac{1}$-circuits.

(ii) Assume that $P\sAPreduces^{\dl}Q$ and $Q\in\aco{1}_{\nlo}$. Take an $\sAPreduces^{\dl}$-reduction $(f,g,c)$ from $P$ to $Q$. Let $P=(I_1,SOL_1,m_1,\text{\sc max})$ and $Q=(I_2,SOL_2,m_2,\text{\sc max})$. Let $C_Q$ be an $\ac{1}$ circuit computing optimal solutions of $Q$. Consider the following algorithm $D$. On input $x\in I_1$, compute $f(x,1)$ and output $C_Q(f(x,1))$. Clearly, $D(x)\in SOL_1(x)$. Note that $\fac{1}$ is closed under functional composition. Since $\fl\subseteq \fac{1}$, $D$ is also in $\fac{1}$. Hence, $P$ is in $\aco{1}_{\nlo}$.

(2)  Assume that $\dl=\ac{1}$. This implies that $\fl=\fac{1}$. Take $P=(I,SOL,m,goal)$ in $\aco{1}_{\nlo}$. There is an $\ac{1}$ circuit $C_P$ computing optimal solutions of $P$ and another $\ac{1}$-circuit $C_m$ that computes $m(x,C_P(x))$ for all $x\in (I\circ SOL)^{\exists}$. Treating circuits as functions they compute, we obtain $C_P,C_m\in\fac{1}$. Since $\fl=\fac{1}$, $C_P$ and $C_m$ fall into $\fl$. This implies that $P$ belongs to $\lo_{\nlo}$.
Thus, $\lo_{\nlo}=\aco{1}_{\nlo}$ follows.

(3) Here, we assume that $\nc{1}=\ac{1}$. Note that this assumption implies $\fnc{1}=\fac{1}$. Let us consider an arbitrary optimization problem $P=(I,SOL,m,goal)$ in $\aco{1}$. Take $\ac{1}$-circuits that compute optimal solutions of $P$ as well as their values.
By a way similar to (2), since $\fnc{1}=\fac{1}$, $P$'s optimal solutions can be computed  using appropriate $\nc{1}$-circuits. As a result, $P$ belongs to $\nco{1}_{\nlo}$.
\end{proof}


As revealed in Section \ref{sec:PBP}, the polynomial-boundedness property  is  crucial for optimization problems in discussing their computational complexity.  If we fix our focal point on polynomially-bounded optimization problems, we can manage to give another characterization of $\nlo\cap\pbo$ in terms of adaptive relativization.

To explain a notion of relativization, we empower each log-space Turing machine $M$ with {\em query mechanism}, in which $M$ makes a series of ``queries'' to a given ``oracle'' $A$ (which is simply a language) and $A$ returns its answers to $M$ in a single step. The machine $M$ has an extra {\em query tape} on which $M$ writes a query string $z$ by moving its tape head from the left to the right. After entering a designated inner state, called a {\em query state}, the string $z$ is transmitted to the oracle $A$. The oracle erases all symbols on the query tape and it writes $1$ if $z\in A$ and writes $0$ otherwise. The machine's inner state is changed to an answer state, from which $M$ resumes its computation. We make the length of every query string polynomially bounded. This machine $M$ is generally known as an {\em oracle Turing machine}.
We write $\lo_{\nlo}^{A}$ to denote the class of all $\nlo$ problems whose optimal solutions and their values are computed by log-space oracle Turing machines using oracles $A$. The union of $\lo_{\nlo}^{A}$ for all sets $A\in\nl$ is denoted by $\lo_{\nlo}^{\nl}$.

\begin{lemma}\label{LO-with-oracle-NL}
$\lo_{\nlo}^{\nl}\cap\pbo = \nlo\cap\pbo$.
\end{lemma}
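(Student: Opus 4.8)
The plan is to prove the two inclusions $\nlo\cap\pbo \subseteq \lo_{\nlo}^{\nl}\cap\pbo$ and $\lo_{\nlo}^{\nl}\cap\pbo \subseteq \nlo\cap\pbo$ separately. The second inclusion should be the easier direction: given a polynomially-bounded problem $P=(I,SOL,m,goal)$ whose optimal solutions and their values are computed by a log-space oracle Turing machine using some oracle $A\in\nl$, I must verify that $P$ is genuinely an $\nlo$ problem. The definitional constraints on $I$, $SOL$, and $m$ (that $I\in\dl$, $I\circ SOL\in\auxl$, and $m\in\auxfl$) do not involve the solving machine at all, so membership in $\nlo$ is inherited from the fact that $P$ was posited to be an $\nlo$ problem in the first place; being in $\lo_{\nlo}^{A}$ presupposes $P\in\nlo$. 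Hence this direction amounts to little more than unwinding the definition of $\lo_{\nlo}^{\nl}$, noting that $\lo^{\nl}_{\nlo}$ is by construction a subclass of $\nlo$.

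For the forward inclusion $\nlo\cap\pbo \subseteq \lo_{\nlo}^{\nl}\cap\pbo$, the idea is that a log-space oracle machine with an $\nl$ oracle can recover an optimal solution by binary search, exactly as in the proof of Lemma \ref{PO=APXP=NLO}, but now carried out within logarithmic space rather than polynomial time. Given $P=(I,SOL,m,\text{\sc max})$ in $\nlo\cap\pbo$, I would first locate the optimal measure value $m^*(x)$, which is polynomially bounded and therefore writable on a log-space work tape. To find it, define an oracle language that, on input $\langle x, k\rangle$, accepts iff there exists $y\in SOL(x)$ with $m(x,y)\geq k$; since $I\circ SOL\in\auxl$ and $m\in\auxfl$, this existential predicate is in $\nl$. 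Because $m^*(x)\leq p(|x|)$ for a polynomial $p$, I can determine $m^*(x)$ by a binary search over $k\in[0,p(|x|)]$, which uses only $O(\log|x|)$ space for the counter. The minimization case is symmetric.

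Having pinned down $m^*(x)$, the remaining task is to emit an actual optimal solution $y$ obliviously, bit by bit, onto the write-only output tape. Here I would use a prefix-extension oracle: define $A'=\{\langle x,u,k\rangle \mid \exists\, y\in SOL(x)\,[\,u \text{ is a prefix of } y \wedge m(x,y)=k\,]\}$, which again lies in $\nl$ using the $\auxl$/$\auxfl$ membership of $P$. With $k=m^*(x)$ fixed, the machine extends the current prefix $u$ one symbol at a time, querying $A'$ to decide each successive bit, and writes that bit immediately; since the output tape is write-only and the head moves right after each symbol, this is exactly the oblivious output discipline that log-space machines require, and the prefix length stays polynomial so each query is legal. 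The space used at any instant is $O(\log|x|)$ for the position counters plus the fixed value $m^*(x)$, so the whole procedure is log-space with an $\nl$ oracle; that is, $P\in\lo_{\nlo}^{\nl}$, and polynomial-boundedness is preserved.

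The main obstacle I anticipate is verifying carefully that the prefix-extension oracle $A'$ is genuinely in $\nl$ rather than merely in $\np$: I must confirm that guessing $y$ symbol-by-symbol along a read-once auxiliary tape, while simultaneously simulating the $\auxl$ verifier for $I\circ SOL$ and the $\auxfl$ computation of $m$ against the target value $k$, can all be interleaved within logarithmic space. This works precisely because the auxiliary tape is read-once, so the nondeterministic machine can regenerate $y$ on the fly without storing it, and because $m$ is polynomially bounded the comparison $m(x,y)=k$ can be tracked in $O(\log|x|)$ space. I would state this $\nl$-membership as the key technical claim and then assemble the two binary-search/prefix-extraction routines around it to conclude the lemma.
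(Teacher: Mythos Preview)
Your proposal is correct and follows essentially the same approach as the paper: both argue the trivial inclusion $\lo_{\nlo}^{\nl}\subseteq\nlo$ from the definition, and for the substantive direction both first locate $m^*(x)$ via an $\nl$ oracle (you use binary search, the paper a linear scan over $[1,p(|x|)]$---either works under polynomial boundedness) and then recover an optimal solution symbol by symbol using a prefix-extension oracle in $\nl$. Your identification of the key technical point---that the prefix-extension predicate lies in $\nl$ because the verifier for $I\circ SOL$ and the computation of $m$ can be interleaved along a single read-once guess of $y$---matches the paper's level of detail.
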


\begin{proof}
($\subseteq$) This inclusion is trivial because
$\lo_{\nlo}^{\nl}\subseteq \nlo$ holds by the definition of
$\lo_{\nlo}^{\nl}$.

($\supseteq$) Let $P=(I,SOL,m,goal)$ be any $\nlo$ problem. It suffices to show the existence of log-space oracle Turing machines that compute optimal solutions and their values relative to oracles in $\nl$.
Here, we consider only the case of $goal=\text{\sc max}$. We define two sets $A$ and $B$. Take a polynomial $p$ for which $y\in SOL(x)$ implies both $|y|\leq p(|x|)$ and $m(x,y)\leq p(|x|)$. Let $A$ be composed of all strings $(x,k)$ such that there exists a string $z\in SOL(x)$ with $k\leq p(|x|)$ satisfying  $m(x,z)=k$. By making a series of queries $(x,1),(x,2),\ldots,(x,p(|x|))$ one by one to oracle $A$, we can find the maximal number $k_0$ satisfying $(x,k_0)\in B$. Next, we define $B$ as the set of all strings $(x,k,u,b)$ with $b\in\{0,1\}$ and $k,|ub|\leq p(|x|)$ such that there is a string $z\in SOL(x)$ for which $m(x,z)=k$, $ub$ is an initial segment of $z$. Given the maximal number $k_0$, by making the $i$th query $(x,k_0,u_i,0)$ and $(x,k_0,u_i,1)$ for all $i\in[0,p(|x|)]_{\integer}$ sequentially, we reconstruct $z_0$ satisfying  $m(x,z_0)=k_0$. As the desired oracle, we take $A\oplus B$, which clearly belongs to $\nl$. Therefore, we find a maximal solution $z_0$ using only log space. This yields $P\in\lo_{\nlo}^{A\oplus B}\subseteq \lo_{\nlo}^{\nl}$.
\end{proof}

Here, we present three close connections between classes of decision problems and classes of optimization problems.

\begin{proposition}\label{equivalent-relation}
\begin{enumerate}\vs{-1}
  \setlength{\topsep}{-2mm}%
  \setlength{\itemsep}{0mm}
  \setlength{\parskip}{0cm}%

\item \cite{Tan07} $\dl=\nl$ iff $\lo_{\nlo}\cap\pbo = \nlo\cap\pbo$.

\item $\nc{1}=\dl$ iff $\nco{1}_{\nlo}\cap\pbo = \lo_{\nlo}\cap\pbo$ iff $\nco{1}_{\nlo} = \lo_{\nlo}$.

\item $\dl=\p$ iff $\lo_{\npo}\cap\pbo = \po_{\npo}\cap\pbo$ iff $\lo_{\npo} = \po_{\npo}$.
\end{enumerate}
\end{proposition}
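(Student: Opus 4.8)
The plan is to establish the three stated conditions as equivalent by closing the cycle of implications $\dl=\p \THEN \lo_{\npo}=\po_{\npo} \THEN \lo_{\npo}\cap\pbo = \po_{\npo}\cap\pbo \THEN \dl=\p$, in direct analogy with the proofs of parts (1) and (2). The middle implication is immediate, since intersecting both sides of the identity $\lo_{\npo}=\po_{\npo}$ with $\pbo$ at once yields $\lo_{\npo}\cap\pbo=\po_{\npo}\cap\pbo$. Thus the real work lies in the first and last implications.

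For $\dl=\p \THEN \lo_{\npo}=\po_{\npo}$, the inclusion $\lo_{\npo}\subseteq\po_{\npo}$ is free from $\fl\subseteq\fp$, so only $\po_{\npo}\subseteq\lo_{\npo}$ needs an argument. First I would invoke the standard fact that $\dl=\p$ forces $\fl=\fp$: for any $f\in\fp$ (whose output is automatically polynomially bounded), the bit-graph $A_f=\{(x,i,b)\mid \text{the }i\text{-th bit of }f(x)\text{ is }b\}$ lies in $\p=\dl$, and iterating a log-space decider for $A_f$ over all polynomially many output positions, while reusing the same work space, computes $f$ in log space. Given $P=(I,SOL,m,goal)\in\po_{\npo}$, the map sending $x$ to an optimal solution $y\in SOL^*(x)$ has output length bounded by the polynomial $q$ from the definition of $SOL$, hence lies in $\fp$; likewise $x\mapsto m^*(x)$ lies in $\fp$. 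Under $\fl=\fp$ both become log-space computable, so $P\in\lo_{\npo}$.

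For the crucial implication $\lo_{\npo}\cap\pbo=\po_{\npo}\cap\pbo \THEN \dl=\p$, I would encode an arbitrary language $L\in\p$ as a polynomially-bounded optimization problem. Define $P_L=(I,SOL,m,\text{\sc max})$ with $I=\Sigma^*$, a single trivial feasible solution $SOL(x)=\{\lambda\}$ for every $x$, and measure $m(x,\lambda)=2$ if $x\in L$ and $m(x,\lambda)=1$ otherwise. Since $L\in\p$, the measure $m$ is poly-time computable and takes values in $\{1,2\}$, so $P_L\in\npo\cap\pbo$; and because $SOL(x)$ is a singleton, its unique solution and the value $m^*(x)=m(x,\lambda)$ are trivially poly-time computable, whence $P_L\in\po_{\npo}\cap\pbo$. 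The hypothesis then places $P_L$ in $\lo_{\npo}$, so $m^*(x)$ is computable in log space; as $x\in L$ iff $m^*(x)=2$, this yields $L\in\dl$. Since $L\in\p$ was arbitrary and $\dl\subseteq\p$ always holds, $\dl=\p$ follows.

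The main obstacle is this last implication, where the delicacy is to design $P_L$ so that it is genuinely a member of $\po_{\npo}\cap\pbo$ while log-space solvability of $P_L$ directly decides $L$. The singleton-solution trick sidesteps any need to search a solution space, so that the only nontrivial content of $\lo_{\npo}$-membership is the log-space computation of $m^*$, which is exactly what is required to recover $L\in\dl$. A routine point to verify throughout is that $I\circ SOL\in\auxp$ and that the feasibility and measure conditions of the $\npo$ definition are met, but these are immediate for the trivial solution set.
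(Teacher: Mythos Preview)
Your argument is correct, and it is in the same spirit as the paper's proof of part~(3), though your route is more direct on both implications. For $\dl=\p\Rightarrow\lo_{\npo}=\po_{\npo}$, the paper does not argue via the generic fact $\dl=\p\Rightarrow\fl=\fp$ applied to an arbitrary $P\in\po_{\npo}$; instead it invokes the $\EXreduces^{\nc{1}}$-completeness of {\sc Min Weight-st-Cut} for $\po_{\npo}$ (Proposition~\ref{min-st-cut-is-po}), shows that this single problem lands in $\lo_{\npo}$ via the bit-graph oracle $A=\{\pair{x,1^i,b}\mid\text{the }i\text{th bit of }M(x)\text{ is }b\}\in\p=\dl$, and then lets the downward-closure of $\lo_{\npo}$ under $\EXreduces^{\nc{1}}$ (Lemma~\ref{downward-property}) do the rest. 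For the implication $\lo_{\npo}\cap\pbo=\po_{\npo}\cap\pbo\Rightarrow\dl=\p$, the paper likewise works with a canonical $\p$-complete problem (CVP) rather than an arbitrary $L\in\p$, building an optimization problem whose feasible solutions are the satisfying assignments of a given circuit. Your singleton-solution encoding of an arbitrary $L\in\p$ is cleaner and avoids any appeal to completeness machinery developed elsewhere in the paper; the paper's approach, by contrast, keeps the proof parallel to parts~(1)--(2), where the concrete complete problems DSTCON and USTCON play the analogous role. Both methods yield the same conclusion with essentially the same bit-graph trick underneath.
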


For Proposition \ref{equivalent-relation}(1)--(2), we use DSTCON and USTCON on unweighted directed graphs. Earlier, Jones \cite{Jon75} showed that $\mathrm{DSTCON}$ is $\nl$-complete under log-space reduction. His reduction can be improved to $\nc{1}$-reduction, and thus DSTCON is $\leq_{m}^{\nc{1}}$-complete for $\nl$. Reingold \cite{Rei08} proved that USTCON belongs to $\dl$. From its $\leq_{m}^{\nc{1}}$-hardness for $\dl$, we conclude that USTCON is $\leq_{m}^{\nc{1}}$-complete for $\dl$.
For Proposition \ref{equivalent-relation}(3), we use the {\em circuit value problem} ($CVP$) of determining whether a given Boolean circuit outputs $1$ on a certain input. Ladner \cite{Lad75} pointed out that this is $\leq^{\dl}_{m}$-complete for $\p$; more strongly, it is $\leq^{\nc{1}}_{m}$-complete for $\p$.

\begin{proofof}{Proposition \ref{equivalent-relation}}
(1) Here, we give a proof different from the one given in \cite{Tan07}. Our proof relies on Lemma \ref{LO-with-oracle-NL}. If $\dl=\nl$, then $\lo_{\nlo}^{\nl}$ equals $\lo_{\nlo}^{\dl}$, which coincides with $\lo_{\nlo}$. Since $\lo_{\nlo}^{\nl}\cap\pbo =\nlo\cap\pbo$ by Lemma \ref{LO-with-oracle-NL}, we obtain $\lo_{\nlo}\cap\pbo=\nlo\cap\pbo$. The desired consequence follows because $\lo_{\nlo}=\lo_{\nlo}^{\dl}$.

Conversely, assume that $\lo_{\nlo}\cap\pbo = \nlo\cap\pbo$. Lemma \ref{LO-with-oracle-NL} implies that $\lo_{\nlo}^{\nl}\cap\pbo = \lo_{\nlo}\cap\pbo$. Consider DSTCON, which is $\leq_{m}^{\ac{0}}$-complete for $\nl$. Define $P=(I,SOL,m,\text{\sc max})$ as follows. Let $I$ consist of all strings $(G,s,t)$ for which $G=(V,E)$, $s,t\in V$, and there is an $s$-$t$ path in $G$. Let $x=(G,s,t)$. Define $SOL(x)$ to be a collection of all $s$-$t$ path in $G$. Let $m(x,y)=2$ if $y\in SOL(x)$ and $m(x,y)=1$ otherwise. Note that $P$ is in $\nlo\cap\pbo$. By our assumption, $P$ belongs to $\lo_{\nlo}$. Hence, take a log-space Turing machine $M$ that computes optimal solutions of $P$. It follows that $M(x)\neq\bot$ iff $x\in\mathrm{DSTCON}$. Hence, DSTCON can be recognized using log space, yielding $\dl=\nl$.

(2) For this claim, we shall show separately that (i) if  $\nco{1}_{\nlo}\cap\pbo = \lo_{\nlo}\cap\pbo$, then $\nc{1}=\dl$ and (ii) if $\nc{1}=\dl$, then $\nco{1}_{\nlo} = \lo_{\nlo}$. Since $\nco{1}_{\nlo}=\lo_{\nlo}$ implies $\nco{1}_{\nlo}\cap\pbo = \lo_{\nlo}\cap\pbo$, the claim follows immediately.

(i) First, we assume that $\nco{1}_{\nlo}\cap\pbo = \lo_{\nlo}\cap\pbo$.  Since $\mathrm{USTCON}$ is $\leq_{m}^{\nc{1}}$-complete for $\dl$, it suffices for us to verify that $\mathrm{USTCON}$ falls into $\nc{1}$ using our assumption.
We intend to define a polynomially-bounded maximization problem $P=(I,SOL,m,\text{\sc max})$ in the following manner.  An instance $w$ to $P$ is a tuple $(G,s,t)$ for an undirected graph $G=(V,E)$, including an edge $(s,t)$, and two vertices $s,t\in V$. A feasible solution of $w$ is a path $y=(y_1,y_2,\cdots, y_k)$ in $G$ starting at $s=y_1$. If
$y$ is a path from $s$ to $t$ with $k\geq2$, we set its objective value $m(w,y)$ to be $2$; otherwise, we set  $m(w,y)=1$. Obviously, $m$ is polynomially bounded and belongs to $\auxfl$.
Note that $w\in \mathrm{USTCON}$ iff there exists an $s$-$t$  path $y$ for which  $m(w,y)=2$.
Next, we claim that $P$ is in $\lo_{\nlo}$. Consider a log-space deterministic Turing machine $M$ that computes 2-approximate solutions by simply outputting the special edge $(s,t)$.
Hence, $P$ falls into $\lo_{\nlo}\cap\pbo$.

Since $P\in\pbo$ and $\lo_{\nlo}\subseteq \nco{1}_{\nlo}$, our assumption implies that $P$ is in $\nco{1}_{\nlo}$.
There exists a uniform family $\{C_n\}_{n\in\nat}$  of $\nc{1}$-circuits finding maximal solutions of $P$ (if any). Using those $C_n$'s, we construct another uniform $\nc{1}$-circuit family $\{D_n\}_{n\in\nat}$ such that $D_n(w)=1$ iff $C_n(w)$ is an $s$-$t$ path by checking that every consecutive elements in $y$ are connected by a single edge and the first and the last elements are $s$ and $t$, respectively. Clearly, $\{D_n\}_{n\in\nat^{+}}$ recognizes $\mathrm{USTCON}$. This implies that $\mathrm{USTCON}\in\nc{1}$, as requested.

(ii)  Assuming $\nc{1}=\dl$, let us consider any problem $P=(I,SOL,m,goal)$ in $\lo_{\nlo}$. To simplify the following argument, we assume that all feasible solutions of $P$ are expressed {\em in binary}. Consider the case of $goal=\text{\sc min}$. Let $p$ be a polynomial that upper-bounds the size of any feasible solution of $P$.
Let $M$ be a log-space deterministic Turing machine computing minimal solutions of $P$. To build a solver for $P$, we first define an oracle $A$ as a collection of $\pair{x,1^i,b}$ for which there exists a binary string $y$ with $|y|\leq p(|x|)$ such that $M(x)$ outputs $y$ and the $i$th bit of $y$ is $b\in\{0,1\}$.  Since $A$ belongs to $\dl$, our assumption implies that $A$ falls into  $\nc{1}$.
Note that, if $M(x)=y$ with $y=y_1y_2y_3\cdots y_e$, then $y$ coincides with $A(\pair{x,1,y_1}) A(\pair{x,1^2,y_2})
A(\pair{x,1^3,y_3}) \cdots A(\pair{x,1^{e},y_e})$, where $A$ is treated as its characteristic function.

To recover the $i$th bit $y_i$ of $M(x)$ for all $i\in[e]$, we compute both $A(\pair{x,1^i,0})$ and $A(\pair{x,1^i,1})$ and decide that $y_i$ takes a value $b\in\{0,1\}$ exactly when  $A(\pair{x,1^i,b})=1$ and $A(\pair{x,1^i,\overline{b}})=0$.  This procedure can be done in parallel with accesses to the oracle $A$ suing an $\nc{1}$ circuit. Since $A\in\nc{1}$, the procedure can be implemented using uniform $\nc{1}$-circuits without
any oracle Since $\nc{1}=\dl$ implies $\fnc{1}=\fl$, the measure function $m$ is in $\auxfnc{1}$. From this, we conclude that $m(x,M(x))$ is computed from $x$ by a suitable $\nc{1}$-circuit.
Thus, $P$ belongs to $\nco{1}_{\nlo}$.

(3) Notice that, in this proof, we consider $\npo$ problems instead of $\nlo$ problems. Similarly to (2), we wish to prove that (i) $\lo_{\npo}\cap\pbo = \po_{\npo}\cap\pbo$ implies $\dl=\p$ and (ii) $\dl=\p$ implies $\lo_{\npo}=\po_{\npo}$.

(i) Assume that $\lo_{\npo}\cap\pbo = \po_{\npo}\cap\pbo$.
Consider CVP, which is $\leq_{m}^{\nc{1}}$-complete for $\p$.
Founded on CVP, we intend to build a maximization problem $P$ of the form $(I,SOL,m,\text{\sc max})$ lying  in  $\po_{\npo}\cap\pbo$.
Define $I$ as the set of all Boolean circuits and, for any circuit $C\in I$, define  $SOL(x)$ to be a set of all variable assignments $\sigma$ setting all input variables of $C$ to either $0$ or $1$ for which $\sigma$ forces $C$ to output $1$. Since $\mathrm{CVP}\in\p$, $P$ belongs to $\po_{\npo}\cap\pbo$.
By our assumption, $P$ falls into $\lo_{\npo}\cap\pbo$. This yields a log-space deterministic Turing machine solving $P$. By running $M$ and checking $M$'s outputs, we can solve $\mathrm{CVP}$ using only log space. Hence, $\mathrm{CVP}$ is in $\dl$, implying that $\dl=\p$.

(ii) Assuming $\dl=\p$, let us consider {\sc Min Weight-st-Cut}, defined in Section \ref{sec:why-NC1}, which is $\EXreduces^{\nc{1}}$-complete for $\po_{\npo}$ by Proposition \ref{min-st-cut-is-po}.
Let $M$ be any  polynomial-time deterministic Turing machine computing minimal  solutions of {\sc Min Weight-st-Cut}.
Let $p$ be a polynomial satisfying that,  for all instances $x$ to {\sc Min Weight-st-Cut}, $p(|x|)$ bounds $M$'s output size. We define a problem $A$ to be composed of all strings of the form $\pair{x,1^i,b}$ for which $M(x)$ outputs a certain string $y$ and $y$'s $i$th bit is $b$.
It is not difficult to show that $A$ is in $\p$. Our assumption of $\dl=\p$ then implies that $A$ indeed belongs to $\dl$.

The following log-space algorithm can compute optimal solutions of {\sc Min Weight-st-Cut}: on input $x$ to {\sc Min Weight-st-Cut}, reconstruct $M(x)$ bit by bit by incrementing $i$ by one from $1$ to $p(|x|)$, check whether $\pair{x,1^i,0}\in A$ or $\pair{x,1^i,1}\in A$ or neither, and determine the $i$th bit of $M(x)$ to be $b\in\{0,1\}$ if $\pair{x,1^i,b}\in A$ and $\pair{x,1^i,\overline{b}}\notin A$.  Since $A\in \dl$, this procedure requires only log space. Hence, {\sc Min Weight-st-Cut} belongs to $\lo_{\npo}$.
\end{proofof}


Here, we further refine Proposition \ref{equivalent-relation}.

\begin{theorem}\label{refined-relation}
\begin{enumerate}\vs{-1}
  \setlength{\topsep}{-2mm}%
  \setlength{\itemsep}{0mm}
  \setlength{\parskip}{0cm}%

\item  $\dl\neq \nl$  iff  $\lo_{\nlo}\cap\pbo \neq
\lsas_{\nlo}\cap\pbo \neq
\apxl_{\nlo}\cap\pbo \neq \nlo\cap\pbo$.

\item $\nc{1}\neq\dl$ iff the following classes are mutually distinct: $\nco{1}_{\nlo}$, $\ncas{1}_{\nlo}$, $\apxnc{1}_{\nlo}$, and $\lo_{\nlo}$.

\item  $\dl\neq \p$ iff the following classes are mutually distinct: $\lo_{\npo}$, $\lsas_{\npo}$, $\apxl_{\npo}$, and $\po_{\npo}$.
\end{enumerate}
\end{theorem}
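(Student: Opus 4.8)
The plan is to prove all three biconditionals by a single two-step scheme that leans on the fact that Proposition~\ref{equivalent-relation} already pins down the two \emph{endpoints} of each list. For the backward direction the work is essentially free: if all the listed classes are mutually distinct, then in particular the two extreme classes differ, namely $\lo_{\nlo}\cap\pbo \neq \nlo\cap\pbo$ for (1), $\nco{1}_{\nlo} \neq \lo_{\nlo}$ for (2), and $\lo_{\npo} \neq \po_{\npo}$ for (3). In each case the relevant clause of Proposition~\ref{equivalent-relation} converts this single inequality into the desired complexity separation ($\dl\neq\nl$, $\nc{1}\neq\dl$, $\dl\neq\p$). So everything reduces to the forward direction.

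For the forward direction, fix the separation hypothesis and fix a decision problem $L$ that lives in the \emph{stronger} of the two models but provably outside the \emph{weaker} one: $L=\mathrm{DSTCON}\in\nl\setminus\dl$ for (1), $L=\mathrm{USTCON}\in\dl\setminus\nc{1}$ for (2), and $L=\mathrm{CVP}\in\p\setminus\dl$ for (3), exactly the complete problems already introduced in the discussion of Proposition~\ref{equivalent-relation}. I would then manufacture from $L$ a small number of ``gap'' optimization problems whose approximability at the weaker computation level mirrors the decidability of $L$ at that level. Throughout, feasible solutions are of two kinds: a cheap \emph{default} solution of measure $k(|x|)$ that is always producible in the weaker model, and a \emph{good} solution of larger measure whose existence can be certified only by deciding $L$ on $x$ (a path for $\mathrm{DSTCON}/\mathrm{USTCON}$, a gate-value certificate for $\mathrm{CVP}$, verifiable in the appropriate class).

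Three gadgets do all the work. The \emph{vanishing-gap} gadget sets the optimum to $k(|x|)+1$ when $x\in L$ and to $k(|x|)$ otherwise, with $k$ growing; the default solution is an $r$-approximation for every $r>1$ once $|x|$ is large (the finitely many small inputs being handled by brute force within the weak model), so the problem lies in the scheme class ($\lsas$, $\ncas{1}$), yet computing the optimum exactly decides $L$, so it is \emph{not} in the exact class ($\lo$, $\nco{1}$); this separates exact from scheme. The \emph{constant-gap} gadget sets the optimum to $2k(|x|)$ when $x\in L$ and to $k(|x|)$ otherwise; the default solution is a $2$-approximation in the weak model, placing the problem in the constant-factor class ($\apxl$, $\apxnc{1}$), while any $\tfrac32$-approximation would expose the good solution precisely when $x\in L$ and hence decide $L$, so no scheme exists; this separates scheme from constant-factor. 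The \emph{wide-gap} gadget sets the optimum to $\rho(|x|)$ when $x\in L$ and to $1$ otherwise, with \emph{no} cheap default, so that producing any super-unit solution requires deciding $L$; the problem is solvable exactly in the stronger model and therefore sits in the top exact class ($\nlo\cap\pbo$, $\lo_{\nlo}$, $\po_{\npo}$), but any fixed-ratio approximation at the weak level would, for large $|x|$, tell $\rho(|x|)$ from $1$ and thus decide $L$, so it lies outside the weak constant-factor class. For list (1) these three consecutive separations already suffice, since that list is a genuine inclusion chain $\lo_{\nlo}\subseteq\lsas_{\nlo}\subseteq\apxl_{\nlo}\subseteq\nlo$ (intersected with $\pbo$). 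For lists (2) and (3), however, the top exact class is \emph{not} above the weaker-model classes, so ``mutually distinct'' also demands the cross-separations $\lo_{\nlo}\neq\nco{1}_{\nlo},\ncas{1}_{\nlo},\apxnc{1}_{\nlo}$ and $\po_{\npo}\neq\lsas_{\npo},\apxl_{\npo}$; here the single wide-gap witness, being in the top exact class but outside the weak constant-factor class, settles all of these at once. One bookkeeping caveat: in (1) every measure must stay polynomially bounded, so the wide-gap ratio $\rho$ must be a polynomial such as $\rho(n)=n$, which is still enough because any fixed approximation constant is eventually dominated by $\rho(n)$; lists (2) and (3) carry no such restriction.

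The main obstacle will be the negative half of each membership check, that is, showing the gadget problems are \emph{not} in the smaller classes: one must convert a hypothetical weak-model approximation algorithm into a weak-model decision procedure for $L$, which forces care that the designed gap is actually ``read off'' correctly under the peculiarities of the weak model—the obliviousness and read-once auxiliary tape of log-space computation in (1) and (3), and the uniformity of $\nc{1}$-circuits in (2). The positive half is more routine but still needs attention: verifying that the scheme and approximation algorithms genuinely run in the weaker model (including the brute-force treatment of the finitely many small instances in the vanishing-gap case), and that each constructed problem really belongs to $\nlo$ (resp.\ $\npo$), i.e.\ that feasibility and the measure function are computable at the required level. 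Once these verifications are in place, combining the three gadget witnesses with the backward direction above yields each biconditional.
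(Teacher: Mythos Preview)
Your proposal is correct and follows essentially the same approach as the paper: the backward direction via Proposition~\ref{equivalent-relation}, and the forward direction via gap-controlled witness problems built on the canonical complete problems $\mathrm{DSTCON}$, $\mathrm{USTCON}$, $\mathrm{CVP}$. The paper instantiates your three gadgets with specific problems introduced earlier---e.g.\ {\sc Max UApp-Vertex} (Proposition~\ref{ncsas-example}) plays the vanishing-gap role for part~(2), {\sc Max UB-Vertex} (Proposition~\ref{max-bvertex-complete}) the wide-gap role, and for part~(3) the paper defines {\sc MinASCVP} and {\sc Min1CVP} exactly as your vanishing- and constant-gap gadgets---while you describe the constructions generically; the underlying mechanism is identical.
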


Given any string $y\in\{0,1\}^{+}$, we set $del(y)$ to be the string obtained from $y$ by deleting all $0$s from the leftmost bit of $y$ until the first $1$. For example, $del(0010)=10$, $del(01001)=1001$, and $del(00)=\lambda$. We then define $rep_{+}(y)$ to be {\em  one plus} the natural number represented in binary as $del(y)$; that is, $rep_{+}(y)=1+rep(del(y))$.

\begin{proofof}{Theorem \ref{refined-relation}}
(1) Tantau \cite[Theorem 4.1]{Tan07} showed that, under the assumption of $\dl\neq\nl$, the following four classes are mutually distinct:  $\lo_{\nlo}\cap\pbo$, $\lsas_{\nlo}\cap\pbo$, $\apxl_{\nlo}\cap\pbo$, and $\nlo\cap\pbo$.
To see the converse, assume that those classes are different.  If $\dl=\nl$, then Proposition \ref{equivalent-relation}(1) implies $\lo_{\nlo}\cap\pbo =\nlo\cap\pbo$. This obviously contradicts our assumption because $\lo_{\nlo}\subseteq \lsas_{\nlo} \subseteq \apxl_{\nlo} \subseteq\nlo$.
Hence, the desired separation between $\dl$ and $\nl$ follows immediately.

(2) We shall show separately two directions of ``iff.''

(If--part) To show that $\nc{1}\neq\dl$, we assume that the following four classes are mutually different: $\nco{1}_{\nlo}o$, $\ncas{1}_{\nlo}$, $\apxnc{1}_{\nlo}$, and  $\lo_{\nlo}$. Toward a contradiction, we assume that $\nc{1}=\dl$. Proposition \ref{equivalent-relation}(2) implies that $\nco{1}_{\nlo} = \lo_{\nlo}$. This is a clear contradiction; therefore, we obtain $\nc{1}\neq\dl$, as requested.

(Only If--part)
Assume that $\nc{1}\neq\dl$. First, notice that, by Proposition \ref{equivalent-relation}(2), we obtain $\nco{1}_{\nlo}\cap\pbo \neq \lo_{\nlo}\cap\pbo$. Since $\nco{1}_{\nlo}\subseteq \ncas{1}_{\nlo}\subseteq \apxnc{1}_{\nlo}$, it suffices to verify four  inequalities: (a) $\nco{1}_{\nlo}\cap \pbo \neq \ncas{1}_{\nlo}\cap \pbo$, (b) $\ncas{1}_{\nlo}\cap \pbo \neq \apxnc{1}_{\nlo}\cap \pbo$, (c) $\lo_{\nlo}\cap\pbo\nsubseteq \apxnc{1}_{\nlo}\cap \pbo$, where (c) implies $\lo_{\nlo}\neq\lo_{\nlo}$ and also $\ncas{1}_{\nlo} \neq \lo_{\nlo}$.

(a) To lead to a contradiction, let us assume that $\nco{1}_{\nlo}\cap \pbo = \ncas{1}_{\nlo}\cap \pbo$. We express $\text{\sc Max UApp-Vertex}$ given in Section \ref{sec:LSAS} as $(I,SOL,m,\text{\sc max})$. Since {\sc Max UApp-Vertex} belongs to $\ncas{1}_{\nlo}\cap\pbo$ by Proposition \ref{ncsas-example}(1), our assumption places {\sc Max UApp-Vertex} in $\nco{1}_{\nlo}\cap\pbo$.

Take an $\nc{1}$ circuit $C$ computing optimal solutions of {\sc Max UApp-Vertex}. Consider the set $L$ composed of $(G,s,w,\ell)$ such that $(G,s,w)\in I$, $\ell\in\nat^{+}$, and there exists a vertex $t\in V$ for which $w(t)\geq \ell$ and $s$ and $t$ are connected in $G$. This problem $L$ falls into $\nc{1}$, because it can be computed by the following $\nc{1}$-circuit: on input $x=(G,s,w,\ell)$, we first compute $y=C(G,s,w)$, extract a vertex $t\in V$ from $y$ satisfying $w(t)=m^*(x)$, and check whether $w(t)\geq \ell$. It therefore suffices to verify that $\mathrm{USTCON}\leq_{m}^{\nc{1}} L$, because this implies that $\mathrm{USTCON}\in\nc{1}$, a contradiction.
From any instance $(G,s,t)$ to $\mathrm{USTCON}$, we define $(G,s,w,k)$ as follows. Define $w(t)=1+\log\log|V|$ and $w(v)=\log\log|V|$ for all $v\in V-\{t\}$. Let $k=w(t)$.
Clearly, $(G,s,w,k)$ can be computed from $(G,s,t)$ using an appropriate $\nc{1}$-circuit.
It is not difficult to show that $(G,s,t)\in \mathrm{USTCON}$ iff $(G,s,w,k)\in L$. Hence, $\mathrm{USTCON}\leq_{m}^{\nc{1}}\dl$ follows.

(b) Since $\mathrm{USTCON}$ in $\dl$, it follows from our assumption that  $\mathrm{USTCON}\notin\nc{1}$. From $\mathrm{USTCON}$, we construct the maximization problem $P=(I,SOL,m,\text{\sc max})$ defined as in the proof (i) of Proposition  \ref{equivalent-relation}(2). This problem $P$ actually belongs to $\apxnc{1}_{\nlo}\cap\pbo$, because there is an NC$^{1}$ algorithm computing 2-approximate solutions to $P$. For such a solution $y$ on an instance $x$, we obtain $m(x,y)/m^*(x)\leq 2$, which means $m(x,y)\leq 2$. If $P\in\ncas{1}_{\nlo}$, then, by setting $r=3/2$, we obtain $r$-approximate solutions $z$ on input $x$ using an $\nc{1}$-circuit. If $x\in\mathrm{USTCON}$, then, since $R(x,z)\leq 3/2$,  $m(x,z)=2$ follows; thus, $z$ must be a $s$-$t$ path.
If $x\notin\mathrm{USTCON}$, then $m(x,z)=1$ and $z\neq t$. This implies that $\mathrm{USTCON}\in\nc{1}$, a contradiction. Therefore, $P$ is not in $\ncas{1}_{\nlo}$.

(c) Assuming that $\lo_{\nlo}\cap \pbo \subseteq \apxnc{1}_{\nlo}\cap \pbo$, we plan to derive that $\mathrm{DSTCON}\in\nc{1}$. Consider {\sc Max UB-Vertex} and set it as $(I_0,SOL_0,m_0,\text{\sc max})$. Proposition \ref{max-bvertex-complete} shows its $\EXreduces^{\nc{1}}$-completeness for $\lo_{\nlo}\cap\pbo$. From our assumption, we derive that this problem is also in $\apxnc{1}_{\nlo}$.
With a suitable constant $r>1$, take an NC$^{1}$ circuit $C$ producing  $r$-approximate solutions to {\sc Max UB-Vertex}.
Let us  construct another NC$^{1}$ circuit solving DSTCON. Given any instance $(G,s,t)$ to DSTCON with $G=(V,E)$, we define a weight function $w$ as $w(t)=r+1$ and $w(v)=1$ for all $v\in V-\{t\}$. Here, we consider $x=(G,s,w)$. We run $C$ to find an $r$-approximate solution, say, $u\in V$.  If $(G,s,t)\in \mathrm{DSTCON}$, then, since $\frac{m_0^*(x)}{m_0(x,u)}\leq r$, it follows that $m_0(x,u)\geq \frac{m_0^*(x)}{r}=1+\frac{1}{r}>1$. Since $w(u)>1$, $u$ must be $t$. If $(G,s,t)\notin\mathrm{DSTCON}$, then we can conclude that $u$ cannot be $t$. Hence, $\mathrm{DSTCON}$ must be in $\nc{1}$, as requested.

(3) Note that $\lo_{\npo}\subseteq \po_{\npo}$ and $\lo_{\npo}\subseteq \lsas_{\npo}\subseteq \apxl_{\npo}$. In Proposition \ref{equivalent-relation}(3), we have already proven that $\dl\neq\p$ iff $\lo_{\npo}\neq\po_{\npo}$. It still remains to show that $\dl\neq\p$ implies (a) $\po_{\npo}\nsubseteq \apxl_{\npo}$, (b)  $\lo_{\npo}\neq \lsas_{\npo}$, and (c) $\lsas_{\npo}\neq \apxl_{\npo}$.
Note that $\po_{\npo}\nsubseteq \apxl_{\npo}$ yields both $\apxl_{\npo}\neq\po_{\npo}$ and $\lsas_{\npo}\neq\po_{\npo}$.

(a) For this claim, it suffices to prove the following statement: (*)  $\po_{\npo}\subseteq \apxl_{\npo}$ implies $\dl=\p$.
Let us assume that $\po_{\npo}\subseteq \apxl_{\npo}$. Our goal is to show that $\mathrm{CVP}$ belongs to $\dl$ since $\dl=\p$ follows from the $\leq_{m}^{\nc{1}}$-completeness of $\mathrm{CVP}$ for $\p$.
Consider $\text{\sc MinCVP}$, which is closely related to the decision problem {\sc CVP}, defined in Appendix.
Notice that $\text{\sc MinCVP}$ is $\EXreduces^{\nc{1}}$-complete for  $\po_{\npo}$ by Lemma \ref{MinCVP-complete}.  Let $\text{\sc MinCVP} = (I,SOL,m,\text{\sc min})$. Since $\mathrm{MinCVP}\in \apxl_{\npo}$ by our assumption, we can take a function $h\in\fl$ and a constant $\gamma\geq 1$ such that $R(x,h(x))\leq \gamma$ for all $x\in I$, where $R$ indicates the performance ratio for {\sc MinCVP}. Take an integer $k\geq1$ for which $\gamma<2^k$.

Let $z=\pair{C_x}$ be an instance of $\mathrm{CVP}$. This implies that  $\pair{C_x}\in\mathrm{CVP}$ iff $1\in SOL^*(\pair{C_x,1^{n},1})$, where $n=|x|$.
We define $C'_x$ as follows. For the outcome $z$ of $C_x$, $C'_x$ outputs $0^k$ if $z=0$; $10^{k-1}$ otherwise. The construction of $C'_x$ can be done using log space. Given any outcome $z'$ of $C'_x$, $rep(z')$ is either $0$ or $2^{k}$. Let $w=\pair{C'_x,1^n,1^k}$, an instance to {\sc MinCVP}. It holds that, since $R(w,h(w))\leq \gamma$, $C_x$ outputs $0$ (resp., $1$) iff $h(w)=0^k$ (resp., $10^{k-1}$). Hence, we can determine whether $\pair{C_x}\in \mathrm{CVP}$ simply by constructing $w=\pair{C'_x,1^n,1^k}$ and checking whether $h(w)=10^{k-1}$. This implies that $\mathrm{CVP}$ belongs to $\dl$, as requested.

(b) In this case, we define $\text{\sc MinASCVP}=(I,SOL,m,\text{\sc min})$ as follows.
Define $I=\{(C,x)\mid C: \text{circuit with $|x|$ inputs and $1$ output},\, x\in \{0,1\}^*\}$ and $SOL(C,x)=\{y1^{|x|}\mid y\geq C(x)\}$ for any $(C,x)\in I$.
Given any $(C,x)\in I$ and any $y\in SOL(C,x)$, define $m((C,x),y)= 2^{|x|+1}+C(x)$. Note that $\max_{y\in SOL(C,x)}\{m((C,x),y)\}\leq (1+\frac{1}{2^{n+1}})\min_{y\in SOL(C,x)}\{m((C,x),y)\}$. Consider the following algorithm $M$: on input $((C,x),r)$ with $r\in\rational^{>1}$, if $n+1>\log(\frac{1}{r-1})$, then $M$ outputs $1^{|x|+1}$; otherwise, $M$ computes $C(x)$ by brute force and outputs $C(x)1^{|x|}$. Since $C(x)$ can be computed deterministically within $O(n^2)$ steps, $M$ requires the total tape space of $O(\log{n})+O(n^2)$, which equals $O(\log{n})+O(\log^2(\frac{1}{r-1}))$.
I also follows from  $1+2^{-(n+1)}<r$ that $M$ produces an $r$-approximate solution to $x$.
Hence, {\sc MinASCVP} belongs to $\lsas_{\npo}$.

Next, we claim that $\text{\sc MinASCVP}\in\lo_{\npo}$ implies $\mathrm{CVP}\in\dl$. Assume that $\text{\sc MinASCVP}\in\lo_{\npo}$ and take a log-space deterministic Turing machine $N$ that solves {\sc MinASCVP}. The following log-space algorithm then solves CVP. On input $z=(C,x)$, run $N$ on $z$ and output $y$ if $N$'s output is of the form $y1^{|x|}$. It is easy to verify that this algorithm is correct. Thus, CVP is in $\dl$. This contradicts our assumption that $\dl\neq\p$ since CVP is $\p$-complete. Therefore, {\sc MinASCVP} does not belong to $\lo_{\npo}$.

(c) Toward a contradiction, we assume that $\lsas_{\npo}=\apxl_{\npo}$.
Here, we define a new problem $\text{\sc Min1CVP}=(I,SOL,m,\text{\sc min})$ by setting $I=\{(C,x)\mid C:\,\text{circuit with $|x|$ inputs and 1 output},\, x\in\{0,1\}^{+}\}$, $SOL(C,x)=\{y1^{|x|}\mid y\geq C(x)\}$,  and $m((C,x),y1^{|x|}) = rep_{(+)}(y1^{|x|})$ for any $(C,x)\in I$ and any $y1^{|x|}\in SOL(C,x)$.

To claim that {\sc Min1CVP} belongs to $\apxl_{\npo}$, consider the following  log-space deterministic Turing machine $M$: on each input $(C,x)\in I$, $M$  produces $1^{|x|+1}$ on its output tape. Since $rep_{(+)}(M(C,x))=2^{|x|+2}$ and $rep_{(+)}(y1^{|x|})\in\{2^{|x|+2},2^{|x|+1}\}$, $M(C,x)$ is a $2$-approximate solution to $(C,x)$. Thus, {\sc Min1CVP} is in $\apxl_{\npo}$.
By our assumption, {\sc Min1CVP} is also in $\lsas_{\npo}$. There is a log-space deterministic Turing machine $N$ that produces $3/2$-approximate solutions to {\sc Min1CVP}. From this $N$, we define $N'$ as follows. On input $z=(C,x)$, $N'$ first  runs $N$, and outputs $1$ if $N(C,x)$ equals $1^{|x|+1}$, and outputs $0$ otherwise. Clearly, $N'$ solves $\mathrm{CVP}$. Hence, $\mathrm{CVP}$ belongs to $\dl$, a contradiction against $\dl\neq\p$ because $\mathrm{CVP}$ is $\leq_{m}^{\dl}$-complete for $\p$.
\end{proofof}


To close this section, we wish to demonstrate further separations with {\em no unproven assumption}. Notice that $\ac{0}$ is known to be {\em properly} included inside $\nc{1}$ because the parity function, which is in $\nc{1}$,  requires at least non-uniform constant-depth circuits of super-polynomial size \cite{Ajt83,FSS84}.
The following proof relies on this fact.

\begin{theorem}\label{AC0-separation}
$\aco{0}_{\nlo}\neq \acas{0}_{\nlo}\neq \apxac{0}_{\nlo} \neq \nco{1}_{\nlo}$.
\end{theorem}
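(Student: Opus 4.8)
The plan is to derive all three separations from the single unconditional fact $\ac{0}\subsetneq\nc{1}$, witnessed by the parity language $\mathrm{PAR}=\{x\in\{0,1\}^{+}\mid x \text{ has an odd number of } 1\text{'s}\}$, which lies in $\nc{1}$ but not in $\ac{0}$ \cite{Ajt83,FSS84}. For each of the three inequalities I would exhibit a concrete $\nlo$ problem that provably sits in the larger class yet, if it belonged to the smaller class, would force $\mathrm{PAR}\in\ac{0}$. This mirrors the conditional ladder of Theorem \ref{refined-relation}(2), but pushed one level down to the $\ac{0}$/$\nc{1}$ boundary and made unconditional.

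The crucial design issue, and the main obstacle, is that every $\ac{0}$-level approximation class requires the reported value $m(x,M(x))$ to be computable in $\ac{0}$, whereas $\ac{0}$ cannot count (summation is only $\tc{0}$ by Lemma \ref{operation-complexity}). Hence I cannot let the measure itself depend on parity through a cardinality. The device that resolves this is the \emph{prefix-parity run}: for an instance $x=x_{1}\cdots x_{n}$, a candidate $y=(y_{0},y_{1},\ldots,y_{n})$ is declared \emph{valid} iff $y_{0}=0$ and $y_{i}=y_{i-1}\oplus x_{i}$ for all $i\in[n]$. Validity is a conjunction of $n$ local constraints, hence checkable by an unbounded fan-in $\ac{0}$ circuit, and the unique valid run satisfies $y_{n}=\mathrm{PAR}(x)$. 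Thus the existence of a valid run ending in $1$ is exactly the event $\mathrm{PAR}(x)=1$, while every measure I attach is a flag-type or read-off value computable in $\fac{0}$.

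With this gadget I would set up three problems. For $\apxac{0}_{\nlo}\neq\nco{1}_{\nlo}$, take $P_{c}$ with $SOL(x)=\{0,1\}$ (a guessed bit) and $m(x,b)=2^{n}$ if $b=\mathrm{PAR}(x)$ and $m(x,b)=1$ otherwise; since $\mathrm{PAR}\in\nc{1}$, the optimal bit and the value $2^{n}$ are $\nc{1}$-computable, so $P_{c}\in\nco{1}_{\nlo}$, but any fixed-ratio $\gamma$ $\ac{0}$-approximation must, for all $n$ with $2^{n}>\gamma$, output $C_{n}(x)=\mathrm{PAR}(x)$, placing $\mathrm{PAR}$ in $\ac{0}$. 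For $\acas{0}_{\nlo}\neq\apxac{0}_{\nlo}$, take $P_{b}$ whose solutions are a trivial symbol $\bot$ (always feasible, measure $1$) together with valid runs ending in $1$ (measure $2$); outputting $\bot$ is a trivial $\ac{0}$ $2$-approximation, so $P_{b}\in\apxac{0}_{\nlo}$, yet an $\ac{0}$ scheme run at ratio $3/2$ would force the returned measure to be $2$ exactly when $\mathrm{PAR}(x)=1$, again deciding $\mathrm{PAR}$ in $\ac{0}$. For $\aco{0}_{\nlo}\neq\acas{0}_{\nlo}$, take $P_{a}$ identical to $P_{b}$ except that the trivial measure is $W_{n}=\lceil\log\log n\rceil$ and the run measure is $W_{n}+1$; since the optimum-to-trivial ratio is $1+1/W_{n}\to 1$, for every fixed $r>1$ an $\ac{0}$ scheme simply outputs the trivial solution once $n\ge N_{r}$ and hardwires the finitely many shorter lengths, so $P_{a}\in\acas{0}_{\nlo}$, whereas an exact $\ac{0}$ solver would output a nontrivial run precisely when $\mathrm{PAR}(x)=1$, contradicting $\mathrm{PAR}\notin\ac{0}$.

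Beyond the counting obstacle already noted, the remaining care is bookkeeping: verifying that each $P$ lies in $\nlo$ (validity in $\auxl$, measure in $\auxfl$, all polynomially bounded except $P_{c}$), and confirming the uniformity condition of $\acas{0}_{\nlo}$ for $P_{a}$, where for fixed $r$ the threshold $N_{r}$ is a constant and the short-length answers form a finite table realizable by constant-size circuits. Each non-membership argument then ends identically, by reading off either the returned bit, the returned flag, or the $\ac{0}$-computable returned value to decide $\mathrm{PAR}$, so that all three separations collapse to the single inequality $\ac{0}\neq\nc{1}$.
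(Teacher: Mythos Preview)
Your proposal is correct and follows essentially the same three-step strategy as the paper: derive each inequality from $\mathrm{PAR}\notin\ac{0}$ by designing an $\nlo$ problem whose optimum encodes parity, with (i) a gap growing unboundedly in $n$ for $\apxac{0}_{\nlo}\neq\nco{1}_{\nlo}$, (ii) a fixed constant gap for $\acas{0}_{\nlo}\neq\apxac{0}_{\nlo}$, and (iii) a gap shrinking to $1$ for $\aco{0}_{\nlo}\neq\acas{0}_{\nlo}$. The paper's witnesses are the minimization problems {\sc Min m-Parity}, {\sc Min 1-Parity}, and {\sc Min Bit-Parity}, whose solution sets are cut off by $rep_{+}(\pi^{*}(x))$ or $rep(1\pi(x))$ and whose measures are the $rep$-values themselves; your witnesses are maximization problems with a discrete flag measure and the prefix-parity-run certificate.

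The substantive difference is packaging rather than idea. Your prefix-parity-run gadget makes the $\auxl$/$\ac{0}$ local checkability of solutions completely explicit, and your flag measures ($1$ vs.\ $2^{n}$, $1$ vs.\ $2$, $W_{n}$ vs.\ $W_{n}+1$) sidestep any arithmetic in $m(x,M(x))$ beyond reading off a bit; the paper instead lets the measure be $rep_{+}(y)$ and pushes the parity dependence into the constraint defining $SOL(x)$. Both routes are clean. Two small bookkeeping points you should state explicitly in a full write-up: your problems are maximization problems (this is needed for the ratio arguments as written), and $W_{n}=\lceil\log\log n\rceil$ must be adjusted at small $n$ to stay in $\nat^{+}$ (e.g., take $\max\{1,\lceil\log\log n\rceil\}$). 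The paper handles the small-$n$ case of the scheme exactly as you do, by observing that for fixed $r$ the threshold is constant and parity of boundedly many bits is constant-depth.
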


Concerning $\nlo$ and $\npo$,  most optimization and approximation classes enjoy the following {\em upward separation property}: if $\DD^{(1)}_{\nlo}\neq \DD^{(2)}_{\nlo}$, then $\DD^{(1)}_{\npo}\neq \DD^{(2)}_{\npo}$. Theorem \ref{AC0-separation} thus yields the separations $\aco{0}_{\npo}\neq \acas{0}_{\npo} \neq\apxac{0}_{\npo} \neq \nco{1}_{\npo}$.

To derive Theorem \ref{AC0-separation}, we shall use the {\em parity function} $\pi$, which is defined as $\pi(x_1,x_2,\ldots,x_n) = x_1\oplus x_2\oplus \cdots \oplus x_n$, where each variable  $x_i$ takes a binary value.
The fact that the parity function is out of $\ac{0}$ was first claimed by  Furst, Saxe, and Sipser \cite{FSS84} as well as Ajtai \cite{Ajt83}.
Using this $\pi$, we further define
$\pi^*(x_{11},\ldots,x_{1n},x_{21},\ldots,x_{2n},\ldots,x_{n1},\ldots,x_{nn})$ to be the $n$-bit string $\pi(x_{11},\ldots,x_{1n}) \pi(x_{21},\ldots,x_{2n})\cdots \pi(x_{n1},\ldots,x_{nn})$. It is not difficult to show that $\pi^*$ is in $\fnc{1}$ but not in $\fac{0}$ because $\pi$ resides in the difference $\nc{1}-\ac{0}$.

\begin{proofof}{Theorem \ref{AC0-separation}}
The subsequent proof consists of three parts: (1) $\nco{1}_{\nlo}\nsubseteq \apxac{0}_{\nlo}$,  (2) $\acas{0}_{\nlo}\neq \apxac{0}_{\nlo}$, and (3) $\aco{0}_{\nlo}\neq \acas{0}_{\nlo}$.

(1) For this separation,  we consider a minimization problem $\text{\sc Min m-Parity}=(I,SOL,m,\text{\sc min})$ defined as follows.
Let $I = \bigcup_{n\in\nat^{+}}\{0,1\}^{n^2}$ and let $SOL(x) = \{y\in\{0,1\}^{|x|} \mid  rep_{+}(y)\geq rep_{+}(\pi^*(x))\}$ for each $x\in I$ with  $|x|=n^2$. Let $m(x,y) = rep_{+}(y)$ for all $y\in SOL(x)$. Clearly, $I$ is in $\dl$, $I\circ SOL$ is in $\auxl$, and $m$ is in $\auxfl$. This indicates that {\sc Min m-Parity} is an $\nlo$ problem.
Next, we shall  argue that $\text{\sc Min m-Parity}\in\nco{1}_{\nlo}$.  Note that, for every $x\in I$, $SOL^*(x)=\{\pi^*(x)\}$ holds. Since $\pi^*$ is in $\fnc{1}$, it follows that $\text{\sc Min m-Parity}$ is $\nc{1}$-solvable.

Let us prove that $\text{\sc Min m-Parity}\not\in\apxac{0}_{\nlo}$. To lead to a contradiction, we assume otherwise; that is, $\text{\sc Min m-Parity}$ is in $\apxac{0}_{\nlo}$. There exist a constant $\gamma>1$ and a uniform family $\{C_n\}_{n\in\nat^{+}}$ of $\ac{0}$-circuits such that,  for every $x\in I$,  $C_{|x|}(x)$ computes a string $y$ in $SOL(x)$ satisfying that (*)
$rep_{+}(y)/\gamma \leq rep_{+}(\pi^*(x)) \leq rep_{+}(y)$. Take any number $n$ satisfying $2^n > \gamma$ and any string $x\in\{0,1\}^n$. Consider $x^n$ and define  $y = C_{n^2}(x^n)$.
If $\pi(x)=1$, then we obtain $rep_{+}(\pi^*(x^n))=2^{n+1}$.  Since $|y|=n$ and $y\in SOL(x)$, it must hold that $rep_{+}(y)=2^{n+1}$, implying $y=1^n$. By contrast, if $\pi(x)=0$, then we obtain $rep_{+}(\pi^*(x^n))=1$. By Condition (*), it follows that $1\leq rep_{+}(y)\leq \gamma$. Since $\gamma<2^n$, $y$ must have the form $0z$ for a certain string $z$ in $\{0,1\}^{|x|-1}$. As a consequence, $\pi(x)$ equals the first bit of $y$. This gives an $\ac{0}$-circuit that computes $\pi$.
This is a contradiction against the fact that $\pi$ is not in $\ac{0}$. Therefore, $\text{\sc Min m-Parity}$ is not in $\apxac{0}_{\nlo}$.

(2) Based on the parity function $\pi$, we define a simple minimization problem, {\sc Min 1-Parity}, which is somewhat similar to {\sc Min m-Parity} defined in the proof of (1). {\sc Min 1-Parity} takes an instance $x\in\{0,1\}^n$ and finds a solution $y\in\{0,1\}$ such that $rep(1y)\geq rep(1\pi(x))$ with the measure function $m(x,y) = rep(1y)$.

Here, we claim that {\sc Min 1-Parity} is in $\apxac{0}_{\nlo}$. Consider the following $\ac{0}$-circuit $C$: on input $x$, output $y_1=1$. Since $rep(1y_1)=3$ and $rep(1\pi(x))\in\{2,3\}$, it follows that $rep(1y_1)/2 \leq rep(1\pi(x))\leq rep(1y_1)$. Thus, $C$ is a $2$-approximate algorithm for {\sc Min 1-Parity}. We then conclude that {\sc Min 1-Parity} belongs to $\apxac{0}_{\nlo}$.

Assume that $\acas{0}_{\nlo}= \apxac{0}_{\nlo}$. This means that {\sc Min 1-Parity} is in $\acas{0}_{\nlo}$. Take a uniform family $\{C_n\}_{n\in\nat}$ of $\ac{0}$-circuits such that, for any $x\in\{0,1\}^*$, $C_{|x|}(x)$ outputs a 5/4-approximate solution $y$. For simplicity, write $y$ for $C_{|x|}(x)$.
Consider the case of $\pi(x)=0$. If $y=1$, then $rep(1y)=3$. This contradicts the inequalities: $(4/5)rep(1y)\leq rep(1\pi(x))\leq rep(1y)$. This implies that $y$ is $0$. In the case of $\pi(x)=0$, when $y=0$, it does not hold that $(4/5)rep(1y)\leq rep(1\pi(x))\leq rep(1y)$. Hence, $y$ must be $1$.
Therefore, $\{C_n\}_{n\in\nat}$ computes $\pi$ correctly. This implies that $\pi$ is actually in $\ac{0}$, a contradiction against $\pi\notin\ac{0}$. Therefore, $\acas{0}_{\nlo}\neq \apxac{0}_{\nlo|}$ follows.

(3) We define another minimization problem {\sc Min Bit-Parity}, which belongs to $\acas{0}_{\nlo}$ by setting $I=\bigcup_{n\in\nat^{+}}\{0,1\}^{n}$, $SOL(x) =\{ y\in \{0,1\}^{|x|}\mid rep(y)\geq rep(1^{|x|}\pi(x))\}$ for each $x\in I$, and $m(x,y) = \max\{1,rep(y)\}$ for $y\in SOL(x)$.

Consider a circuit $C_n$ that takes input $(x,r)$ with $n=|x|$ and, if $r\geq 1+\frac{1}{2^{n+1}-2}$, then  produces $1^{n}0$, and  if $1<r< 1+\frac{1}{2^{n+1}-2}$, then  computes $\pi(x)$ exactly and outputs $1^n\pi(x)$. Note that the value $\pi(x)$ can be exactly computed by an appropriately chosen circuit, say, $D$ of size $n^{O(1)}$ and depth $O(\log{n})$. When $1<r< 1+\frac{1}{2^{n+1}-2}$, since this is equivalent to $n<\log(\frac{2r-1}{r-1})-1$ and $r$ is treated as a constant, $D$ has depth $O(\log\log(\frac{2r-1}{r-1}))$, which is bounded by a certain constant. Thus, $C_n$ is an $\ac{0}$-circuit parameterized by $r$. Let $y=C_n(x,r)$. In the case of $r\geq 1+\frac{1}{2^{n+1}-2}$, we obtain $rep(y)=2^{n+1}-1$. When $\pi(x)=0$, since $rep(1^n\pi(x))=2^{n+1}-2$, it follows that $rep(y)/r = 2^{n+1}-2\leq rep(1^n\pi(x))\leq rep(y)$. In contrast, when $\pi(x)=1$, clearly we obtain $rep(y)/r<rep(1^n\pi(x))\leq rep(y)$. This implies that $C_n$ computes  $r$-approximate solutions.

Toward a contradiction, we assume that $\aco{0}_{\nlo}=\acas{0}_{\nlo}$. This implies that {\sc Min Bit-Parity} is in $\aco{0}_{\nlo}$. Let $\text{\sc Min Bit-Parity} =(I,SOL,m,\text{\sc min})$.
Next, we argue that $\pi\in\ac{0}$, leading to a clear contradiction. Since $\text{\sc Min Bit-Parity}\in\acas{0}$, there is a uniform family $\{C'_n\}_{n\in\nat}$ of $\ac{0}$-circuits solving {\sc Min Bit-Parity}. Since $SOL(x)=\{1^{|x|}\pi(x)\}$, it follows that $C'_n(x)=1^{|x|}\pi(x)$. From $C'_n$, we can design another $\ac{0}$-circuit that computes $\pi(x)$. Thus, we obtain the desired conclusion of $\pi\in\ac{0}$. This yields the separation between $\aco{0}_{\nlo}$ and $\acas{0}_{\nlo}$.
\end{proofof}

\section{Discussions and Future Research Directions}

We have refined the existing framework of combinatorial optimization problems in hope of providing a useful means of classifying various optimization problems lying inside the $\p$-solvable class $\po_{\npo}$; in particular, we have been focused on NL optimization problems or $\nlo$ problems, following early work of \'{A}lvarez and Jenner \cite{AJ93,AJ95} and Tantau \cite{Tan07}. In a course of
our exploring study on such optimization problems, unfortunately, we have left numerous fundamental issues unsettled. For an advance of our knowledge, they definitely require reasonable answers and explanations. As a quick guide to those pending issues, we want to shed clear light on some of the challenging issues arising naturally in the course of our study.

\begin{enumerate}\vs{-1}
  \setlength{\itemsep}{1mm}
  \setlength{\parskip}{0cm}%

\item {[Class Separations and Inclusions]} We have introduced numerous classes of refined optimization problems in Section \ref{sec:comb-OPs} but a few classes are successfully proven to be different (cf. Theorem \ref{AC0-separation}). One of our ultimate goals is to give proofs for separations of other important optimization and approximation classes. As seen in Section \ref{sec:complexity-OP}, those separations are closely tied up to long-standing open questions regarding the classes of their associated decision problems, and thus the separations of refined classes immediately lead to definitive answers to such open questions. Beside the separation issues, there are important unsettled questions concerning the inclusion relationships among newly refined classes. For example, contrary to the well-known inclusion $\nc{1}\subseteq \dl$, we suspect that  $\apxnc{1}_{\nlo}\nsubseteq \lo_{\nlo}$, $\ncas{1}_{\nlo}\nsubseteq\lo_{\nlo}$, and $\apxnc{1}\nsubseteq\lsas_{\nlo}$.

\item {[Completeness Issues]} (1) In this paper, a number of optimization  problems have been demonstrated to be complete for certain optimization and approximation classes within $\nlo$. It is imperative to find  more natural and useful optimization problems and prove them to be complete for target classes, in particular, $\lsas_{\nlo}\cap\pbo$ and $\ncas{1}_{\nlo}\cap\pbo$.

    (2) In Theorem \ref{Min-Path-complete} and Corollary \ref{Max-Path-Weight-complete}, we have shown that both $\maxnl$ and $\minnl$ contain $\sAPreduces^{\nc{1}}$-complete problems. Unlike $\npo$, we do not know whether $\nlo$ ($=\maxnl\cup \minnl$) contains  $\sAPreduces^{\nc{1}}$-complete problems unless $\auxfl$ is closed under division (cf. Proposition \ref{Min-Path-in-NLO}. Can we remove this closure property? Similarly, are there any complete problem in $\apxl_{\nlo}$ without any assumption? Moreover, several optimization problems discussed in this paper have not proven to be complete. Those pending problems include {\sc Max $\lambda$-DFA}, {\sc Max B-Vertex}, and {\sc Min UPath-Weight}.

    (3) There is a large collection of works presenting L-complete and NL-complete decision  problems \cite{AG00,CM87,JLL76,Jon75}. In most cases, it is relatively straightforward to turn those problems into their associated NLO problems. Which of them are actually complete for refined optimization and approximation classes under suitably chosen reductions?

\item {[Polynomially-Bounded Problems]} The relationship between NL and NLO appears to be quite different from the relationship between NP and NPO. Such difference comes primarily from an architectural limitation of log-space (auxiliary)  Turing machines. In particular, polynomially-bounded optimization problems are quite special for log-space computation. We have proven that $\nco{1}_{\nlo}=\lo_{\nlo}$ iff $\nco{1}_{\nlo}\cap\pbo =\lo_{\nlo}\cap\pbo$ (Proposition \ref{equivalent-relation}(2)) but we do not know whether (i) $\lo_{\nlo}=\nlo$ iff $\lo_{\nlo}\cap\pbo=\nlo\cap\pbo$ and (ii) $\nco{1}_{\nlo}=\nlo$ iff $\nco{1}_{\nlo}\cap\pbo =\nlo\cap\pbo$? Prove or disprove these equivalences (i)--(ii).

\item {[Further Refinement of Approximability]} Beyond $\apxl_{\nlo}$, Tantau \cite{Tan07} discussed classes of NLO problems whose $n^{O(1)}$-approximate (as well as $2^{n^{O(1)}}$-approximate) solutions are computed by log-space deterministic Turing machines. We briefly call them $\mathrm{poly\mbox{-}}\apxl_{\nlo}$ and $\mathrm{exp\mbox{-}}\apxl_{\nlo}$. Those two classes fill a seemingly wide gap between $\apxl_{\nlo}$ and $\nlo$. For example, we have already mentioned in Section \ref{sec:general-complete} and \ref{sec:graph-problems} that {\sc Min UPath-Weight} and {\sc Max B-Vertex} belong to $\mathrm{poly\mbox{-}}\apxl_{\nlo}$. Find more natural complete problems falling into these special approximation classes. These classes can be further expanded into a more general class denoted by $\ell(n)\mbox{-}\apxl_{\nlo}$, where $\ell(n)$ refers to $\ell(n)$-approximate solutions. Explore its features by finding natural problems inside it.

\item {[Heuristic and Probabilistic Approaches]} This paper has aimed at cultivating a basic theory of NLO problems with a major focus on complete problems. As a consequence, this paper has neglected any practical heuristic approach toward the NLO problems. The next possible step is to find such heuristic approaches to real-life NLO problems. Study also the power of probabilistic algorithms to solve NLO problems.

\item {[Help by Advice]} In 1980s, Karp and Lipton \cite{KL82} studied a notion of advice, which is an external source used to enhance any underlying Turing machines, and they introduced two major complexity classes, $\p/\log$ and $\p/\poly$, where ``$\log$'' and ``$\poly$'' refer to advice sizes of $O(\log{n})$ and $n^{O(1)}$, respectively. In analogy with these advised classes, it may be possible to formulate $\nlo/\log$ and $\nlo/\poly$ containing advised optimization problems. Explore the properties of those advised classes and show the separations among them.

\item {[Fixed Parameter Complexity]} Recall that optimization problems in $\lsas_{\nlo}$ and $\ncas{1}_{\nlo}$ admit certain approximation schemes, which are algorithms whose  resources (i.e., tape space or circuit size) described in terms of {\em fixed parameters}. NLO problems whose log-space approximation schemes have the performance ratios of, e.g.,  $k+\log{n}$ and $2^k\log{n}$, are treated equally inside $\lsas_{\nlo}$, but those schemes behave differently in practice. Therefore, it is more practical  to specify those fixed parameters and refine those classes.

\item {[More Classes Inside $\po_{\npo}$]} (1: Classes inside NLO) If the number of solutions of a given problem is relatively small, can we use this fewness information to find an optimal solution much more efficiently? Similarly to $\mathrm{FewP}$ (few polynomial time), we may define its log-space optimization counterpart, denoted by $\mathrm{FewLO}$, which is a class of NLO problems whose  sets of feasible solutions have sizes bounded from above by certain polynomials in $|x|$.
    Study the computational complexity of optimization problems inside  $\mathrm{FewLO}$.

    (2: Classes between $\po_{\npo}$ and NLO) A focal point of the current paper is optimization problems inside NLO. It still remains to fill the gap between $\po_{\npo}$ and NLO. Uniform circuit families, such as $\nc{k}$, $\ac{k}$, $\sac{k}$, and $\mathrm{TC}^{k}$  for $k\geq1$, may be used to form optimization/approximation classes beyond NLO. For example, SAC$^{1}$-circuits may introduce optimization and approximation classes, say, $\mathrm{SAC}^{1}\mathrm{O}_{\npo}$ and $\mathrm{APXSAC}^{1}_{\npo}$. Find complete problems for those new classes lying above NLO. Concerning to the power of $\sac{1}$, it is also important to determine whether  $\nlo$ coincides with $\mathrm{SAC}^{1}\mathrm{O}_{\nlo}$, expanding Lemma \ref{char-with-AC1}(1).

\item {[Other Categories of Optimization Problems]} Lately, Yamakami \cite{Yam14} studied OptCFL, which is a CFL analogue of Krentel's OptP \cite{Kre88} and \`{A}lvarez and Jenner's OptL \cite{AJ93,AJ95}, where CFL stands for {\em context-free languages}. This fact suggests a possibility of introducing CFL-based optimization problems under the current framework of refined optimization problems. Cultivate a theory based on those weak optimization problems.
\end{enumerate}

\section*{Appendix: Proof of Proposition \ref{min-st-cut-is-po}}

Proposition \ref{min-st-cut-is-po} states the completeness of {\sc Min Weight-st-Cut} for $\po_{\npo}$. Here, we restate the proposition.

\ms
\n{\bf {\em Proposition \ref{min-st-cut-is-po} (again).}}
{\it $\text{\sc Min st{-}Cut}$ (on weighted directed graphs) is $\EXreduces^{\nc{1}}$-complete for $\po_{\npo}$.
}
\ms

Note that an unweighted version of {\sc Min Weight-st-Cut} is also known as the {\em edge connectivity problem}. Gabow \cite{Gab91} gave a sequential algorithm of finding the edge connectivity $c$  of a (directed or undirected) graph of $n$ vertices and $m$ edges in time $O(cn \log (n^2/m))$.

For readability, we have left Proposition \ref{min-st-cut-is-po} unproven in Section \ref{sec:why-NC1}. In this appendix, we shall give the proof of Proposition \ref{min-st-cut-is-po}  for completeness.
As noted in Section \ref{sec:why-NC1}, this result is drawn from the $\p$-completeness proof of Goldschlager \etalc~\cite{GSS82} for a decision version of the maximum $s$-$t$ flow problem.

To prove the proposition, we first look into the decision problem $\mathrm{CVP}$ (Circuit Value Problem) into a minimization problem in $\po_{\nlo}$. Given each   binary string $x$, the notation $rep(x)$ denotes one plus the natural number represented in binary by $x$.

\ms
{\sc Minimum Circuit Value Problem} ({\sc MinCVP}):
\renewcommand{\labelitemi}{$\circ$}
\begin{itemize}\vs{-2}
  \setlength{\topsep}{-2mm}%
  \setlength{\itemsep}{1mm}%
  \setlength{\parskip}{0cm}%
\item {\sc instance:} $\pair{C_x,1^n,1^k}$ with a circuit $C_x$ of $n$ inputs and $k$ outputs with AND, OR, and NOT gates for an input $x\in\{0,1\}^n$ (which must be specified in the description of the circuit).

\item {\sc Solution:} a binary string $y$ of length $k$ such that $rep(y) \geq rep(z)$, where $z$ is the outcome of the circuit $C_x$.

\item {\sc Measure:} the number $rep(y)$.
\end{itemize}

We further demand a circuit to satisfy the following conditions: (i) it is monotone (i.e., using only $AND$ and $OR$ gates), (ii) each input has fan-out at most $1$, (iii) each gate has fan-out at most $2$, and (iv) the top (i.e., root) gate is an $OR$ gate. With those conditions, we obtain the {\sc Minimum Monotone Circuit Value 2 Problem} ({\sc MinMCV2}) defined similarly to {\sc MinCVP}.
The decision version of {\sc MinMCV2} was shown to be $\leq_{m}^{\dl}$-complete for $\p$ \cite{GSS82}.

Here, we establish the $\EXreduces^{\nc{1}}$-completeness of {\sc MinCVP} and {\sc MinMCV2}.

\begin{lemma}\label{MinCVP-complete}
$\text{\sc MinCVP}$ and $\text{\sc MinMCV2}$  are $\EXreduces^{\nc{1}}$-complete for $\po_{\npo}$.
\end{lemma}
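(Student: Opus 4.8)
The plan is to establish membership in $\po_{\npo}$ and $\EXreduces^{\nc{1}}$-hardness separately, treating {\sc MinMCV2} as a monotone normal-form refinement of {\sc MinCVP}. For membership, observe that on an instance $\pair{C_x,1^n,1^k}$ the feasible solutions are exactly the length-$k$ strings $y$ with $rep(y)\geq rep(z)$, where $z$ is the actual outcome of $C_x$; since $rep$ is injective on $\{0,1\}^k$ and we minimize $rep(y)$, the unique optimal solution is $y=z$. As circuit evaluation lies in $\p$ (indeed $\mathrm{CVP}$ is $\p$-complete \cite{Lad75}), $z$ is computable in polynomial time from the instance, so $SOL^*$ is the singleton $\{z\}$ with polynomial-time computable value. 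Hence $\text{\sc MinCVP}\in\po_{\npo}$, and the identical argument restricted to monotone, bounded-fan-out circuits gives $\text{\sc MinMCV2}\in\po_{\npo}$.

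For hardness, I would first normalize the source problem. Given any $P=(I,SOL,m,goal)\in\po_{\npo}$ with a polynomial-time deterministic machine $M_P$ producing optimal solutions, the $\po_{\npo}$-analogue of Lemma \ref{LO-simple-form} (whose proof is generic: set $SOL'(x)=\{M_P(x)\}$ when $M_P(x)\neq\bot$) lets us assume $P$ admits unique solutions, so $SOL(x)=\{M_P(x)\}$. The core construction then adapts Ladner's $\p$-completeness proof for $\mathrm{CVP}$: build the standard polynomial-size \emph{computation-tableau} circuit $C_x$ that simulates $M_P$ on the hard-wired input $x$ for $q(n)$ steps, arranged so that the output gates carry (an encoding of) $M_P(x)$; set $f(x)=\pair{C_x,1^n,1^k}$. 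Because the tableau is a regular space–time grid of identical local gadgets into which the bits of $x$ are merely copied as constants, the map $x\mapsto\pair{C_x,1^n,1^k}$ is computable in $\fnc{1}$ (even in $\fac{0}$ up to the wiring). The unique optimal solution of {\sc MinCVP} on $f(x)$ is precisely this output string, and $g$ reconstructs $M_P(x)$ from it by an order-preserving decoding that is itself in $\fnc{1}$; thus optimal maps to optimal, giving $P\EXreduces^{\nc{1}}\text{\sc MinCVP}$.

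To obtain {\sc MinMCV2}, I would convert $C_x$ into the required monotone normal form, following the Goldschlager--Shaw--Staples construction \cite{GSS82}: apply double-rail logic to push every negation down to the inputs, then eliminate negated literals by hard-wiring the known values $1-x_i$ (legitimate since $x$ is part of the instance), producing a monotone circuit over AND/OR gates; finally splice in duplication gadgets to enforce fan-out at most $1$ at inputs and at most $2$ at gates, and cap the circuit with an $OR$ root. All of these transformations are local and preserve the $\fnc{1}$-uniformity of the reduction, so the same $(f,g)$ witnesses $P\EXreduces^{\nc{1}}\text{\sc MinMCV2}$, establishing completeness of both problems.

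The delicate point — and the step I expect to be the main obstacle — is the EX-reduction requirement that $g$ send \emph{every} feasible solution of the target (not merely the optimum) into $SOL(x)$ while remaining $\nc{1}$-computable, since $g$ cannot afford to re-run the polynomial-time $M_P$ and must instead read the answer off the combinatorial structure. The resolution I would pursue is to choose the output encoding so that the threshold feasible set $\{y:rep(y)\geq rep(z)\}$ of the {\sc MinCVP} instance matches, under the order-preserving decoding, the (normalized, unique) feasible set of $P$, so that feasibility is preserved automatically and $R_2=1$ forces $R_1=1$. Carefully verifying that this encoding meshes with the threshold measure $rep(\cdot)$, together with checking the $\nc{1}$-uniformity of the tableau and of the monotone-normal-form gadgets, is where the genuine bookkeeping lies; everything else is routine.
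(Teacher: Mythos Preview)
Your approach is essentially the paper's: take the $\fp$ solver $h$ for $P$, encode it as an $\nc{1}$-uniform circuit family (you do this via Ladner's tableau, the paper simply cites the equivalence of $\fp$ with $\nc{1}$-uniform polynomial-size circuits), hard-wire $x$ to obtain $C_x$, set $f(x)=\pair{C_x,1^n,1^k}$, and let $g$ strip a padding marker from the optimal output; {\sc MinMCV2} is then handled by the standard monotone conversion. So the core argument matches.

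Where you diverge is in the ``delicate point,'' and there your proposed resolution actually works against you. After you normalize via the $\po_{\npo}$-analogue of Lemma~\ref{LO-simple-form}, the source feasible set becomes the singleton $\{M_P(x)\}$, so the third EX-condition forces $g(x,y)=M_P(x)$ for \emph{every} feasible $y$ of the {\sc MinCVP} instance. Since the threshold set $\{y:rep(y)\geq rep(z)\}$ is never a singleton unless $z=1^k$ (which would discard all information), $g$ would have to recompute $M_P(x)$ from $x$ alone in $\nc{1}$---impossible for general $P\in\po_{\npo}$. Your ``order-preserving decoding that matches the feasible sets'' cannot exist in this setup. The paper avoids this self-inflicted difficulty simply by \emph{not} normalizing: it keeps the original $SOL(x)$, pads the solver's output as $h'(x)=h(x)10^{k'}$, and defines $g(x,y)$ to strip the $10^{k'}$ suffix. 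The paper then only checks that $g$ behaves correctly at the optimum $y=h'(x)$ and declares the EX-reduction complete; it is silent on non-optimal $y$. So the concern you flag is real and the paper glosses over it too---but the operational point is that the downward-closure argument (Lemma~\ref{downward-property}(2)) only ever applies $g$ to an optimal solution produced by the target's exact solver, so the third EX-clause is never exercised off the optimum. If you want to be fully formal, drop the normalization and argue this directly; do not try to force a bijection of feasible sets.
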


\begin{proof}
We first show that {\sc MinCVP} is in $\po_{\npo}$.  Given an instance $\pair{C_x,1^n,1^k}$ of {\sc MinCVP}, it is possible to calculate the outcome of $C_x$ by evaluating each gate of $C_x$ one by one within polynomial time. Hence, {\sc MinCVP} is in $\po_{\npo}$.

Next, we shall  show the $\EXreduces^{\nc{1}}$-hardness of $\mathrm{MinCVP}$ for $\po_{\npo}$. Let $P=(I,SOL,m,goal)$ be any optimization problem in $\po_{\npo}$. Let $x$ be any instance of $P$. Since $P$ is $\p$-solvable, we take a function $h\in\fp$ satisfying $h(x)\in SOL^*(x)$ for all $x\in (I\circ SOL)^{\exists}$. For each $x$, we pad extra $k'_x$ zeros so that $k'_x\geq0$ and $|h(x)10^{k'_x}|=k(n)$ for a certain absolute polynomial $k$. For convenience, we set $h'(x)=h(x)10^{k'_x}$.  It is known that any function in $\fp$ can be computed by a certain $\nc{1}$-uniform family of polynomial-size Boolean circuits (see, e.g., \cite{DK00}).

Let us define an $\EXreduces^{\nc{1}}$-reduction $(f,g)$ from $P$ to $\mathrm{MinCVP}$ as follows.
Since $h'\in\fp$, take  an $\nc{1}$-uniform family $\{C_{n}\}_{n\in\nat^{+}}$  of polynomial-size Boolean circuits $C_n$ of $n$ inputs and $k(n)$ outputs that computes $h'$. Define $f(x)=\pair{C_x,1^{|x|},1^{k(|x|)}}$, where $C_x$ is obtained from $C_{|x|}$ by replacing $|x|$ input variables with $|x|$ constant bits $x\in\{0,1\}^{|x|}$.
Since $\{C_n\}_{n\in\nat^{+}}$ is $\nc{1}$-uniform, there is an $\nc{1}$-circuit family to construct each $C_x$. Hence, $f$ must be in $\fnc{1}$. If $y$ is of the form $z10^{k'}$, then we write $\tilde{y}$ to denote $z$. Define $g(x,y) = \tilde{y}$. It is not difficult to show that if $y=C_x$ then $g(x,y) = \tilde{y} = h(x)$. Hence, $P$ is $\EXreduces^{\nc{1}}$-reducible to $\mathrm{MinCVP}$.  The transformation $f$ is computed by a certain family of $\nc{1}$-circuits.

The case for {\sc MinMCV2} is similar, however, by using restricted monotone circuits for $h'$.
\end{proof}


Using Lemma \ref{MinCVP-complete}, we want to derive Proposition \ref{min-st-cut-is-po}.

\begin{proofof}{Proposition \ref{min-st-cut-is-po}}
It is known that {\sc Min st-cut} is in $\po_{\npo}$.
Thus, it suffices to reduce $\text{\sc MinMCV2}$ to $\text{\sc Min st{-}Cut}$ by an appropriate $\EXreduces^{\nc{1}}$-reduction $(f,g)$. For convenience, we set $\text{\sc MinMCV2}=(I_1,SOL_1,m_1,\text{\sc min})$. 
Let $\pair{C_x,1^n,1^k}$ be any instance of MinMCV2, where $C_x$ is expressed as a sequence $(\alpha_{n},\alpha_{n-1},\ldots,\alpha_0)$.

First, we consider the case where $k=1$. The function $f$ is of the form  $f(\pair{C_x,1^n,1^k})=\pair{G,s,t,c}$, where $G=(V,E)$ is defined as follows. Let $V=\{i\mid 0\leq i\leq n\}\cup\{s,t\}$. Let $c(s,i)=b\cdot 2^i$ if $\alpha_i$ is an input $b\in\{0,1\}$. If $\alpha_i$ is of the form $AND(j,k)$ or $OR(j,k)$, then let $c(j,i)=2^{j}$ and $c(k,i)=2^k$. Moreover, when  $\alpha_i$ is $AND(j,k)$, let $c(i,t)=2^j+2^k-d\cdot 2^i$; when $\alpha_i$ is $OR(j,k)$, let $c(i,s)=2^j+2^k-d\cdot 2^i$, where $d$ is the fan-out of $\alpha_i$. Now, let $E=\{(i,j)\mid c(i,j) \text{ is defined }\}$.  Note that $c(i,j)\leq 2^{n+1}\leq 2^{|V|}$ for any $(i,j)\in E$. For any $s$-$t$ cut $(S'_0,S'_1)$, we set $g(\pair{G,s,t,c},(S'_0,S'_1)) = 1$ if the cut capacity of $(S'_0,S'_1)$ is odd; $0$ otherwise. Note that $g$ can be computed by certain $\nc{1}$-circuits. Let $x=\pair{C_x,1^n,1}$. If $(S'_0,S'_1)$ is an $s$-$t$ cut, then $m_1(x,g(x,(S'_0,S'_1)))=m_1^*(x)=1$.  

Here, we define a special $s$-$t$ cut $(S_0,S_1)$ inductively. The sink $t$ and all vertices corresponding to inputs with $0$ given to $C_x$ are in $S_1$. For a vertex $i$ corresponding to a gate $AND$ or $OR$, if this gate outputs $0$, then the vertex $i$ is in $S_1$. The set $S_0$ is defined as $S_0 = V-S_1$. It is possible to verify that $C_n$ outputs $1$ iff the cut capacity of $(S_0,S_1)$ is odd. Moreover, we can prove that $(S_0,S_1)$ is minimal. 

Let $k\geq2$. Given $C_x$ with $n$ input bits and $k$ output bits, we first make $k$ copies $C_{x,1},C_{x,2},\ldots,C_{x,k}$ of $C_x$ and, for each $C_{x,i}$ ($i\in[k]$), we add the following new nodes: an input $\beta$ having $0$, two gates $AND(\beta,\gamma_1,\gamma_2,\ldots,\gamma_{i-1}, \gamma_{i+1},\ldots,\gamma_{k})$ and $OR(\gamma_i,\zeta)$, where $\gamma_j$ corresponds to the $j$th output bit of $C_x$ and $\zeta$ is the index corresponding to this $AND$ gate and by renaming all 
vertices $v$ of $C_{x,i}$ as $v^{(i)}$.
Let $\tilde{C}_{x,i}$ denote the circuit obtained from $C_{x,i}$ by adding these extra gates. Clearly, the output of $C_{x}$ coincides with the $k$-bit string $r_1r_2\cdots r_n$, where $r_i$ is the output of $C_{x,i}$.
For each $\tilde{C}_{x,i}$, we define a graph $G_i=(V_i,E_i)$ and a weight function $c_i$ as in the case of $k=1$.

Finally, we add new source $\tilde{s}$ and new sink $\tilde{t}$, and then 
define $\tilde{c}$ as follows.  We set $\tilde{c}(\tilde{s},s_i)=2^{n^2}$ and $\tilde{c}(t_i,\tilde{t})=2^{n^2}$ for every $i\in[k]$.  Moreover, we define $\tilde{c}(\ell_1,\ell_2) = 2^{in}c_{i}(\ell'_1,\ell'_2)$ if $\ell'_1,\ell'_2$ are in the same $\tilde{C}_{x,i}$; $0$ otherwise.
A new graph $\tilde{G}=(\tilde{V},\tilde{E})$ is defined similarly as before: let $\tilde{V}=\bigcup_{i\in[k]}V_i\cup \{\tilde{s},\tilde{t}\}$ (provided that $V_i\cap V_j=\setempty$ for any two distinct $i,j\in[k]$)  and $\tilde{E}=\{(i,j)\mid \tilde{c}(i,j)\;\text{is defined}\}$. Let $x=\pair{C_x,1^n,1^k}$. Note that $m_1(x,g(x,(\tilde{S}_0,\tilde{S}_1)))=m_1^*(x)=y_1y_2\cdots y_k$. Hence, $(f,g)$ reduces {\sc MinMCV2} to {\sc Min Weight-st-Cut}. 
Given a minimal $s^{(i)}$-$t^{(i)}$ cut $(S_{0,i},S_{1,i})$ of $G_i$ for each $i\in[k]$, define $\tilde{S}_0 = \{\tilde{s}\}\cup (\bigcup_{i\in[k]}S_{0,i})$  and $\tilde{S}_1 = \{\tilde{t}\}\cup (\bigcup_{i\in[k]}S_{1,i})$.  The cut  $(\tilde{S}_0,\tilde{S}_1)$ is a minimal $\tilde{s}$-$\tilde{t}$ cut of $G$.
We define $g$ as follows: the $i$th bit of $g(\pair{\tilde{G},\tilde{s},\tilde{t},\tilde{c}},(\tilde{S}_0,\tilde{S}_1))$ is $1$ iff the $s^{(i)}$-$t^{(i)}$ cut $(S_{0,i},S_{1,i})$ has odd capacity. This equivalence gives rise to the desired equivalence: $C_x$ outputs $y_1y_2\cdots y_k$ iff  $g(x,(\tilde{S}_0,\tilde{S}_1)) = y_1y_2\cdots y_k$.
\end{proofof}

\let\oldbibliography\thebibliography
\renewcommand{\thebibliography}[1]{%
  \oldbibliography{#1}%
  \setlength{\itemsep}{0pt}%
}
\bibliographystyle{plain}

\end{document}